\patchcmd{\thebibliography}{\section*{\refname}}{}{}{}
\pgfplotsset{compat=1.17} 
\newcommand{\Ngate}[2]{*+<.6em>{#1} \POS ="i","i"+UR;"i"+UL **\dir{-};"i"+DL **\dir{-};"i"+DR **\dir{-};"i"+UR **\dir{-},"i"  #2}
\newcommand{\Nmultigate}[3]{*+<1em,.9em>{\hphantom{#2}} \POS [0,0]="i",[0,0].[#1,0]="e",!C *{#2},"e"+UR;"e"+UL **\dir{-};"e"+DL **\dir{-};"e"+DR **\dir{-};"e"+UR **\dir{-},"i" #3}
\newcommand{\Nghost}[2]{*+<1em,.9em>{\hphantom{#1}} #2}
\newcommand{\ww}[1][-1]{\ar@{~} [0,#1]}
\newcommand{\dw}[1][-1]{\ar@{.} [0,#1]}
\newcommand{\dbw}[1][-1]{\ar@{=} [0,#1]}
\newcommand{\dbww}[1][-1]{\ar@{$\approx$} [0,#1]}
\newcommand{\myuline}[1]{%
	\uline{\phantom{#1}}%
	\llap{\contour{white}{#1}}%
}
\newcommand\footnoteref[1]{\protected@xdef\@thefnmark{\ref{#1}}\@footnotemark}
\def\Ddots{\mathinner{\mkern1mu\raise\p@
		\vbox{\kern7\p@\hbox{.}}\mkern2mu
		\raise4\p@\hbox{.}\mkern2mu\raise7\p@\hbox{.}\mkern1mu}}
\theoremstyle{plain}
\newtheorem{Thm}{Theorem}[section]
\newtheorem{Prop}[Thm]{Proposition}
\newtheorem{Cor}[Thm]{Corollary} 
\newtheorem{Lem}[Thm]{Lemma}
\newtheorem{OP}[Thm]{Open Problem}
\declaretheorem[style=definition,qed=$\blacksquare$, sibling=Thm]{Definition}
\declaretheorem[style=definition,qed=$\blacklozenge$,sibling=Thm]{Example}
\declaretheorem[style=definition,qed=$\maltese$,sibling=Thm]{Remark}
\newcommand{\N}{\mathbb{N}}
\newcommand{\C}{\mathbb{C}}
\newcommand{\cG}{\mathcal{G}}
\newcommand{\cF}{\mathcal{F}}
\newcommand{\cL}{\mathcal{L}}
\newcommand{\cR}{\mathcal{R}}
\newcommand{\cH}{\mathcal{H}}
\newcommand{\cP}{\mathcal{P}}
\newcommand{\cE}{\mathcal{E}}
\newcommand{\cK}{\mathcal{K}}
\newcommand{\cX}{\mathcal{X}}
\newcommand{\cY}{\mathcal{Y}}
\newcommand{\fB}{\mathfrak{B}}
\newcommand{\fF}{\mathfrak{F}}
\newcommand{\scrT}{\mathscr{T}}
\newcommand{\scrD}{\mathscr{D}}
\newcommand{\scrF}{\mathscr{F}}
\newcommand{\scrE}{\mathscr{E}}
\newcommand{\scrC}{\mathscr{C}}
\newcommand{\sfP}{\mathsf{P}}
\newcommand{\sfx}{\mathsf{x}}
\newcommand{\sfy}{\mathsf{y}}
\newcommand{\sfe}{\mathsf{e}}
\newcommand{\bone}{\mathds{1}}
\newcommand{\id}{\textup{id}}
\newcommand{\up}[1]{\textup{#1}}
\newcommand{\what}[1]{\widehat{#1}}
\newcommand{\End}[1]{\textup{End}(#1)}
\newcommand{\supp}[1]{\textup{supp}(#1)}
\newcommand{\sfA}{\mathsf{A}}
\newcommand{\sfB}{\mathsf{B}}
\newcommand{\Theory}{\mathbf{\Theta}}
\newcommand{\CIT}{\mathbf{CIT}}
\newcommand{\QIT}{\mathbf{QIT}}
		\newcommand{\triv}{\mathbf{1}}
\newcommand{\Bhv}[1]{\up{Bhv}(#1)}
\newcommand{\Strat}[1]{\up{Strat}(#1)}
\newcommand{\cder}{\trianglerighteq}
\newcommand{\myQ}[3]{\begin{array}{c}\Qcircuit@C=#1em@R=#2em{#3} \end{array}}
\newcommand{\scalemyQ}[4]{\scalebox{#1}{$\myQ{#2}{#3}{#4}$}}
\newcommand{\res}{\up{res}}
\begin{document}

\title{An Operational Environment for \newline Quantum Self-Testing}

\author{Matthias Christandl}
\affiliation{Department of Mathematical Sciences, University of Copenhagen, \newline Universitetsparken 5, 2200 Copenhagen, Denmark}
\email{christandl@math.ku.dk}

\author{Nicholas Gauguin Houghton-Larsen}
\affiliation{Department of Mathematical Sciences, University of Copenhagen, \newline Universitetsparken 5, 2200 Copenhagen, Denmark}
\email{nicholas.gauguin@gmail.com}

\author{Laura Man\v{c}inska}
\affiliation{Department of Mathematical Sciences, University of Copenhagen, \newline Universitetsparken 5, 2200 Copenhagen, Denmark}
\email{mancinska@math.ku.dk}

\maketitle

\begin{abstract} 
Observed quantum correlations are known to determine in certain cases the underlying quantum state and measurements. This phenomenon is known as \emph{(quantum) self-testing}. 

Self-testing constitutes a significant research area with practical and theoretical ramifications for quantum information theory. But since its conception two decades ago by Mayers and Yao, the common way to rigorously formulate self-testing has been in terms of operator-algebraic identities, and this formulation lacks an operational interpretation. In particular, it is unclear how to formulate self-testing in other physical theories, in formulations of quantum theory not referring to operator-algebra, or in scenarios causally different from the standard one. 

In this paper, we explain how to understand quantum self-testing operationally, in terms of causally structured dilations of the input-output channel encoding the correlations. These dilations model side-information which leaks to an environment according to a specific schedule, and we show how self-testing concerns the relative strength between such scheduled leaks of information. As such, the title of our paper has double meaning: we recast conventional quantum self-testing in terms of information-leaks to an environment --- and this realises quantum self-testing as a special case within the surroundings of a general operational framework. 

Our new approach to quantum self-testing not only supplies an operational understanding apt for various generalisations, but also resolves some unexplained aspects of the existing definition, naturally suggests a distance measure suitable for robust self-testing, and points towards self-testing as a modular concept in a larger, cryptographic perspective.
\end{abstract}

\newpage

\setcounter{tocdepth}{2}
	
\tableofcontents
	
\newpage

\section{Introduction}
\label{sec:Intro}

\subsection{History and Motivation}
\label{subsec:Hist}

More than half a century ago, J.~S.~Bell understood that a maximally entangled pair of qubits may be subjected to local measurements such that the statistics of measurement outcomes predicted by quantum theory cannot be explained in terms of local or shared classical randomness \cite{Bell64}. Almost three decades later, A.~Ekert observed that this remarkable fact amounts to some of the randomness in the measurement outcomes being fundamentally \myuline{unknowable} in advance of the experiment in Bell's scenario, and therefore in principle useful for cryptographic purposes \cite{Ek91}. 

These ideas were revisited when D.~Mayers and A.~Yao introduced the notion of \emph{quantum self-testing} \cite{MY98,MY04}, according to which a collection of measurement statistics may in certain cases imply that a specific configuration of quantum state and measurements was used in the experiment.\footnote{Granted that the state in this configuration is pure and entangled, self-testing strengthens Ekert's observation (because no environment can hold information about a pure state), as well as Bell's (because no separable state can reproduce the measurement statistics).} Today, quantum self-testing has grown to a sizeable sub-field of quantum information theory, with applications in randomness generation \cite{Colb06,ColbKent11,Coud14, Sasa16, MS16, BMP18}, quantum key distribution \cite{MS16,JMS20} and delegated quantum computation \cite{RUV13,McK16,NV17,CGJV19,JMS20}, and with theoretical significance for the understanding of quantum correlations \cite{Goh18, CS18,Cola20embezzle} and quantum complexity theory \cite{NW19,MIP20}.

Though the cardinal idea behind self-testing --- that certain measurement statistics arise from a unique configuration of quantum state and measurements --- is perhaps intuitively meaningful, it is not entirely trivial to formalise. This is because there are strictly speaking always several different state-measurement-configurations which realise the same measurement statistics. For example, starting from any initial configuration, a simultaneous unitary rotation of the measurements and state will leave the statistics invariant. So will adding an ancillary state which is discarded in measurements. 

This paper is concerned with the proper mathematical formalisation of self-testing. \\

The conventional formal definition of self-testing was introduced in Ref.~\cite{MY04}, standardised in Ref.~\cite{MYS12} and has recently been reviewed in Ref.~\cite{SB19}. This definition relates the different possible state-measurement-configurations with given measurement statistics to a fixed \emph{canonical} configuration, by accounting for the possibility of unitary rotations and unmeasured ancillary states in the traditional operator-algebraic language for quantum theory.\footnote{Specifically, this is the definition of `full' self-testing, in which one makes a statement simultaneously about the state and the measurements; one can instead consider relaxed self-testing statements e.g.~about the state alone \cite{SB19}.} As witnessed by the breadth of self-testing results, this definition has been used successfully for more than 15 years to progress our understanding of theoretical and practical aspects of self-testing. 

However, because the definition is phrased in terms of operator-algebraic objects rather than information channels, it suffers from a lack of \emph{operational meaning}.\footnote{It is unfortunate that the syntactically similar `operator' and `operational' are here in semantic opposition to each other.} This circumstance is not only theoretically unaesthetic, but also hinders the immediate generalisation of self-testing to scenarios causally different from Bell's \cite{Gis10, Gis12, TF12}, to theories beyond quantum theory \cite{AbCo04, Sel04, Baez06,Barr07, Barn11,Barn16,Chir10,Hardy10}, or even to formalisations of quantum theory which transcend the operator-algebraic formalism \cite{Hard01,Chir11}. It is conceivable that quantum theory, or aspects thereof, will eventually be best understood and studied from an operational perspective free of operator-algebra, and thus it is highly relevant to have a definition of self-testing compatible with that mode of abstraction. \\

In this paper, we propose a new formalisation of self-testing which resolves these issues. Our formalisation is purely operational, and thus immediately interpretable in theories beyond quantum theory and independent of a specific formalism for quantum theory. It moreover naturally hints at an operational notion of `closeness' between state-measurement-configurations apt for the definition of approximate (i.e.~\emph{robust)} self-testing \cite{MYS12}, a concept which for practical purposes of self-testing is very important. Finally, the new approach points to interesting relaxations of self-testing which eventually exhibit self-testing as the special case of a phenomenon definable in a general framework of causally structured networks of channels --- as such, it automatically generalises self-testing to arbitrary causal scenarios. As a benefit, our new definition additionally clarifies certain vague narratives about the conventional operator-algebraic definition (e.g.~a commonly employed restriction to projective measurements) and yields new, general and intuitive proofs of many known features related to self-testing (e.g.~the facts that any configuration which can be self-tested uses a pure state and must give rise to extremal measurement statistics).

All of these results are discussed in \cref{subsec:MainIdea} in more detail, along with the main strokes of our formalisation.

\subsection{Main Idea and Results} 
\label{subsec:MainIdea}

Consider two separated parties $\sfA$ and $\sfB$ in possession of Hilbert spaces $\cH_\sfA$ and $\cH_\sfB$, respectively, and a shared quantum state $\varrho$ on $\cH_\sfA \otimes \cH_\sfB$. (Readers familiar with the CHSH-game can freely think of the case $\cH_\sfA = \cH_\sfB = \C^2$.) The two parties can locally apply measurements indexed by finite sets $X_\sfA$ and $X_\sfB$, with outcomes in finite sets $Y_\sfA$ and $Y_\sfB$, respectively. (Again, for concreteness, the reader may consider the case where $X_\sfA = X_\sfA = \{0,1\}$ and $Y_\sfA = Y_\sfB = \{1,-1\}$, so that each party can choose between two measurements, each with two outcomes $\pm 1$.) 

For each index $x_\sfA \in X_\sfA$, the corresponding measurement for party $\sfA$ may be represented by a quantum channel $\Lambda^{x_\sfA}_\sfA$ from $\cH_\sfA$ to $\C^{Y_\sfA}$, whose outputs are diagonal in the computational basis (for any finite set $I$, we denote by $\C^I$ the Hilbert space with computational basis $(\ket{i})_{i \in I}$).\footnote{As discussed in \cref{subsec:ClasStruct}, the channel $\Lambda^{x_\sfA}_\sfA$ may be represented by a POVM on $\cH_\sfA$, namely a collection $E^{x_\sfA}_\sfA = \big(E^{x_\sfA}_\sfA(y_\sfA)\big)_{y_\sfA \in Y_\sfA}$ of positive operators on $\cH_\sfA$ such that $\sum_{y_\sfA \in Y_\sfA} E^{x_\sfA}_\sfA(y_\sfA) = \bone_{\cH_\sfA}$. Specifically, the measurement channel $\Lambda^{x_\sfA}_\sfA$ associated to the POVM $E^{x_\sfA}_\sfA$ is given by $\Lambda^{x_\sfA}_\sfA(T) = \sum_{y_\sfA \in Y_\sfA} \tr(E^{x_\sfA}_\sfA(y_\sfA) T)  \, \ketbra{y_\sfA}$ for $T \in \End{\cH_\sfA}$.} The measurement ensemble $(\Lambda^{x_\sfA}_\sfA)_{x_\sfA \in X_\sfA}$ may in turn be described by a single quantum channel $\Lambda_\sfA$ from the system $\C^{X_\sfA} \otimes \cH_\sfA$ to the system $\C^{Y_\sfA}$, which measures the system $\C^{X_\sfA}$ in the computational basis to read off the measurement choice $x_\sfA \in X_\sfA$ and then applies the corresponding measurement $\Lambda^{x_\sfA}_\sfA$. Likewise, the measurement ensemble for party $\sfB$ may be summarised by a single quantum channel $\Lambda_\sfB$ from $\C^{X_\sfB} \otimes \cH_\sfB$ to $\C^{Y_\sfB}$. Altogether, the quantum channels $\Lambda_\sfA$ and $\Lambda_B$ describe how classical information from the systems $\C^{X_\sfA}$ and $\C^{X_\sfB}$ along with quantum information from the systems $\cH_\sfA$ and $\cH_\sfB$ is transformed to classical information in the systems $\C^{Y_\sfA}$ and $\C^{Y_\sfB}$, respectively. 

The quadruple $\fB = (X_\sfA, X_\sfB, Y_\sfA, Y_\sfB)$ which defines the measurement indices and outcomes is called a \emph{Bell-scenario}, and the triple $S= (\varrho, \Lambda_\sfA, \Lambda_\sfB)$ is called a \emph{quantum strategy} for $\fB$. To any quantum strategy $S$ is an associated \emph{(input-output) behaviour}, namely the channel from $\C^{X_\sfA} \otimes \C^{X_\sfB}$ to $\C^{Y_\sfA} \otimes \C^{Y_\sfB}$ given by $(\Lambda_\sfA \otimes \Lambda_\sfB) \circ (\id_{\C^{X_\sfA}} \otimes \varrho \otimes \id_{\C^{X_\sfB}})$, which takes as input any pair of measurement indices and yields as output the joint distribution of measurement outcomes. We may pictorially represent the behaviour channel as
\begin{align} \label{eq:BHV}
	\myQ{0.7}{0.7}{
		& \push{\C^{X_\sfA}}  \qw 	& \qw  & \multigate{1}{\Lambda_\sfA} & \push{\C^{Y_\sfA}}  \qw & \qw   \\
		& \Nmultigate{1}{\varrho}  & \push{\cH_\sfA} \qw & \ghost{\Lambda_\sfA} &  \\
		& \Nghost{\varrho} & \push{\cH_\sfB} \qw  & \multigate{1}{\Lambda_\sfB}  \\
	&   \push{\C^{X_\sfB}} \qw  		& \qw  & \ghost{\Lambda_\sfB} & \push{\C^{Y_\sfB}}  \qw & \qw \\
	},
	\end{align}
with time flowing from left to right. This channel can be identified with a classical information channel from the set $X:= X_\sfA \times X_\sfB$ to the set $Y:= Y_\sfA \times Y_\sfB$ and as such the behaviour of a strategy is equivalent to a collection $P= (P^x)_{x \in X}$ of probability distributions on $Y$.\footnote{This collection is sometimes referred to as the `\emph{correlations}' produced by the strategy. The term `behaviour', which we shall use throughout, is due to Cirelson \cite{Cir93}.} (In \cref{subsec:Hist}, such collections were referred to as `measurement statistics' and the quantum strategies as `configurations'.)\\

Now, the conventional formalisation of self-testing is roughly as follows. One says that the behaviour $P$ \emph{self-tests} the strategy $\tilde{S}= (\tilde{\varrho}, \tilde{\Lambda}_\sfA, \tilde{\Lambda}_\sfB)$, if any other strategy $S=(\varrho, \Lambda_\sfA, \Lambda_\sfB)$ with behaviour $P$ is `reducible' to $\tilde{S}$, where \emph{reducibility} is a certain relation defined among quantum strategies in terms of the constituents $\varrho, \Lambda_\sfA, \Lambda_\sfB, \tilde{\varrho}, \tilde{\Lambda}_\sfA$ and $\tilde{\Lambda}_\sfB$. The details of this definition are reviewed in \cref{sec:Standard}, but for now it suffices to understand that the main issue addressed by our paper is that the definition is phrased using operator-algebraic objects which abstractly represent these constituents, rather than using operational features of the constituents themselves. For example, when the states $\varrho$ and $\tilde{\varrho}$ are pure, represented by unit vectors $\ket{\psi} \in \cH_\sfA \otimes \cH_\sfB$ and $\ket*{\tilde{\psi}} \in \tilde{\cH}_\sfA \otimes \tilde{\cH}_\sfB$, and when the measurements are all projective, represented by PVM-elements $\Pi^{x_\sfA}_\sfA(y_\sfA)$, $\Pi^{x_\sfB}_\sfB(y_\sfB)$,  $\tilde{\Pi}^{x_\sfA}_\sfA(y_\sfA)$ and $\tilde{\Pi}^{x_\sfB}_\sfB(y_\sfB)$, then the reducibility condition requires the existence of isometries $W_\sfA$, $W_\sfB$ and a unit vector $\ket*{\psi^\res}$ such that $[W_\sfA \otimes W_\sfB] [\Pi^{x_\sfA}_\sfA(y_\sfA) \otimes\Pi^{x_\sfB}_\sfB(y_\sfB) ] \ket{\psi} = [\tilde{\Pi}^{x_\sfA}_\sfA(y_\sfA) \otimes \tilde{\Pi}^{x_\sfB}_\sfB(y_\sfB) ] \ket*{\tilde{\psi}} \otimes \ket*{\psi^\res}$ for all $x_\sfA \in X_\sfA$, $x_\sfB \in X_\sfB$, $y_\sfA \in Y_\sfA$ and $y_\sfB \in Y_\sfB$.

What we propose to do instead is the following. To each strategy $S= (\varrho, \Lambda_\sfA, \Lambda_\sfB)$, we may associate the quantum channel
\begin{align}
	\label{eq:CS}
	    \myQ{0.7}{0.7}{
	& \push{\C^{X_\sfA}}  \qw  & \qw & \multigate{1}{ \Sigma_\sfA} & \push{\C^{Y_\sfA}}  \qw & \qw  \\
	& \Nmultigate{2}{\psi}  & \push{\cH_\sfA} \qw & \ghost{\Sigma_\sfA} & \push{\cE_\sfA} \ww & \ww \\
	& \Nghost{\psi}  & \push{\cE_0} \ww & \ww & \ww & \ww \\
	& \Nghost{\psi} & \push{\cH_\sfB} \qw & \multigate{1}{\Sigma_\sfB} & \push{\cE_\sfB} \ww & \ww  \\
	&   \push{\C^{X_\sfB}} \qw  & \qw&  \ghost{\Sigma_\sfB} & \push{\C^{Y_\sfB}}  \qw & \qw  \\
} ,
	\end{align}
algebraically given by $(\Sigma_\sfA \otimes \id_{\cE_0} \otimes  \Sigma_\sfB)\circ  (\id_{\what{X}_\sfA} \otimes \psi \otimes \id_{\what{X}_\sfB})$, from $\C^{X_\sfA} \otimes \C^{X_\sfB}$ to $\C^{Y_\sfA} \otimes \C^{Y_\sfB} \otimes \cE_\sfA \otimes \cE_\sfB  \otimes \cE_0$, where $\Sigma_\sfA$ and $\Sigma_\sfB$ are Stinespring dilations of $\Lambda_\sfA$ and $\Lambda_\sfB$, respectively, and where $\psi$ is a Stinespring dilation (i.e.~purification) of $\varrho$. (We review Stinespring dilations in \cref{subsec:Stinespring}; forming a Stinespring dilation is sometimes called `going to the Church of the Larger Hilbert Space'.) The channel \eqref{eq:CS} will be called the \emph{Stinespring implementation of $S$}.\footnote{Strictly speaking, the association of Stinespring implementations to strategies is not unique, since Stinespring dilations are not themselves unique; however, it is unique up to a triple of channels acting on the systems $\cE_\sfA$, $\cE_0$ and $\cE_\sfB$, cf.~\cref{thm:Stine}. (If we were to take each Stinespring dilation \myuline{minimal}, the uniqueness would be even up to a triple of \myuline{unitary} channels acting on $\cE_\sfA$, $\cE_0$ and $\cE_\sfB$.)} 

Now, the association of Stinespring implementations to quantum strategies does not look terribly operational, as Stinespring dilations are specific to quantum theory. However, for the purpose of redefining self-testing we shall only rely on the following operational feature of a Stinespring implementation: it represents \emph{maximal side-information} leaking to an environment in the course of employing the strategy $S$, and as such is strongest among more general `implementations' of the strategy (this will be made precise in \cref{def:Sim} and \cref{prop:Rechar}).

Specifically, the operational interpretation of a Stinespring implementation is that the system $\cE_0$ holds the maximal amount of information extractable about the strategy $S$  \myuline{before} knowing the inputs $x_\sfA$ and $x_\sfB$, whereas for $\sfP \in \{\sfA, \sfB\}$, the system $\cE_\sfP$ holds the maximal information extractable \myuline{after} receiving the input $x_\sfP$. For example, a copy of the input $x_\sfP$ will be derivable from the side-information in $\cE_\sfP$ (though the information content is generally larger, as is it a Stinespring dilation and thus conveys also quantum information), and if the state $\varrho$ is mixed then some classical non-trivial side-information will be available already in the system $\cE_0$ (but again, its information content is generally larger). \\

The Stinespring implementation \eqref{eq:CS} of $S$ is evidently a Stinespring dilation of the behaviour channel \eqref{eq:BHV} when considering the system $\cE_\sfA \otimes \cE_0 \otimes \cE_\sfB$ as one single environment; but the whole point will be that the concept of self-testing emerges by virtue of the difference between the three systems $\cE_\sfA$, $\cE_\sfB$ and $\cE_0$ which is due to their causal relationships to the inputs. %The Stinespring implementation of a strategy is an isometric channel from the system $\C^{X_\sfA} \otimes \C^{X_\sfB}$ to the system $\C^{Y_\sfA} \otimes \C^{Y_\sfB} \otimes \cE_\sfA \otimes \cE_0 \otimes \cE_\sfB$. (It is in fact itself a Stinespring dilation of the behaviour channel $B$.)
Indeed, the key insight in our paper is that the Stinespring implementations of strategies can be used to recast the conventional operator-algebraic notion of reducibility among strategies, and thus of self-testing, in a way which is purely operational. 

In fact, we will introduce \myuline{three} different relations among strategies, all defined in terms of their Stinespring implementations (but definable in arbitrary theories), and all of which illuminate in new ways the conventional understanding of self-testing. We describe these three relations below, along with our results. \\

Let $S=(\varrho, \Lambda_\sfA, \Lambda_\sfB)$ and $S'=(\varrho', \Lambda'_\sfA, \Lambda'_\sfB)$ be quantum strategies, and let $\psi$, $\Sigma_\sfA$, $\Sigma_\sfB$, $\psi'$, $\Sigma'_\sfA$ and $\Sigma'_\sfB$ be Stinespring dilations of their respective constituents. Consider the condition 
\begin{align}   \label{eq:Relations}
\myQ{0.7}{0.7}{
	& \push{\C^{X_\sfA}}  \qw  & \multigate{1}{ \Sigma_\sfA} & \qw & \push{\C^{Y_\sfA}}  \qw & \qw & \qw \\
	& \Nmultigate{2}{\psi}  & \ghost{\Sigma_\sfA} &  \push{\cE_\sfA} \ww & \Nmultigate{2}{\Gamma}{\ww}  &  \push{\cE'_\sfA} \ww& \ww\\
	& \Nghost{\psi} & \ww &  \push{\cE_0} \ww &\Nghost{\Gamma}{\ww}  &  \push{\cE'_0} \ww& \ww \\
	& \Nghost{\psi} & \multigate{1}{\Sigma_\sfB} &   \push{\cE_\sfB} \ww &\Nghost{\Gamma}{\ww} &  \push{\cE'_\sfB} \ww & \ww \\
	&   \push{\C^{X_\sfB}} \qw  & \ghost{\Sigma_\sfB} & \qw & \push{\C^{Y_\sfB}}  \qw & \qw & \qw  \\
} 
\quad = \quad 
\myQ{0.7}{0.7}{
	& \push{\C^{X_\sfA}}  \qw  & \multigate{1}{ \Sigma'_\sfA} & \push{\C^{Y_\sfA}}  \qw & \qw  \\
	& \Nmultigate{2}{\psi'}  & \ghost{\Sigma'_\sfA} &  \push{\cE'_\sfA} \ww &  \ww\\
	& \Nghost{\psi'} & \ww  &  \push{\cE'_0} \ww& \ww \\
	& \Nghost{\psi'} & \multigate{1}{\Sigma'_\sfB} &  \push{\cE'_\sfB} \ww & \ww  \\
	&   \push{\C^{X_\sfB}} \qw  & \ghost{\Sigma'_\sfB} & \push{\C^{Y_\sfB}}  \qw & \qw  \\
} 
\end{align}
among the corresponding Stinespring implementations, for some channel $\Gamma$ from $\cE_\sfA \otimes \cE_0 \otimes \cE_\sfB$ to $\cE'_\sfA \otimes \cE'_0 \otimes \cE'_\sfB$. (Here, we have dropped the labelling of systems $\cH_\sfA$, $\cH_\sfB$ and $\cH'_\sfA$, $\cH'_\sfB$ corresponding to the internal wires on the left and right hand side, respectively.) A channel $\Gamma$ satisfying \cref{eq:Relations} exists if and only if $S$ and $S'$ have the same behaviour, indeed the identity expresses simply that two Stinespring dilations of the behaviour channel are equivalent (\cref{thm:Stine}). If however we restrict the \myuline{form} that $\Gamma$ may take in \cref{eq:Relations}, then we obtain finer, non-trivial relations among strategies. In particular, we will introduce the following three operational relations and derive a number of results illuminating self-testing in terms of them. 
\begin{itemize}
    \item $S$ \emph{locally simulates} $S'$ (we write $S \geq_{l.s.} S'$) if there exists $\Gamma$ of the form $\scalemyQ{.8}{0.7}{0.7}{    &  \push{\cE_\sfA} \ww &  \Ngate{\Gamma_\sfA}{\ww} & \push{\cE'_\sfA} \ww & \ww \\
	  & \push{\cE_0} \ww &   \Ngate{\Gamma_0}{\ww}  & \push{\cE'_0} \ww & \ww \\
	  &  \push{\cE_\sfB} \ww & \Ngate{\Gamma_\sfB}{\ww} & \push{\cE'_\sfB}\ww & \ww}$ satisfying \cref{eq:Relations}.
	  
	  As we show, any quantum strategy $S$ is equivalent under local simulation to a strategy $S_\up{proj.}$ with \myuline{projective} measurements (\cref{prop:Proj}). In this formal sense, measurements may always without loss of generality be assumed projective. Furthermore, under a simple rank-assumption on the state of $\tilde{S}$, the operator-algebraic notion of $S$ being reducible to $\tilde{S}$ is equivalent to the operational notion of $S$ locally simulating $\tilde{S}$ (\cref{thm:RvsS}). This is our key result, and the only one with a technical proof.
	  
	  Local simulation gives rise to a natural notion self-testing (\cref{def:OpSelf}):
   We say that \emph{$P$ self-tests $\tilde{S}$ according to local simulation} (for short, \emph{a.t.l.s.}) if any strategy $S$ with behaviour $P$ locally simulates $\tilde{S}$, i.e.~$S \geq_{l.s.} \tilde{S}$.
    This notion allows to recast the conventional operator-algebraic formalisation of self-testing in operational terms
    (\cref{thm:SelfRvsS}), giving a contribution to the understanding of self-testing.

     \item $S$ \emph{locally assisted simulates} $S'$ (written $S \geq_{l.a.s.} S'$) if there exists $\Gamma$ of the form $\scalemyQ{.8}{0.7}{0.7}{ 	   & \push{\cE_\sfA} \ww &  \ww &  \Nmultigate{1}{\Gamma_\sfA}{\ww} & \push{\cE'_\sfA} \ww & \ww   \\
	   &  & \Nmultigate{3}{ \alpha } & \Nghost{\Gamma_\sfA}{\ww}  \\  
	   & \push{\cE_0} \ww &   \ww &   \Nmultigate{1}{\Gamma_0}{\ww} & \push{\cE'_0} \ww & \ww  \\
	   &   & \Nghost{ \alpha } &   \Nghost{\Gamma_0}{\ww}   \\
	   &  & \Nghost{ \alpha } &   \Nmultigate{1}{\Gamma_\sfB}{\ww}   \\
	  & \push{\cE_\sfB} \ww  &  \ww & \Nghost{\Gamma_\sfB}{\ww} & \push{\cE'_\sfB} \ww & \ww    }
 $ satisfying \cref{eq:Relations};
     
     Local assisted simulation is coarser than local simulation and, somewhat surprisingly, \myuline{symmetric} and therefore an equivalence relation among quantum strategies (\cref{thm:AssSym}). In fact, local assisted simulation is the equivalence relation \emph{generated by} local simulation (\cref{rem:SelfEquiv}).
As a consequence of this, if a behaviour $P$ self-tests a strategy $\tilde{S}$ in the conventional sense, then all strategies with behaviour $P$ are \myuline{equivalent} in a formal sense, namely according to local assisted simulation. (For example, all optimal strategies for the CHSH-game are equivalent under local assisted simulation.)
     
       One may define a new notion of \emph{self-testing according to local assisted simulation} (for short, \emph{self-testing a.t.l.a.s.}), which requires $S \leq_{l.a.s.} \tilde{S}$ for every strategy $S$ with behaviour $P$. By the above results, the condition $S \leq_{l.a.s.} \tilde{S}$ weakens the relation $S \geq_{l.s.} \tilde{S}$ (since the latter implies $S \geq_{l.a.s.} \tilde{S}$ as well as $S \leq_{l.a.s.} \tilde{S}$). Therefore, this notion of self-testing relaxes the conventional notion of self-testing. It puts emphasis on \myuline{upper} bounding the abilities of unknown strategies in terms of the known strategy $\tilde{S}$ (in line with Ekert \cite{Ek91}) rather than \myuline{lower} bounding the abilities of unknown strategies (in line with Bell \cite{Bell64}).
       
      \item $S$ \emph{causally simulates} $S'$ (written $S \geq_{c.s.} S'$) if there exists $\Gamma$ of the form $\scalemyQ{.8}{0.7}{0.7}{  &  &  \push{\cE_\sfA} \ww &  \Nmultigate{1}{\Gamma_\sfA}{\ww} & \push{\cE'_\sfA} \ww & \ww \\
	    &  & \Nmultigate{2}{\Gamma_0} & \Nghost{\Gamma_\sfA}{\ww} &  \\  
	  & \push{\cE_0} \ww &   \Nghost{\Gamma_0}{\ww}   & \push{\cE'_0} \ww & \ww  \\
	&  & \Nghost{ \Gamma_0} &   \Nmultigate{1}{\Gamma_\sfB}{\ww}   \\
	 & &  \push{\cE_\sfB} \ww & \Nghost{\Gamma_\sfB}{\ww} & \push{\cE'_\sfB}\ww & \ww}$ satisfying \cref{eq:Relations}.
    
This notion of simulation is again coarser than the previous one. And like local assisted simulation, causal simulation gives rise to a self-testing notion too, by requiring $S \leq_{c.s.} \tilde{S}$ for every strategy $S$ with behaviour $P$. We call this \emph{self-testing according to causal simulation} (for short, \emph{self-testing a.t.c.s.}).\footnote{When the state of $\tilde{S}$ is pure, self-testing a.t.c.s.~coincides with self-testing a.t.l.a.s.~(since in that case $S \leq_{c.s.} \tilde{S}$ is equivalent to $S \leq_{l.a.s.} \tilde{S}$, by inspecting the structure of the channel $\Gamma$).} The significance of self-testing a.t.c.s.~is that it realises self-testing as part of a very general theory of \emph{causal channels} and \emph{causal dilations} introduced in the PhD thesis \cite{Hou21}. We discuss this connection in \cref{subsec:Causal}. 
%In particular, the concept of causal simulation among strategies is an instance of the \emph{derivability pre-order} among causal dilations, a natural relation which is composable (modular) cf.~Theorems 4.3.24 and 4.3.25 therein and \cref{rem:Modular} below.

\end{itemize}
%

%All three modes of simulation define reflexive, transitive relations --- that is, \emph{pre-orders} --- on the class of quantum strategies, and each mode of simulation coarsens the former. 
We emphasize that all three relations turn out be operationally definable; this follows from the operational `maximal leak'-feature of Stinespring implementations, which we will make explicit in alternative characterisations of the relations (see \cref{def:Sim}, \cref{rem:AssOp} and \cref{rem:CausDefOp}). In particular, it will be clear how to interpret the three modes of simulation in any other theory of information channels which are serially and parallelly composable \cite{Hou21}.\\

%For now, we retain the understanding that the Stinespring implementation represents the maximal side-information leaking to an environment in the course of a strategy, and that the above three notions of simulation signify various ways in which side-information from $S'$ can be \emph{derived} from side-information from $S$. The phrases `local simulation', `local assisted simulation' and `causal simulation' all hint at the structure of such a derivation of side-information, i.e.~the structure of the channel $\Gamma$: the first two phrases are self-explanatory, and the phrase `causal simulation' signifies that $\Gamma$ preserves in \cref{eq:Relations} the causal dependencies of outputs on inputs (so that e.g.~the information in $\cE'_0$ indeed requires no inputs to $\C^{X_\sfA}$ and $\C^{X_\sfB}$). \\

%Local assisted simulation and causal simulation are often sufficient for demonstrating properties traditionally attributed to conventional self-testing and reducibility. This is true for instance for the extractibility of canonical measurements (\cref{prop:MeasExtract}) and for the limitations on convex decompositions of self-testing behaviours (\cref{prop:SelfSec}). Local simulation itself implies extractability of the canonical state (\cref{prop:StateExtract}).

%We will also initiate some considerations (see \cref{rem:Approx}) as to how our operational definition of self-testing suggests a notion of \emph{approximate} simulation suitable for robust self-testing \cite{MYS12}. \\

To the best of our knowledge, the operational account of general self-testing presented in this paper is the first of its kind to appear in the literature. It should be noted, however, that Ref.~\cite{LOSR} has discussed operational formulations of self-testing of \underline{states}. Specifically, the authors introduce a notion of extractability in terms of local operations and shared classical randomness, and they render a state $\varrho $ self-tested by a behaviour $B$, if $B$ can be extracted from $\varrho$ and if $\varrho$ is extractible from any other state $\sigma$ from which $B$ can be extracted. Though also based on an operational idea of extraction, this approach differs from ours in two important respects. Firstly, it differs in complexity by disregarding the \myuline{measurements} in a self-testing scenario; in our approach, the appearance of measurements is manifested in the paramount causal difference between outputs, which we discussed above (the inclusion of measurements is also what makes the equivalence to standard self-testing non-trivial to establish, cf.~\cref{lem:technical}). Secondly, the two approaches differ in concept, since we consider \myuline{dilations} of the involved state (and measurements). As a consequence, certain mixed states are self-testable according to the definition in Ref.~\cite{LOSR}, whereas, according to ours, the inclusion of dilations ultimately turns out to demand purity of self-testable states (cf.~\cref{cor:SelfPure}).

After completion of the research in this work, we have been made aware of Ref.~\cite{RB21}, where notions of self-testing are discussed in classical causal network settings. It is indeed also possible to consider our operational definition of self-testing in non-quantum theories, for example the theory of classical information (we call this theory $\CIT$). We believe it would be interesting to conduct a study of self-testing in other theories, this was, however, out of scope for the work presented here. We refer the interested reader to Ref.~\cite{Hou21}.

\subsection{Outline of the Paper}

\begin{itemize}
   \item[$\triangleright$]
We shall start by making more precise what we mean by \emph{operational concepts} in a physical theory such as quantum theory. This will be the topic of \cref{sec:Operational}, which also serves as preliminary section to establish common notions, terminology and results from quantum information theory. We discuss in particular \emph{compositional structure} of a theory (\cref{subsec:CompStruct}), the interplay with \emph{classical notions} such as probabilities (\cref{subsec:ClasStruct}), and we recall Naimark's theorem for measurements (\cref{subsec:Naimark}) and Stinespring's dilation theorem (\cref{subsec:Stinespring}). Readers who are familiar with these results and already feel comfortable with the notion of operational structure may skip \cref{sec:Operational}, referring to it if need be.
\item[$\triangleright$] In \cref{sec:Standard}, we discuss the conventional perception of self-testing (\cref{subsec:Aim}) and recall in detail the standard operator-algebraic definition of reducibility and self-testing (\cref{subsec:Standard}). We then discuss the need for an alternative understanding of self-testing (\cref{subsec:SumCrit}), which is operational in the sense of \cref{sec:Operational}.
\item[$\triangleright$] Our new operational approach to self-testing is presented in \cref{sec:Rework}, which is the core section of the paper. We introduce here \emph{local simulation} and the corresponding reformulation of self-testing, and we establish the equivalence to reducibility and the conventional notion of self-testing (\cref{subsec:Sim}). We also introduce the relaxation to \emph{local assisted simulation} and discuss how it illuminates quantum self-testing in new ways (\cref{subsec:AssSim}). Finally, we explain how the further relaxation to \emph{causal simulation} allows us to realise quantum self-testing from an even larger perspective, namely within the framework of causal channels and dilations (\cref{subsec:Causal}). 
 \item[$\triangleright$] In \cref{sec:Utility}, we end by showing how several features of self-testing can be given new, operational proofs based on our reformulations of self-testing.\footnote{\cref{sec:Utility} may, apart for the definition of causal simulation (\cref{def:CausSim}), be read independently of \cref{subsec:Causal}.} In particular, we establish a relationship between local simulation and state extractability (\cref{subsec:StateExt}), between local assisted simulation and measurement extractability (\cref{subsec:MeasExt}), and between causal simulation and the structure of convex decompositions of the behaviour (\cref{subsec:SecExt}). We also observe an operational feature of self-testing which we believe has so far gone unnoticed, but which from our operational formulation is natural (\cref{subsec:Exhausted}). It suggests that self-testing is in certain cases characterised by the property of `exhausting' the parties $\sfA$ and $\sfB$ of information.  
\end{itemize}
Each of the above four sections begins with a more detailed section outline. In the concluding \cref{sec:Conclusions}, we summarise the paper and discuss possible directions for future investigations.

\section{The Operational Structure of Quantum Information Theory}
\label{sec:Operational}

This section has two purposes. Partly, it serves as a preliminary reminder of quantum information theory as cast in the mathematical language of completely positive trace-preserving maps (quantum channels). Partly, the style of presentation serves to exemplify what we mean by the \emph{operational structure} of a physical theory, in line with a `pragmatist' tradition of physics \cite{CFS16}.  We define a concept to be \emph{operational} if it can be formulated by referring only to operational structure. Our definition of this (\cref{def:Operational}) is not fully formalised, but we render it safe to adopt the viewpoint that it is evident when a concept has been cast as operational according to this definition (whereas it may \myuline{not} be evident when this is impossible to do). The section is divided into four subsections. 

Roughly speaking, operational structure of a theory is comprised by (A) a purely \emph{compositional structure} (as pioneered in categorical frameworks \cite{AbCo04,Sel04, Baez06}), and additionally often (B) a connection to \emph{classical notions}, e.g.~probabilities (as usually formalised using Generalised Probabilistic Theories \cite{Barr07, Barn11, Barn16}). These two pillars are discussed in \cref{subsec:CompStruct} and \cref{subsec:ClasStruct}, respectively, with a special focus on quantum information theory. The compositional structure of a theory already sustains a notion of \emph{compositional concepts} (informally captured in \cref{def:Comp}), and by adding classical structure we then obtain the full notion  of \emph{operational concepts} (\cref{def:Operational}). Readers who already feel intuitively comfortable with the notion of `operational concepts' may skip \cref{subsec:CompStruct} and \cref{subsec:ClasStruct}, and refer to these subsections later if necessary.

 In \cref{subsec:Naimark} we recall Naimark's theorem for quantum measurement, and, finally, in \cref{subsec:Stinespring} we make some observations about \emph{isometric} quantum channels (\cref{ex:Isom}), reviewing in particular Stinespring's dilation theorem \cite{Stine55} from an operational perspective.

\subsection{The Compositional Structure of \textbf{QIT}}
\label{subsec:CompStruct}

The compositional structure of a physical theory comprises a collection of \emph{systems} and \emph{transformations} and defines how these elements can be \emph{composed}. Systems are interpreted as the basic information-carrying entities in the theory, and any two systems can be composed to form a single, joint system. Transformations from one system to another are interpreted as `physical processes' or `information channels' which modify the information carried by the systems, and transformations can be composed either \emph{serially} (one performed after another) or \emph{parallelly} (one performed alongside with another). 

Mathematically, this structure is succinctly captured by a \emph{symmetric monoidal category} \cite{MacLane}, however it suffices for our purposes to intuitively understand this concept insofar as it concerns the theory of quantum information.  \\

In quantum information theory, a \myuline{system} is identified with a non-zero finite-dimensional Hilbert space.\footnote{One can define a version of quantum information theory in which infinite-dimensional systems are allowed, but the definition of quantum channels is then slightly more technical and we shall not need this.} We generically typeset systems with capital calligraphic letters $\cH, \cK, \cL, \ldots$. A \myuline{transformation} between systems is in quantum information theory identified with a so-called \emph{quantum channel}. For any Hilbert space $\cL$, we denote by $\End{\cL}$ the vector space of all linear operators on $\cL$; a quantum channel from system $\cH$ to system $\cK$ is then by definition a linear map $\Lambda: \End{\cH} \to \End{\cK}$ which is \emph{trace-preserving} (meaning that $\tr(\Lambda(A))= \tr(A)$ for any $A \in \End{\cH}$) and \emph{completely positive} (meaning that for any system $\cR$ and any positive operator $A \in \End{\cH \otimes \cR}$, the operator $(\Lambda \otimes \id_{\End{\cR}})(A) \in \End{\cK \otimes \cR}$ is positive\footnote{Note that $\Lambda \otimes \id_{\End{\cR}}$ can be thought of as a map $\End{\cH \otimes \cR} \to \End{\cK \otimes \cR}$ by observing the canonical isomorphisms $\End{\cH \otimes \cR}  \cong \End{\cH} \otimes \End{\cR}$ and  $\End{\cK \otimes \cR}  \cong \End{\cK} \otimes \End{\cR}$.}). We generically typeset quantum channels with Greek letters $\Lambda,  \Phi, \Sigma, \ldots$. 

Transformations $\Lambda: \End{\cH} \to \End{\cK}$ and $\Phi: \End{\cK} \to \End{\cL}$ can be composed \myuline{serially} by means of functional composition to yield the transformation $\Phi \circ \Lambda : \End{\cH} \to \End{\cL}$. The serial composition is associative (meaning that $\Xi \circ (\Phi \circ \Lambda) = (\Xi \circ \Phi) \circ \Lambda$), and there exists for every system $\cH$ a distinguished identity transformation which acts as unit for the serial composition, namely the identity channel $\id_{\End{\cH}} : \End{\cH} \to \End{\cH}$, $A \mapsto A$, which for brevity we denote $\id_\cH$. These features of systems and transformations so far gives quantum information theory the mathematical structure of a \emph{category} \cite{MacLane}, namely a collection of morphisms which compose serially subject to natural requirements.

\begin{Example}[Isometric Quantum Channels] \label{ex:Isom}
If $V: \cH \to \cK$ is an \emph{isometry}, i.e.~a linear operator with $V^* V = \bone_\cH$, then the linear map $\Sigma_V: \End{\cH} \to \End{\cK}$ given by
$\Sigma_V(A)= VAV^*$ is trace-preserving and completely positive, hence a quantum channel from $\cH$ to $\cK$. We call quantum channels of this form \emph{isometric}. The identity transformation $\id_\cL$ on a system $\cL$ is isometric, and the serial composition of isometric channels is again isometric. \end{Example}

Quantum information theory has more structure than that of a category. Indeed, systems $\cH_1$ and $\cH_2$ can be composed by means of the tensor product to form the joint system $\cH_1 \otimes \cH_2$, and transformations $\Lambda_1 : \End{\cH_1} \to \End{\cK_1}$ and $\Lambda_2 : \End{\cH_2} \to \End{\cK_2}$ can be composed \myuline{parallelly} by means of the tensor product to yield the transformation $\Lambda_1 \otimes \Lambda_2 : \End{\cH_1 \otimes \cH_2} \to \End{\cK_1 \otimes \cK_2}$. (Complete positivity ensures that the resulting map is a quantum channel.) The parallel composition of transformations is tied to the composition of system via the identities $\id_{\cH_1 \otimes \cH_2} = \id_{\cH_1} \otimes \id_{\cH_2}$. Parallel composition is commutative and associative among transformations as well and systems,\footnote{Or rather, commutative and associative `up to natural isomorphisms'; for systems we have $\cH_1 \otimes \cH_2 \cong \cH_2 \otimes \cH_1$ and $(\cH_1 \otimes \cH_2) \otimes \cH_3 \cong \cH_1 \otimes (\cH_2 \otimes \cH_3)$, and for transformations similar conditions.} and there is a distinguished \emph{trivial system}, namely the one-dimensional Hilbert space $\triv := \C$, which physically represents `nothing'. The trivial system acts as a unit for the composition of systems ($\cH \otimes \triv  \cong \cH$) and its identity transformation $\id_\triv$ acts as a unit for the parallel composition. All these additional features of quantum theory promotes its mathematical structure to that of a \emph{\myuline{symmetric} \myuline{monoidal} category} \cite{MacLane}, namely a category augmented by a notion of parallel composition (the word `symmetric' refers to the commutativity of parallel composition).\\

In our definition of compositional structure, there is one more feature of quantum information theory we wish to impose generally:\footnote{Strictly speaking, this feature is not an additional piece of structure but rather an additional axiom about the existing structure.} the trivial system $\triv$ is such that from any system $\cH$ there is a unique transformation to $\triv$, namely the trace $\tr_\cH : \End{\cH} \to \C (\cong \End{\triv} ) $. As the system $\triv$ represents `nothing', this physically corresponds to saying that there is precisely one way of discarding information in a system $\cH$. This is important to us because it facilities a unique notion of \emph{marginalisation} by way of discarding only part of a system (e.g.~from $\cH_1 \otimes \cH_2$ to $\cH_1$ we have the transformation $\id_{\cH_1} \otimes \tr_{\cH_2}$, called \emph{partial trace} in quantum information theory). We shall rely heavily on marginalisation later on. 

As such, we arrive at the following model for compositional structure:

\begin{Definition}[Compositional Theories]
    A \emph{compositional theory} $\Theory$ is a symmetric monoidal category in which every system $\cH$ admits precisely one transformation to the trivial system $\triv$, denoted $\tr_\cH$. We denote the serial composition in $\Theory$ by $\circ$ and the parallel composition and composition of systems in $\Theory$ by $\otimes$. 
\end{Definition}

\begin{Remark} Symmetric monoidal categories form a well-established model for parallel and serial composition in physical theories \cite{AbCo04,Sel04,Baez06}, and the uniqueness requirement for the trivial system can be seen as formalising the impossibility of signalling from the future to the past \cite{Chir10,CoLal13,Coecke14}. Symmetric monoidal categories in which every system admits a unique transformation to $\triv$ are sometimes called \emph{semicartesian} symmetric monoidal categories.  \end{Remark}

\begin{Example}
We will denote by $\QIT$ the compositional theory of quantum information, which we defined above.
\end{Example}

\begin{Example} \label{ex:CIT}
We will denote by $\CIT$ the compositional theory of \myuline{classical} information, where systems are finite non-empty sets composing under the cartesian product $\times$, and where transformations from $X$ to $Y$ are classical channels (Markov kernels) from $X$ to $Y$, i.e.~families $T= (t^x)_{x \in X}$ of probability distributions $t^x$ on $Y$. The serial and parallel composition of transformations, which we denote $\circ$ and $\times$, respectively, are defined in the obvious way (see e.g.~\cite{Hou21} Example 1.1.9). This theory is also known as $\mathbf{FinStoch}$, cf.~Ref.~\cite{Fritz20Synthetic}. \end{Example}

Though we defined $\QIT$ using the language of Hilbert spaces and operators, it might be definable (up to a suitable notion of isomorphism) in ways which do not reference an operator-algebraic formalism. Indeed, it has been shown \cite{Hard01,Chir11} that there does exist such alternative definitions of $\QIT$, and in this sense the operator-algebraic formalism for quantum theory should be considered a choice of `coordinate system' for the theory, rather than identified with the theory itself.   

This is true more generally for \myuline{any} compositional theory: Even when a priori presented using specific mathematical objects, the theory may admit alternative presentations which do not reference those objects. This circumstance prompts the idea that some concepts in a compositional theory might be definable \emph{intrinsically}, using the compositional structure alone. Such concepts will necessarily be meaningful across different representations of the same theory, and also necessarily interpretable in every compositional theory: 

\begin{Definition}[Compositional Concepts, Informally] \label{def:Comp}
In a compositional theory, a \emph{compositional concept} is one that can be defined by reference only to the composition of systems $\otimes$, serial and parallel composition of transformations $\circ$ and $\otimes$, the distinguished trivial system $\triv$, and the distinguished identity and discarding transformations $\id_\cH$ and $\tr_\cH$.
\end{Definition}

For readers versed in mathematical logic, it is possible to make the definition of `compositional concepts' more precise by introducing an appropriate first-order language apt for symmetric monoidal categories (in particular with binary symbols $\circ$, $\otimes$ and constant symbol $\triv$). A compositional concept is then formally a predicate in this language.\footnote{\label{foot:FormLang} Let us however point out that such a formal language should have a built-in way of distinguishing e.g.~in $\QIT$ between the systems $\C^2 \otimes \C^2$ and $\C^4$, which are isomorphic and thus a priori compositionally indistinguishable (see \cref{foot:Iso}). This can be achieved for example by giving systems unique names and only allowing systems with distinct names to be composed. Moreover, we might wish to include elements of sets theory into the formal language, if for instance we wish to define certain concepts as compositional which involve infinite families of transformations.} 

Rather than being explicit about such a formal language, we find it suitable to here provide instead a selection of examples which should indicate what it means to be a compositional concept.

\begin{Example}[Reversibles] \label{ex:Revers}
We call a transformation $\Lambda$ from $\cH$ to $\cK$ \emph{reversible} if it has a left-inverse, i.e.~if there exists a transformation $\Lambda^-$ from $\cK$ to $\cH$ such that $\Lambda^- \circ \Lambda = \id_\cH$. Being a reversible transformation is a compositional concept.\footnote{Formally, reversibility would be a predicate with one free variable, $\Lambda$, namely the predicate $Rev(\Lambda)$  given by
\begin{align*}
    \exists \, \Lambda^- : \big(\up{Dom}(\Lambda^-)= \up{Cod}(\Lambda)\big)  \land \big(\up{Cod}(\Lambda^-)= \up{Dom}(\Lambda)\big) \land \big(\Lambda^- \circ \Lambda = \id_{\up{Dom}(\Lambda)}\big)
    \end{align*}
    Here, `$\up{Dom}$' and `$\up{Cod}$' are unary function symbols supposed to assign domain and codomain, respectively, to a transformation.} \end{Example}

\begin{Example}[Isomorphisms and Automorphisms]
An \emph{isomorphism}, i.e.~a transformation $\Xi$ from $\cH$ to $\cK$ for which there exists a two-sided inverse, is a compositional concept. So is an \emph{automorphism}, i.e.~an isomorphism from a system $\cH$ to itself.
\end{Example}

\begin{Example}[Dimensionality] \label{ex:Dim}
The systems in $\QIT$ are vector spaces and thus admit a notion of one system having larger dimension than another. The dimension of a vector space is however defined in terms of maximal sets of linearly independent vectors, and this concept is not evidently compositional as it references mathematical structure internal to the systems. Nevertheless, the concept that $\dim \cH \leq \dim \cK$ \myuline{can} be compositionally defined: this condition holds precisely if there exists a reversible transformation from $\cH $ to $\cK$, and reversibility is compositional by \cref{ex:Revers}.\footnote{Formally, this concept is a predicate with two free variables, namely the predicate $Lar(\cH, \cK)$ given by 
\begin{align*}
    \exists \, \Lambda :  Rev(\Lambda) \land \big(\up{Dom}(\Lambda) = \cH \big) \land \big(\up{Cod}(\Lambda) = \cK \big) 
    \end{align*}
    where $Rev(\Lambda)$ is the predicate from before expressing reversibility of $\Lambda$.}
    
    It follows that also e.g.~the notion of a quantum system $\cH$ being $2$-dimensional can be compositionally defined, indeed it means that $\dim \triv \leq \dim \cH$, that $\dim \cH \nleq \dim \triv$, and that $\dim \cH \leq \dim \cH'$ for any other system $\cH'$ with those two properties. By induction, any concrete finite dimensionality can be defined as a compositional concept. (This is not to say that every compositional theory will actually \myuline{exhibit} systems of every concrete dimension.)  \end{Example}

\begin{Example}[States]
 The notion of a \emph{state} is compositional, since states are definable as transformations from the trivial system $\triv$. Specifically, in $\QIT$, a state on the system $\cH$ is usually defined as a \emph{density operator} $\varrho \in \End{\cH}$ (i.e.~a positive operator with unit trace), but density operators $\varrho$ can be identified with quantum channels from $\triv$ to $\cH$ since these are precisely the linear maps $ \C (\cong \End{\triv} ) \to \End{\cH}$ given by $z \mapsto z \varrho$. (By abuse of notation, we will always identify states in $\QIT$ with density operators.)  In the compositional theory $\CIT$, this notion of state also translates to the intended notion, indeed a transformation in $\CIT$ from the one-element set $\triv$ to a general set $X$ corresponds to a probability distribution on $X$. 
 
Provided we have a suitable naming of systems (see \cref{foot:FormLang}), the concept of \emph{product states} $\varrho_1 \otimes \varrho_2$ on a system $\cH_1 \otimes \cH_2$ is also compositional. In the theory $\CIT$, it translates to stochastic independence. 
 \end{Example}

A quantum state is called \emph{pure} if its density operator is a rank-one projection, i.e.~of the form $\ketbra{\psi}$ for some unit vector $\ket{\psi}$. Pure states are in fact isometric channels (since an isometry $V: \triv \to \cK$ is essentially a unit vector $\ket{\psi} \in \cK$). In \cref{subsec:Stinespring} we shall see that the concept of being an isometric channel (\cref{ex:Isom}) --- which is a priori defined using operator-algebra --- is in fact also a compositional concept, albeit in disguise (\cref{rem:IsoMeansPure}). In particular the concept of pure states is compositional.  \\

Now, the intuitive understanding of what it means for a concept to be \myuline{operational} subsumes all concepts which are compositional. There are however concepts which are operational in the conventional, intuitive sense and which seem to fall outside the scope of compositional structure. 

Two such examples are the concepts of \emph{probability} and \emph{measurement} (another example is that of \emph{separable} states, i.e.~convex combinations of product states). Formally, given a set $Y$, it is natural to define in $\QIT$ a `probability distribution on $Y$' as a quantum state on the system $\C^Y$ which is diagonal in the computational basis, i.e.~is of the form $\sum_{y \in Y} p(y) \ketbra{y}$ for some function $p:Y \to [0,1]$. Likewise, a `measurement on $\cH$ with outcomes in $Y$' may be defined as a quantum channel $M: \End{\cH}\to \End{\C^Y}$ such that for every state $\varrho$ on $\cH$, the state $M(\varrho)$ is a probability distribution on $Y$. (As such, being a probability is actually a special case of being a measurement.) The concepts of `probability distribution' and `measurement' are instrumental to the operational structure of quantum theory, but they are not compositional in our sense.\footnote{\label{foot:Iso}This can be shown rigorously by formalising the idea that every compositional concept is `invariant under conjugation by automorphisms', i.e.~if we pick an automorphism $\alpha_{\cH}$ of each system $\cH$, in a coherent way such that $\alpha_{\cH_1 \otimes \cH_2} = \alpha_{\cH_1} \otimes \alpha_{\cH_2}$, then for any predicate $C(\Lambda_1, \ldots, \Lambda_n)$ with transformations $\Lambda_1, \ldots, \Lambda_n$ as free variables, the sentence $\forall \Lambda_1, \ldots, \Lambda_n : \big( C(\Lambda_1, \ldots, \Lambda_n) \Leftrightarrow C(\alpha_{\up{Cod}(\Lambda_1)} \circ \Lambda_1 \circ \alpha^{-1}_{\up{Dom}(\Lambda_1)}, \ldots, \alpha_{\up{Cod}(\Lambda_n)} \circ \Lambda_n \circ \alpha^{-1}_{\up{Dom}(\Lambda_n)}) \big)$ is provable. On the other hand, the above defined concept of measurement is not invariant when an automorphism is applied to the codomain $\C^Y$.} Even disregarding this circumstance, it is unclear that the concept of being a \myuline{specific} probability distribution would be compositional, e.g. `the distribution on $\{0,1\}$ which assigns probability $1/3$ to $0$'. This concept is however also operational in the conventional sense, since we may test it (at least to arbitrary precision) by gathering statistics from a large number of independent experiments.

For these reasons, we need to add elements to a compositional theory to capture its full operational structure.

\subsection{Classical Notions in \textbf{QIT}}
\label{subsec:ClasStruct}

In order to capture operational concepts additional to the compositional ones, a `classical structure' should be added to a compositional theory. This structure should describe how to interpret certain information as classical (e.g.~measurement outcomes), and how to assign probabilities to such information. This has been formalised mathematically in the notion of Generalised Probabilistic Theories \cite{Barr07, Barn11, Barn16} and more explicitly merged with category-based compositional structure in Refs.~\cite{Chir10,Hardy10}. Below, we give an alternative and for our purposes simpler treatment which works by virtue of a natural \emph{embedding} of $\CIT$ into the target theory, in our case $\QIT$.\footnote{This idea was mentioned in Ref.~\cite{Hou21}, see Remarks 1.1.12 and 1.1.13 and Section 1.4 therein.} \\

Recall from \cref{ex:CIT} that $\CIT$ is the compositional theory of classical information, whose systems are non-empty finite sets and whose transformations from $X$ to $Y$ are classical channels, i.e.~families $T= (t^x)_{x \in X}$ of probability distributions $t^x$ on $Y$. We now describe how the theory $\CIT$ embeds into the theory $\QIT$. This embedding can then be used to define classical notions within $\QIT$.

Any system $X$ in $\CIT$ yields a system $\what{X} := \C^X$ in $\QIT$, which is the quantum representation of the classical system $X$. We will call a quantum system of the form  $\what{X}$ an \emph{embedded classical system}. (It is not the intention that embedded classical systems can only hold states which are diagonal in the computational basis; being embedded classical simply means being equipped with a notion of which states are considered classical.) Note that this embedding preserves the composition of systems, since $\what{X_1 \times X_2} \cong \what{X}_1 \otimes \what{X}_2$, and that the trivial system in $\CIT$ embeds as the trivial system in $\QIT$.

 Any transformation $T = (t^x)_{x \in X}$ from $X$ to $Y$ in $\CIT$ yields a transformation $\what{T}$ from $\what{X}$ to $\what{Y}$ in $\QIT$, namely the quantum channel given by $\what{T}(A) = \sum_{x \in X, y \in Y} \bra{x} A \ket{x} t^x(y) \ketbra{y}$. We will call a quantum channel of the form $\what{T}$ an \emph{embedded classical channel}. In particular, a classical state is a state of the form $\sum_{y \in Y} p(y) \ketbra{y}$ for a probability distribution $p: Y \to [0,1]$. Again, note that the embedding preserves the serial and parallel composition of transformations, since $\what{ S \circ T } = \what{S} \circ \what{T}$ and $\what{T_1 \times T_2} = \what{T}_1 \otimes \what{T}_2$.  
 
All in all, the above embedding of systems and transformations yields a faithful picture of the compositional theory $\CIT$ inside the compositional theory $\QIT$.  \\ 
 
 The embedding of $\CIT$ into $\QIT$ does \myuline{not} preserve identities, i.e.~the channel $\Delta_X := \what{\id_X}$ is distinct from the channel $\id_{\what{X}}$.\footnote{As observed in \cite{Hou21} Remark 1.1.12, this is not a peculiarity of the particular embedding chosen --- there simply does not exist any identity-preserving embedding (formally: injective strong monoidal functor) from $\CIT$ to $\QIT$.} Explicitly, the channel is given by $\Delta_X(A) = \sum_{x \in X} \bra{x} A \ket{x}  \ketbra{x}$ and we will call it the \emph{dephasing channel on $\what{X}$}. It is easy to verify that an arbitrary quantum channel $\Lambda$ from $\what{X}$ to $\what{Y}$ is embedded classical (i.e.~of the form $\what{T}$) if and only if $\Delta_Y \circ \Lambda \circ \Delta_X = \Lambda$. As such, the dephasing channels can be used to \myuline{identify} $\CIT$ as a sub-theory within $\QIT$. \\
 
 Altogether, the above ideas may be abstracted from the case of quantum information theory to yield a general notion of operational structure, just as we did for compositional structure. 
 
 Given compositional theories $\Theory_0$ and $\Theory$, an \emph{embedding} of $\Theory_0$ into $\Theory$ means a map $\what{\phantom{x}}: \Theory_0 \to \Theory$ which injectively takes systems to systems and transformations to transformations, and which preserves the trivial system, composition of systems, and serial and parallel composition of transformations. Let us moreover say that \emph{$\what{\phantom{x}}$ identifies $\Theory_0$ within $\Theory$} if for any embedded systems $\what{X}$ and $\what{Y}$ in $\Theory$, an arbitrary transformation $\Lambda$ from $\what{X}$ to $\what{Y}$ is embedded (i.e.~of the form $\what{T}$) if and only if it satisfies $\what{\id_Y} \circ \Lambda \circ \what{\id_X} = \Lambda$. Note that this requirement can be considered a natural relaxation of surjectivity: any embedded transformation $\Lambda = \what{T}$ satisfies $\what{\id_Y} \circ \what{T} \circ \what{\id_X} = \what{\id_Y \circ T \circ \id_X} = \what{T}$ by properties of the embedding, but the requirement guarantees that \myuline{every} transformation $\Lambda$ satisfying $\what{\id_Y} \circ \Lambda \circ \what{\id_X} = \Lambda$ (i.e.~every `classical' transformation) is in the image of $\what{\phantom{x}}$. 

 \begin{Definition}[Operational Theories]\label{def:OpTheory}
     An \emph{operational theory} is a pair $(\Theory, \what{\phantom{x}})$, where $\Theory$ is a compositional theory and where $\what{\phantom{x}}: \CIT \to \Theory$ is an embedding which identifies $\CIT$ within $\Theory$.\end{Definition}

Quantum information theory is an operational theory by virtue of the embedding $\what{\phantom{x}}: \CIT \to \QIT$ described above. Classical information theory is an operational theory too, by virtue of the identity embedding $\what{\phantom{x}}: \CIT \to \CIT$. 

In an operational theory, we may successfully capture the previously considered operational concepts which were not compositional. For example, the dephasing\footnote{In a general theory, there may be nothing `dephasing' about a dephasing channel, however we anyway choose to import the term from quantum information theory.} channels $\Delta_X = \what{\id_X}$ can be used to define all sorts of hybrid transformations:  \\

\textbf{Measurements.} If $\what{Y}$ is an embedded classical system and $M$ a transformation from $\cH$ to $\what{Y}$, we say that $M$ \emph{has classical outcomes} if $\Delta_Y \circ M= M$. Alternatively, we call such a channel a \emph{measurement on $\cH$ with outcomes in $Y$}. In the operational theory $\QIT$, it is easy to verify using the Kraus representation of the channel $M$, that if $M$ is a measurement then there exists a $Y$-indexed \emph{POVM (Positive Operator-Valued Measure)} on $\cH$, i.e.~a family $E= (E(y))_{y \in Y}$ of positive operators $E(y)$ on $\cH$ with $\sum_{y \in Y} E(y) = \bone_\cH$, such that
\begin{align} \label{eq:POVM}
M(A) = \sum_{y \in Y} \tr(E(y) A)  \ketbra{y} \quad \text{for all $A \in \End{\cH}$}.
\end{align}

The POVM is unique given the measurement $M$, and any POVM $E$ defines a measurement as above. Thus, we can in $\QIT$ identify measurements with POVMs. \\

\textbf{Ensembles of Channels.} If $\what{X}$ is an embedded classical system and if $\Lambda$ is a transformation from $\cH \otimes \what{X}$ to $\cK$, we say that $\Lambda$ \emph{has classical inputs on $\what{X}$} if $\Lambda \circ (\id_\cH \otimes \Delta_X) = \Lambda$. In $\QIT$, it is easy to verify that this holds precisely if there exists a family $(\Lambda^x)_{x \in X}$ of channels $\Lambda^x : \cH \to \cK$ such that 
\begin{align}
\Lambda (A \otimes B) = \sum_{x \in X} \Lambda^x(A) \bra{x} B \ket{x}   \quad \text{for all $A \in \End{\cH}$, $B \in \End{\what{X}}$} .
\end{align}

Thus, to specify such a channel is to specify an \myuline{ensemble} of channels $\Lambda^x : \cH \to \cK$, indexed by the set $X$. (In particular, a channel $\Lambda : \what{X} \to \cK$ with classical inputs corresponds to an ensemble of states on $\cK$.) We will often use the terminology that $\Lambda$ `measures' of `reads off' the classical value $x$ and applies the according channel $\Lambda^x$. \\

It should be obvious how to use the dephasing channels to define also quantum \emph{instruments} (that is, channels with classical outcomes on some factor in the output system), ensembles of instruments, etc. 

With a definition of operational structure in place, we can state more precisely what we mean by operational concepts:

\begin{Definition}[Operational Concepts, Informally] \label{def:Operational}
    In an operational theory, an \emph{operational concept} is one that can be defined in terms of the compositional structure, the embedding $\what{\phantom{x}}$ and any specific systems and transformations in $\CIT$.\end{Definition}

    This definition subsumes compositional concepts and concepts defined in terms of dephasing channels (in particular, measurements and ensembles as defined above), but is also meant to subsume reference to specific embedded systems and probabilities (e.g.~`the probability distribution on $\{0,1\}$ which assigns probability $1/3$ to $0$'). Again, readers versed in formal logic may entertain the exercise of constructing a more rigorous definition, but we shall here content ourselves with providing a few illustrative examples.

\begin{Example}[Biased Coins]
In $\QIT$, the state on $\C^2$ with density operator $\varrho = \frac{1}{3} \ketbra{0}+ \frac{2}{3}\ketbra{1}$ is an operational concept. Indeed, $\C^2$ can be defined as the embedded system $\what{\{0,1\}}$, and $\varrho$ as the embedded state $\what{s}$, where $s$ is the probability distribution in $\CIT$ on the system $\{0,1\}$ which assigns probability $\frac{1}{3}$ to outcome $0$. Similarly, any other state $p \ketbra{0}+ (1-p) \ketbra{1}$ on $\C^2$ is operational when $p \in [0,1]$ is a definable real number. (The state is in fact \myuline{compositional} when $p=1/2$, since it is then the unique state invariant under all automorphisms on $\C^2$.) 
\end{Example}

\begin{Example}[Maximally Entangled]
The notion of being a \emph{maximally entangled} state on $\C^2 \otimes \C^2$ is operationally definable. Indeed, $\C^2 \otimes \C^2$ is the embedded system $\what{\{0,1\}} \otimes \what{\{0,1\}}$, and a state $\psi$ on this system is maximally entangled precisely if it is pure and if for some local automorphism $\alpha_1 \otimes \alpha_2$ of $\C^2 \otimes \C^2$, the state  $ \psi' :=[\alpha_1 \otimes \alpha_2](\psi)$ satisfies  $(\Delta_{\{0,1\}} \otimes \Delta_{\{0,1\}})(\psi') = \frac{1}{2} \ketbra{00} + \frac{1}{2} \ketbra{11}$.  \end{Example}

As an exercise, we encourage the reader to consider how the \emph{trace distance} $\delta(\varrho, \sigma) := \frac{1}{2} \norm{\varrho-\sigma}_1$ between quantum states $\varrho$ and $\sigma$ may be realised as an operational concept, using that $\delta(\varrho, \sigma)= 2p^*(\varrho, \sigma)-1$ where $p^*(\varrho, \sigma)$ is the optimal probability of distinguishing in a measurement between the states $\varrho$ or $\sigma$ if provided uniformly at random.\\

This concludes our presentation of what we mean by operational concepts in quantum information theory (or indeed any operational theory). As mentioned in the beginning, we circumvent an excruciating formalisation of \cref{def:Operational} by adopting the viewpoint that it is intuitively clear when a concept has been successfully given an operational definition according to this definition.

\subsection{Naimark's Theorem}
\label{subsec:Naimark}

In this subsection, we recall a useful structure-theorem for quantum measurements by M.~Naimark. In general, a measurement which correspond to the POVM $E=(E(y))_{y \in Y}$ is called \emph{projective} if all of the positive operators $E(y)$ are projections, i.e.~satisfy $E(y)^2 = E(y)$. (We do not know whether the notion of being a projective measurement is operational or not, but this is not relevant for our purposes.)  Naimark's theorem \cite{Neum40} states that any measurement can be realised using a projective measurement on a larger system:

\begin{Thm}[Naimark] \label{thm:Naim} For any quantum measurement $M: \End{\cH} \to \End{\what{Y}}$, there exists a Hilbert space $\cK^\up{Nai}$, a \myuline{projective} quantum measurement $M^\up{Nai} : \End{\cH \otimes \cK^\up{Nai}} \to \End{\what{Y}}$ and a pure state $\phi^\up{Nai}$ on $\cK^\up{Nai}$ such that 
\begin{align} \label{eq:Naimark}
M = M^\up{Nai} \circ ( \id_{\cH} \otimes \phi^\up{Nai}).
\end{align}
\end{Thm}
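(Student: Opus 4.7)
The plan is to reduce \cref{thm:Naim} to the familiar operator-algebraic form of Naimark's theorem: first translate the measurement channel $M$ into its POVM representative via \eqref{eq:POVM}, then construct the dilation at the level of operators, and finally re-translate the result back into the channel formulation demanded by \eqref{eq:Naimark}.

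First I would represent $M$ by its POVM $E = (E(y))_{y \in Y}$ on $\cH$. I then take the ancilla system to be $\cK^\up{Nai} := \what{Y} = \C^Y$ and the pure state to be $\phi^\up{Nai} := \ketbra{0}$ for a fixed computational basis vector $\ket{0} \in \cK^\up{Nai}$. The core construction is a linear map $V: \cH \to \cH \otimes \cK^\up{Nai}$ defined by
\[
V\ket{\psi} := \sum_{y \in Y} \sqrt{E(y)}\,\ket{\psi} \otimes \ket{y},
\]
for which the POVM normalisation $\sum_y E(y) = \bone_\cH$ immediately gives $V^{*}V = \bone_\cH$, so $V$ is an isometry. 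A dimension count allows $V$ to be extended to a unitary $U$ on $\cH \otimes \cK^\up{Nai}$ satisfying $U(\ket{\psi} \otimes \ket{0}) = V\ket{\psi}$ for every $\ket{\psi} \in \cH$. I would then define the candidate POVM on $\cH \otimes \cK^\up{Nai}$ by conjugating the manifestly projective POVM $(\bone_\cH \otimes \ketbra{y})_{y \in Y}$ with $U$,
\[
\Pi(y) := U^{*}(\bone_\cH \otimes \ketbra{y})\,U,
\]
so that each $\Pi(y)$ is a projection and $\sum_y \Pi(y) = \bone_{\cH \otimes \cK^\up{Nai}}$. By \eqref{eq:POVM} these operators specify a projective measurement channel $M^\up{Nai}$ from $\cH \otimes \cK^\up{Nai}$ to $\what{Y}$.

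It remains to verify \eqref{eq:Naimark}, which by \eqref{eq:POVM} reduces to the operator identity $\tr\bigl(\Pi(y)\,(A \otimes \ketbra{0})\bigr) = \tr\bigl(E(y)\,A\bigr)$ for every $A \in \End{\cH}$ and every $y \in Y$. This is a direct computation using $U(\ket{\psi} \otimes \ket{0}) = V\ket{\psi}$, checked most transparently on rank-one $A = \ketbra{\psi}{\psi'}$, where the left-hand side collapses to $\bra{\psi'}\sqrt{E(y)}\sqrt{E(y)}\ket{\psi} = \bra{\psi'}E(y)\ket{\psi}$. There is no real obstacle here: the operator-algebraic Naimark dilation is classical, and the only mild care lies in packaging its content in channel language via \eqref{eq:POVM} to obtain the stated identity $M = M^\up{Nai} \circ (\id_\cH \otimes \phi^\up{Nai})$.
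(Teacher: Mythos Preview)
Your proof is correct and is the standard construction of the Naimark dilation. Note, however, that the paper does not actually prove \cref{thm:Naim}: it is stated as a known result with a citation to \cite{Neum40}, followed only by remarks. So there is no ``paper's own proof'' to compare against; your argument simply supplies what the paper omits, and indeed the remark immediately following the theorem alludes to exactly the choices you make ($\cK^\up{Nai} = \C^m$ and $\phi^\up{Nai} = \ketbra{0}$).
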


\begin{Remark}
It is not uncommon in statements of Naimark's theorem to take $\cK^\up{Nai}=\C^m$ for some $m \in \N$ and fix $\phi^\up{Nai}$ to be the state $\ketbra{0}$ (this can always be achieved by a unitary rotation). However, this level of specification is irrelevant for our purposes. 
\end{Remark}

\begin{Remark}[Generalisation to Ensembles] \label{rem:NaiEnsemble}
Naimark's theorem can straightforwardly be generalised to measurement \myuline{ensembles}. Specifically, any measurement ensemble $\Lambda : \End{\what{X} \otimes \cH} \to \End{\what{Y}}$ is of the form $\Lambda^\up{Nai} \circ (\id_{\what{X}} \otimes \id_\cH \otimes \phi^\up{Nai})$ for some pure state $\phi^\up{Nai}$ on a system $\cK^\up{Nai}$ and some ensemble $\Lambda^\up{Nai}$ of \myuline{projective} measurements on $\cH \otimes \cK^\up{Nai}$. The proof is by induction on the cardinality $\abs{X}$, with the case $\abs{X}=1$ corresponding to Naimark's original theorem. 
\end{Remark}

\subsection{Isometric Channels and Stinespring's Dilation Theorem}

\label{subsec:Stinespring}

Recall from \cref{ex:Isom} that an \emph{isometric} quantum channel $\Sigma$ from $\cH$ to $\cK$ is one of the form $A \mapsto VAV^*$ for some isometry $V: \cH \to \cK$. Note that an isometric channel determines its underlying isometry up to a phase, that is, the isometries $V$ and $e^{i \theta} V$ give rise to the same isometric channel for any $\theta \in [0, 2 \pi)$, but this is the only ambiguity. In particular, pure states on $\cK$ (which are simply isometric channels from $\triv$ to $\cK$) determine their underlying unit vectors $\ket{\psi} \in \cK$ up to a phase. We will denote pure states by letters $\psi, \phi, \chi, \ldots$, and corresponding vector representatives by $\ket{\psi}, \ket{\phi}, \ket{\chi}, \ldots$ whenever we write out equations whose content is insensitive to the choice of phase.\\

We start by observing the following useful fact about isometric channels (recall from \cref{ex:Revers} the notion of reversibility):

\begin{Lem} \label{lem:IsoRev}
  Every isometric channel $\Sigma: \End{\cH} \to \End{\cK}$ is reversible.
  %, i.e.~there exists a quantum channel $\Sigma^- : \cK \to \cH$ such that $\Sigma^- \circ \Sigma = \id_\cH$.       
\end{Lem}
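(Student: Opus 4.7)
The plan is to construct an explicit left-inverse channel. Writing $\Sigma(A) = VAV^*$ for an isometry $V: \cH \to \cK$, the naive candidate $B \mapsto V^*BV$ is completely positive but generally fails to be trace-preserving: since $V^*V = \bone_\cH$ but $VV^*$ is only the projection onto the range of $V$, we have $\tr(V^*BV) = \tr(VV^* B) \neq \tr(B)$ unless $V$ is unitary. So the key issue is fixing the trace on the complementary subspace.

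To remedy this, I would pick any fixed state $\varrho_0$ on $\cH$ (such states exist as $\cH$ is non-trivial by assumption) and define
\begin{align*}
  \Sigma^-(B) \;=\; V^* B V \;+\; \tr\!\big((\bone_\cK - VV^*)\, B\big)\, \varrho_0.
\end{align*}
The first summand is a completely positive map; the second is completely positive because $\bone_\cK - VV^*$ is a positive operator (indeed a projection) and $\varrho_0$ is a state, so $B \mapsto \tr((\bone_\cK - VV^*)B)\,\varrho_0$ is the composition of a positive functional with a state-preparation. Hence $\Sigma^-$ is completely positive. Trace preservation is immediate:
\begin{align*}
  \tr(\Sigma^-(B)) \;=\; \tr(VV^* B) + \tr\!\big((\bone_\cK - VV^*) B\big) \;=\; \tr(B).
\end{align*}

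Finally, I would verify $\Sigma^- \circ \Sigma = \id_\cH$. For any $A \in \End{\cH}$,
\begin{align*}
  \Sigma^-(\Sigma(A)) \;=\; V^*(VAV^*)V + \tr\!\big((\bone_\cK - VV^*) V A V^*\big)\, \varrho_0,
\end{align*}
and using $V^*V = \bone_\cH$ the first term collapses to $A$, while the second term vanishes since $(\bone_\cK - VV^*)V = V - VV^*V = V - V = 0$. Hence $\Sigma^- \circ \Sigma = \id_\cH$, establishing reversibility.

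There is no real obstacle here; the main subtlety is recognising that the trace has to be "absorbed" by a dummy state on $\cH$, which is precisely what complete positivity of the compensating term requires. This construction is of course not canonical (it depends on the choice of $\varrho_0$), but the statement only asks for the existence of some left-inverse.
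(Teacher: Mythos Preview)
Your proof is correct and essentially identical to the paper's: both observe that $B \mapsto V^*BV$ fails trace preservation, then repair it by adding $\tr((\bone_\cK - VV^*)B)$ times a fixed state on $\cH$. The paper writes the compensating term as $\tau\,\tr(\sqrt{\bone_\cK - VV^*}\,B\,\sqrt{\bone_\cK - VV^*})$, but since $\bone_\cK - VV^*$ is already a projection this is the same expression as yours.
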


\begin{proof}
Let $V$ be an isometry representing $\Sigma$. It is tempting to define a left-inverse $\Sigma^-: \End{\cK} \to \End{\cH}$ as $B \mapsto V^* B V$. However, even though $\Sigma^-$ reverses the action of $\Sigma$, it does not define a quantum channel as it is not trace-preserving (except when $V$ is unitary). Thus, we instead pick an arbitrary state $\tau$ on $\cH$ and define $\Sigma^-$ by $\Sigma^-(B)= V^* B V + \tau \tr(\sqrt{\bone_\cH - VV^*} B \sqrt{\bone_\cH - VV^*})$. This map is completely positive and trace-preserving, thus a quantum channel, and it satisfies $\Sigma^-(\Sigma(A))= A$ for every $A \in \End{\cH}$. 
\end{proof}

Isometric channels are famous for their appearance in W.~Stinespring's ubiquitous dilation theorem \cite{Stine55}, which we now recall in the context of quantum information theory.

Given any quantum channel $\Lambda : \End{\cH} \to \End{\cK}$, a \emph{(one-sided) dilation} \cite{Chir10,Hou21} of $\Lambda$ is a quantum channel $\Phi: \End{\cH} \to \End{\cK \otimes \cE}$ such that $(\id_\cK \otimes \tr_\cE) \circ \Phi = \Lambda$. We call $\cE$ the \emph{environment} of the dilation $\Phi$. The concept of dilation is compositional, as it is definable in terms of serial and parallel composition and the distinguished discarding transformations. In general, a dilation can be interpreted as describing side-information which escapes to a hidden system (namely the environment).

A dilation $\Sigma$ of $\Lambda$ is called a \emph{Stinespring dilation} if is it isometric.

\begin{Thm}[Stinespring] \label{thm:Stine}
Every quantum channel $\Lambda: \End{\cH} \to \End{\cK}$ has a Stinespring dilation. It is unique up to transformations on the environment, in the sense that if $\Sigma: \End{\cH} \to \End{\cK \otimes \cE}$ and $\Sigma': \End{\cH} \to \End{\cK \otimes \cE'}$ are both Stinespring dilations of $\Lambda$, then there exists a quantum channel $\Gamma : \End{\cE} \to \End{\cE'}$ such that $\Sigma' = (\Gamma \otimes \id_\cK) \circ \Sigma$.   
\end{Thm}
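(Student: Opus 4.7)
The plan is to prove existence via the Kraus representation theorem and to derive uniqueness by passing through a \emph{minimal} Stinespring dilation, using \cref{lem:IsoRev} to invert isometric channels on the environment.

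\textbf{Existence.} I would invoke the Kraus theorem: since $\Lambda$ is completely positive and trace-preserving, it admits a decomposition $\Lambda(A) = \sum_{i=1}^n K_i A K_i^*$ for some operators $K_i : \cH \to \cK$ satisfying $\sum_i K_i^* K_i = \bone_\cH$. Define $V : \cH \to \cK \otimes \C^n$ by $V\ket{\psi} = \sum_i K_i\ket{\psi} \otimes \ket{i}$. Then $V^* V = \sum_i K_i^* K_i = \bone_\cH$, so $V$ is an isometry and $\Sigma(A) := V A V^*$ is an isometric channel (cf.~\cref{ex:Isom}). A direct computation on simple tensors yields $(\id_\cK \otimes \tr_{\C^n}) \circ \Sigma = \Lambda$, so $\Sigma$ is a Stinespring dilation of $\Lambda$, with environment $\cE := \C^n$.

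\textbf{Uniqueness.} I would first distinguish a \emph{minimal} Stinespring dilation $\Sigma^{\min}: \End{\cH} \to \End{\cK \otimes \cE^{\min}}$ by performing the construction above with the $K_i$ chosen linearly independent, so that $\dim \cE^{\min}$ equals the Kraus rank of $\Lambda$. The crucial step is then to show that any Stinespring dilation $\Sigma$ of $\Lambda$ factors as $\Sigma = (\id_\cK \otimes \Sigma_W) \circ \Sigma^{\min}$ for some isometry $W: \cE^{\min} \to \cE$, where $\Sigma_W$ denotes the associated isometric channel. This is the standard fact that any two Kraus decompositions of the same quantum channel are related by an isometry on the indexing Hilbert space, phrased at the channel level. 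Applying this to the two given dilations produces isometries $W: \cE^{\min} \to \cE$ and $W': \cE^{\min} \to \cE'$ with
\begin{equation*}
\Sigma \;=\; (\id_\cK \otimes \Sigma_W) \circ \Sigma^{\min}, \qquad \Sigma' \;=\; (\id_\cK \otimes \Sigma_{W'}) \circ \Sigma^{\min}.
\end{equation*}

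By \cref{lem:IsoRev}, the isometric channel $\Sigma_W$ admits a left-inverse quantum channel $\Sigma_W^- : \End{\cE} \to \End{\cE^{\min}}$. Composing on the environment factor gives $(\id_\cK \otimes \Sigma_W^-) \circ \Sigma = \Sigma^{\min}$, and substituting into the expression for $\Sigma'$ yields $\Sigma' = (\id_\cK \otimes \Gamma) \circ \Sigma$ with $\Gamma := \Sigma_{W'} \circ \Sigma_W^-$, which is a quantum channel from $\End{\cE}$ to $\End{\cE'}$. This is the desired identity (up to the reordering of tensor factors implicit in the theorem statement).

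The main obstacle is establishing the factorisation through the minimal dilation, i.e.\ that every Stinespring dilation is obtained from $\Sigma^{\min}$ by composing with an isometric channel on the environment; this requires the bookkeeping comparing arbitrary Kraus decompositions with the minimal one. Once that is in place, the deduction is purely algebraic, as the reversibility supplied by \cref{lem:IsoRev} inverts the isometric embedding on the environment while leaving the output system $\cK$ untouched — the feature that makes $\Gamma$ act only on $\cE$.
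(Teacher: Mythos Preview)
Your proposal is correct and matches the paper's intended reasoning. The paper does not actually supply a proof of \cref{thm:Stine} --- it is stated as a classical result --- but the subsequent Remark explains that the channel-valued uniqueness clause follows from the usual isometric uniqueness statement by invoking \cref{lem:IsoRev}, which is precisely the mechanism you use (factor both dilations through a minimal one via isometries on the environment, then left-invert one of them). Your existence argument via Kraus operators is the standard construction and is fine.
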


\begin{Remark}
    If $\dim \cE \leq \dim \cE'$, then the channel $\Gamma$ can be taken isometric --- in fact, this is how the uniqueness clause is usually stated. The statement in \cref{thm:Stine} follows from the usual uniqueness statement by \cref{lem:IsoRev}.
\end{Remark}
	
\begin{Remark}
In the case where $\cH = \triv$, the channel $\Lambda$ corresponds to a density operator $\varrho \in \End{\cK}$ and its Stinespring dilations are pure states whose corresponding density operators $\psi$ satisfy $(\id_\cH \otimes \tr_\cE)(\psi) = \varrho$. These Stinespring dilations are more commonly called \emph{purifications} of $\varrho$. \end{Remark}

Stinespring's dilation theorem implies the following property, which in particular lends itself to the interpretation that Stinespring dilations leak maximal information to the environment:

\begin{Thm}[Completeness of Stinespring Dilations] \label{thm:StineComp}
Let $\Lambda : \End{\cH} \to \End{\cK}$ be a quantum channel. A Stinespring dilation $\Sigma : \End{\cH} \to \End{\cK \otimes \cE}$ is a \emph{complete dilation} \cite{Hou21} of $\Lambda$, meaning that if $\Phi: \End{\cH} \to \End{\cK \otimes \cF}$ is any dilation of $\Lambda$, then there exists a quantum channel $\Gamma : \End{\cE} \to \End{\cF}$ such that $\Phi = (\Gamma \otimes \id_\cK) \circ \Sigma$.
\end{Thm}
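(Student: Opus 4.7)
My plan is to deduce the theorem directly from Stinespring's existence and uniqueness theorem (\cref{thm:Stine}). The idea is that a Stinespring dilation of $\Phi$ is \emph{itself} also a Stinespring dilation of $\Lambda$, so it must be reachable from $\Sigma$ by some channel on the environment; the desired channel $\Gamma$ is then obtained by discarding the appropriate part of that environment.

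In more detail, I would proceed as follows. First, invoke \cref{thm:Stine} to produce a Stinespring dilation $\Sigma_\Phi : \End{\cH} \to \End{\cK \otimes \cF \otimes \cE'}$ of the given dilation $\Phi$, with environment $\cE'$. Since $\Phi$ is a dilation of $\Lambda$, we have
\begin{equation*}
(\id_\cK \otimes \tr_\cF \otimes \tr_{\cE'}) \circ \Sigma_\Phi = (\id_\cK \otimes \tr_\cF) \circ \Phi = \Lambda,
\end{equation*}
so $\Sigma_\Phi$ is also a dilation of $\Lambda$; and since $\Sigma_\Phi$ is isometric by construction, it is in fact a Stinespring dilation of $\Lambda$ (with composite environment $\cF \otimes \cE'$).

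Next, apply the uniqueness clause of \cref{thm:Stine} to the two Stinespring dilations $\Sigma$ and $\Sigma_\Phi$ of $\Lambda$: this yields a quantum channel $\Gamma' : \End{\cE} \to \End{\cF \otimes \cE'}$ with $\Sigma_\Phi = (\Gamma' \otimes \id_\cK) \circ \Sigma$ (interpreting the tensor factors up to the obvious coherence). Finally, define $\Gamma := (\id_\cF \otimes \tr_{\cE'}) \circ \Gamma' : \End{\cE} \to \End{\cF}$, which is a quantum channel as a composition of such. Discarding the $\cE'$-factor on both sides of the identity for $\Sigma_\Phi$ yields
\begin{equation*}
\Phi = (\id_\cK \otimes \id_\cF \otimes \tr_{\cE'}) \circ \Sigma_\Phi = ((\id_\cF \otimes \tr_{\cE'}) \circ \Gamma' \otimes \id_\cK) \circ \Sigma = (\Gamma \otimes \id_\cK) \circ \Sigma,
\end{equation*}
which is exactly the required factorisation.

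There is no real obstacle: the entire content of the argument is the observation that Stinespring dilations of $\Phi$ are automatically Stinespring dilations of $\Lambda$, after which uniqueness does all the work. The only thing to be careful about is tensor-factor bookkeeping, i.e.\ to keep track of which systems sit in which tensor slot when comparing $\Sigma$ and $\Sigma_\Phi$; this is purely a matter of invoking the coherence isomorphisms of the symmetric monoidal structure of $\QIT$ and can be suppressed notationally.
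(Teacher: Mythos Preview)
Your proof is correct and essentially identical to the paper's: take a Stinespring dilation of $\Phi$, observe it is also a Stinespring dilation of $\Lambda$, invoke the uniqueness clause of \cref{thm:Stine} to obtain a channel on the environments, and trace out the auxiliary factor. The only differences are notational (the paper writes $\Sigma'$, $\cG$, $\tilde{\Gamma}$ where you write $\Sigma_\Phi$, $\cE'$, $\Gamma'$).
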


\begin{proof}
The channel $\Phi: \End{\cH} \to \End{\cK} \otimes \End{\cF}$ has a Stinespring dilation $\Sigma' : \End{\cH} \to \End{ \cK \otimes \cF \otimes \cG}$. Now, $\Sigma'$ is also a Stinespring dilation of $\Lambda$ (with environment $\cF \otimes \cG$). Therefore, by \cref{thm:Stine}, there exists a channel $\tilde{\Gamma}: \End{\cE} \to \End{\cF \otimes \cG}$ such that $\Sigma' = (\tilde{\Gamma} \otimes \id_\cK) \circ \Sigma$. Tracing out $\cG$, we see that $\Phi = (\Gamma \otimes \id_\cK) \circ \Sigma$, with $\Gamma = (\id_\cF \otimes \tr_\cG) \circ \tilde{\Gamma}$.
\end{proof}

\begin{Remark}[General Completeness] In the theory $\CIT$, it is also true that every channel has a complete dilation; it is given by copying the input and output to the environment. See Ref.~\cite{Hou21} for further details and a systematic study of dilational completeness.
\end{Remark}

\begin{Cor}[Isometric implies Pure.] \label{Cor:StinePure}
A isometric quantum channel $\Sigma : \End{\cH} \to \End{\cK}$ is \emph{dilationally pure} \cite{Chir14pure, Hou21}, meaning that every dilation of $\Sigma$ is of the form $\Sigma \otimes \varrho$ for some state $\varrho$.    
\end{Cor}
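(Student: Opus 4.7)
The plan is to deduce this as an immediate corollary of the Completeness of Stinespring Dilations (\cref{thm:StineComp}) by exploiting the fact that an isometric channel is its own (trivial) Stinespring dilation.

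First I would observe that for any isometric channel $\Sigma : \End{\cH} \to \End{\cK}$, the channel $\Sigma$ itself can be viewed as a Stinespring dilation of $\Sigma$ with trivial environment $\cE = \triv$: indeed, $\Sigma$ is isometric by hypothesis, and the condition $(\id_\cK \otimes \tr_\triv) \circ \Sigma = \Sigma$ holds since $\tr_\triv = \id_\triv$ under the canonical identification $\cK \otimes \triv \cong \cK$.

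Next I would invoke \cref{thm:StineComp} with $\Lambda = \Sigma$ and with this trivial Stinespring dilation. Given any dilation $\Phi : \End{\cH} \to \End{\cK \otimes \cF}$ of $\Sigma$, the theorem supplies a quantum channel $\Gamma : \End{\triv} \to \End{\cF}$ such that $\Phi = (\Gamma \otimes \id_\cK) \circ \Sigma$. But a channel out of the trivial system $\triv$ is precisely a state, so $\Gamma$ corresponds to some state $\varrho$ on $\cF$, and the equation above reads $\Phi = \Sigma \otimes \varrho$ after the natural reordering of tensor factors.

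The main (minor) obstacle here is just the bookkeeping of the trivial system: one needs to check that the canonical isomorphism $\cK \otimes \triv \cong \cK$ makes the interpretation of $\Sigma$ as a Stinespring dilation of itself rigorous, and that channels $\End{\triv} \to \End{\cF}$ really do biject with states on $\cF$ (which follows from $\End{\triv} \cong \C$ and trace preservation). No deeper argument is needed beyond \cref{thm:StineComp}.
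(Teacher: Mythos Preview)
Your proof is correct and follows essentially the same approach as the paper: observe that an isometric channel is a Stinespring dilation of itself (with trivial environment), then invoke \cref{thm:StineComp} to conclude that every dilation arises by tensoring with a state. The paper's proof is slightly more terse, phrasing the conclusion as ``$\Sigma$ is a complete dilation of itself,'' but the content is identical.
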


\begin{proof}
To say that a channel is dilationally pure is to say that it is a complete dilation of itself. However if $\Sigma$ is isometric then it is evidently a \myuline{Stinespring} dilation of itself, and therefore by \cref{thm:StineComp} complete. 
\end{proof}

\begin{Remark}[Isometric means Pure] \label{rem:IsoMeansPure}
 It can be shown quite easily that the dilationally pure quantum channels are \myuline{precisely} those which are isometric (\cite{Hou21} Cor.~2.3.32) --- the idea behind the proof is that if $\Sigma$ requires more than one Kraus operator in its Kraus representation, then we may construct from this representation a dilation which is not of the form $\Sigma \otimes \varrho$. Dilational purity is evidently a compositional concept, and isometric channels may thus, perhaps surprisingly, be given a compositional definition. 
\end{Remark}

\section{The Standard Perception of Quantum Self-Testing}
\label{sec:Standard}

In this section, we recall the standard operator-algebraic perception of quantum self-testing and discuss the need for an alternative, operational perspective. The section is divided into three subsections.

In \cref{subsec:Aim} we describe in broad terms what quantum self-testing aims to say. We shall here discuss Bell-scenarios, behaviours, and (quantum) strategies of a general kind. This discussion can be cast in operational terms cf.~\cref{sec:Operational}, and the broad aim of self-testing may as such be interpreted in general operational theories beyond quantum information theory. In \cref{subsec:Standard} we give the standard definition of \myuline{quantum} self-testing \cite{MY04,MYS12,SB19}, which is to say we lay out how the aim of self-testing is conventionally formalised for the theory $\QIT$. The fundamental problem addressed by our paper is that this standard formalisation \myuline{leaves} the realm of operational structure; it is cast in operator-algebraic language that does not allow for natural generalisations to other theories, formalisms or settings beyond the Bell-scenario.

Throughout, we restrict attention to the case of self-testing in bipartite Bell-scenarios (i.e.~with two parties, $\sfA$ and $\sfB$), but the entire discussion easily generalises.

\subsection{Bell-Scenarios and the Aim of Self-Testing}
\label{subsec:Aim}

%The below \makebox is a hack to avoid an inline equation sticking out into the margin
\makebox[\linewidth][s]{By a \textbf{\emph{(bipartite) Bell-scenario}} we mean a quadruple of finite non-empty sets}, $\fB = (X_\sfA, X_\sfB, Y_\sfA, Y_\sfB)$. The interpretation is that two parties, $\sfA$ and $\sfB$, are separated and that party ${\sfP} \in \{\sfA, \sfB\}$ receives an input $x_{\sfP}$ from the set $X_{\sfP}$ and is expected to return, based on that input and without communicating with the other party, an output $y_{\sfP}$ from the set $Y_{\sfP}$. We use throughout the notation $X := X_\sfA \times X_\sfB $ and $Y := Y_\sfA \times Y_\sfB$, and denote pairs $(x_\sfA, x_\sfB)$ and $(y_\sfA, y_\sfB)$ by $x$ and $y$, respectively.

However the parties $\sfA$ and $\sfB$ proceed, it is natural to summarise their input-output behaviour by specifying for each input-pair $x=(x_\sfA, x_\sfB)$ a probability distribution $P^x$ which assigns to each output-pair $y=(y_\sfA, y_\sfB)$ the probability $P^x(y)$ of output $y$ on input $x$. Any such collection $P= (P^x)_{x \in X}$ of probability distributions on $Y = Y_\sfA \times Y_\sfB$ indexed by $X = X_\sfA \times X_\sfB$ will be called a \textbf{\emph{behaviour}} for the Bell-scenario $\fB$.\footnote{The term `behaviour' is due to Cirelson \cite{Cir93}. The term `correlations' is used synonymously in the literature.} It is worth observing that, as such, a behaviour $P$ is nothing but a transformation in the theory $\CIT$ from the set $X$ to the set $Y$. In a general operational theory, we may draw it as
\begin{align}
\myQ{0.7}{0.7}{& \push{\what{X}_\sfA} \qw & \multigate{1}{P} & \push{\what{Y}_\sfA} \qw & \qw \\ & \push{\what{X}_\sfB} \qw & \ghost{P} & \push{\what{Y}_\sfB} \qw & \qw } \quad ,
\end{align}
recalling from \cref{subsec:ClasStruct} that $\what{X}_{\sfP}$ and $\what{Y}_{\sfP}$ are the embeddings of the classical systems $X_{\sfP}$ and $Y_{\sfP}$ into the theory (in $\QIT$, they are given by $\what{X}_\sfP = \C^{X_\sfP}$ and $\what{Y}_\sfP = \C^{Y_\sfP}$). Intuitively, any theory of physics will specify which behaviours can be realised by non-communicating parties in the Bell-scenario $\fB$. Indeed, a given theory will determine a viable class of `strategies' (or `protocols') for obtaining outputs based on inputs, and to each such strategy will be associated a behaviour. Roughly speaking, self-testing is concerned with deducing characteristics of a strategy from its behaviour alone. \\

The notions of `strategy' and `realisable behaviour' can be made precise by referring to the operational structure of the theory. Namely, a behaviour can be realised in a given theory precisely if, as a channel, it can be built using serial and parallel compositions of channels allowed by the theory, in such a way that no information flows from $\sfA$'s input to $\sfB$'s output or from $\sfB$'s input to $\sfA$'s output.

For example, a channel of the form 
\begin{align} \label{eq:Bell}
	\myQ{0.7}{0.7}{
		& \push{\what{X}_\sfA}  \qw 	& \qw  & \multigate{1}{\Lambda_\sfA} & \push{\what{Y}_\sfA}  \qw & \qw   \\
		& \Nmultigate{1}{\varrho}  & \push{\cH_\sfA} \qw & \ghost{\Lambda_\sfA} &  \\
		& \Nghost{\varrho} & \push{\cH_\sfB} \qw  & \multigate{1}{\Lambda_\sfB}  \\
	&   \push{\what{X}_\sfB} \qw  		& \qw  & \ghost{\Lambda_\sfB} & \push{\what{Y}_\sfB}  \qw & \qw \\
	}, 
	\end{align}
where $\varrho$ is a state on a bipartite system $\cH_\sfA \otimes \cH_\sfB$ and where $\Lambda_\sfA$ and $\Lambda_\sfB$ are arbitrary channels, constitutes a realisable behaviour. The concrete circuit of channels, i.e.~the triple $(\varrho, \Lambda_\sfA, \Lambda_\sfB)$, constitutes a strategy which realises the behaviour. The circuit physically corresponds to the situation in which the parties $\sfA$ and $\sfB$ share a state which they process locally in an input-dependent way to obtain their outputs. 

On the other hand, a circuit of the form 
\begin{align}
    \myQ{0.7}{0.7}{
	& \push{\what{X}_\sfA}  \qw    &\multigate{1}{\Lambda_\sfA}   & \qw & \qw &   \push{\what{Y}_\sfA} \qw & \qw  \\
	&                   & \Nghost{\Lambda_\sfA}  & \push{\cH} \qw & \multigate{1}{\Lambda_\sfB} & \\
	&  \push{\what{X}_\sfB}  \qw & \qw  & \qw &  \ghost{\Lambda_\sfB}   &  \push{\what{Y}_\sfB} \qw  & \qw  \\
} 
\end{align}
does not a priori lead to a realisable behaviour, because information flows from $\sfA$'s input to $\sfB$'s output. 

A general strategy for the Bell-scenario $\fB=(X_\sfA, X_\sfB, Y_\sfA, Y_\sfB)$ could be defined as any circuit of transformations which has open inputs corresponding to the embeddings $\what{X}_{\sfA}$ and $\what{X}_\sfB$, open outputs corresponding to the embeddings $\what{Y}_{\sfA}$ and $\what{Y}_{\sfB}$, and such that in the directed acyclic graph describing the circuit structure there are no paths from $\what{X}_\sfA$ to $\what{Y}_\sfB$ or from $\what{X}_\sfB$ to $\what{Y}_\sfA$. It can however be shown (\cite{Hou21} Example 4.1.27) that every behaviour realisable by a general strategy is realisable by a triple-strategy $(\varrho, \Lambda_\sfA, \Lambda_\sfB)$ as depicted in \cref{eq:Bell}. Since the total input-output behaviour is a classical embedded channel, it may additionally be assumed that $\Lambda_{\sfP}$ has classical inputs on $\what{X}_{\sfP}$ and classical outputs on $\what{Y}_{\sfP}$ (cf.~\cref{subsec:ClasStruct}). 

Later, in \cref{subsec:Causal}, we shall make a point out of reconsidering general strategies described by arbitrary circuits, but in the standard description of self-testing one restricts attention to triple-strategies and the term `strategy' is used exclusively for those:

\begin{Definition}[Strategies] \label{def:OpStrat} 
 Consider (cf.~\cref{def:OpTheory}) an operational theory $(\Theory, \what{\phantom{x}})$, and let $\fB=(X_\sfA, X_\sfB, Y_\sfA, Y_\sfB)$ be a Bell-scenario. A \emph{strategy for $\fB$} is a triple $S= (\varrho, \Lambda_\sfA, \Lambda_\sfB)$, where $\varrho$ is a state on a bipartite system $\cH := \cH_\sfA \otimes \cH_\sfB$ in $\Theory$ and where, for ${\sfP} \in \{\sfA, \sfB\}$, $\Lambda_{\sfP}$ is a transformation in $\Theory$ from $\what{X}_{\sfP} \otimes \cH_{\sfP}$ to $\what{Y}_{\sfP}$ with classical inputs on $\what{X}_{\sfP}$ and with classical outputs. The \emph{behaviour associated to $S$} is the classical channel $P=(P^x)_{x \in X}$ defined by \cref{eq:Bell}.\end{Definition}

\begin{Example}[Strategies in $\CIT$: Local Hidden Variables and Bell's Theorem]
   In the theory of classical information, $\CIT$, a strategy for $\fB$ is a triple $(p, L_\sfA, L_\sfB)$ where $p$ is a probability distribution on some product set $H=H_\sfA \times H_\sfB$, and where $L_{\sfA} : X_{\sfA}  \times H_{\sfA}  \to Y_{\sfA}$ and $L_{\sfB} : X_{\sfB}  \times H_{\sfB} \to Y_{\sfB}$ are classical channels. Bell's theorem \cite{Bell64} is the statement that certain behaviours realisable by \myuline{quantum} strategies are not reproducible by such classical strategies. This result remains a landmark in our understanding of quantum theory, as it rules out an underlying description of certain experiments in terms of `hidden' random variables (encoded by the state $p$) and locally applied random functions (encoded by $L_\sfA$ and $L_\sfB$).\end{Example}

\begin{Example}[Strategies in $\QIT$: Operator-Algebraic Representations] \label{ex:QStrat} In quantum information theory, $\QIT$, a strategy for $\fB$ is a triple $(\varrho, \Lambda_\sfA, \Lambda_\sfB)$ where $\varrho$ is a quantum state (density operator) on a tensor-product Hilbert space $\cH = \cH_\sfA \otimes \cH_\sfB$, and where, for $\sfP \in \{\sfA, \sfB\}$, the channel $\Lambda_{\sfP} :\End{\what{X_{\sfP}} \otimes  \cH_{\sfP}}  \to \End{\what{Y_{\sfP}}}$ is an $X_{\sfP}$-indexed ensemble of measurements on $\cH_{\sfP}$ with outcomes in $Y_{\sfP}$. Each measurement in the ensemble can be identified with a POVM, cf. \cref{subsec:ClasStruct}; as such, the channel $\Lambda_{\sfP}$ may be identified with a collection $E_{\sfP} = (E^{x_{\sfP}}_{\sfP})_{x_{\sfP} \in X_{\sfP}}$ of POVMs $E^{x_{\sfP}}_{\sfP} = (E^{x_{\sfP}}_{\sfP}(y_{\sfP}))_{y_{\sfP} \in Y_{\sfP}}$ on $\cH_{\sfP}$. An equivalent definition of quantum strategies is thus as triples $(\varrho, E_\sfA, E_\sfB)$ with $\varrho$ a density operator and $E_\sfA$ and $E_\sfB$ ensembles of POVMs; this definition of quantum strategies is indeed more common in the literature on self-testing \cite{SB19}.

Given a quantum strategy $S = (\varrho, E_\sfA, E_\sfB)$ and an input-pair $x=(x_\sfA, x_\sfB)$, let us denote by $E^x$ the POVM on $\cH = \cH_\sfA \otimes \cH_\sfB$ defined by $E^x(y) = E^{x_\sfA}_\sfA(y_\sfA) \otimes E^{x_\sfB}_\sfB(y_\sfB)$ for output-pairs $y=(y_\sfA, y_\sfB)$. The POVM ensemble $(E^x)_{x \in X}$ then represents the channel $\Lambda_\sfA \otimes \Lambda_\sfB$, and the behaviour associated to the strategy $S$, i.e. the channel $P=(P^x)_{x \in X}$ defined by \cref{eq:Bell}, is given by 
\begin{align} \label{eq:bhv}
    P^x(y) = \tr(E^x(y) \varrho)  \quad \text{for $x \in X$, $y \in Y$.}
\end{align}
(If $\varrho =: \psi$ is a pure state, this expression reduces to $\bra{\psi} E^x(y) \ket{\psi}$.) \end{Example}

We can now describe what quantum self-testing aims to do. Every quantum strategy for the Bell-scenario $\fB$ gives rise to a behaviour, as expressed by \cref{eq:Bell}, or equivalently \cref{eq:bhv}. In its broadest possible definition, quantum self-testing is the enterprise of deriving statements about a strategy from knowing just its behaviour. Mathematically, if 
\begin{align} \label{eq:strabhv}
    B : \Strat{\fB} \to \Bhv{\fB}
    \end{align}
    denotes the map which associates behaviours to strategies, the question is what can be said about strategies in the pre-image $B^{-1}(\{P\}) \subseteq \Strat{\fB}$ given a quantum behaviour $P \in \Bhv{\fB}$. In a narrower sense, self-testing refers to the situation in which one can derive from the behaviour essentially uniquely what the strategy is. This is indeed the sense in which self-testing was introduced \cite{MY04} and standardised \cite{MYS12}, and it is what we mean throughout when referring to the standard, or conventional, notion of self-testing. Specifically, it is concerned with the situation in which there exists a fixed \emph{canonical} strategy $\tilde{S} = (\tilde{\varrho}, \tilde{\Lambda}_\sfA, \tilde{\Lambda}_\sfB)$ in $B^{-1}(\{P\})$ such that every strategy $S=(\varrho, \Lambda_\sfA, \Lambda_\sfB)$ in $B^{-1}(\{P\})$ is `reducible' to the canonical strategy $\tilde{S}$. One says in this case that the behaviour $P$ \emph{self-tests} the strategy $\tilde{S}$. As such, to define self-testing properly, one needs only define what is meant by one strategy being `reducible' to another. \\
    
Now, there is nothing whatsoever about the aim of self-testing which seems to rely on quantum theory: the broad question of determining the form of strategies from their behaviour is meaningful in every operational theory. Nevertheless, any currently known definition of what it means for a strategy $S$ to be \emph{reducible} to another strategy $\tilde{S}$ is cast in terms of their operator-algebraic constituents described in \cref{ex:QStrat}. We shall present the standard definition in \cref{subsec:Standard} below. It is not clear how to translate it to operational language, and this is the fundamental issue which we tackle in our paper and resolve by providing a new and purely operational definition.

\subsection{The Operator-Algebraic Formulation of Quantum Self-testing}
\label{subsec:Standard}

We now present the standard definition of reducibility among quantum strategies and quantum self-testing  \cite{MYS12,SB19}. Some authors \cite{Kan17} consider a slightly alternative reducibility definition (in fact in line with Ref.~\cite{MY98}), but it too relies on operator-algebra. We discuss in \cref{rem:SelfAlt} why we choose to make operational the standard definition rather than such alternative definitions. \\

Recall from \cref{ex:QStrat} that a quantum strategy for the Bell-scenario $\fB$ is formally a triple of state and channels, $(\varrho, \Lambda_\sfA, \Lambda_\sfB)$, or equivalently a triple of state and POVM-ensembles, $(\varrho, E_\sfA, E_\sfB)$. Let us employ the following terminology:

\begin{Definition}[Types of Quantum Strategies]
A quantum strategy $S=(\varrho, E_\sfA, E_\sfB)$ is called 
\begin{itemize}
    \item \emph{pure-state} if the state $\varrho$ is pure;
    \item \emph{projective} if all the POVM-elements $E^{x_\sfP}_\sfP(y_\sfP)$ are projections;
   % \item \emph{genuinely projective} if all the POVMs are `projective on the local supports', i.e.~for all $\sfP \in \{\sfA, \sfB\}$ ...
    \item \emph{full-rank} if the marginal states $\varrho_\sfA$ and $\varrho_\sfB$ have full ranks on $\cH_\sfA$ and $\cH_\sfB$, respectively.
\end{itemize}
\end{Definition}

It is worth observing that \myuline{every} behaviour realisable by a quantum strategy can be realised by some projective pure-state strategy, by picking any initial strategy, Naimark-extending all of its measurements (cf.~\cref{rem:NaiEnsemble}) and purifying its state.\footnote{However, it is fully possible that not every quantum realisable behaviour can be realised by a projective \myuline{full-rank} strategy.} \\

One would expect the reducibility relation to be a pre-order, $\geq_{red.}$, on the class of all strategies, $\Strat{\fB}$, with $S \geq_{red.} \tilde{S}$ signifying that $S$ is reducible to $\tilde{S}$. In reality, however, it is quite standard in expositions on self-testing to restrict attention to projective strategies \cite{SB19}, and sometimes even projective pure-state strategies \cite{MYS12}. The sufficiency of this restricted domain of attention has been explained by various arguments which we now briefly discuss. Note that restricting the form of the fixed canonical strategy $\tilde{S}$ is conceptually different from restricting the form of the variable strategy $S$.

As for the canonical strategy $\tilde{S}$, it has been suggested \cite{Goh18} that it is impossible to self-test a strategy which is \myuline{not} pure-state and projective. A formal proof of such a statement would however require the reducibility relation to be already defined for arbitrary strategies, and to the best of our knowledge such a definition has not been given explicitly before. Instead, we adopt the viewpoint that it just so happens that self-testing has only been studied in the case of a projective pure-state canonical strategy, and that consequently a reducibility definition for a general $\tilde{S}$ has not been explored.\footnote{After presenting our new operational version of reducibility, which applies to all strategies, we will prove formally that the corresponding version of self-testing is indeed only possible for pure-state strategies (\cref{cor:SelfPure}). One may take it furthermore projective without loss of generality (\cref{prop:Proj}).}

Restricting the form of the variable strategy $S$, on the other hand, cannot be justified in a similar way; indeed, the entire spirit of self-testing is to characterise arbitrary, unknown strategies with a given behaviour. Nevertheless, a restriction to projective strategies is usually justified by a reference to the possibility of Naimark-extending the measurements \cite{SB19}. Referring additionally to the possibility of purifying the state, $S$ is sometimes restricted to be also pure-state, in line with Ref.~\cite{MYS12}. Now, if one has a particular \myuline{application} of the self-testing phenomenon in mind, it may or may not be valid to consider only projective pure-state strategies, depending on what that application is. We believe, however, that restrictions on the unknown strategy $S$ are in general ill-motivated if one wants to characterise all strategies with a given behaviour.\footnote{Some works do obtain such a characterisation, see e.g.~Ref.~\cite{Kan17}.} In fact, assuming strategies to be always pure-state can lead to downright misleading conclusions. For example, one then readily arrives at the ridiculous statement that entanglement is necessary to produce any non-product behaviour (since every non-entangled pure state is product). Also, one concludes incorrectly that no adversarial environment can hold pre-existing information correlated with the measurement outcomes of the strategy. We do not know that the restriction to projective measurements should cause similar issues.\footnote{After presenting our operational version of reducibility, which applies to all strategies, we will \myuline{prove} that it suffices to consider projective strategies (\cref{prop:Proj}), thus giving substance to this perception.}

For now, we shall refrain from further discussing the quality of the motivations behind these restrictions, and rather state the reducibility and self-testing definitions which we will henceforth unambiguously refer to as the standard definitions (specifically, they appear as Def.~2 in the survey \cite{SB19}). Throughout, we will denote projective POVMs by the letter $\Pi$ rather than $E$.

\begin{Definition}[Operator-Algebraic Reducibility among Quantum Strategies] \label{def:Redattl}
	Let $S= (\varrho, \Pi_\sfA, \Pi_\sfB)$ be a projective strategy and let $\tilde{S}=(\tilde{\psi}, \tilde{\Pi}_\sfA, \tilde{\Pi}_\sfB)$ be a projective pure-state strategy. Let $\psi$ be a purification of $\varrho$, with purifying system $\cP$. We say that \emph{$S$ is reducible to $\tilde{S}$}, written $S \geq_{red.} \tilde{S}$, if there exist Hilbert spaces $\cH^\res_\sfA$, $\cH^\res_\sfB$, isometries $W_{\sfA}: \cH_{\sfA}  \to \tilde{\cH}_{\sfA} \otimes \tilde{\cH}^\res_{\sfA}$ and $W_{\sfB}: \cH_{\sfB}  \to \tilde{\cH}_{\sfB} \otimes \tilde{\cH}^\res_{\sfB}$, and a pure state $\psi^\res$ on $\cH^\res_\sfA \otimes \cH^\res_\sfB \otimes \cP$, such that
	\begin{align} \label{eq:RedAttlComp}
\forall x \in X, y \in Y : \; 	[	W \Pi^{x}(y)   \otimes \bone_\cP ] \ket{\psi} =  \tilde{\Pi}^{x}(y) \tilde{\ket{\psi}} \otimes \ket{\psi^\res},
	\end{align}

where $W= W_\sfA \otimes W_\sfB$, and where $\Pi^x(y)=\Pi^{x_\sfA}_\sfA(y_\sfA) \otimes \Pi^{x_\sfB}_\sfB(y_\sfB)$ and $\tilde{\Pi}^x(y)=\tilde{\Pi}^{x_\sfA}_\sfA(y_\sfA) \otimes \tilde{\Pi}^{x_\sfB}_\sfB(y_\sfB)$ for $x=(x_\sfA, x_\sfB)$ and $y = (y_\sfA, y_\sfB)$.
	\end{Definition}
	
\begin{Remark} The reducibility condition is independent of the choice of purification $\psi$ of $\varrho$, since the state $\psi^\res$ can be changed accordingly.
\end{Remark}
	
	\begin{Definition}[Self-Testing according to Reducibility] \label{def:SelftestAttl} 
		Suppose that $\tilde{S}=(\tilde{\psi}, \tilde{\Pi}_\sfA, \tilde{\Pi}_\sfB)$ is a projective pure-state strategy with behaviour $P$. We say that \emph{$P$ self-tests $\tilde{S}$} if every projective strategy $S$ with behaviour $P$ is reducible to $\tilde{S}$.
	\end{Definition}

The inner logic to this self-testing definition is that any strategy $S$ which \myuline{does} abide to \cref{eq:RedAttlComp} has the same behaviour as $\tilde{S}$; therefore we cannot hope to prove anything stronger about the form of an unknown strategy $S$ with the same behaviour as $\tilde{S}$. The reducibility condition is often put into words by saying that `up to local isometric embeddings, the measurements act on the state in $S$ as if $\tilde{S}$ is appended with an unmeasured residual state $\psi^\res$'.

		\begin{Remark}[Pure-State Strategies]
	  In the case where the state $\varrho$ is pure, with vector-representative $\ket{\psi}$, the reducibility condition $S \geq_{red.} \tilde{S}$ reads
   \begin{align} \label{eq:RedVers1}
       \forall x \in X, y \in Y : \; W \Pi^x(y) \ket{\psi} = \tilde{\Pi}^x(y) \ket*{\tilde{\psi}} \otimes \ket*{\psi^\res}.
   \end{align}
   This is a common way of stating the reducibility condition in treatments which assume pure-state strategies throughout.\end{Remark}

\begin{Remark}[Terminology] \label{rem:RedEquiv}
The term `reducible' is not standard in the literature, indeed the reducibility condition is rarely separated from the self-testing definition and explicitly named. A more common terminology is that `$S$ is \emph{equivalent} to $\tilde{S}$', however the relation $\geq_{red.}$ is a not symmetric and therefore no equivalence relation. We shall in \cref{subsec:AssSim} discuss how the relation $\geq_{red.}$ (or rather, our operational version of it) \myuline{generates} an equivalence relation. This generated equivalence relation is in a sense very close to the original relation, and thus provides some substance to the term `equivalent'. See \cref{thm:AssSym} and \cref{rem:SelfEquiv}. \end{Remark}

\begin{Remark}[Other Self-Testing Definitions] \label{rem:SelfAlt}
   Some authors consider an alternative self-testing definition in which statements are made separately about the state and measurements (see e.g.~\cite{Kan17} Proposition A.3 and \cite{SB19} Definition 3). In our opinion, it is better to consider statements which, like \cref{def:Redattl}, mixes state and measurements, since statements separately about measurements will not generalise adequately to an approximate setting (essentially because the marginal states $\varrho_\sfA$ and $\varrho_\sfB$ may be supported on subspaces corresponding to small eigenvalues). To our knowledge, \cref{def:Redattl} above is the only definition with this property which currently exists. Therefore, this is the definition we choose to modify into an operational definition. 
\end{Remark}

\subsection{Summary and Critique of the Standard Approach}
\label{subsec:SumCrit}

In summary, the conventional narrative of quantum self-testing in a Bell-scenario $\fB$ is as follows:  
\vspace{.3cm}
\begin{itemize}
\item An input-output behaviour for the Bell-scenario is identified with a collection of probability distributions $P=(P^x)_{x \in X}$.
\item Any imaginable strategy for producing an input-output behaviour without communication can be reduced to a standardised strategy (\cref{def:OpStrat}), in which each party makes an input-dependent measurement on a shared state and returns as output the measurement outcome. 
    \item A standardised strategy can, by the operator-algebraic formalism of quantum theory, be identified with a triple $S=(\varrho, E_\sfA, E_\sfB)$ of operator-algebraic objects as prescribed in \cref{ex:QStrat}. 
    \item Quantum self-testing refers to the situation in which every strategy $S$ with behaviour $P$ can be \emph{reduced} to a single `canonical' strategy $\tilde{S}$. Formally, reducibility is a relation between strategies stated in operator-algebraic language (\cref{def:Redattl}). The condition which defines the relation is motivated by the observation that any strategy $\tilde{S}$ can be subjected to a number of behaviour-preserving transformations, e.g.~tensoring with unmeasured states, local unitary rotations and local embeddings into larger spaces.
    %modifying the measurements outside the support of the state.
\end{itemize}

As discussed in \cref{subsec:Standard}, the reducibility relation $S \geq_{red.} \tilde{S}$ is usually only stated and enforced when $S$ is projective and $\tilde{S}$ is projective and pure-state. The significance of this is, at best, unexplained, but the message of our paper is of course that the operator-algebraic narrative of self-testing is deceiving in its entirety; it fails to give operational meaning to self-testing in line with \cref{sec:Operational}. \\

In the remainder of the paper, we set out to recast quantum self-testing in a purely operational language. The definition we propose (\cref{def:OpSelf}) has several benefits over the conventional definition.\\

It will provide a meaning to self-testing interpretable in any operational theory. It will also clarify the significance of restricting to projective strategies (\cref{prop:Proj}), yield a sense in which different strategies for a self-testing behaviour are genuinely equivalent (cf.~\cref{rem:RedEquiv}), and innately suggest an approximate relaxation suitable for robust self-testing (\cref{rem:Approx}). Moreover, by relaxing the definition and ultimately relating it to the framework of causal channels and dilations \cite{Hou21}, it will additionally clarify the significance of standardised strategies (cf.~the second item above) and point to a sensible definition of self-testing in experimental scenarios with other causal structures than Bell-scenarios. Most importantly, perhaps, this connection to a larger framework uncovers quantum self-testing as a special instance of a general and natural phenomenon (\cref{subsec:Causal}). \\

Our operational definition affects one more issue, which we now invite the reader to consider. 

The logic which drives the operator-algebraic approach to self-testing is that the behaviour-preserving transformations on $\Strat{\fB}$ -- e.g.~unitary conjugations and tensorings with unmeasured states -- should determine and justify the shape of the formal definition of reducibility (\cref{def:Redattl}). The mindset which motivates this definition is that each such behaviour-preserving transformation witnesses an ambiguity in the operator-algebraic description of the physical system, and that \emph{therefore} the definition of self-testing should be no stricter than to include this transformation as an allowed modification to the canonical strategy \cite{SB19}. 

Now, however, suppose that -- in addition to tensorings and unitary conjugations -- a third kind of behaviour-preserving transformation $\scrT: \Strat{\fB} \to \Strat{\fB}$ were to be discovered, such that for some behaviour $P$ and for any strategy $S \in \Strat{\fB}$ with behaviour $P$, the strategy $\scrT(S)$ is not reducible to $S$ in the sense of \cref{def:Redattl}. This situation is not at all hypothetical: The operation of \emph{transposing} (in some basis) the elements of a strategy preserves its behaviour, while it does not necessarily result in a strategy which is reducible in the sense of \cref{def:Redattl} to the initial one (see e.g.~Section 3.7.1 in Ref.~\cite{SB19}; in this example, the state is invariant under transposition). Likewise, Refs.~\cite{Horo15, Christ20} present examples of strategies related by \myuline{partial} transposition (on one of the systems $\cH_\sfP)$; these strategies have identical behaviours too, but their states have radically different amounts of entanglement\footnote{Since the states have a positive partial transpose (PPT), both the original state and its partial transpose are valid states.} and they are therefore likely not reducible to one another. 

One possible reaction to the discovery of such a new behaviour-preserving transformation, is to conclude that $P$ then constitutes a counterexample to self-testing. The alternative reaction is to conclude that the definition of self-testing itself should be weakened, so as to additionally incorporate the transformation $\scrT$ which was so far overlooked. 

Now, the conventional operator-algebraic approach to self-testing is really unable to decide which of the two reactions to choose. Both are equally valid from an operator-algebraic perspective. It seems that, generally, the reaction of altering the definition is favoured (see e.g.~Section 3.7.1 of Ref.~\cite{SB19}); this might be said to be also the natural choice from the operator-algebraic viewpoint, since it harmonises with the logic and mindset which motivated the reducibility definition in the first place.

However, our operational approach to self-testing shows that a completely different mindset is possible. The operational definition of self-testing presented in this work will turn out to be equivalent (cf.~\cref{thm:RvsS}) to the definition induced by \cref{def:Redattl} --- without transpositions, indeed without \emph{any} additional behaviour-preserving transformations. While we leave it as an interesting open question whether there exists \myuline{another} operational definition which corresponds to the inclusion of e.g.~transpositions, our work demonstrates that a criterion external to the operator-algebraic formalism \emph{can} be used to take a stand on the question of whether additional transformations should be absorbed into the self-testing definition or not. By our approach, the original reducibility definition, \cref{def:Redattl}, is justified and enforced not only \emph{algebraically} but indeed \emph{operationally}.

\section{Reworking Quantum Self-Testing}
\label{sec:Rework}

In this section, we present a purely operational definition of quantum self-testing and prove its equivalence to the standard notion of self-testing. We then elaborate on this operational formalisation to eventually explain how quantum self-testing can be realised from an even more general perspective. This section is the core section of our paper, and it is divided into three subsections.

Our new definition and the link to conventional self-testing is established in \cref{subsec:Sim}. The new definition works by replacing the operator-algebraic reducibility relation (\cref{def:Redattl}) with an operationally defined pre-order between strategies, which we will call \emph{local simulation} (\cref{def:Sim}). The corresponding notion of self-testing (\cref{def:OpSelf}) will be referred to as \emph{self-testing according to local simulation} (for short, \emph{self-testing a.t.l.s.}), whereas the conventional self-testing notion (\cref{def:SelftestAttl}) is for emphasis referred to as \emph{self-testing according to reducibility} (for short, \emph{self-testing a.t.r.}). We then prove the main theorems relating local simulation to reducibility (\cref{thm:RvsS}) and self-testing a.t.l.s.~to self-testing a.t.r.~(\cref{thm:SelfRvsS}). Strictly speaking, we can only prove the two notions equivalent under the technical assumption that the pure state of the canonical strategy $\tilde{S}$ has locally full rank.\footnote{This is a non-trivial condition, since $\tilde{S}$ is assumed projective.} We know of no examples of self-testing where this condition is not met.

 In \cref{subsec:AssSim}, in order to improve our understanding of local simulation, we introduce a slightly coarser relation, \emph{local assisted simulation} (\cref{def:LocAss}). Like local simulation, local assisted simulation is a pre-order on the class of all strategies, and it is interpretable in any operational theory. It turns out that local assisted simulation is in quantum theory an  \myuline{equivalence} relation (\cref{thm:AssSym}), in fact the equivalence relation \emph{generated by} the relation of local simulation (\cref{rem:SelfEquiv}). This circumstance reveals in particular that quantum self-testing can be thought of as a two-fold phenomenon and it leads us to define an alternative notion of self-testing based on local assisted simulation (see \cref{def:SelfTestAtlas} and the preceding discussion).
 
\cref{subsec:Causal} takes our operational approach to self-testing one step further. By coarsening local assisted simulation further to \emph{causal simulation} (\cref{def:CausSim}), we ultimately realise quantum self-testing as an instance of the general theory of causal channels and dilations \cite{Hou21}. This framework in particular explains how to perceive of quantum strategies and self-testing in a modular environment of general circuits, and how to generalise self-testing to scenarios with arbitrary causal structure. The presentation in \cref{subsec:Causal} is kept somewhat brief, and we refer the reader to the PhD thesis \cite{Hou21} for further details (the general version of self-testing is called `rigidity' therein), and for comparisons to other frameworks involving general circuits of information channels \cite{Chir09combs,Peri17, Kiss17}.

\subsection{A Purely Operational Formulation of Self-Testing }
\label{subsec:Sim}

Let $\fB=(X_\sfA, X_\sfB, Y_\sfA, Y_\sfB)$ be a Bell-scenario. Recall from \cref{def:OpStrat} the notion of a strategy for $\fB$, namely a triple $(\varrho, \Lambda_\sfA, \Lambda_\sfB)$ consisting of a state and channels representing measurement ensembles. Recall also that this notion of strategy is purely operational, as it refers only to the channels of the theory and to the operational notion of being a measurement ensemble.

Finally, recall from \cref{subsec:Stinespring} that a \emph{dilation} of a channel $\Lambda$ from $\cH$ to $\cK$ is a channel $\Phi$ from $\cH$ to $\cK \otimes \cE$ whose marginal, when discarding the  \emph{environment} $\cE$, is $\Lambda$.

\begin{Definition}[Component-wise Dilations and Implementations]
Let $S=(\varrho, \Lambda_\sfA, \Lambda_\sfB)$ be a strategy. A \emph{component-wise dilation of $S$} is a triple $(\xi, \Phi_\sfA, \Phi_\sfB)$ where $\xi$ is a dilation of $\varrho$, $\Phi_\sfA$ a dilation of $\Lambda_\sfA$ and $\Phi_\sfB$ a dilation of $\Lambda_\sfB$. An \emph{implementation of $S$} is a channel of the form 
 \begin{align}
	\label{eq:BhvImp}
	    \myQ{0.7}{0.7}{
	& \push{\what{X}_\sfA}  \qw  & \qw & \multigate{1}{ \Phi_\sfA} & \push{\what{Y}_\sfA}  \qw & \qw  \\
	& \Nmultigate{2}{\xi}  & \push{\cH_\sfA} \qw & \ghost{\Phi_\sfA} & \push{\cE_\sfA} \ww & \ww \\
	& \Nghost{\xi}  & \push{\cE_0} \ww & \ww & \ww & \ww \\
	& \Nghost{\xi} & \push{\cH_\sfB} \qw & \multigate{1}{\Phi_\sfB} & \push{\cE_\sfB} \ww & \ww  \\
	&   \push{\what{X}_\sfB} \qw  & \qw&  \ghost{\Phi_\sfB} & \push{\what{Y}_\sfB}  \qw & \qw  \\
} ,
	\end{align}
	where $(\xi, \Phi_\sfA, \Phi_\sfB)$ is a component-wise dilation of $S$.
\end{Definition}

The \emph{trivial implementation} of $S= (\varrho, \Lambda_\sfA, \Lambda_\sfB)$ is the implementation corresponding to the component-wise dilation $(\varrho, \Lambda_\sfA, \Lambda_\sfB)$ of $S$. Note that this is simply the behaviour of the strategy. In the theory $\QIT$, a \emph{Stinespring implementation} of $S$ is an implementation corresponding to a \emph{component-wise Stinespring dilation}, i.e.~a component-wise dilation $(\psi, \Sigma_\sfA, \Sigma_\sfB)$ where $\Sigma_{\sfA}$ and $\Sigma_\sfB$ are Stinespring dilations of $\Lambda_{\sfA}$ and $\Lambda_\sfB$, respectively, and where $\psi$ is a Stinespring dilation (purification) of $\varrho$. \\

The concepts of component-wise dilation and implementation of a strategy are operational, in fact compositional, cf.~\cref{sec:Operational}.

Dilations of a channel represent side-information escaping to an environment during the execution of the channel. Thus, given a strategy $(\varrho, \Lambda_\sfA, \Lambda_\sfB)$, a viable interpretation of the component-wise dilation $(\xi, \Phi_\sfA, \Phi_\sfB)$ is that $\xi$ represents pre-existing side-information about the state $\varrho$, and that, for $\sfP \in \{\sfA, \sfB\}$, $\Phi_{\sfP}$ represents side-information formed locally about the input $x_{\sfP} \in X_{\sfP}$ and the subsystem $\cH_{\sfP}$ upon applying the channel $\Lambda_{\sfP}$. We think of the implementation given by \cref{eq:BhvImp} as representing the total information processing which may occur when taking the environment into account: the side-information residing in system $\cE_0$ is released to the environment in advance of seeing the inputs $x_\sfA$ and $x_\sfB$, whereas the side-information residing in system $\cE_{\sfP}$ is leaked upon seeing input $x_{\sfP}$. Note that when $\varrho$ is a pure state, then (by \cref{Cor:StinePure}) any dilation $\xi$ factors as  $\varrho \otimes \sigma$ for some state $\sigma$, so any implementation factors too --- this signifies that pre-existing side-information is independent from information generated upon the inputs $x_\sfA$ and $x_\sfB$. \\

The implementation \eqref{eq:BhvImp} is always a dilation of the behaviour channel $P$, with environment $\cE_\sfA \otimes \cE_0 \otimes \cE_\sfB$. It is a Stinespring dilation of $P$ when the implementation is a Stinespring implementation. As we shall see, however, the distinction between these three parts of the environment $\cE_\sfA$, $\cE_\sfB$ and  $\cE_0$ is instrumental to the significance of implementations. %It is however the individual 

\begin{Definition}[Local Simulation] \label{def:Sim}
    Let $S$ and $S'$ be any two strategies for $\fB$. We say that $S$ \emph{locally simulates} $S'$, written $S \geq_{l.s.} S'$, if every implementation of $S'$ is also an implementation of $S$. 
    
    Explicitly, $S \geq_{l.s.} S'$ if for any component-wise dilation $(\xi', \Phi'_\sfA, \Phi'_\sfB)$ of $S'$ there exists a component-wise dilation $(\xi, \Phi_\sfA, \Phi_\sfB)$ of $S$ such that the respective environments match up and their corresponding implementations coincide, i.e. 
    \begin{align}
	    \myQ{0.7}{0.7}{
	& \push{\what{X}_\sfA}  \qw  & \qw & \multigate{1}{ \Phi_\sfA} & \push{\what{Y}_\sfA}  \qw & \qw  \\
	& \Nmultigate{2}{\xi}  & \push{\cH_\sfA} \qw & \ghost{\Phi_\sfA} & \push{\cE'_\sfA} \ww & \ww \\
	& \Nghost{\xi}  & \push{\cE'_0} \ww & \ww & \ww & \ww \\
	& \Nghost{\xi} & \push{\cH_\sfB} \qw & \multigate{1}{\Phi_\sfB} & \push{\cE'_\sfB} \ww & \ww  \\
	&   \push{\what{X}_\sfB} \qw  & \qw&  \ghost{\Phi_\sfB} & \push{\what{Y}_\sfB}  \qw & \qw  \\
} \quad = \quad   \myQ{0.7}{0.7}{
& \push{\what{X}_\sfA}  \qw  & \qw & \multigate{1}{ \Phi'_\sfA} & \push{\what{Y}_\sfA}  \qw & \qw  \\
& \Nmultigate{2}{\xi'}  & \push{\cH'_\sfA} \qw & \ghost{\Phi'_\sfA} & \push{\cE'_\sfA} \ww & \ww \\
& \Nghost{\xi'}  & \push{\cE'_0} \ww & \ww & \ww & \ww \\
& \Nghost{\xi'} & \push{\cH'_\sfB} \qw & \multigate{1}{\Phi'_\sfB} & \push{\cE'_\sfB} \ww & \ww  \\
&   \push{\what{X}_\sfB} \qw  & \qw&  \ghost{\Phi'_\sfB} & \push{\what{Y}_\sfB}  \qw & \qw  \\
} .
	\end{align}
\end{Definition}

The relation $S \geq_{l.s.} S'$ seeks to express that any side-information which can be generated during the execution of $S'$ may also be generated during the execution of $S$. The reader may wonder why in \cref{def:Sim} we have chosen the name `\myuline{local} simulation' (and what it has to do with the description in \cref{sec:Intro}), but this will become clear in \cref{prop:Rechar}. The formulation in \cref{def:Sim} has the advantage that it is obviously interpretable in every compositional theory.

\begin{Remark}
If $S \geq_{l.s.} S'$, then the trivial implementation of $S'$ is also the trivial implementation of $S$. In other words, $S$ and $S'$ must have the same behaviour. 
\end{Remark}

\begin{Prop}
Local simulation is a pre-order on the class of strategies for $\fB$, $\Strat{\fB}$.    
\end{Prop}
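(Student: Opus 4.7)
The plan is to verify the two defining properties of a pre-order, reflexivity and transitivity, directly from the characterisation in \cref{def:Sim} that $S \geq_{l.s.} S'$ means every implementation of $S'$ is also an implementation of $S$. Both properties will fall out immediately from this formulation, so no technical machinery is required; the point is simply to observe that the relation has been defined in exactly the right way.

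For reflexivity, I would fix a strategy $S = (\varrho, \Lambda_\sfA, \Lambda_\sfB)$ and pick an arbitrary implementation of $S$, arising from a component-wise dilation $(\xi, \Phi_\sfA, \Phi_\sfB)$ of $S$. Since this same component-wise dilation witnesses the implementation on both sides, the equality of implementations in \cref{def:Sim} holds trivially, so $S \geq_{l.s.} S$.

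For transitivity, suppose $S \geq_{l.s.} S'$ and $S' \geq_{l.s.} S''$. To show $S \geq_{l.s.} S''$, take an arbitrary implementation $I$ of $S''$. By the hypothesis $S' \geq_{l.s.} S''$, the implementation $I$ is also an implementation of $S'$. Applying $S \geq_{l.s.} S'$ to this same implementation $I$, it is furthermore an implementation of $S$. Hence every implementation of $S''$ is an implementation of $S$, i.e. $S \geq_{l.s.} S''$.

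The only subtlety, which I would flag but not belabour, is that in each step the environment systems $\cE_\sfA, \cE_0, \cE_\sfB$ of the target implementation are fixed, and the definition guarantees a component-wise dilation of the simulating strategy whose environments \emph{match} these — so there is no mismatch of environment systems to worry about when chaining the two simulations. There is no expected obstacle here; this statement is really just a sanity check that the operational relation introduced in \cref{def:Sim} behaves as a pre-order, preparing the ground for the substantive results (\cref{prop:Rechar}, \cref{thm:RvsS}, \cref{thm:SelfRvsS}) that follow.
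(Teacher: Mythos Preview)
Your proposal is correct and matches the paper's approach exactly: the paper's entire proof reads ``Reflexivity and transitivity are both obvious,'' and you have simply spelled out why. No further comment is needed.
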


\begin{proof}
   Reflexivity and transitivity are both obvious. 
\end{proof}

\begin{Example}[Full-Rank Strategies] \label{ex:fullrank}
Any strategy $S=(\varrho, \Lambda_\sfA, \Lambda_\sfB)$ is equivalent under local simulation to the strategy $\bar{S} = (\bar{\varrho}, \bar{\Lambda}_\sfA, \bar{\Lambda}_\sfB)$ which arises from $S$ by cutting down the systems $\cH_\sfA$ and $\cH_\sfB$ to the local supports of the state $\varrho$ (i.e.~$S \geq_{l.s.} \bar{S}$ and $\bar{S} \geq_{l.s.} S$). In this sense, a strategy may without loss of generality be assumed to be full-rank. 
\end{Example}

\begin{Example}[Augmentation by Ancillary States]  \label{ex:Augmented}
Let $S=(\varrho, \Lambda_\sfA, \Lambda_\sfB)$ be a strategy and let $\gamma$ be a state on a bipartite system $\cK_\sfA \otimes \cK_\sfB$. Define the \emph{$\gamma$-augmented strategy $S[\gamma]$} as the one which arises from $S$ by appending $\gamma$ to the state $\varrho$ and locally discarding the system $\cK_\sfP$ when measuring. Explicitly, the new state is $\varrho \otimes \gamma$ on the bipartite system $(\cH_\sfA \otimes \cK_\sfA) \otimes (\cH_\sfB \otimes \cK_\sfB)$, and the new measurement channels are $\Lambda_\sfA \otimes \tr_{\cK_\sfA}$ and $\Lambda_\sfB \otimes \tr_{\cK_\sfB}$. It is easy to check that $S[\gamma] \geq_{l.s.} S$. (The converse is generally false, as will be clear from \cref{prop:StateExtract}.)
\end{Example}

\pagebreak[2]

We now state our new, operational self-testing definition: \begin{Definition}[Self-Testing according to Local Simulation] \label{def:OpSelf}
    Let $\tilde{S}$ be any strategy with behaviour $P$. We say that \emph{$P$ self-tests $\tilde{S}$ according to local simulation} (for short, \emph{a.t.l.s.}) if any strategy $S$ with behaviour $P$ locally simulates $\tilde{S}$, i.e.~$S \geq_{l.s.} \tilde{S}$.
\end{Definition}

Contrary to the usual notion of self-testing according to reducibility (a.t.r.), our notion of self-testing a.t.l.s.~unproblematically quantifies universally over \myuline{all} strategies $S$, projective or not, and moreover does not a priori assume the canonical strategy $\tilde{S}$ to be pure-state and projective. \\

The main results of this subsection are the following:

\begin{Thm}[Reducibility versus Local Simulation] \label{thm:RvsS}
    Let $\tilde{S}$ be a projective pure-state strategy. Assume moreover that $\tilde{S}$ is full-rank. Then, for any projective strategy $S$, it holds that $S \geq_{red.} \tilde{S}$ if and only if $S \geq_{l.s.} \tilde{S}$. 
\end{Thm}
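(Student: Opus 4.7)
My plan is to handle both directions via Stinespring implementations of $S$ and $\tilde{S}$, exploiting Stinespring uniqueness and completeness (\cref{thm:Stine}, \cref{thm:StineComp}). For the forward direction ($S \geq_{red.} \tilde{S} \Rightarrow S \geq_{l.s.} \tilde{S}$), let $\psi$ purify $\varrho$ with purifying system $\cP$ and use the canonical Stinespring dilations of the projective measurements (environment equal to the post-measurement Hilbert space). I first observe that the reducibility identity \eqref{eq:RedAttlComp}, repackaged as a single equation by reassembling the $(y_\sfA, y_\sfB)$-indexed components with $\ket{y_\sfA y_\sfB}$ on the output wires, lifts to the equality of Stinespring implementations
\[ (\mathrm{Ad}(W_\sfA) \otimes \id_\cP \otimes \mathrm{Ad}(W_\sfB))\, [\text{Stinespring impl.~of } S] \;=\; [\text{Stinespring impl.~of } \tilde{S}] \otimes \ket{\psi^\res}\bra{\psi^\res}. \]
By \cref{thm:StineComp} (and \cref{Cor:StinePure}, since $\tilde{\psi}$ is pure), every implementation of $\tilde{S}$ arises from its Stinespring implementation by applying tensor-product channels $\tilde{\Gamma}_\sfA \otimes \tilde{\Gamma}_0 \otimes \tilde{\Gamma}_\sfB$ on the three environment components. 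Composing these with the $\mathrm{Ad}(W_\sfP)$-maps and partial traces over $\cH^\res_\sfP$ (distributed among the three component dilations of $S$, with $\cP$ absorbed into the central component) produces a component-wise dilation of $S$ realising the given implementation, establishing $S \geq_{l.s.} \tilde{S}$.

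For the reverse direction ($S \geq_{l.s.} \tilde{S} \Rightarrow S \geq_{red.} \tilde{S}$), I apply the local simulation hypothesis to the canonical Stinespring implementation of $\tilde{S}$ (with environments $\tilde{\cH}_\sfP$ for the measurements and trivial central environment). This yields a component-wise dilation $(\varrho, \Phi_\sfA, \Phi_\sfB)$ of $S$ whose implementation coincides with this Stinespring. Taking Stinespring dilations of each $\Phi_\sfP$ (introducing extra environments $\cF_\sfP$) and applying them to $\psi$ produces a pure state whose partial trace over $\cF_\sfA \otimes \cP \otimes \cF_\sfB$ equals the (pure) Stinespring state of $\tilde{S}$; hence the pre-trace pure state factors as $\ket{\tilde{\Psi}(x_\sfA, x_\sfB)} \otimes \ket{\omega(x_\sfA, x_\sfB)}$ with $\ket{\omega(x_\sfA, x_\sfB)} \in \cF_\sfA \otimes \cP \otimes \cF_\sfB$. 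Next, \cref{thm:Stine} applied to the two Stinespring dilations of $\Lambda_\sfP$ (namely $\Phi_\sfP$ and the canonical one given by the isometry $\ket{\phi} \mapsto \sum_{y_\sfP} \ket{y_\sfP} \otimes \Pi^{x_\sfP}_\sfP(y_\sfP)\ket{\phi}$) yields, for each $x_\sfP$, an isometry $W^{x_\sfP}_\sfP : \cH_\sfP \to \tilde{\cH}_\sfP \otimes \cF_\sfP$ such that the output-indexed Kraus pieces of $\Phi_\sfP$'s Stinespring factor as $W^{x_\sfP}_\sfP \Pi^{x_\sfP}_\sfP(y_\sfP)$. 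Substituting into the factorisation above gives
\[ \bigl(W^{x_\sfA}_\sfA \Pi^{x_\sfA}_\sfA(y_\sfA) \otimes \id_\cP \otimes W^{x_\sfB}_\sfB \Pi^{x_\sfB}_\sfB(y_\sfB)\bigr) \ket{\psi} = \bigl(\tilde{\Pi}^{x_\sfA}_\sfA(y_\sfA) \otimes \tilde{\Pi}^{x_\sfB}_\sfB(y_\sfB)\bigr) \ket{\tilde{\psi}} \otimes \ket{\omega(x_\sfA, x_\sfB)}, \]
which already has the form of the reducibility identity \eqref{eq:RedAttlComp} except that $W^{x_\sfP}_\sfP$ and $\ket{\omega(x_\sfA, x_\sfB)}$ may still depend on the inputs.

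The technical crux -- and the main obstacle -- is to remove these residual input-dependencies, and this is precisely where the full-rank assumption on $\tilde{\psi}$ enters. Full-rank implies the familiar rigidity that the map $M \mapsto (M \otimes \bone) \ket{\tilde{\psi}}$ is injective on operators, so $\ket{\tilde{\psi}}$ can be cancelled from identities. Summing the displayed relation over $y_\sfA$ (using $\sum_{y_\sfA} \tilde{\Pi}^{x_\sfA}_\sfA(y_\sfA) = \bone$) and comparing two choices of $x_\sfA$ produces, after taking reduced states on the environments and invoking full-rank rigidity, a relation that forces the $x_\sfA$-dependence of $W^{x_\sfA}_\sfA$ to act trivially on the support of $\varrho_\sfA$ and the $x_\sfA$-dependence of $\ket{\omega}$ to amount only to a phase which can be absorbed into $W^{x_\sfA}_\sfA$; a symmetric argument handles $x_\sfB$. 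Fixing fiducial inputs $(x^0_\sfA, x^0_\sfB)$ and defining $W_\sfP := W^{x^0_\sfP}_\sfP$, $\ket{\psi^\res} := \ket{\omega(x^0_\sfA, x^0_\sfB)}$, and $\cH^\res_\sfP := \cF_\sfP$ then delivers the data required by \cref{def:Redattl}, completing the proof.
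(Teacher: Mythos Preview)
Your overall strategy matches the paper's: both directions go through Stinespring implementations, the forward direction is unproblematic, and the reverse direction lands on an input-dependent version of the reducibility identity which must then be made input-independent using the full-rank hypothesis. The paper factors the argument more cleanly through \cref{prop:Rechar} (the Stinespring characterisation of local simulation), then \cref{prop:OpSim} (translation via convenient Stinespring dilations into condition~\eqref{eq:SimOp}), and finally \cref{lem:technical}. Your route is a close variant that tries to land directly on condition~(2) of \cref{lem:technical} by applying Stinespring uniqueness fibrewise in~$x_\sfP$.

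There are, however, imprecisions and a genuine gap in your technical crux. First, a Stinespring dilation of the \emph{ensemble} $\tilde{\Lambda}_\sfP$ has environment at least $\what{X}_\sfP \otimes \tilde{\cH}_\sfP$, not just $\tilde{\cH}_\sfP$; and $\Phi_\sfP$ is merely a dilation of $\Lambda_\sfP$, not a Stinespring one --- what you are really comparing fibrewise is the Stinespring dilation of $\Phi_\sfP$ restricted to input $\ket{x_\sfP}$. Second, the residual vector you call $\ket{\omega(x_\sfA,x_\sfB)}$ is actually input-\emph{independent}, since the purified factorisation is an equality of isometric channels and hence fixes a single unit vector up to one global phase. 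Most importantly, your final step is incomplete: showing that $W^{x_\sfA}_\sfA$ and $W^{x'_\sfA}_\sfA$ agree on $\supp{\varrho_\sfA}$ does \emph{not} yet give \eqref{eq:RedAttlComp}, because the vectors $\Pi^{x_\sfA}_\sfA(y_\sfA)$ applied to $\supp{\varrho_\sfA}$ might a priori leave that support. The paper's \cref{lem:technical} deals with this by first proving, from the full-rank hypothesis on $\tilde{\psi}$, that $\supp{\varrho_\sfP}$ is \emph{invariant} under each $\Pi^{x_\sfP}_\sfP(y_\sfP)$; this is where full-rank really enters (via $\operatorname{Im}\tilde{\Pi}^{x_\sfA}_\sfA(y_\sfA)\tilde{\varrho}_\sfA \subseteq \operatorname{Im}\tilde{\varrho}_\sfA$), rather than through the injectivity of $M \mapsto (M\otimes\bone)\ket{\tilde{\psi}}$ you invoke. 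Only after support-invariance is established does the Schmidt-decomposition argument show that $W^{x_\sfP}_\sfP$ may be taken independent of~$x_\sfP$.
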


\begin{Thm}[Self-Testing a.t.r.~versus Self-Testing a.t.l.s.] \label{thm:SelfRvsS}
Let  $\tilde{S}$ be a projective pure-state and full-rank strategy, and let $P$ denote its behaviour. Then, $P$ self-tests $\tilde{S}$ a.t.r. if and only if $P$ self-tests $\tilde{S}$ a.t.l.s.
\end{Thm}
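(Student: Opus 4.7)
The plan is to reduce both implications to \cref{thm:RvsS}, supplementing it only with the fact announced in \cref{subsec:MainIdea} as \cref{prop:Proj}: every strategy is equivalent under local simulation to a strategy with projective measurements. Given these two ingredients, the theorem becomes essentially a formal manipulation of quantifiers.

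For the backward implication, I assume that $P$ self-tests $\tilde{S}$ a.t.l.s., and let $S$ be any \emph{projective} strategy with behaviour $P$. The hypothesis immediately yields $S \geq_{l.s.} \tilde{S}$, and since $\tilde{S}$ is projective, pure-state, and full-rank, \cref{thm:RvsS} converts this into $S \geq_{red.} \tilde{S}$, as required by the standard definition.

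For the forward implication, I assume that $P$ self-tests $\tilde{S}$ a.t.r., and let $S$ be an \emph{arbitrary} strategy with behaviour $P$. Using \cref{prop:Proj}, I pick a projective strategy $S_{\up{proj.}}$ equivalent to $S$ under local simulation; in particular $S \geq_{l.s.} S_{\up{proj.}}$, and $S_{\up{proj.}}$ has behaviour $P$ (since local simulation preserves behaviour). Because $S_{\up{proj.}}$ is projective, the standing hypothesis gives $S_{\up{proj.}} \geq_{red.} \tilde{S}$, and \cref{thm:RvsS} (applied in its ``reducibility implies local simulation'' direction) upgrades this to $S_{\up{proj.}} \geq_{l.s.} \tilde{S}$. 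Transitivity of $\geq_{l.s.}$ then yields $S \geq_{l.s.} \tilde{S}$.

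The main substance is therefore packaged entirely into the two external inputs: \cref{thm:RvsS} supplies the on-projective-strategies equivalence of $\geq_{red.}$ and $\geq_{l.s.}$, while \cref{prop:Proj} bridges the gap between the universal quantifier over \emph{all} strategies in the a.t.l.s.~definition and the universal quantifier over only \emph{projective} strategies in the a.t.r.~definition. The hardest part is really invisible here; it sits in \cref{thm:RvsS}, whose proof (the excerpt warns) is the only technical one in the paper, and whose full-rank hypothesis on $\tilde{S}$ must be inherited as a hypothesis of the present theorem.
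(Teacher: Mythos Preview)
Your proof is correct and follows essentially the same route as the paper: the paper explicitly states that the ``if'' direction is immediate from \cref{thm:RvsS}, while the ``only if'' direction requires additionally \cref{prop:Proj} to pass from an arbitrary strategy to a projective one before invoking \cref{thm:RvsS}. Your argument spells out exactly this structure, including the transitivity step and the observation that local simulation preserves behaviour.
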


\begin{Remark}
In connection with \cref{thm:RvsS}, it is worth remarking that $S \geq_{red.} \tilde{S}$ implies $S \geq_{l.s.} \tilde{S}$ regardless of the rank-assumption on the state.  \end{Remark}

The remainder of this section is devoted to proving \cref{thm:RvsS} and \cref{thm:SelfRvsS}. Note that the `if'-direction in \cref{thm:SelfRvsS} follows immediately from \cref{thm:RvsS}. The `only if'-direction is not immediate, since self-testing a.t.r.~quantifies only over projective strategies; it will however follow if we can prove that every strategy is $\geq_{l.s.}$-equivalent to a projective strategy. Thus, we prove this (\cref{prop:Proj} below) and we prove \cref{thm:RvsS}.  \\

First, we need to better understand the notion of local simulation among strategies. A key step is the observation that local simulation may in $\QIT$ be characterised in terms of Stinespring implementations:\footnote{As the proof of \cref{prop:Rechar} shows, local simulation may more generally be characterised in terms of implementations corresponding to \emph{complete} dilations, cf.~the discussion in \cref{subsec:Stinespring}.} 

\begin{Lem}[Stinespring Characterisation of Local Simulation] \label{prop:Rechar}
     Let $S$ and $S'$ be strategies, and let $(\psi, \Sigma_\sfA, \Sigma_\sfB)$ and $(\psi', \Sigma'_\sfA, \Sigma'_\sfB)$ be component-wise Stinespring dilations of $S$ and $S'$, respectively. Then, $S \geq_{l.s.} S'$ if and only if there exist channels $\Gamma_0$, $\Gamma_\sfA$ and $\Gamma_\sfB$ such that 
  \begin{align}  \label{eq:SimStine}
\myQ{0.7}{0.7}{
	& \push{\what{X}_\sfA}  \qw  & \multigate{1}{ \Sigma_\sfA} & \qw &  \push{\what{Y}_\sfA}  \qw & \qw  \\
	& \Nmultigate{2}{\psi}  & \ghost{\Sigma_\sfA} & \push{\cE_\sfA} \ww &  \Ngate{\Gamma_\sfA}{\ww} &\push{\cE'_A} \ww& \ww\\
	& \Nghost{\psi} & \ww & \push{\cE_0} \ww &\Ngate{\Gamma_0}{\ww}& \push{\cE'_0} \ww& \ww \\
	& \Nghost{\psi} & \multigate{1}{\Sigma_\sfB} & \push{\cE_\sfB} \ww &  \Ngate{\Gamma_\sfB}{\ww} &\push{\cE'_B} \ww& \ww \\
	&   \push{\what{X}_\sfB} \qw  & \ghost{\Sigma_\sfB} & \qw &  \push{\what{Y}_\sfB}  \qw & \qw  \\
} 
\quad = \quad 
\myQ{0.7}{0.7}{
	& \push{\what{X}_\sfA}  \qw  & \multigate{1}{ \Sigma'_\sfA} & \qw &  \push{\what{Y}_\sfA}  \qw & \qw  \\
	& \Nmultigate{2}{\psi'}  & \ghost{\Sigma'_\sfA} & \push{\cE'_A} \ww& \ww\\
	& \Nghost{\psi'} & \ww & \push{\cE'_0} \ww&\ww  \\
	& \Nghost{\psi'} & \multigate{1}{\Sigma'_\sfB} & \push{\cE'_\sfB} \ww&\ww  \\
	&   \push{\what{X}_\sfB} \qw  & \ghost{\Sigma'_\sfB} & \qw &  \push{\what{Y}_\sfB}  \qw & \qw  \\
} .
\end{align}
\end{Lem}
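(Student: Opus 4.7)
The plan is to prove both directions using the completeness of Stinespring dilations (Theorem \ref{thm:StineComp}) applied componentwise.

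For the forward direction, I would start by noting that the Stinespring implementation of $S'$ built from $(\psi', \Sigma'_\sfA, \Sigma'_\sfB)$ is in particular an implementation of $S'$. By hypothesis $S \geq_{l.s.} S'$, so there exists a component-wise dilation $(\xi, \Phi_\sfA, \Phi_\sfB)$ of $S$ whose implementation coincides with this Stinespring implementation; in particular $\xi$ has environment $\cE'_0$, and $\Phi_\sfA$ and $\Phi_\sfB$ have environments $\cE'_\sfA$ and $\cE'_\sfB$. Now apply Theorem \ref{thm:StineComp} separately to each of the three components: since $\psi$ is a Stinespring dilation of $\varrho$ and $\xi$ is an arbitrary dilation of $\varrho$, there is a channel $\Gamma_0 : \End{\cE_0} \to \End{\cE'_0}$ with $\xi = (\Gamma_0 \otimes \id_{\cH_\sfA \otimes \cH_\sfB}) \circ \psi$; analogously we obtain $\Gamma_\sfA$ and $\Gamma_\sfB$ factoring $\Phi_\sfA$ through $\Sigma_\sfA$ and $\Phi_\sfB$ through $\Sigma_\sfB$. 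Substituting these three factorizations into the implementation built from $(\xi, \Phi_\sfA, \Phi_\sfB)$ and recalling it equals the Stinespring implementation of $S'$ gives exactly equation \eqref{eq:SimStine}.

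For the converse direction, assume \eqref{eq:SimStine} and let $(\xi', \Phi'_\sfA, \Phi'_\sfB)$ be any component-wise dilation of $S'$ with environments $\cF_0$, $\cF_\sfA$, $\cF_\sfB$. Again by Theorem \ref{thm:StineComp}, applied componentwise to $(\psi', \Sigma'_\sfA, \Sigma'_\sfB)$, there exist channels $\Delta_0 : \End{\cE'_0} \to \End{\cF_0}$, $\Delta_\sfA : \End{\cE'_\sfA} \to \End{\cF_\sfA}$, $\Delta_\sfB : \End{\cE'_\sfB} \to \End{\cF_\sfB}$ such that $\xi' = (\Delta_0 \otimes \id) \circ \psi'$, $\Phi'_\sfA = (\id \otimes \Delta_\sfA) \circ \Sigma'_\sfA$, and $\Phi'_\sfB = (\id \otimes \Delta_\sfB) \circ \Sigma'_\sfB$. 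Inserting these into the implementation built from $(\xi', \Phi'_\sfA, \Phi'_\sfB)$ shows that this implementation equals the Stinespring implementation of $S'$ post-composed on the environment by $\Delta_\sfA \otimes \Delta_0 \otimes \Delta_\sfB$. By hypothesis \eqref{eq:SimStine}, this in turn equals the Stinespring implementation of $S$ post-composed by $(\Delta_\sfA \circ \Gamma_\sfA) \otimes (\Delta_0 \circ \Gamma_0) \otimes (\Delta_\sfB \circ \Gamma_\sfB)$. Absorbing these compositions into the three Stinespring dilations of the constituents of $S$ yields a component-wise dilation of $S$ whose implementation matches the chosen implementation of $S'$, establishing $S \geq_{l.s.} S'$.

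I expect no genuine obstacle here; the only subtlety is book-keeping, namely making sure the factorizations from Theorem \ref{thm:StineComp} are applied so that the resulting channels act separately on the three environment wires $\cE_\sfA$, $\cE_0$, $\cE_\sfB$ rather than jointly. This is automatic because completeness is invoked \emph{componentwise}, once for the state dilation and once for each of the two measurement dilations, so the three factorizing channels are produced independently and hence combine as a parallel tensor on the environments. This is also precisely what distinguishes local simulation from the coarser relations (local assisted simulation and causal simulation) considered in the subsequent subsections.
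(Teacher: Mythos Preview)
Your proposal is correct and follows essentially the same approach as the paper: both arguments rest on applying the completeness of Stinespring dilations (\cref{thm:StineComp}) componentwise to identify every implementation with the Stinespring implementation post-composed by a product $\Gamma_\sfA \otimes \Gamma_0 \otimes \Gamma_\sfB$, and then observing that matching the Stinespring implementation of $S'$ (the identity choice) suffices because a general implementation is recovered by composing further on the environment. The paper compresses both directions into this single parametrisation observation, whereas you unpack the two directions explicitly, but the content is the same.
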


\begin{proof}
    By \cref{thm:StineComp}, any component-wise dilation $(\xi, \Phi_\sfA, \Phi_\sfB)$ of $S$ can be obtained by applying channels $\Gamma_\sfA$, $\Gamma_\sfB$ and $\Gamma_0$ to the elements of the component-wise Stinespring dilation $(\psi, \Sigma_\sfA, \Sigma_\sfB)$. Hence, any implementation of the strategy $S$ is of the form displayed on the left hand side in \cref{eq:SimStine}, for some choice of $\Gamma_\sfA$, $\Gamma_\sfB$ and $\Gamma_0$. Similarly, any implementation of $S'$ is obtained by applying some channels $\Gamma'_\sfA$, $\Gamma'_\sfB$ and $\Gamma'_0$ to the Stinespring implementation on the right hand side. As such, local simulation amounts to finding for each choice of $\Gamma'_\sfA$, $\Gamma'_\sfB$ and $\Gamma'_0$ a matching choice of $\Gamma_\sfA$, $\Gamma_\sfB$ and $\Gamma_0$. However, it suffices to do this when $\Gamma'_\sfA$, $\Gamma'_\sfB$ and $\Gamma'_0$ are all identity channels, since a general choice of channels $\Gamma_i$ is the matched by the compositions $\Gamma'_i \circ \Gamma_i$, for $i \in \{\sfA, \sfB, 0\}$. \end{proof}

\begin{Remark}When strategies $S$ and $S'$ have the same behaviour, there is always by \cref{thm:StineComp} \myuline{some} channel $\Gamma$ such that 
 \begin{align}  
\myQ{0.7}{0.7}{
	& \push{\what{X}_\sfA}  \qw  & \multigate{1}{ \Sigma_\sfA} & \qw & \push{\what{Y}_\sfA}  \qw & \qw  \\
	& \Nmultigate{2}{\psi}  & \ghost{\Sigma_\sfA} & \push{\cE_\sfA} \ww  &  \Nmultigate{2}{\Gamma}{\ww}& \push{\cE'_\sfA} \ww  & \ww\\
	& \Nghost{\psi} & \ww& \push{\cE_0} \ww  &\Nghost{\Gamma}{\ww} & \push{\cE'_0} \ww & \ww \\
	& \Nghost{\psi} & \multigate{1}{\Sigma_\sfB} & \push{\cE_\sfB} \ww & \Nghost{\Gamma}{\ww} & \push{\cE'_\sfB} \ww & \ww \\
	&   \push{\what{X}_\sfB} \qw  & \ghost{\Sigma_\sfB} & \qw &  \push{\what{Y}_\sfB}  \qw & \qw  \\
} 
\quad = \quad 
\myQ{0.7}{0.7}{
	& \push{\what{X}_\sfA}  \qw  & \multigate{1}{ \Sigma'_\sfA} & \push{\what{Y}_\sfA}  \qw & \qw  \\
	& \Nmultigate{2}{\psi'}  & \ghost{\Sigma'_\sfA}  & \push{\cE'_\sfA} \ww & \ww\\
	& \Nghost{\psi'} & \ww & \push{\cE'_0} \ww & \ww \\
	& \Nghost{\psi'} & \multigate{1}{\Sigma'_\sfB} & \push{\cE'_\sfB} \ww &\ww  \\
	&   \push{\what{X}_\sfB} \qw  & \ghost{\Sigma'_\sfB} & \push{\what{Y}_\sfB}  \qw & \qw  \\
} ,
\end{align}
 since a Stinespring implementation of a strategy is in particular a Stinespring dilation of its behaviour. Local simulation is a non-trivial relation between Stinespring implementations because it requires a channel $\Gamma$ which \myuline{factors} as $\Gamma_\sfA \otimes \Gamma_0 \otimes \Gamma_\sfB$.
\end{Remark}

With \cref{prop:Rechar} at our disposal, local simulation has become a more tractable relation. (The reader may find it interesting to consult \cref{ex:fullrank} and \cref{ex:Augmented} in this light.) In particular, we will use it now to prove that any strategy is equivalent under local simulation to a projective strategy --- once that result is established, \cref{thm:SelfRvsS} will follow automatically from \cref{thm:RvsS}. The result is however also interesting by itself, since it gives, for the first time, mathematical substance to the common perception that one may `without loss of generality' assume quantum strategies to be projective:

\begin{Prop} \label{prop:Proj}
    Any strategy $S$ is equivalent under local simulation to a projective strategy $S'$.
\end{Prop}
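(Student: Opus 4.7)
The plan is to build $S'$ by \emph{Naimark-extending} the measurement ensembles of $S$ and absorbing the Naimark ancillas into the state. Concretely, given $S = (\varrho, \Lambda_\sfA, \Lambda_\sfB)$, apply \cref{rem:NaiEnsemble} to each party separately to obtain, for $\sfP \in \{\sfA, \sfB\}$, a system $\cK^\up{Nai}_\sfP$, a pure state $\phi^\up{Nai}_\sfP$ on $\cK^\up{Nai}_\sfP$ and an ensemble $\Lambda^\up{Nai}_\sfP$ of \emph{projective} measurements on $\cH_\sfP \otimes \cK^\up{Nai}_\sfP$ such that $\Lambda_\sfP = \Lambda^\up{Nai}_\sfP \circ (\id_{\what{X}_\sfP} \otimes \id_{\cH_\sfP} \otimes \phi^\up{Nai}_\sfP)$. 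Set $\varrho' := \varrho \otimes \phi^\up{Nai}_\sfA \otimes \phi^\up{Nai}_\sfB$, viewed as a bipartite state on $(\cH_\sfA \otimes \cK^\up{Nai}_\sfA) \otimes (\cH_\sfB \otimes \cK^\up{Nai}_\sfB)$, and define $S' := (\varrho', \Lambda^\up{Nai}_\sfA, \Lambda^\up{Nai}_\sfB)$. By construction $S'$ is a projective strategy.

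The key claim is that $S$ and $S'$ admit component-wise Stinespring dilations whose associated Stinespring implementations are \emph{identical as channels}; granted this, \cref{prop:Rechar} applied in both directions (with all $\Gamma_i$ equal to identity channels) immediately yields $S \geq_{l.s.} S'$ and $S' \geq_{l.s.} S$. To produce such matching dilations, let $\psi$ be a purification of $\varrho$ (with purifying system $\cE_0$), and let $\Sigma^\up{Nai}_\sfP$ be a Stinespring dilation of $\Lambda^\up{Nai}_\sfP$ with environment $\cE_\sfP$. Then $\psi' := \psi \otimes \phi^\up{Nai}_\sfA \otimes \phi^\up{Nai}_\sfB$ is a purification of $\varrho'$ (since $\phi^\up{Nai}_\sfA$ and $\phi^\up{Nai}_\sfB$ are already pure), and $\Sigma_\sfP := \Sigma^\up{Nai}_\sfP \circ (\id_{\what{X}_\sfP} \otimes \id_{\cH_\sfP} \otimes \phi^\up{Nai}_\sfP)$ is a Stinespring dilation of $\Lambda_\sfP$: it dilates $\Lambda_\sfP$ by the Naimark factorisation, and it is isometric as the serial composition of the isometric channels $\Sigma^\up{Nai}_\sfP$ and $\id \otimes \phi^\up{Nai}_\sfP$ (cf.\ \cref{ex:Isom}).

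Plugging these choices into the diagram \eqref{eq:BhvImp} shows that the Stinespring implementation of $S$ built from $(\psi, \Sigma_\sfA, \Sigma_\sfB)$ coincides, up to rebracketing of the tensor factors, with the Stinespring implementation of $S'$ built from $(\psi', \Sigma^\up{Nai}_\sfA, \Sigma^\up{Nai}_\sfB)$: in both cases the pure states $\phi^\up{Nai}_\sfP$ feed into the Naimark-extended Stinespring dilations $\Sigma^\up{Nai}_\sfP$, and the purification $\psi$ supplies the $\cH_\sfA \otimes \cE_0 \otimes \cH_\sfB$ wires. The environments $\cE_\sfA$, $\cE_0$, $\cE_\sfB$ on the two sides agree by construction, so \cref{prop:Rechar} delivers the desired $\geq_{l.s.}$-equivalence. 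The only real content of the argument is Naimark's theorem for ensembles; everything else is bookkeeping of Stinespring environments, and the main pitfall is simply to keep track of where each ancilla sits so that the Stinespring implementations truly match on the nose rather than merely up to an extra channel on the environment.
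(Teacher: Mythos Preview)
Your proof is correct and follows essentially the same approach as the paper: Naimark-extend each measurement ensemble, absorb the pure Naimark ancillas into the state, and observe that the resulting Stinespring implementations of $S$ and $S'$ coincide so that \cref{prop:Rechar} (with identity channels $\Gamma_i$) gives $\geq_{l.s.}$-equivalence in both directions.
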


\begin{proof}
Let $S=(\varrho, \Lambda_\sfA, \Lambda_\sfB)$. For each $\sfP \in \{\sfA, \sfB\}$, it follows from Naimark's theorem (see \cref{rem:NaiEnsemble}) that the measurement ensemble $\Lambda_{\sfP}$ can be written as $\Lambda^\up{Nai}_{\sfP} \circ (  \id_{\what{X}_{\sfP}} \otimes \id_{\cH_{\sfP}} \otimes \phi^\up{Nai}_{\sfP} )$ with $\phi^{\up{Nai}}_{\sfP}$ a pure state on some system $\cK^{\up{Nai}}_{\sfP}$, and $\Lambda^{\up{Nai}}_{\sfP}$ an ensemble of \myuline{projective} measurements on $\cH_{\sfP} \otimes \cK^{\up{Nai}}_{\sfP}$. We show that $S$ is $\geq_{l.s.}$-equivalent to the projective strategy $S':= (\varrho \otimes \phi^{\up{Nai}}_{\sfA} \otimes \phi^{\up{Nai}}_{\sfB}, \Lambda^{\up{Nai}}_\sfA, \Lambda^{\up{Nai}}_\sfB)$. To this end, observe that if $\psi$ is a purification of $\varrho$ and $\Sigma^{\up{Nai}}_{\sfP}$ a Stinespring dilation of $\Lambda^{\up{Nai}}_{\sfP}$, then $(\psi \otimes  \phi^{\up{Nai}}_{\sfA} \otimes \phi^{\up{Nai}}_{\sfB}, \Sigma^{\up{Nai}}_\sfA, \Sigma^{\up{Nai}}_\sfB)$ is a component-wise Stinespring dilation of $S'$ and $(\psi, \Sigma^{\up{Nai}}_\sfA \circ (  \id_{\what{X_\sfA}} \otimes\id_{\cH_\sfA} \otimes \phi^\up{Nai}_\sfA ), \Sigma^{\up{Nai}}_\sfB \circ (  \id_{\what{X_\sfB}} \otimes \id_{\cH_\sfB} \otimes \phi^\up{Nai}_\sfB ))$ is a component-wise Stinespring dilation of $S$. These corresponding Stinespring implementations are moreover identical. Thus, it follows immediately from \cref{prop:Rechar} that $S \geq_{l.s.} S'$ and $S' \geq_{l.s.} S$.\end{proof}

It remains now only to prove \cref{thm:RvsS}. To this end, we first translate local simulation to operator-algebraic language in the case of projective strategies:\footnote{We additionally assume that the state of the simulated strategy is pure, because the result in this case takes on a particularly simple form, and because we shall not need the result in the case of a general mixed state.}

\begin{Prop}[Operator-Algebraic Characterisation of Local Simulation] \label{prop:OpSim}
Let $S = (\varrho, \Pi_\sfA, \Pi_\sfB)$ and $\tilde{S}=(\tilde{\psi}, \tilde{\Pi}_\sfA, \tilde{\Pi}_\sfB)$ be projective strategies, with $\tilde{\psi}$ a pure state. Let $\psi$ be a purification of $\varrho$ with purifying space $\cP$. Then, $S \geq_{l.s.} \tilde{S}$ if and only if there exist Hilbert spaces $\cH^\res_\sfA$, $\cH^\res_\sfB$,  a pure state $\psi^\res$ on $\cH^\res_\sfA \otimes \cH^\res_\sfB \otimes \cP$, and isometries $V_\sfA: \what{X}_\sfA \otimes \cH_{\sfA}  \to \what{X}_{\sfA} \otimes \tilde{\cH}_{\sfA} \otimes \cH^\res_{\sfA}$ and $V_\sfB: \what{X}_\sfB \otimes \cH_{\sfB}  \to \what{X}_{\sfB} \otimes \tilde{\cH}_{\sfB} \otimes \cH^\res_{\sfB}$ such that, with $V= V_\sfA \otimes V_\sfB$, we have
\begin{align} \label{eq:SimOp}
 \;[ V ( \ket{x}  \otimes \Pi^x(y)) \otimes \bone_\cP] \ket{\psi} = \ket{x} \otimes \tilde{\Pi}^x(y) \tilde{\ket{\psi}} \otimes \ket{\psi^\res} .
	\end{align} 
\end{Prop}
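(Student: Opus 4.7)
The plan is to translate between the Stinespring-level characterisation of local simulation supplied by \cref{prop:Rechar} and the vector-level equation \cref{eq:SimOp}, using suitable Stinespring dilations of the mediating channels $\Gamma_\sfA, \Gamma_0, \Gamma_\sfB$. Throughout I would employ the natural isometric Stinespring dilation of a projective measurement ensemble $\Pi_\sfP$, namely $\ket{x_\sfP} \otimes \ket{\phi} \mapsto \sum_{y_\sfP} \ket{y_\sfP} \otimes \ket{x_\sfP} \otimes \Pi^{x_\sfP}_\sfP(y_\sfP) \ket{\phi}$, so that the Stinespring environment is $\cE_\sfP = \what{X}_\sfP \otimes \cH_\sfP$ and a copy of $\ket{x_\sfP}$ is recorded. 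Feeding a classical input $\ket{x_\sfA, x_\sfB}$ into the Stinespring implementation of $S$ then produces an explicit pure-state vector $\ket{\Phi^S_x} := \sum_y \ket{y} \otimes [\ket{x_\sfA} \otimes \Pi^{x_\sfA}_\sfA(y_\sfA)] \otimes \bone_\cP \otimes [\ket{x_\sfB} \otimes \Pi^{x_\sfB}_\sfB(y_\sfB)] \ket{\psi}$, and analogously a vector $\ket{\Phi^{\tilde{S}}_x}$ for $\tilde{S}$.

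For the backward direction, given isometries $V_\sfA, V_\sfB$ and a pure state $\ket{\psi^\res}$ satisfying \cref{eq:SimOp}, I would define $\Gamma_\sfP(A) := \tr_{\cH^\res_\sfP}(V_\sfP A V_\sfP^*)$ and $\Gamma_0 := \tr_\cP$, this last choice being forced since $\tilde{\cE}_0 = \triv$ admits only the trace channel. Applying $V_\sfA \otimes \bone_\cP \otimes V_\sfB$ to the environment of $\ket{\Phi^S_x}$ and invoking \cref{eq:SimOp} term by term in $y$ yields $\ket{\Phi^{\tilde{S}}_x} \otimes \ket{\psi^\res}$; tracing out $\cH^\res_\sfA \otimes \cP \otimes \cH^\res_\sfB$ then removes $\ket{\psi^\res}$ and produces $\ket{\Phi^{\tilde{S}}_x}$. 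Because the Stinespring implementations have classical inputs on $\what{X}_\sfA \otimes \what{X}_\sfB$, agreement on basis vectors suffices to verify the channel identity required by \cref{prop:Rechar}, giving $S \geq_{l.s.} \tilde{S}$.

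For the forward direction, \cref{prop:Rechar} supplies channels $\Gamma_\sfA, \Gamma_0 = \tr_\cP, \Gamma_\sfB$, which I would Stinespring-dilate to isometries $V_\sfA : \what{X}_\sfA \otimes \cH_\sfA \to \what{X}_\sfA \otimes \tilde{\cH}_\sfA \otimes \cH^\res_\sfA$ and $V_\sfB$ of the same form, with $\cH^\res_\sfP$ the Stinespring environment of $\Gamma_\sfP$. Tracing out $\cH^\res_\sfA \otimes \cP \otimes \cH^\res_\sfB$ after the isometric action of $V_\sfA \otimes \bone_\cP \otimes V_\sfB$ on the environment of $\Sigma^S_{\up{impl}}$ recovers exactly $\Gamma_\sfA \otimes \tr_\cP \otimes \Gamma_\sfB$ acting on that environment, which by \cref{prop:Rechar} produces $\Sigma^{\tilde{S}}_{\up{impl}}$. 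Hence the composite $(V_\sfA \otimes \bone_\cP \otimes V_\sfB) \circ \Sigma^S_{\up{impl}}$ is a dilation of $\Sigma^{\tilde{S}}_{\up{impl}}$. The crucial step is now \cref{Cor:StinePure}: because $\tilde{\psi}$ is pure and each $\tilde{\Sigma}_\sfP$ is isometric, $\Sigma^{\tilde{S}}_{\up{impl}}$ is an isometric channel and therefore dilationally pure, so the dilation factors as $\Sigma^{\tilde{S}}_{\up{impl}} \otimes \sigma$ for some state $\sigma$ on $\cH^\res_\sfA \otimes \cP \otimes \cH^\res_\sfB$; and since both sides are isometric channels, $\sigma$ must itself be pure, so $\sigma = \ketbra{\psi^\res}$.

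Finally, the equality of these two isometric channels lifts to an equality of their underlying isometries up to a single global phase, which I would absorb into $\ket{\psi^\res}$. Evaluating on $\ket{x_\sfA, x_\sfB}$ yields $(V_\sfA \otimes \bone_\cP \otimes V_\sfB) \ket{\Phi^S_x} = \ket{\Phi^{\tilde{S}}_x} \otimes \ket{\psi^\res}$, and expanding both sides as sums over $y$ and using orthogonality of $\{\ket{y_\sfA, y_\sfB}\}$ in $\what{Y}_\sfA \otimes \what{Y}_\sfB$ extracts \cref{eq:SimOp} term by term; comparison of the $\what{X}_\sfA \otimes \what{X}_\sfB$-components in the resulting identity additionally forces $V_\sfP$ to preserve $\ket{x_\sfP}$ on the inputs that arise. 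The principal obstacle I anticipate is precisely to establish the $x$-independence of the residual pure state $\ket{\psi^\res}$: this is what dilational purity of $\Sigma^{\tilde{S}}_{\up{impl}}$ delivers, and it is the sole place where the hypothesis that $\tilde{\psi}$ is pure is genuinely needed.
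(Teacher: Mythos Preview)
Your proposal is correct and follows essentially the same route as the paper: invoke \cref{prop:Rechar}, choose the explicit projective Stinespring isometries $R_\sfP=\sum_{x_\sfP,y_\sfP}\ket{x_\sfP}\otimes\Pi^{x_\sfP}_\sfP(y_\sfP)\otimes\ketbra{y_\sfP}{x_\sfP}$, and pass between the channel identity and the vector identity by Stinespring-dilating $\Gamma_\sfA,\Gamma_\sfB$ and appealing to dilational purity of the isometric implementation of $\tilde{S}$ (\cref{Cor:StinePure}) to obtain the pure residual state $\psi^\res$. One small correction: the Stinespring implementations do \emph{not} have classical inputs on $\what{X}_\sfA\otimes\what{X}_\sfB$ (they are isometric, hence cannot factor through $\Delta_X$); the reason agreement on basis vectors suffices in your backward direction is simply that you are comparing two \emph{isometries} (linear maps), not two arbitrary channels, and your closing remark about $V_\sfP$ preserving $\ket{x_\sfP}$ is not needed for the statement.
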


\begin{Remark}
Note that, contrary to the conventional notion of reducibility (in \cref{def:Redattl}), the local isometries $V_\sfP$ in \cref{eq:SimOp} are allowed to `look at $x_\sfP$' on the left hand side, and must as well `return $x_\sfP$' on the right hand side.  This distinction is conceptually interesting, and we suspect it to be important for approximate versions of local simulation (\cref{rem:Approx}); in the exact case, however, it turns out to vanish (\cref{lem:technical}).
\end{Remark}

\begin{proof}[Proof (of \cref{prop:OpSim})]
  There are two key ideas in this proof. By \cref{prop:Rechar}, the condition $S \geq_{l.s.} \tilde{S}$ is equivalent, for any choice of Stinespring dilations $\Sigma_{\sfA}$, $\Sigma_\sfB$, $\tilde{\Sigma}_\sfA, \tilde{\Sigma}_\sfB$ of $\Lambda_{\sfA}, \Lambda_\sfB, \tilde{\Lambda}_\sfA, \tilde{\Lambda}_\sfB$, to the existence of $\Gamma_\sfA$ and $\Gamma_\sfB$ such that 
 \begin{align} \label{eq:SimStart}
  \myQ{0.7}{0.7}{
	& \push{\what{X}_\sfA}  \qw  & \multigate{1}{ \Sigma_\sfA} & \qw &  \push{\what{Y}_\sfA}  \qw & \qw  \\
	& \Nmultigate{2}{\psi}  & \ghost{\Sigma_\sfA} & \push{\cE_\sfA} \ww &   \Ngate{\Gamma_\sfA}{\ww} &  \push{\tilde{\cE}_\sfA} \ww & \ww\\
		& \Nghost{\psi}  & \push{\cP} \ww & \Ngate{\tr}{\ww} \\
	& \Nghost{\psi} & \multigate{1}{\Sigma_\sfB}  & \push{\cE_\sfB} \ww  & \Ngate{\Gamma_\sfB}{\ww} &  \push{\tilde{\cE}_\sfB} \ww & \ww \\
	&   \push{\what{X}_\sfB} \qw  & \ghost{\Sigma_\sfB} & \qw & \push{\what{Y}_\sfB}  \qw & \qw  
} 	\quad =  \quad 
\myQ{0.7}{0.7}{
	& \push{\what{X}_\sfA}  \qw  & \multigate{1}{ \tilde{\Sigma}_\sfA} & \push{\what{Y}_\sfA}  \qw & \qw  \\
	& \Nmultigate{2}{\tilde{\psi}}  & \ghost{\tilde{\Sigma}_\sfA} &  \push{\tilde{\cE}_\sfA} \ww &  \ww\\
		& \Nghost{\tilde{\psi}}  \\
	& \Nghost{\tilde{\psi}} & \multigate{1}{\tilde{\Sigma}_\sfB} &  \push{\tilde{\cE}_\sfB} \ww & \ww  \\
	&   \push{\what{X}_\sfB} \qw  & \ghost{\tilde{\Sigma}_\sfB} & \push{\what{Y}_\sfB}  \qw & \qw  
} 
\end{align}
 (the trace $\tr_\cP$ is the only possible choice for $\Gamma_0$). The first idea is to realise that this condition is equivalent to the existence of \myuline{isometric} channels $\breve{\Gamma}_\sfA$, $\breve{\Gamma}_\sfB$ and a pure state $\psi^\res$ such that 
\begin{align} \label{eq:SimPure}
      	\myQ{0.7}{0.7}{
		& \push{\what{X}_\sfA}  \qw  & \multigate{1}{ \Sigma_\sfA} & \qw &  \push{\what{Y}_\sfA}  \qw & \qw  \\
		& \Nmultigate{4}{\psi}  & \ghost{\Sigma_\sfA}  & \push{\cE_\sfA} \ww &  \Nmultigate{1}{\breve{\Gamma}_\sfA}{\ww} & \push{\tilde{\cE}_\sfA}  \ww & \ww \\
		& \Nghost{\psi}& & & \Nghost{\breve{\Gamma}_\sfA} & \push{\cH^\res_\sfA} \ww & \ww\\
		& \Nghost{\psi}& \ww & \ww & \ww & \push{{\cP}} \ww & \ww\\
		&\Nghost{\psi} & & & \Nmultigate{1}{\breve{\Gamma}_\sfB} &   \push{\cH^\res_\sfB} \ww & \ww\\
		& \Nghost{\psi} & \multigate{1}{\Sigma_\sfB} & \push{\cE_\sfB} \ww &  \Nghost{\breve{\Gamma}_\sfB}{\ww} & \push{\tilde{\cE}_\sfB}  \ww & \ww  \\
		&   \push{\what{X}_\sfB} \qw  & \ghost{\Sigma_\sfB} & \qw & \push{\what{Y}_\sfB}  \qw & \qw  \\
	} 
\quad 	 =  \quad 
	\myQ{0.7}{0.7}{
		& \push{\what{X}_\sfA}  \qw  & \multigate{1}{ \tilde{\Sigma}_\sfA} & \push{\what{Y}_\sfA}  \qw & \qw  \\
		& \Nmultigate{4}{\tilde{\psi}}  & \ghost{\tilde{\Sigma}_\sfA} & \push{\tilde{\cE}_\sfA}  \ww & \ww \\
		& \Nghost{\tilde{\psi}} &  \Nmultigate{2}{\psi^\textup{res}} &   \push{\cH^\res_\sfA} \ww & \ww \\
		&	\Nghost{\psi'} & \Nghost{\psi^\textup{res}} & \push{{\cP}}\ww & \ww  \\ 
		& \Nghost{\psi'} &  \Nghost{\psi^\textup{res}} &   \push{\cH^\res_\sfB} \ww & \ww \\
		& \Nghost{\tilde{\psi}} & \multigate{1}{\tilde{\Sigma}_\sfB}  & \push{\tilde{\cE}_\sfB}  \ww & \ww \\
		&   \push{\what{X}_\sfB} \qw  & \ghost{\tilde{\Sigma}_\sfB} & \push{\what{Y}_\sfB}  \qw & \qw  \\
	} .
  \end{align}

Indeed, \cref{eq:SimStart} follows trivially from \cref{eq:SimPure} by tracing out $\cH^\res_\sfA$ and $\cH^\res_\sfB$. More interestingly, \cref{eq:SimPure} follows from \cref{eq:SimStart} by observing that if on the left hand of \cref{eq:SimStart} we replace the trace by an identity and, for each $\sfP \in \{\sfA, \sfB\}$, the channels $\Gamma_{\sfP}$ by Stinespring dilations $\breve{\Gamma}_{\sfP}$, then the resulting channel is a dilation of the right hand side, which is isometric and therefore by \cref{Cor:StinePure} dilationally pure; consequently, that channel must take the form of tensoring with some state $\psi^\res$, and this state must be pure since the total channel is now isometric. All in all \cref{eq:SimPure} thus follows. It is worth highlighting that the condition \eqref{eq:SimPure} may be thought of as a `purified' recharacterisation of local simulation, which might have been worthy of a separate lemma, given unbounded space.

Given the purified characterisation \eqref{eq:SimPure}, the second idea in the proof is to choose the Stinespring dilations $\Sigma_{\sfA}, \Sigma_\sfB$, $\tilde{\Sigma}_\sfA$ and $\tilde{\Sigma}_{\sfB}$ conveniently. For $\sfP \in \{\sfA, \sfB\}$, the projective measurement ensemble $\Lambda_{\sfP} : \what{X}_{\sfP} \otimes \cH_{\sfP} \to \what{Y}_{\sfP}$ corresponding to the PVM $\Pi_{\sfP}$ is given by 
\begin{align}
    \Lambda_{\sfP}(F \otimes G) = \sum_{x_{\sfP} \in X_{\sfP}, y_{\sfP} \in Y_{\sfP}} \bra{x_{\sfP}} F \ket{x_{\sfP}} \tr(\Pi^{x_{\sfP}}_{\sfP}(y_{\sfP}) G) \ketbra{y_{\sfP}} \; \text{for $F \in \End{\what{X}_{\sfP}}$, $G \in \End{\cH_{\sfP}}$}.
\end{align}
It can be easily checked that the operator $R_{\sfP} := \sum_{x_{\sfP} \in X_{\sfP}, y_{\sfP} \in Y_{\sfP}} \ket{x_{\sfP}} \otimes \Pi^{x_{\sfP}}_{\sfP}(y_{\sfP}) \otimes \ketbra{y_{\sfP}}{x_{\sfP}}$ is an isometry from $ \what{X}_{\sfP} \otimes \cH_{\sfP}$ to $ \what{X}_{\sfP} \otimes \cH_{\sfP} \otimes \what{Y_{\sfP}}$ (exploiting that each $\Pi^{x_{\sfP}}_{\sfP}$ is a PVM). Furthermore, the isometric channel $\Sigma_{\sfP}(\cdot) := R_{\sfP} (\cdot) R^*_{\sfP}$ is a dilation of $\Lambda_{\sfP}$ with environment $\cE_{\sfP} := \what{X}_{\sfP} \otimes \cH_{\sfP}$. Hence, $\Sigma_{\sfP}$ is a \myuline{Stinespring} dilation of $\Lambda_{\sfP}$. By the same argument,  $\tilde{R}_{\sfP} := \sum_{x_{\sfP} \in X_{\sfP}, y_{\sfP} \in Y_{\sfP}} \ket{x_{\sfP}} \otimes \tilde{\Pi}^{x_{\sfP}}_{\sfP}(y_{\sfP}) \otimes \ketbra{y_{\sfP}}{x_{\sfP}}$ is an isometry giving rise to a Stinespring dilation $\tilde{\Sigma}_{\sfP}$ of $\tilde{\Lambda}_{\sfP}$, the measurement ensemble corresponding to the PVM $\tilde{\Pi}_{\sfP}$. 

With the above choices for $\Sigma_{\sfP}$ and $\tilde{\Sigma}_{\sfP}$, the two Stinespring implementations of $S$ and $\tilde{S}$ are the isometric channels represented by the isometries
\begin{align} \label{eq:ConvStine}
\sum_{x \in X, y \in Y} \ket{x} \otimes [\Pi^x(y) \otimes \bone_\cP]\ket{\psi} \otimes \ketbra{y}{x}, \quad \text{respectively} \quad \sum_{x \in X, y \in Y} \ket{x} \otimes \tilde{\Pi}^x(y) \ket*{\tilde{\psi}} \otimes \ketbra{y}{x}.
\end{align}
As such, the condition \eqref{eq:SimPure} expresses the existence of isometries $V_{\sfP} : \what{X}_{\sfP} \otimes \cH_{\sfP} \to \what{X}_{\sfP} \otimes \tilde{\cH}_{\sfP} \otimes \cH^\res_{\sfP}$ (representing $\breve{\Gamma}_{\sfP}$) and a unit vector $\ket*{\psi^\res} \in \cH^\res_\sfA \otimes \cH^\res_\sfB \otimes \cP$ (representing $\psi^\res$) such that
\begin{align} \label{eq:OpId}
\sum_{x \in X, y \in Y} [ V( \ket{x} \otimes \Pi^x(y)) \otimes \bone_\cP ] \ket{\psi} \otimes \ketbra{y}{x} = \sum_{x \in X, y \in Y} \ket{x} \otimes \tilde{\Pi}^x(y) \ket*{\tilde{\psi}}\otimes \ket*{\psi^\res}  \otimes \ketbra{y}{x} ,
\end{align}
with $V= V_\sfA \otimes V_\sfB$. Since $\big(\ketbra{y}{x}\big)_{x \in X, y \in Y}$ forms a linearly independent system of operators, \cref{eq:OpId} is equivalent to the condition \eqref{eq:SimOp}, as desired.\end{proof}

\begin{Remark}[Simulation for Non-Projective Strategies]
It is possible to state a version of \cref{prop:OpSim} for non-projective strategies. The only step in the proof where we relied on the strategies being projective was when we chose the Stinespring dilations $\Sigma_{\sfP}$ and $\tilde{\Sigma}_{\sfP}$ of the measurement ensembles $\Lambda_{\sfP}$ and $\tilde{\Lambda}_{\sfP}$. Suppose now that the measurement ensembles $\tilde{\Lambda}_{\sfA}$ and $\tilde{\Lambda}_\sfB$ are still projective, but that, for $\sfP \in \{\sfA, \sfB\}$, the ensemble $\Lambda_{\sfP}$ corresponds to general POVMs $(E^{x_{\sfP}}_{\sfP})_{x_{\sfP} \in X_{\sfP}}$. For the Stinespring dilation $\Sigma_{\sfP}$, we are then forced to include in its environment a copy of $\what{Y}_{\sfP}$, choosing as Stinespring isometry from $\what{X}_{\sfP} \otimes \cH_{\sfP}$ to $\what{X}_{\sfP} \otimes \what{Y}_{\sfP} \otimes \cH_{\sfP} \otimes  \what{Y}_{\sfP}$ the operator
\begin{align}
    R_{\sfP} := \sum_{x_{\sfP} \in X_{\sfP}, y_{\sfP} \in Y_{\sfP}} \ket{x_{\sfP}} \otimes \ket{y_{\sfP}} \otimes \sqrt{E^{x_{\sfP}}_{\sfP}(y_{\sfP})} \otimes \ketbra{y_{\sfP}}{x_{\sfP}}.
\end{align}
As such, the Stinespring implementation corresponding to $(\psi, \Sigma_\sfA, \Sigma_\sfB)$ is represented by the isometry $\sum_{x \in X, y \in Y} \ket{x} \otimes \ket{y} \otimes [\sqrt{E^{x}(y)} \otimes \bone_\cP]\ket{\psi} \otimes \ketbra{y}{x}$, and the simulation condition translates as the existence of a unit vector $\ket{\psi^\res} \in \cH^\res_\sfA \otimes \cH^\res_\sfB  \otimes \cP$ and isometries $V_\sfP :\what{X}_\sfP \otimes \what{Y}_\sfP \otimes \cH_{\sfP} \to  \what{X}_{\sfP} \otimes \what{Y}_{\sfP} \otimes \tilde{\cH}_{\sfP} \otimes \cH^\res_{\sfP}$ such that, with $V=V_\sfA \otimes V_\sfB$, we have
\begin{align}
\forall x \in X, y \in Y : \;  [ V (\ket{x} \otimes \ket{y} \otimes \sqrt{E^{x}(y)}) \otimes \bone_\cP] \ket{\psi} = \ket{x} \otimes \ket{y} \otimes \tilde{\Pi}^x(y) \ket*{\tilde{\psi}} \otimes \ket{\psi^\res} .
\end{align}
\end{Remark}

\cref{prop:OpSim} is the link between our operational notion of local simulation and the conventional operator-algebraic notion of reducibility --- indeed, the condition \eqref{eq:SimOp} looks very similar to the reducibility condition \eqref{eq:RedAttlComp}. Ignoring for a moment the difference that \eqref{eq:SimOp} allows the local isometry to observe $x_\sfP$, the proof of \cref{prop:OpSim} explains in an entirely new way the significance of the residual state, the embedding isometries and the projected states $\Pi^{x}(y)\ket{\psi}$ and $\tilde{\Pi}^x(y)\ket*{\tilde{\psi}}$: the projected states signify side-information generated in the course of implementing the strategies, and the embedding connects side-information in one strategy to side-information in the other strategy.\\

We shall now prove that the difference between reducibility and condition \eqref{eq:SimOp} is (at least under a rank-assumption) only one of appearance. This result marks the technical highlight of the subsection, by finally establishing \cref{thm:RvsS}. We state in the result three equivalent conditions, of which the first is \cref{eq:SimOp} and the third is the conventional reducibility condition \eqref{eq:RedAttlComp}. The second is an intermediate condition which mostly serves to smoothen the proof. 

\begin{Lem} (Technical Result for \cref{thm:RvsS}). \label{lem:technical} \\
	Let $S= (\varrho, \Pi_\sfA, \Pi_\sfB)$ and $\tilde{S}= (\tilde{\psi}, \tilde{\Pi}_\sfA, \tilde{\Pi}_\sfB)$ be projective strategies, for which $\tilde{\psi}$ is a pure state with locally full rank. Let $\psi$ be a purification of $\varrho$ with purifying space $\cP$. Finally, let $\psi^\res$ be a pure state on a system of the form $\cH^\res_\sfA \otimes \cH^\res_\sfB \otimes \cP$. The following are equivalent:
	\begin{itemize}
	    \item[(1)] There exist isometries $V_\sfP: \what{X}_\sfP \otimes \cH_\sfP \to \what{X}_\sfP \otimes \tilde{\cH}_\sfP \otimes \cH^\res_\sfP$ such that, with $V= V_\sfA \otimes V_\sfB$, we have  
	    	\begin{align} \label{eq:qdependent}
\forall x \in X, y \in Y : \; [V	(\ket{x} \otimes \Pi^x(y) ) \otimes \bone_\cP] \ket{\psi} = \ket{x} \otimes \tilde{\Pi}^x(y) \tilde{\ket{\psi}} \otimes \ket{\psi^\res}  .
	\end{align} 
	\item[(2)] There exist isometries $W^{x_\sfP}_\sfP : \cH_\sfP \to  \tilde{\cH}_\sfP \otimes \cH^\res_\sfP$ for $x_\sfP \in X_\sfP$, such that, with $W^x=W^{x_\sfA}_\sfA \otimes W^{x_\sfB}_\sfB$ for $x=(x_\sfA, x_\sfB)$, we have
	\begin{align} \label{eq:vers2}
       \forall x \in X, y \in Y : \; [W^x \Pi^x(y) \otimes \bone_\cP] \ket{\psi} = \tilde{\Pi}^x(y) \ket*{\tilde{\psi}} \otimes \ket*{\psi^\res}.
   \end{align}
	\item[(3)] There exist isometries $W_\sfP : \cH_\sfP \to  \tilde{\cH}_\sfP \otimes \cH^\res_\sfP$ such that, with $W=W_\sfA \otimes W_\sfB$, we have 
	\begin{align} \label{eq:vers3}
       \forall x \in X, y \in Y : \; [W \Pi^x(y) \otimes \bone_\cP] \ket{\psi} = \tilde{\Pi}^x(y) \ket*{\tilde{\psi}} \otimes \ket*{\psi^\res}.
   \end{align}
	\end{itemize}
\end{Lem}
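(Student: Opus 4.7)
The plan is to dispose of the easy implications $(3) \Rightarrow (2) \Rightarrow (1)$ first, then prove the substantive chain $(1) \Rightarrow (2) \Rightarrow (3)$, with the locally-full-rank hypothesis on $\tilde{\psi}$ entering only at the last step. For $(3) \Rightarrow (2)$ I would set $W^{x_\sfP}_\sfP := W_\sfP$ for every $x_\sfP$; for $(2) \Rightarrow (1)$ I would define $V_\sfP$ by the linear extension of $V_\sfP(\ket{x_\sfP} \otimes \ket{\phi}) := \ket{x_\sfP} \otimes W^{x_\sfP}_\sfP \ket{\phi}$, which is isometric by orthonormality of the $\ket{x_\sfP}$'s and isometry of each $W^{x_\sfP}_\sfP$, and substituting into \eqref{eq:vers2} yields \eqref{eq:qdependent}.

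For $(1) \Rightarrow (2)$, the key observation is that the right-hand side of \eqref{eq:qdependent} has $\ket{x}$ on the output classical register of $V$, so $V$ preserves the $\what{X}$-register on every vector of the form $\ket{x} \otimes \Pi^x(y) \otimes \bone_\cP \ket{\psi}$. Because $V = V_\sfA \otimes V_\sfB$ is a tensor product, a marginal computation (tracing out the $\sfB$-output using $V_\sfB^*V_\sfB = \bone$) shows that each factor $V_\sfP$ individually preserves the $\what{X}_\sfP$-register on the corresponding marginal. From this I would extract isometries $W^{x_\sfP}_\sfP : \cH_\sfP \to \tilde{\cH}_\sfP \otimes \cH^\res_\sfP$ (possibly enlarging $\cH^\res_\sfP$ to accommodate an isometric extension off the reduced support of $\ket{\psi}$) such that $V_\sfP(\ket{x_\sfP} \otimes \ket{\phi}) = \ket{x_\sfP} \otimes W^{x_\sfP}_\sfP \ket{\phi}$ on the relevant subspace; substitution into \eqref{eq:qdependent} then gives \eqref{eq:vers2}.

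For $(2) \Rightarrow (3)$, I would sum \eqref{eq:vers2} over $y$ to obtain
\[
(W^{x_\sfA}_\sfA \otimes W^{x_\sfB}_\sfB \otimes \bone_\cP) \ket{\psi} = \ket{\tilde{\psi}} \otimes \ket{\psi^\res},
\]
whose right-hand side is independent of $x$. Comparing two values $x_\sfA, x'_\sfA$ at fixed $x_\sfB$, subtracting, taking the squared norm and using $(W^{x_\sfB}_\sfB)^*W^{x_\sfB}_\sfB = \bone_{\cH_\sfB}$ yield $\tr\bigl[(W^{x_\sfA}_\sfA - W^{x'_\sfA}_\sfA)^*(W^{x_\sfA}_\sfA - W^{x'_\sfA}_\sfA)\, \varrho_\sfA\bigr] = 0$, so $W^{x_\sfA}_\sfA$ and $W^{x'_\sfA}_\sfA$ agree on $\supp{\varrho_\sfA}$. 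The locally-full-rank hypothesis on $\tilde{\psi}$ enters via the marginal identity $W^{x_\sfA}_\sfA \varrho_\sfA (W^{x_\sfA}_\sfA)^* = \tilde{\varrho}_\sfA \otimes \psi^\res_{\cH^\res_\sfA}$ obtained by tracing out $\tilde{\cH}_\sfB, \cH^\res_\sfB, \cP$ from the displayed equation; full-rank of $\tilde{\varrho}_\sfA$ then pins down the image of $W^{x_\sfA}_\sfA|_{\supp{\varrho_\sfA}}$ independently of $x_\sfA$. Invoking \cref{ex:fullrank} to reduce without loss of generality to the case of $\varrho$ full rank, one extends the common on-support restriction to a single global isometry $W_\sfP := W^{x_\sfP^0}_\sfP$ for any fixed $x_\sfP^0$, delivering \eqref{eq:vers3}.

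The main obstacle I anticipate is the support bookkeeping in both non-trivial stages: in $(1) \Rightarrow (2)$, defining the isometric extension of $V_\sfP$ off the $\cH_\sfP$-support of $\ket{\psi}$ without disturbing the on-support identity; and in $(2) \Rightarrow (3)$, bridging the gap between on-support agreement of the $W^{x_\sfP}_\sfP$ and the global equality needed for \eqref{eq:vers3}, which is exactly where the full-rank reduction \cref{ex:fullrank} and the locally-full-rank hypothesis on $\tilde{\psi}$ must be combined (noting that the projections $\Pi^{x_\sfP}_\sfP(y_\sfP)$ do not in general preserve $\supp{\varrho_\sfP}$).
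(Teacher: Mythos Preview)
Your overall architecture matches the paper's: the easy implications $(3)\Rightarrow(2)\Rightarrow(1)$ are handled exactly as the paper does, and the substantive work is $(1)\Rightarrow(2)\Rightarrow(3)$. You also correctly identify the central obstacle in your final paragraph, namely that the projections $\Pi^{x_\sfP}_\sfP(y_\sfP)$ need not preserve $\supp{\varrho_\sfP}$. But you do not actually resolve that obstacle, and your proposed workarounds do not succeed.

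The gap is twofold. First, for $(1)\Rightarrow(2)$: the compressions $W^{x_\sfP}_\sfP := (\bra{x_\sfP}\otimes\bone)V_\sfP(\ket{x_\sfP}\otimes\bone)$ are only contractions, and the norm identity obtained by summing over $y$ shows they act isometrically on $\supp{\varrho_\sfP}$, not on the larger subspace $\sum_{y_\sfP}\Pi^{x_\sfP}_\sfP(y_\sfP)[\supp{\varrho_\sfP}]$ on which \eqref{eq:vers2} actually constrains them. Enlarging $\cH^\res_\sfP$ lets you extend an isometry off a subspace, but it does not make a strict contraction isometric on that subspace. Second, for $(2)\Rightarrow(3)$: your norm argument does show that all the $W^{x_\sfA}_\sfA$ agree on $\supp{\varrho_\sfA}$ (this part does not need full rank of $\tilde\varrho_\sfA$, so your stated use of the hypothesis there is misplaced), but agreement on $\supp{\varrho_\sfA}$ does not yield a single $W_\sfA$ satisfying \eqref{eq:vers3}, because the vectors $\Pi^{x_\sfA}_\sfA(y_\sfA)\ket{\phi}$ appearing there may lie outside $\supp{\varrho_\sfA}$. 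Invoking \cref{ex:fullrank} does not rescue this: the lemma is a statement about a fixed projective strategy $S$, and cutting down to the local supports yields a strategy $\bar S$ whose compressed POVM elements $P_\sfP\Pi^{x_\sfP}_\sfP(y_\sfP)P_\sfP$ are in general not projections, so the hypotheses no longer hold.

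The missing idea, which is how the paper closes both gaps simultaneously, is to prove as a preliminary step that condition (1) together with the full-rank hypothesis on $\tilde\psi$ forces $\supp{\varrho_\sfP}$ to be \emph{invariant} under every $\Pi^{x_\sfP}_\sfP(y_\sfP)$. One marginalises \eqref{eq:qdependent} to obtain
\[
V_\sfA\bigl(\ketbra{x_\sfA}\otimes\Pi^{x_\sfA}_\sfA(y_\sfA)\varrho_\sfA\bigr)V_\sfA^* = \ketbra{x_\sfA}\otimes\tilde\Pi^{x_\sfA}_\sfA(y_\sfA)\tilde\varrho_\sfA\otimes\varrho^\res_\sfA
\quad\text{and}\quad
V_\sfA\bigl(\ketbra{x_\sfA}\otimes\varrho_\sfA\bigr)V_\sfA^* = \ketbra{x_\sfA}\otimes\tilde\varrho_\sfA\otimes\varrho^\res_\sfA,
\]
and since $\tilde\varrho_\sfA$ has full rank the range of the first right-hand side is contained in that of the second, whence $\Im\Pi^{x_\sfA}_\sfA(y_\sfA)\varrho_\sfA\subseteq\Im\varrho_\sfA$. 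This is exactly the support-invariance you flagged as the obstacle; once it is in hand, the subspace generated from $\supp{\varrho_\sfP}$ by the projections collapses to $\supp{\varrho_\sfP}$ itself, and both of your arguments go through as written. Note in particular that the full-rank hypothesis enters \emph{here}, in this preliminary invariance step feeding both $(1)\Rightarrow(2)$ and $(2)\Rightarrow(3)$, rather than only at the last step as you state.
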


\begin{proof}
It is clear that (3) implies (2), by letting $W^{x_\sfP}_\sfP = W_\sfP$ independently of $x_\sfP$. It is also clear that (2) implies (1), by letting $V_\sfP= \sum_{x_\sfP \in X_\sfP} \ketbra{x_\sfP} \otimes W^{x_\sfP}_\sfP$. We show that (1) implies (2), and that (2) implies (3). We begin, however, by showing that if (1) holds then for each party $\sfP \in \{\sfA, \sfB\}$, the local support $\supp{\varrho_\sfP}$ of the $\sfP$-marginal of $\varrho$ is invariant under all of the operators $\Pi^{x_\sfP}_\sfP(y_\sfP)$, i.e.~$\Pi^{x_\sfP}_\sfP(y_\sfP)[\supp{\varrho_\sfP} ] \subseteq \supp{\varrho_\sfP}$ for $x_\sfP \in X_\sfP$ and $y_\sfP \in Y_\sfP$. 

Let us assume without loss of generality that $\sfP=\sfA$. If in \cref{eq:qdependent} we sum over $y_\sfB \in Y_\sfB$, then we obtain %
\begin{align} \label{eq:sumB}
 \forall x \in X, y_\sfA \in Y_\sfA : \;  [V	(\ket{x} \otimes \Pi^{x_\sfA}_\sfA(y_\sfA) \otimes \bone_{\cH_\sfB} ) \otimes \bone_\cP] \ket{\psi} = \ket{x} \otimes [\tilde{\Pi}^{x_\sfA}_\sfA(y_\sfA) \otimes \bone_{\tilde{\cH}_\sfB} ]\tilde{\ket{\psi}} \otimes \ket{\psi^\res} .
\end{align}
 By marginalising to the $\sfA$-systems, \cref{eq:sumB} implies that
 \begin{align} \label{eq:cond0}
	V_\sfA \big( \ketbra{x_\sfA}{x'_\sfA} \otimes \Pi^{x_{\sfA}}_{\sfA}(y_{\sfA}) \varrho_{\sfA} \Pi^{x'_{\sfA}}_{\sfA}(y'_{\sfA})\big) V_\sfA^*  =  \ketbra{x_{\sfA}}{x'_\sfA} \otimes  \tilde{\Pi}^{x_{\sfA}}_{\sfA}(y_{\sfA})\tilde{\varrho}_{\sfA}\tilde{\Pi}^{x'_{\sfA}}_{\sfA}(y'_{\sfA})  \otimes \varrho^\res_{\sfA}
	\end{align} 
	for all $x_\sfA, x'_\sfA \in X_\sfA$ and all $y_\sfA, y'_\sfA \in Y_\sfA$. Here, $\tilde{\varrho}_\sfA$ denotes the $\sfA$-marginal of $\tilde{\psi}$ and $\varrho^\res_\sfA$ the $\sfA$-marginal of $\psi^\res$. If in \cref{eq:cond0} we let $x'_\sfA=x_\sfA$ and sum over $y'_\sfA \in Y_\sfA$, then we obtain 
	\begin{align} \label{eq:cond1}
	V_\sfA \big( \ketbra{x_\sfA} \otimes \Pi^{x_{\sfA}}_{\sfA}(y_{\sfA}) \varrho_{\sfA} \big) V_\sfA^*  =  \ketbra{x_{\sfA}} \otimes  \tilde{\Pi}^{x_{\sfA}}_{\sfA}(y_{\sfA})\tilde{\varrho}_{\sfA}\otimes \varrho^\res_{\sfA}
	\end{align}
for all $x_\sfA \in X, y_\sfA \in Y_\sfA$. If additionally we sum over $y_\sfA \in Y_\sfA$, we obtain 
\begin{align} \label{eq:cond2}
	V_\sfA \big( \ketbra{x_\sfA} \otimes \varrho_{\sfA} \big) V_\sfA^*  =  \ketbra{x_{\sfA}} \otimes  \tilde{\varrho}_{\sfA}  \otimes \varrho^\res_{\sfA}
	\end{align}
for all $x_\sfA \in X_\sfA$. Now fix, $x_\sfA \in X_\sfA$ and $y_\sfA \in Y_\sfA$. What we wish to show is that $\Pi^{x_\sfA}(y_\sfA)_\sfP[\supp{\varrho_\sfA} ] \subseteq \supp{\varrho_\sfA}$, and this amounts to showing that $\Im \Pi^{x_\sfA}_\sfA(y_\sfA) \varrho_\sfA \subseteq \Im \varrho_\sfA$ where `$\Im$' denotes the image (range) of an operator. By virtue of Eqs.~\eqref{eq:cond1} and \eqref{eq:cond2}, however, this follows if we can show that $\Im \tilde{\Pi}^{x_\sfA}_\sfA(y_\sfA)\tilde{\varrho}_\sfA \subseteq \Im \tilde{\varrho}_\sfA$. But that follows directly from the assumption that $\tilde{\varrho}_\sfA$ has full rank. (This is the only step in the proof where we need the rank-assumption on $\tilde{\varrho}$.)\footnote{Moreover, the conclusion relies only on the weaker circumstance that $\up{supp}(\tilde{\varrho}_{\sfA})$ is \myuline{invariant} under the operators $\tilde{\Pi}^{x_{\sfA}}_{\sfA}(y_{\sfA})$ for $x_{\sfA} \in X_{\sfA}, y_{\sfA} \in Y_{\sfA}$, and similarly for $\sfB$. Thus, the lemma could alternatively have been stated  under this weaker assumption.}

With this conclusion in place, we are ready to prove that (1) implies (2), and that (2) implies (3). 

First, assume that (1) holds, with isometries $V_\sfA$ and $V_\sfB$. For $\sfP \in \{\sfA, \sfB\}$ and $x_\sfP \in X_\sfP$, let 
\begin{align}
    W^{x_\sfP}_\sfP := (\bone_{\tilde{\cH}_\sfP \otimes \cH^\res_\sfP} \otimes \bra{x_\sfP}) V_\sfP (\bone_{\cH_\sfP } \otimes \ket{x_\sfP}).
\end{align}
It follows directly from \cref{eq:qdependent} that \cref{eq:vers2} holds, however it is not clear that the operators $W^{x_\sfP}_\sfP$ are isometries. Fix $x_\sfA \in X_\sfA$ and $x_\sfB \in X_\sfB$. It is enough to show that $W^{x_\sfP}_\sfP$ acts isometrically on the subspace $\cH^0_\sfP \subseteq \cH_\sfP$ generated\footnote{The subspace  \emph{generated} from a subspace $\cK$ by a family of operators $(F_i)_{i \in I}$ is by definition the smallest subspace containing $\cK$ and invariant under $F_i$ for every $i \in I$.}  from $\supp{\varrho_\sfP}$ by the operators $\big(\Pi^{x_\sfP}_\sfP(y_\sfP)\big)_{y_\sfP \in Y_\sfP}$, since we may always modify the action of $W^{x_\sfP}_\sfP$ outside this subspace. However, by our introductory observation the subspace $\cH^0_\sfP$ is simply $\supp{\varrho_\sfP}$ itself. To show that $W^{x_\sfP}_\sfP$ acts isometrically on $\supp{\varrho_\sfP}$, observe that 
	\begin{align}  \label{eq:sumovera}
	[	W^{x_\sfA}_\sfA \otimes  W^{x_\sfB}_\sfB \otimes \bone_{\cP}] \ket{\psi} =  \tilde{\ket{\psi}} \otimes \ket{\psi^\res} ,
	\end{align}
 by summing over $y \in Y$ in \cref{eq:vers2}. Consequently, 
\begin{align} \label{eq:Wiso}
\bra{\psi}	[\abs{W^{x_\sfA}_\sfA}^2 \otimes \abs{W^{x_\sfB}_\sfB}^2 \otimes \bone_{\cP}] \ket{\psi} =  1,
\end{align}
where $\abs{W^{x_\sfP}_\sfP}^2 = (	W^{x_\sfP}_\sfP)^*	W^{x_\sfP}_\sfP$. The operators $\abs{W^{x_\sfP}_\sfP}^2$ are positive, and by definition of $W^{x_\sfP}_\sfP$ satisfy $\abs{W^{x_\sfP}_\sfP}^2  \leq \bone_{\cH_\sfP}$. Therefore, \cref{eq:Wiso} entails that $\abs{W^{x_\sfP}_\sfP}^2$ acts as $\bone_{\cH_\sfP}$ on the subspace $\supp{\varrho_\sfP}$, which is precisely to say that $W^{x_\sfP}_\sfP$ acts isometrically on this subspace. Altogether, (2) holds as desired.

	Next, assume that (2) holds; we show (3). Since (2) implies (1), our introductory observation still applies: the subspace $\cH^0_\sfP$ generated from $\supp{\varrho_\sfP}$ by the operators $\Pi^{x_\sfP}_\sfP(y_\sfP)$ coincides with $\supp{\varrho_\sfP}$ itself. Therefore, in order to show the existence of $x_\sfP$-independent isometries $W_\sfP$ satisfying \cref{eq:vers3}, it suffices to show that for any two $x_\sfP, x'_\sfP \in X_\sfP$, the isometries $W^{x_\sfP}_\sfP$ and $W^{x'_\sfP}_\sfP$ from \cref{eq:vers2} act identically on the subspace $\supp{\varrho_\sfP}$. Let us assume without loss of generality that $\sfP = \sfA$. Let $\sum_{j=1}^r \sqrt{p(j)} \ket{\psi_{\sfA}(j)} \otimes \ket{\psi_{\neg \sfA}(j)}$ be a Schmidt decomposition of $\ket{\psi}$ with $p(1), \ldots, p(r)> 0$, relative to the factorisation $\cH_{\sfA} \otimes \cH_{\neg {\sfA}}$, where $\cH_{\neg \sfA} = \cH_\sfB \otimes \cP$. The condition (2) entails \cref{eq:sumovera}, which now reads 
\begin{align}	\label{eq:qi0collapsed}
  \sum_{j=1}^r \sqrt{p(j)} W^{x_\sfA} \ket{\psi_{\sfA}(j)} \otimes [W^{x_{\sfB}}_{\sfB}  \otimes \bone_\cP] \ket{\psi_{\neg \sfA}(j)}= \ket*{\tilde{\psi}} \otimes \ket*{\psi^\res} 
  \end{align}
for any $x_\sfA \in X_\sfA$ and $x_\sfB \in X_\sfB$. Fix some $x_\sfB \in X_\sfB$. By \cref{eq:qi0collapsed}, the vector $W^{x_\sfA} \ket{\psi_{\sfA}(j)}$ must be independent of $x_\sfA$, for each $j=1, \ldots, r$; this follows by applying the operator $\bone_{\tilde{\cH}_\sfA \otimes \cH^\res_\sfA} \otimes \bra{\psi_{\lnot \sfA}(j)} [(W^{x_\sfB}_\sfB)^* \otimes \bone_\cP]$ to both sides. Observing that $\up{span}\{\ket{\psi_{\sfA}(j)} \mid j=1, \ldots, r  \} = \up{supp}(\varrho_{\sfA})$, we thus conclude that the operator $W^{x_\sfA}_\sfA$ restricted to $\up{supp}(\varrho_\sfA)$ is independent of $x_{\sfA}$. This shows the desired, and altogether finishes the proof. \end{proof}

 \cref{lem:technical} concludes the presentation of our new operational self-testing definition and its relation to conventional self-testing. Let us end by briefly considering how our operational definition might lend itself to the setting of robust self-testing.

\begin{Remark}[Approximate Simulation and Robust Self-Testing] \label{rem:Approx} The aim of \emph{robust self-testing} \cite{MYS12,SB19} is to establish that if a strategy $S$ has a behaviour which is `close' to the behaviour $P$, then $S$ is `close' to being reducible to some canonical strategy $\tilde{S}$. This notion calls for a definition of \myuline{approximate} reducibility, and this is conventionally achieved by considering the Hilbert space norm of the difference between the left and right hand sides in the reducibility condition \eqref{eq:RedAttlComp}. Of course, the operational interpretation of this may be even more questionable than that of the exact case, and since \eqref{eq:RedAttlComp} comprises several identities (one for each $x \in X$ and $y \in Y$) it is not obvious how to form a single numerical quantity which indicates closeness. 

Our operational definition of local simulation, on the other hand, quite naturally lends itself to approximate generalisations. If $D$ is a metric on quantum channels, then \cref{def:Sim} suggests the following notion of \emph{approximate local simulation (w.r.t.~$D$)}: For $\varepsilon \geq 0$, we say that \emph{$S$ locally $\varepsilon$-simulates $\tilde{S}$ (w.r.t.~$D$)}, if for any implementation of $\tilde{S}$ there exists an implementation of $S$ which is $\varepsilon$-close to it, as measured by $D$. 
%For instance, $D$ could be the so-called diamond distance, $\delta_\diamond$, given by $\delta_\diamond(\Lambda_1, \Lambda_2) = \frac{1}{2} \sup_{\cR} \sup_{\varrho} \norm{[\Lambda_1 \otimes \id_\cR](\varrho),[\Lambda_2 \otimes \id_\cR](\varrho))}_1$,\footnote{The diamond distance $\delta_\diamond(\Lambda_1, \Lambda_2)$ has an operational interpretation as the optimal probability of distinguishing between the channels $\Lambda_1$ and $\Lambda_2$ if provided uniformly at random.}, or $D$ could be the \emph{purified diamond distance} \cite{Hou21}, $\Breve{\delta_\diamond}$, given by $\delta_\diamond(\Lambda_1, \Lambda_2) = \sup_{\Sigma_1, \Sigma_2} \delta_\diamond(\Sigma_1, \Sigma_2)$ where the supremum is over Stinespring dilations $\Sigma_1$ of $\Lambda_1$ and $\Sigma_2$ of $\Lambda_2$. 

 If the metric $D$ is sufficiently well-behaved, then the chain of arguments which proved the equivalence between \cref{def:Sim} and the `purified' condition \eqref{eq:SimPure} will carry over to the approximate setting. Specifically (assuming for simplicity that $\tilde{S}$ is pure-state), it will hold that $S$ locally $\varepsilon$-simulates $\tilde{S}$ if and only if there exist isometric channels $\breve{\Gamma}_\sfA$, $\breve{\Gamma}_\sfB$ and a pure state $\psi^\up{res}$ such that 
\begin{align} \label{eq:SS}
      	\myQ{0.7}{0.7}{
		& \push{\what{X}_\sfA}  \qw  & \multigate{1}{ \Sigma_\sfA} & \qw &  \push{\what{Y}_\sfA}  \qw & \qw  \\
		& \Nmultigate{4}{\psi}  & \ghost{\Sigma_\sfA}  & \push{\cE_\sfA} \ww &  \Nmultigate{1}{\breve{\Gamma}_\sfA}{\ww} & \push{\tilde{\cE}_\sfA}  \ww & \ww \\
		& \Nghost{\psi}& & & \Nghost{\breve{\Gamma}_\sfA} & \push{\cH^\res_\sfA} \ww & \ww\\
		& \Nghost{\psi}& \ww & \ww & \ww & \push{{\cP}} \ww & \ww\\
		&\Nghost{\psi} & & & \Nmultigate{1}{\breve{\Gamma}_\sfB} &   \push{\cH^\res_\sfB} \ww & \ww\\
		& \Nghost{\psi} & \multigate{1}{\Sigma_\sfB} & \push{\cE_\sfB} \ww &  \Nghost{\breve{\Gamma}_\sfB}{\ww} & \push{\tilde{\cE}_\sfB}  \ww & \ww  \\
		&   \push{\what{X}_\sfB} \qw  & \ghost{\Sigma_\sfB} & \qw & \push{\what{Y}_\sfB}  \qw & \qw  \\
	} 
\quad 	 \approx^D_\varepsilon \quad 
	\myQ{0.7}{0.7}{
		& \push{\what{X}_\sfA}  \qw  & \multigate{1}{ \tilde{\Sigma}_\sfA} & \push{\what{Y}_\sfA}  \qw & \qw  \\
		& \Nmultigate{4}{\tilde{\psi}}  & \ghost{\tilde{\Sigma}_\sfA} & \push{\tilde{\cE}_\sfA}  \ww & \ww \\
		& \Nghost{\tilde{\psi}} &  \Nmultigate{2}{\psi^\textup{res}} &   \push{\cH^\res_\sfA} \ww & \ww \\
		&	\Nghost{\psi'} & \Nghost{\psi^\textup{res}} & \push{{\cP}}\ww & \ww  \\ 
		& \Nghost{\psi'} &  \Nghost{\psi^\textup{res}} &   \push{\cH^\res_\sfB} \ww & \ww \\
		& \Nghost{\tilde{\psi}} & \multigate{1}{\tilde{\Sigma}_\sfB}  & \push{\tilde{\cE}_\sfB}  \ww & \ww \\
		&   \push{\what{X}_\sfB} \qw  & \ghost{\tilde{\Sigma}_\sfB} & \push{\what{Y}_\sfB}  \qw & \qw  \\
	} ,\end{align}
where `$\approx^D_\varepsilon$' signifies that the two channels are $\varepsilon$-close as measured by $D$. Examples of a metric which is well-behaved in this sense are the \emph{Bures distance} \cite{KSW08math,KSW08phys} and the \emph{purified diamond-distance} \cite{Hou21} on quantum channels. (For a systematic discussion of metrics and their properties, see Chapter 3 in Ref.~\cite{Hou21}.\footnote{In fact, it follows from the monotonicity and dilational properties considered therein that $\varepsilon$-approximate version of \cref{def:Sim} is equivalent to \cref{eq:SS} for some isometric channels $\breve{\Gamma}_\sfA$, $\breve{\Gamma}_\sfB$ and some (not necessarily pure) state $\psi^\up{res}$; this is a straightforward modification of the exact argument. One can then argue that the state $\psi^\up{res}$ may be assumed pure without increasing the distance.}) 

Now, in a usual self-testing scenario, the inputs provided to the classical embedded systems $\what{X}_\sfA$ and $\what{X}_\sfB$ are all classical, so it is natural to consider instead of \cref{eq:SS} the weaker condition $ \max_{x \in X} D(\phi_x, \tilde{\phi}_x) \leq \varepsilon$, where $\phi_x$ and $\tilde{\phi}_x$ are the pure states which result from inserting input $\ketbra{x} = \ketbra{x_\sfA} \otimes \ketbra{x_\sfB}$ into the channels on the left and right hand sides of \cref{eq:SS}, respectively. In other words, if $V_\sfP: \cE_\sfA \to \tilde{\cE}_\sfP \otimes \cH^\res_\sfP$ are isometries representing $\breve{\Gamma}_\sfP$, for $\sfP \in \{\sfA, \sfB\}$,  and if $V= V_\sfA \otimes V_\sfB$, then the condition 
\begin{align} \label{eq:ApproxClas}
    \inf_{V_\sfA, V_\sfB, \ket*{\psi^\res}} \max_{x \in X} D(\phi_x, \tilde{\phi}_x) \leq \varepsilon
\end{align}
  is indicative of local $\varepsilon$-simulation of $\tilde{S}$ by $S$, where 
 \begin{align}
\resizebox{.92\hsize}{!}{$\displaystyle \ket{\phi_x} = \sum_{ y \in Y} [ V( \ket{x} \otimes \Pi^x(y)) \otimes \bone_\cP ] \ket{\psi} \otimes \ket{y} \quad \text{and} \quad \ket*{\tilde{\phi}_x} =\sum_{ y \in Y} \ket{x} \otimes \tilde{\Pi}^x(y) \ket*{\tilde{\psi}}\otimes \ket*{\psi^\res}  \otimes \ket{y}$} .
\end{align}
(We here assume the convenient choice of Stinespring dilations from the proof of \cref{prop:OpSim}, cf.~Eqs.~\eqref{eq:ConvStine} and \eqref{eq:OpId}. In particular, $\cE_\sfP = \what{X}_\sfP \otimes \cH_\sfP$ and $\tilde{\cE}_\sfP = \what{X}_\sfP \otimes \tilde{\cH}_\sfP \otimes \cH^\res_\sfP$.) 

The details of the condition \eqref{eq:ApproxClas} now of course depend on the metric $D$, but a general phenomenon is that the local isometry $V_\sfP$ is allowed to observe the local input $x_\sfP$. More formally, suppose we restrict the infimum in \cref{eq:ApproxClas} to isometries of the form $V_\sfP = \sum_{x_\sfP \in X_\sfP} \ketbra{x_\sfP} \otimes W^{x_\sfP}_\sfP$ with $W^{x_\sfP}_\sfP : \cH_\sfP \to \tilde{\cH}_\sfP \otimes \cH^\res_\sfP$ an isometry for each $x_\sfP \in X_\sfP$. We then have $\ket{\phi_x}= \sum_{ y \in Y} \ket{x} \otimes [ W^x \Pi^x(y) \otimes \bone_\cP ] \ket{\psi} \otimes \ket{y}$, so $\ket{x}$ factors out from both $\ket{\phi_x}$ and $\ket*{\tilde{\phi_x}}$. Natural metrics $D$ will be invariant under a fixed tensoring with $\ket{x}$, so the condition \eqref{eq:ApproxClas} holds if merely
\begin{align} \label{eq:Dquant}
\resizebox{.92\hsize}{!}{$\displaystyle \inf_{\substack{(W^{x_\sfA}_\sfA)_{x_\sfA \in X_\sfA},  \\ (W^{x_\sfB}_\sfB)_{x_\sfB \in X_\sfB}, \ket{\psi^\res}} } \max_{x \in X} D\left(\sum_{ y \in Y} [ W^x\Pi^x(y)) \otimes \bone_\cP ] \ket{\psi} \otimes \ket{y}, \tilde{\Pi}^x(y) \ket*{\tilde{\psi}}\otimes \ket*{\psi^\res}  \otimes \ket{y} \right) \leq \varepsilon$}.
\end{align}

Now, the infimum in \cref{eq:Dquant} might be notably smaller than the infimum over $x_\sfP$-\myuline{independent} isometries $W_\sfP$, indeed the argument used in the proof of \cref{lem:technical} to eliminate $x_\sfP$-dependence in the exact case ($\varepsilon = 0$) does not immediately carry over to the general approximate case ($\varepsilon > 0$). Thus, our notion of approximate simulation might ultimately yield robust self-testing results genuinely different from ones based on the conventional reducibility definition with input-independent isometries.

To finish off with an idea of what the quantity \eqref{eq:Dquant} might be like, let us take $D$ as the purified diamond distance $P_\diamond$ \cite{Hou21}, so that the $D$-distances appearing in \cref{eq:Dquant} are purified distances between states \cite{Toma10,Toma12}. As the involved states are already pure, the purified distances coincide with trace distances, and the trace distance between pure states is always upper bounded by the Hilbert space distance between vector representatives.\footnote{ In fact, we have for unit vectors $\ket{\chi}$ and $\ket{\chi'}$ the identity $\delta_1(\ket{\chi}, \ket{\chi'})=\frac{1+\abs{\braket{\chi}{\chi'}}}{2} \inf_{\theta \in [0, 2 \pi)} \norm{\ket{\chi}- e^{i \theta} \ket{\chi'}}_2$, so we have $\delta_1(\ket{\chi}, \ket{\chi'}) \approx \inf_{\theta \in [0, 2 \pi)} \norm{\ket{\chi}- e^{i \theta} \ket{\chi'}}_2$ when $\ket{\chi}$ and $\ket{\chi'}$ have large overlap.} Hence,
\begin{align}
& D\left(\sum_{ y \in Y} [ W^x\Pi^x(y)) \otimes \bone_\cP ] \ket{\psi} \otimes \ket{y}, \tilde{\Pi}^x(y) \ket*{\tilde{\psi}}\otimes \ket*{\psi^\res}  \otimes \ket{y} \right) \\ & \leq 
\norm{ \sum_{ y \in Y} [ W^x\Pi^x(y)) \otimes \bone_\cP ] \ket{\psi} \otimes \ket{y}- \tilde{\Pi}^x(y) \ket*{\tilde{\psi}}\otimes \ket*{\psi^\res}  \otimes \ket{y}}_2 \\ & = \sqrt{\sum_{ y \in Y} \norm{[ W^x\Pi^x(y)) \otimes \bone_\cP ] \ket{\psi}  - \tilde{\Pi}^x(y) \ket*{\tilde{\psi}}\otimes \ket*{\psi^\res}  }^2_2},
\end{align}
and \eqref{eq:Dquant} therefore follows if
\begin{align} 
    \inf_{\substack{(W^{x_\sfA}_\sfA)_{x_\sfA \in X_\sfA},  \\ (W^{x_\sfB}_\sfB)_{x_\sfB \in X_\sfB}, \ket{\psi^\res}} } \max_{x \in X}  \sqrt{\sum_{ y \in Y} \norm{[ W^x\Pi^x(y)) \otimes \bone_\cP ] \ket{\psi}  - \tilde{\Pi}^x(y) \ket*{\tilde{\psi}}\otimes \ket*{\psi^\res}  }^2_2} \leq \varepsilon.
\end{align}
\end{Remark}

For other considerations about the metric aspects of simulation, in particular as regards the development to come in \cref{subsec:Causal}, the reader is referred to the introductory remarks in Chapters 4 and 5 of Ref.~\cite{Hou21}. In particular, it is discussed there that an additional potentially desirable property of a metric is monotonicity under \emph{contraction} of certain input wires with output wires, corresponding to a non-increasing distance in \cref{eq:SS} if e.g.~we perform the procedure of processing and feeding the output from $\what{Y}_\sfA$ as input to $\what{X}_\sfB$.

\subsection{Symmetrising Simulation and Turning the Tables...}

\label{subsec:AssSim}

In \cref{subsec:Sim}, we introduced the operational notions of local simulation and of self-testing a.t.l.s., and we proved a formal relationship to the conventional operator-algebraic notions of reducibility and self-testing. From now on, we shall no longer look back to the operator-algebraic notions but rather work entirely with the operational counterparts. \\

This subsection serves to introduce a relaxation of local simulation, which we call \emph{local assisted simulation} (\cref{def:LocAss}). Like local simulation, it is an operationally defined pre-order on strategies. By features of Stinespring dilations, however, local assisted simulation is in quantum theory in fact an equivalence relation (\cref{thm:AssSym}).

As such, the relation of local assisted simulation, which as far as we know has not been studied from an operator-algebraic viewpoint, uncovers quantum self-testing as an explicitly two-fold phenomenon --- see in particular \cref{rem:SelfEquiv}, \cref{def:SelfTestAtlas} and the surrounding discussion.

\begin{Definition}[Local Assisted Simulation] \label{def:LocAss}
    Let $S$ and $S'$ be strategies for $\fB$, with respective component-wise Stinespring dilations $(\psi, \Sigma_\sfA, \Sigma_\sfB)$ and $(\psi', \Sigma'_\sfA, \Sigma'_\sfB)$. We say that \emph{$S$ locally assisted simulates $S'$}, written $S \geq_{l.a.s.} S'$, if there exist channels $\Gamma_0$, $\Gamma_\sfA$ and $\Gamma_\sfB$ and some state $\alpha$ such that 
    \begin{align} \label{eq:asssim}
       \myQ{0.7}{0.7}{
 &  \push{\what{X}_\sfA}   \qw  & \multigate{1}{\Sigma_\sfA}  &   \qw &  \push{\what{Y}_\sfA} \qw &  \qw  & \qw \\
	& \Nmultigate{5}{\psi}  & \ghost{\Sigma_\sfA} &  \push{\cE_\sfA} \ww &  \Nmultigate{1}{\Gamma_\sfA}{\ww} &\push{\cE'_\sfA} \ww  & \ww\\
	& \Nghost{\psi}   &  & \Nmultigate{3}{ \alpha } & \Nghost{\Gamma_\sfA}{\ww} &  \\  
	& \Nghost{\psi}  &\push{\cE_0}  \ww &   \ww &   \Nmultigate{1}{\Gamma_0}{\ww} &\push{\cE'_0} \ww  & \ww  \\
	& \Nghost{\psi}  &   & \Nghost{ \alpha } &   \Nghost{\Gamma_0}{\ww}   \\
	& \Nghost{\psi}  &  & \Nghost{ \alpha } &   \Nmultigate{1}{\Gamma_\sfB}{\ww}   \\
	& \Nghost{\psi}  & \multigate{1}{\Sigma_\sfB} &  \push{\cE_\sfB} \ww & \Nghost{\Gamma_\sfB}{\ww} &\push{\cE'_\sfB} \ww  & \ww  \\ 
 &  \push{\what{X}_\sfB}  \qw  & \ghost{\Sigma_\sfB}   & \qw &  \push{\what{Y}_\sfB} \qw &   \qw & \qw }
 \quad = \quad  
 \myQ{0.7}{0.7}{
	& \push{\what{X}_\sfA}  \qw  & \multigate{1}{ \Sigma'_\sfA} & \push{\what{Y}_\sfA}  \qw & \qw  \\
	& \Nmultigate{2}{\psi'}  & \ghost{\Sigma'_\sfA} &\push{\cE'_\sfA} \ww  & \ww\\
	& \Nghost{\psi'} & \ww &\push{\cE'_0} \ww  & \ww \\
	& \Nghost{\psi'} & \multigate{1}{\Sigma'_\sfB} &\push{\cE'_\sfB} \ww  & \ww \\
	&   \push{\what{X}_\sfB} \qw  & \ghost{\Sigma'_\sfB} & \push{\what{Y}_\sfB}  \qw & \qw \\
} .
    \end{align}
   (The reader is to imagine that the middle wiggly wire on the right hand side passes above the state $\alpha$, so that the pictorial representation is really three-dimensional.)
\end{Definition}

The interpretation of local assisted simulation is that, using a pre-distributed state $\alpha$ as `assistant', side-information of the strategy $S'$ can be generated from side-information of the strategy $S$.

\begin{Remark}[On the Definition of Local Assisted Simulation] \label{rem:AssOp} Local assisted simulation can be given an alternative definition in the style of \cref{def:Sim}, quantifying over general implementations: In fact, $S \geq_{l.a.s.} S'$ precisely if there exists a state $\alpha$ such that any general implementation $\scalemyQ{.7}{0.7}{0.7}{
	& \push{\what{X}_\sfA}  \qw  & \multigate{1}{ \Phi'_\sfA} & \push{\what{Y}_\sfA}  \qw & \qw  \\
	& \Nmultigate{2}{\xi'}  & \ghost{\Phi'_\sfA} &  \push{\cE'_\sfA}\ww & \ww\\
	& \Nghost{\xi'} & \ww  & \push{\cE'_0} \ww & \ww\\
	& \Nghost{\xi'} & \multigate{1}{\Phi'_\sfB} & \push{\cE'_\sfB }\ww & \ww  \\
	&   \push{\what{X}_\sfB} \qw  & \ghost{\Phi'_\sfB} & \push{\what{Y}_\sfB}  \qw & \qw \\
}$ of $S'$ is matched by $ \scalemyQ{.7}{0.7}{0.7}{
 &  \push{\what{X}_\sfA}   \qw  & \multigate{1}{\Phi_\sfA}  &   \qw &  \push{\what{Y}_\sfA} \qw &  \qw  & \qw \\
	& \Nmultigate{5}{\xi}  & \ghost{\Phi_\sfA} & \push{\cE_\sfA}  \ww &  \Nmultigate{1}{\Gamma_\sfA}{\ww} & \push{\cE'_\sfA} \ww & \ww  \\
	& \Nghost{\xi}   &  & \Nmultigate{3}{ \alpha } & \Nghost{\Gamma_\sfA}{\ww} &  \\  
	& \Nghost{\xi}  & \push{\cE_0} \ww &   \ww &   \Nmultigate{1}{\Gamma_0}{\ww} & \push{\cE'_0} \ww & \ww  \\
	& \Nghost{\xi}  &   & \Nghost{ \alpha } &   \Nghost{\Gamma_0}{\ww}   \\
	& \Nghost{\xi}  &  & \Nghost{ \alpha } &   \Nmultigate{1}{\Gamma_\sfB}{\ww}   \\
	& \Nghost{\xi}  & \multigate{1}{\Phi_\sfB} &  \push{\cE_\sfB} \ww & \Nghost{\Gamma_\sfB}{\ww} & \push{\cE'_\sfB} \ww & \ww  \\ 
 &  \push{\what{X}_\sfB}  \qw  & \ghost{\Phi_\sfB}   & \qw &  \push{\what{Y}_\sfB} \qw &   \qw & \qw }$ for some implementation of $S$ and some choice of $\Gamma_\sfA$, $\Gamma_\sfB$ and $\Gamma_0$. This means in particular that local assisted simulation can be defined in general operational theories (it also means that \cref{def:LocAss} does not depend on the choice of Stinespring dilations). We chose the formulation in \cref{def:LocAss} so as to avoid having to prove an equivalent of \cref{prop:Rechar} afterwards.   \end{Remark}
 
 \begin{Prop}
Local assisted simulation is a pre-order on the class of strategies for $\fB$, $\Strat{\fB}$. It coarsens the pre-order of local simulation, i.e.~the relation $S \geq_{l.s.} S'$ implies the relation $S \geq_{l.a.s.} S'$.   
\end{Prop}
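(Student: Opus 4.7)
The plan is to verify the three assertions implicit in the proposition: reflexivity of $\geq_{l.a.s.}$, transitivity of $\geq_{l.a.s.}$, and the implication $\geq_{l.s.} \Rightarrow \geq_{l.a.s.}$. Reflexivity and the implication are essentially immediate, so the bulk of the work lies in transitivity.

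For reflexivity, I take $\alpha$ to be the trivial state on $\triv \otimes \triv \otimes \triv \cong \triv$ and each $\Gamma_i$ ($i \in \{\sfA, 0, \sfB\}$) to be the identity channel on $\cE_i$; using the same Stinespring dilations on both sides of \cref{eq:asssim}, the left hand side then collapses to the Stinespring implementation of $S$, which matches the right hand side for $S' = S$. For the implication $S \geq_{l.s.} S' \Rightarrow S \geq_{l.a.s.} S'$, I invoke \cref{prop:Rechar}: the former supplies channels $\Gamma_\sfA, \Gamma_0, \Gamma_\sfB$ satisfying \cref{eq:SimStine}, and that identity is precisely \cref{eq:asssim} with $\alpha$ chosen as the trivial state.

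The main step is transitivity. Suppose $S \geq_{l.a.s.} S'$ is witnessed by a state $\alpha$ on $\cA_\sfA \otimes \cA_0 \otimes \cA_\sfB$ together with channels $\Gamma_i : \cE_i \otimes \cA_i \to \cE'_i$, and $S' \geq_{l.a.s.} S''$ by a tuple $(\alpha', \Gamma'_\sfA, \Gamma'_0, \Gamma'_\sfB)$ with $\Gamma'_i : \cE'_i \otimes \cA'_i \to \cE''_i$. I construct a witness for $S \geq_{l.a.s.} S''$ by taking the combined assistant state $\alpha \otimes \alpha'$, factorised location-wise as $(\cA_\sfA \otimes \cA'_\sfA) \otimes (\cA_0 \otimes \cA'_0) \otimes (\cA_\sfB \otimes \cA'_\sfB)$, and setting $\Gamma''_i := \Gamma'_i \circ (\Gamma_i \otimes \id_{\cA'_i})$ for each $i \in \{\sfA, 0, \sfB\}$. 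To verify \cref{eq:asssim} for this combined witness, I substitute the definitions of the $\Gamma''_i$ and regroup the diagram so that the sub-diagram built from $\alpha$ and the $\Gamma_i$'s sits directly atop the Stinespring dilations of $S$; applying the hypothesis $S \geq_{l.a.s.} S'$ rewrites that sub-diagram as the Stinespring implementation of $S'$, with $\alpha'$ retained as an ancilla still distributed across the three locations. A second application, this time of $S' \geq_{l.a.s.} S''$, converts the result into the Stinespring implementation of $S''$, matching the right hand side.

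The only foreseeable obstacle is diagrammatic bookkeeping: one must check that the three-location causal layout is preserved by the tensor product $\alpha \otimes \alpha'$ and by the compositions $\Gamma''_i$, and that the rewriting step is legal as a circuit manipulation in a symmetric monoidal category. I expect no conceptual difficulty, since local assisted simulation has been defined so that its tripartite structure closes precisely under such tensor-wise concatenation.
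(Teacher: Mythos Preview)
Your proposal is correct and follows the same approach as the paper; the paper simply declares reflexivity and transitivity ``obvious'' and notes that the coarsening follows ``by taking $\alpha$ product,'' which is exactly your choice of a trivial assistant state. Your explicit transitivity witness $(\alpha \otimes \alpha', \Gamma''_i)$ is the natural unpacking of that ``obvious'' step.
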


\begin{proof}
   Reflexivity and transitivity are obvious. The second statement follows by taking $\alpha$ product. 
\end{proof}

 \begin{Remark}[Local Assisted Simulation in Terms of Local Simulation] \label{rem:LocAssLoc}
 Recall from \cref{ex:Augmented} the concept of \emph{augmenting} a strategy $S$ by a state $\gamma$ to form the new strategy $S[\gamma]$. It is worth observing that $S$ locally assisted simulates $S'$ with assistant $\alpha$ if and only if $S[\alpha_{\sfA \sfB}]$ locally simulates $S'$ (without assistance), where $\alpha_{\sfA \sfB}$ is the marginal state arising from $\alpha$ by discarding the middle system. When $\alpha$ is pure, this statement is visually evident from \cref{eq:asssim}, by grouping $\psi$ and $\alpha$ together as a single state which is then a purification of $\varrho \otimes \alpha_{\sfA \sfB}$. It follows from this observation that $S \geq_{l.a.s.} S'$ if and only if there exists some state $\gamma$ such that $S[\gamma] \geq_{l.s.} S'$.
 \end{Remark}

The following result relies on the features of Stinespring dilations exposed in \cref{subsec:Stinespring}:  

\begin{Thm} \label{thm:AssSym}
In quantum theory, $\geq_{l.a.s.}$ is symmetric and thus an equivalence relation on $\Strat{\fB}$.
\end{Thm}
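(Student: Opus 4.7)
The plan is to exploit two key properties of Stinespring implementations: they are isometric and therefore reversible (by \cref{lem:IsoRev}), and they are dilationally pure (by \cref{Cor:StinePure}). Starting from the hypothesis $S \geq_{l.a.s.} S'$ displayed as \cref{eq:asssim}, I will engineer the reverse relation $S' \geq_{l.a.s.} S$ by constructing its local channels as left-inverses of Stinespring dilations of the original $\Gamma_\sfA, \Gamma_0, \Gamma_\sfB$, with a new assistant state $\beta$ emerging automatically as the ``leftover'' state of that dilation step.

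Concretely, I would first replace each $\Gamma_i$ ($i \in \{\sfA, 0, \sfB\}$) by its Stinespring dilation $\widetilde{\Gamma}_i : \cE_i \otimes \cA_i \to \cE'_i \otimes \cF_i$, where $\cA_i$ is the system carrying the relevant factor of $\alpha$. The resulting channel $(\widetilde{\Gamma}_\sfA \otimes \widetilde{\Gamma}_0 \otimes \widetilde{\Gamma}_\sfB) \circ (\Sigma_S \otimes \alpha)$ is a dilation of the right-hand side of \cref{eq:asssim}, which equals $\Sigma_{S'}$. Since $\Sigma_{S'}$ is isometric, \cref{Cor:StinePure} forces this dilation to have the product form $\Sigma_{S'} \otimes \beta$ for some state $\beta$ on $\cF_\sfA \otimes \cF_0 \otimes \cF_\sfB$. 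This gives an upgraded equation $(\widetilde{\Gamma}_\sfA \otimes \widetilde{\Gamma}_0 \otimes \widetilde{\Gamma}_\sfB) \circ (\Sigma_S \otimes \alpha) = \Sigma_{S'} \otimes \beta$ in which the local channels $\widetilde{\Gamma}_i$ are now isometric. Next I would invoke \cref{lem:IsoRev} to obtain left-inverses $\widetilde{\Gamma}_i^{-} : \cE'_i \otimes \cF_i \to \cE_i \otimes \cA_i$, and apply $\widetilde{\Gamma}_\sfA^{-} \otimes \widetilde{\Gamma}_0^{-} \otimes \widetilde{\Gamma}_\sfB^{-}$ to both sides of the upgraded equation. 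By bifunctoriality of $\otimes$, the left-hand side collapses to $\Sigma_S \otimes \alpha$, while the right-hand side becomes $(\widetilde{\Gamma}_\sfA^{-} \otimes \widetilde{\Gamma}_0^{-} \otimes \widetilde{\Gamma}_\sfB^{-}) \circ (\Sigma_{S'} \otimes \beta)$. Finally, marginalising the $\cA_i$-outputs makes the left-hand side reduce to $\Sigma_S$, and defining $\Gamma'_i := (\id_{\cE_i} \otimes \tr_{\cA_i}) \circ \widetilde{\Gamma}_i^{-}$ turns the right-hand side into $(\Gamma'_\sfA \otimes \Gamma'_0 \otimes \Gamma'_\sfB) \circ (\Sigma_{S'} \otimes \beta)$. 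This is precisely \cref{eq:asssim} with the roles of $S$ and $S'$ swapped, hence $S' \geq_{l.a.s.} S$.

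The main obstacle is the opening move: recognising that after Stinespring-dilating the $\Gamma_i$'s, the additional environment factors \myuline{as a state} $\beta$ rather than remaining entangled with $\Sigma_{S'}$. Without this decoupling there is no hope of ``pulling $\beta$ across'' to serve as a new assistant, nor of turning the reversal step into something which splits as a tensor product of three local channels $\Gamma'_i$. Crucially, this decoupling rests on the isometric nature of $\Sigma_{S'}$, which in turn rests on the component-wise isometric choices in the definition of a Stinespring implementation --- exactly the place where \cref{subsec:Stinespring} pays off. Once this is in hand, the remainder is a clean exercise in reversibility and marginalisation, and symmetry of $\geq_{l.a.s.}$ follows.
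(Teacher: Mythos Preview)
Your proposal is correct and follows essentially the same route as the paper's proof: replace the $\Gamma_i$'s by Stinespring dilations, use dilational purity of the isometric Stinespring implementation on the right to factor out a new assistant state $\beta$, then cancel the now-isometric $\widetilde{\Gamma}_i$'s via \cref{lem:IsoRev} and trace out $\alpha$. The paper's argument is terser but identical in structure; your explicit definition of $\Gamma'_i := (\id_{\cE_i} \otimes \tr_{\cA_i}) \circ \widetilde{\Gamma}_i^{-}$ just spells out what the paper leaves implicit.
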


\begin{proof}
    Suppose that $S \geq_{l.a.s.} S'$. As in the proof of \cref{prop:OpSim}, we may on the left hand side of \cref{eq:asssim} replace $\Gamma_0$, $\Gamma_\sfA$ and $\Gamma_\sfB$, by respective Stinespring dilations $\breve{\Gamma}_0$, $\breve{\Gamma}_\sfA$, and $\breve{\Gamma}_\sfB$; the resulting channel is then a dilation of the right hand side, which is isometric and hence dilationally pure, and it must therefore equal the channel obtained by tensoring the right hand side with a (pure) state, say $\beta$. Now, since $\breve{\Gamma}_0$, $\breve{\Gamma}_\sfA$, and $\breve{\Gamma}_\sfB$ are isometric, they are \myuline{reversible} by \cref{lem:IsoRev}, and can thus be cancelled by left-inverses $\breve{\Gamma}^-_0$, $\breve{\Gamma}^-_\sfA$, and $\breve{\Gamma}^-_\sfB$. If we do this and then trace out the state $\alpha$, we exhibit the relation $S' \geq_{l.a.s.} S$ with $\beta$ as assistant. This proves symmetry, as desired.   
\end{proof}

By virtue of \cref{thm:AssSym}, we will sometimes use the notation $\sim_{l.a.s.}$ rather than $\geq_{l.a.s.}$. Specifically, we use the notation $\sim_{l.a.s.}$ when discussing the relation in quantum theory, whereas we resort to the original notation $\geq_{l.a.s.}$ when discussing the relation generally, where the order might matter.

\begin{Remark} \label{rem:SelfEquiv}
 \cref{thm:AssSym} can be strengthened to assert that $\sim_{l.a.s.}$ is the equivalence relation \emph{generated by} the pre-order $\geq_{l.s.}$, that is, it is the finest equivalence relation which coarsens $\geq_{l.s.}$. Indeed, let $\cong$ be any equivalence relation which coarsens $\geq_{l.s.}$. We must show that $S \sim_{l.a.s.} S'$ implies $S \cong S'$. By \cref{rem:LocAssLoc},  $S[\gamma] \geq_{l.s.} S'$ for some state $\gamma$. By \cref{ex:Augmented},  $S[\gamma] \geq_{l.s.} S$. As $\cong$ coarsens $\geq_{l.s.}$, these two conditions imply $S' \cong S[\gamma]$ and $S[\gamma] \cong S$, respectively. Consequently, $S \cong S'$.\end{Remark}

An important consequence of \cref{thm:AssSym} is that local simulation implies equivalence under local assisted simulation. This fact substantiates the intuition of authors who use the term `equivalence of strategies' when referring to reducibility (cf.~\cref{rem:RedEquiv}): though reducibility is not itself a symmetric relation, quantum strategies related by reducibility really \myuline{are} mathematically equivalent, because they are related by local simulation and thus equivalent under local assisted simulation. For example, consider the CHSH-game. Since the canonical strategy $\tilde{S}$ can be self-tested according to local simulation, our result implies that all strategies with the maximal violation are \myuline{equivalent} under local assisted simulation. According to \cref{rem:LocAssLoc}, every optimal strategy can therefore be locally simulated by the canonical one after tensoring an appropriate assisting state to the canonical (singlet) state.

From a different perspective, this observation reveals that quantum self-testing of a canonical strategy $\tilde{S}$ may be considered a two-fold phenomenon: it is not just the phenomenon that any strategy $S$ with behaviour $P$ is `at least as strong as $\tilde{S}$' (namely, $S \geq_{l.s.} \tilde{S}$), it is \myuline{also} the phenomenon that any $S$ with behaviour $P$ is `no stronger than $\tilde{S}$' (namely, $\tilde{S} \geq_{l.a.s.} S$). As such, self-testing on the one hand lower bounds the capabilities of unknown strategies and on the other hand upper bounds these same capabilities. In quantum theory, it just so happens that the lower bound implies an upper bound automatically.

To isolate the effect of the upper bound, it is beneficial to introduce the following notion of self-testing:

\begin{Definition}[Self-Testing according to Local Assisted Simulation.] \label{def:SelfTestAtlas}
 Let $\tilde{S}$ be any strategy with behaviour $P$. We say that \emph{$P$ self-tests $\tilde{S}$ according to local assisted simulation} (for short, \emph{a.t.l.a.s.}) if any strategy $S$ with behaviour $P$ is locally assisted simulated by $\tilde{S}$, i.e.~$S \leq_{l.a.s.} \tilde{S}$. \end{Definition}
 
 \begin{Remark}
 In quantum theory, self-testing a.t.l.a.s.~is equally well defined by the requirements $S \leq_{l.a.s.} \tilde{S}$ or $S \sim_{l.a.s.} \tilde{S}$ for strategies $S$ with behaviour $P$. Indeed, these two requirements are identical, by \cref{thm:AssSym}. In other theories, however, this may not be so and we intend the definition of self-testing a.t.l.a.s.~to mean  `$S \leq_{l.a.s.} \tilde{S}$' rather than equivalence under local assisted simulation. 
 \end{Remark}

If $P$ self-tests a strategy $\tilde{S}$ a.t.l.s., then $P$ self-tests $\tilde{S}$ a.t.l.a.s. In fact, $P$ in this case self-tests \myuline{every} strategy $\tilde{S}'$ with behaviour $P$ a.t.l.a.s., since all such strategies are equivalent. We shall see in \cref{sec:Utility} that certain features of self-testing which have traditionally been attributed to the conventional reducibility relation can actually be derived from our weaker notion of self-testing a.t.l.a.s.~(see in particular \cref{prop:MeasExtract} and \cref{prop:SelfSec}).

The following we leave as an open problem:

\begin{OP} \label{op:SimpleRep}
 Suppose that $P$ self-tests a strategy $\tilde{S}$ a.t.l.a.s. Does there necessarily exist a strategy $\tilde{S}_0$ such that $P$ self-tests $\tilde{S}_0$ a.t.l.s.?
\end{OP}

%We shall see later that a negative answer to \cref{op:SimpleRep} would imply a negative answer to the less esoteric problem of whether every extremal quantum behaviour self-tests a.t.r.~some strategy (\cref{rem:ExtrSimple}).

If \cref{op:SimpleRep} has a positive answer, then quantum self-testing a.t.l.a.s.~may be considered as fundamental as quantum self-testing a.t.l.s.(and a.t.r., by \cref{thm:SelfRvsS}). The only difference between self-testing a.t.l.a.s.~and a.t.l.s.~is then that the latter highlights a certain canonical strategy as especially simple. We emphasize that the canonical CHSH-strategy (and indeed any other conventionally self-tested strategy) do not constitute a counterexample to the open problem, since these strategies already satisfy the strongest notion of self-testing according to local simulation.

\subsection{... all the Way: Quantum Self-Testing in a Larger Perspective}
\label{subsec:Causal}

We have now seen how to recast the conventional definition of quantum self-testing in an operational language (\cref{subsec:Sim}), and how this language uncovers quantum self-testing as a two-fold phenomenon (\cref{subsec:AssSim}).  The operational reformulation in \cref{subsec:Sim} suggests an interpretation in terms of side-information which escapes to an environment according to a particular causal structure. Our considerations in \cref{subsec:AssSim}, meanwhile, led to the idea of a self-testing notion which aims exclusively to \myuline{upper} bound the power of an unknown strategy and is unconcerned with \myuline{lower} bounding its capabilities, cf.~\cref{def:SelfTestAtlas}.  In this final subsection, we explain how to take these ideas even further by realising quantum self-testing in a general framework of causally structured circuits of channels and dilations.

We start in \cref{subsubsec:Caussim} by introducing a relaxation of local assisted simulation, which we call \emph{causal simulation} (\cref{def:CausSim}), and consider to be the ultimate mode of simulation among strategies. It gives rise to a notion of \emph{self-testing according to causal simulation} (for short, \emph{a.t.c.s.}), cf.~\cref{def:SelfTestCaus}. When the canonical strategy $\tilde{S}$ is pure-state causal simulation coincides with local assisted simulation (\cref{rem:CausPure}), however in general it is less restrictive as illustrated by \cref{ex:UnconSelf}. The point is now that self-testing a.t.c.s.~is an instance of a phenomenon which belongs to a larger theory of \emph{causal channels} and \emph{derivability} among \emph{causal dilations}, which we explain in \cref{subsubsec:Causchan} and \cref{subsubsec:Causdil}. We show in \cref{subsubsec:SpecialBell} how to recover quantum self-testing according to causal simulation as a special case within this framework. 

The presentation of the framework of causal channels and causal dilations and of its connection to quantum self-testing is brief and at points somewhat informal. For further details, and comparisons to other frameworks for networks of channels \cite{Chir09combs,Peri17,Kiss17}, the reader is referred to Chapters 4 and 5 in Ref.~\cite{Hou21}.

\subsubsection{Causal Simulation}

\label{subsubsec:Caussim}

Let us introduce the following relaxation of local assisted simulation:

\begin{Definition}[Causal Simulation] \label{def:CausSim}

Let $S$ and $S'$ be strategies for the Bell-scenario $\fB$, with respective component-wise Stinespring dilations $(\psi, \Sigma_\sfA, \Sigma_\sfB)$ and $(\psi', \Sigma'_\sfA, \Sigma'_\sfB)$. We say that \emph{$S$ causally simulates $S'$}, written $S \geq_{c.s.} S'$, if there exist channels $\Gamma_0$, $\Gamma_\sfA$ and $\Gamma_\sfB$ such that 
\begin{align} \label{eq:CausSim}
\scalemyQ{1}{0.7}{0.5}{
	&   \push{\what{X}_\sfA} \qw  & \multigate{1}{\Sigma_\sfA}  &   \qw &  \push{\what{Y}_\sfA}  \qw &  \qw  & \qw \\
	& \Nmultigate{4}{\psi}  & \ghost{\Sigma_\sfA} &  \push{\cE_\sfA} \ww &  \Nmultigate{1}{\Gamma_\sfA}{\ww} & \push{\cE'_\sfA} \ww & \ww \\
	& \Nghost{\psi}   &  & \Nmultigate{2}{\Gamma_0} & \Nghost{\Gamma_\sfA}{\ww} &  \\  
	& \Nghost{\psi}  & \push{\cE_0} \ww &   \Nghost{\Gamma_0}{\ww}  & \ww & \push{\cE'_0} \ww & \ww \\
	& \Nghost{\psi}  &  & \Nghost{ \Gamma_0} &   \Nmultigate{1}{\Gamma_\sfB}{\ww}   \\
	& \Nghost{\psi}   & \multigate{1}{\Sigma_\sfB} &  \push{\cE_\sfB} \ww & \Nghost{\Gamma_\sfB}{\ww} & \push{\cE'_\sfB}\ww & \ww \\ 
	&   \push{\what{X}_\sfB} \qw  & \ghost{\Sigma_\sfB}   & \qw &  \push{\what{Y}_\sfB} \qw &   \qw & \qw} =\scalemyQ{1}{0.7}{0.5}{
	& \push{\what{X}_\sfA}\qw  & \multigate{1}{ \Sigma'_\sfA} &  \push{\what{Y}_\sfB}  \qw & \qw  \\
	& \Nmultigate{2}{\psi'}  & \ghost{\Sigma'_\sfA} &  \push{\cE'_\sfA} \ww & \ww\\
	& \Nghost{\psi'} & \ww & \push{\cE'_0} \ww & \ww\\
	& \Nghost{\psi'} & \multigate{1}{\Sigma'_\sfB} & \push{\cE'_\sfB} \ww & \ww  \\
	&  \push{\what{X}_\sfA} \qw  & \ghost{\Sigma'_\sfB} &  \push{\what{Y}_\sfB} \qw & \qw  \\
}  \quad. 
\end{align}
\end{Definition}

\begin{Remark} \label{rem:CausDefOp}
As usual, the Stinespring implementations can be considered stand-ins for general implementations, and the definition of causal simulation is therefore meaningful in all operational theories (and independent of the choice of Stinespring dilations).
\end{Remark}

The interpretation of causal simulation is that the side-information generated from $S'$ can be derived from the side-information generated by $S$ in a way which preserves causality. On the left hand side, the structure of the channel comprised by $\Gamma_0$, $\Gamma_\sfA$ and $\Gamma_\sfB$ is such that every output requires the exact same inputs as on the right hand side --- therefore, the information can be leaked to the environment according to the same schedule on both sides.

\begin{Definition}[Self-Testing according to Causal Simulation.] \label{def:SelfTestCaus}
 Let $\tilde{S}$ be any strategy with behaviour $P$. We say that \emph{$P$ self-tests $\tilde{S}$ according to causal simulation} (for short, \emph{a.t.c.s.}) if  any strategy $S$ with behaviour $P$ is causally simulated by $\tilde{S}$, i.e.~$S \leq_{c.s.} \tilde{S}$. \end{Definition} 

\begin{Remark}[Causal Simulation by Pure-State Strategies] \label{rem:CausPure}
 Note that when a $\tilde{S}$ is a \myuline{pure-state} strategy, the condition $\tilde{S} \geq_{c.s.} S$ is equivalent to the condition $\tilde{S} \geq_{l.a.s.} S$, for any strategy $S$. This is because the purifying system $\tilde{\cE}_0$ of $\tilde{\varrho}$ may be taken trivial in a Stinespring implementation of $\tilde{S}$, and as such the channel $\Gamma_0$ witnessing causal simulation has trivial input and must in fact be a state. For this reason, it also holds that $P$ self-tests $\tilde{S}$ a.t.l.c.~if and only if $P$ self-tests $\tilde{S}$ a.t.l.a.s. 
\end{Remark}

The following example serves to show that causal simulation is generally distinct from local assisted simulation. In fact, it constitutes an example of a strategy $\tilde{S}$ which is self-tested a.t.c.s., but which cannot possible be self-tested a.t.l.a.s. (in particular cannot be self-tested in the conventional operator-algebraic sense), since by \cref{cor:extremal} to be established later its behaviour would then be extremal.

\begin{Example}[Unconventional Self-Testing]\label{ex:UnconSelf}
Consider the Bell-scenario $\fB$ for which $X_\sfA = X_\sfB = \{0\}$ and $Y_\sfA = Y_\sfB = \{0,1\}$. It corresponds to an experimental set-up in which two separated parties are required to respond with bits $y_\sfA \in \{0,1\}$ and $y_\sfB \in \{0,1\}$. There is only one possible input for each party, and we interpret the local reception of this input as a query for the bit. A behaviour for the Bell-scenario $\fB$ is simply a probability distribution $P$ on the set $Y_\sfA \times Y_\sfB = \{0,1\} \times \{0,1\}$. Let us consider the behaviour corresponding to perfectly correlated uniformly random outputs, i.e.~the behaviour which as a classical embedded state on $\what{Y}_\sfA \otimes \what{Y}_\sfB$ is given by $P = \kappa :=  \frac{1}{2} \ketbra{00} + \frac{1}{2} \ketbra{11}$. Consider also the strategy $\tilde{S}$ whose state is $\tilde{\varrho} = \kappa$ on $\C^2 \otimes \C^2$ and whose measurement channels are $\Lambda_\sfA = \Lambda_\sfB =  \Delta_{\{0,1\}}$, which measure in the computational basis of $\C^2$. We claim that $P$ self-tests $\tilde{S}$ according to causal simulation.

Let $S$ be a strategy with behaviour $P$. We must show that $S \leq_{c.s.} \tilde{S}$. Without loss of generality, we may assume that the local quantum systems $\cH_\sfA$ and $\cH_\sfB$ in the strategy $S$ are of the form $\C^2 \otimes \C^{m_\sfA}$ and $\C^2 \otimes \C^{m_\sfB}$ for some $m_\sfA, m_\sfB \in \N$, and that the measurement channels correspond to the projective measurements given by $\Pi_\sfA(y_\sfA) = \ketbra{y_{\sfA}} \otimes \bone_{\C^{m_{\sfA}}}$ and $\Pi_\sfB(y_\sfB) = \ketbra{y_{\sfB}} \otimes \bone_{\C^{m_{\sfB}}}$. This is by \cref{prop:Proj} and the fact that any projective strategy is equivalent under local simulation to one of the said form (by local isometric embedding and a local unitary change of basis). Now, observe that we may even without loss of generality assume that $m_\sfA = m_\sfB=1$, i.e.~that $S$ in fact consists in measuring a pair of qubits in the computational bases. This is because a strategy $S'$ with general values of $m'_\sfA, m'_\sfB \in \N$ is causally simulated by one with $m_\sfA = m_\sfB=1$. Indeed, the measurement channels $\Lambda'_\sfP$ are given by $\Delta_{\{0,1\}} \otimes \tr_{\C^{m'_\sfP}}$, so if $\breve{\Delta}_{\{0,1\}}: \End{\C^2} \to \End{\C^2 \otimes \cE} $ is a Stinespring dilation of $\Delta_{\{0,1\}}$ then $\breve{\Delta}_{\{0,1\}} \otimes \id_{\C^{m'_\sfP}}$ is a Stinespring dilation of $\Lambda'_\sfP$. Hence, $S'$ has Stinespring implementation 
\begin{align}
\scalemyQ{1}{0.7}{0.5}{
	& \push{\triv}\qw  & \qw & \multigate{1}{ \breve{\Delta}_{\{0,1\}} \otimes \id_{\C^{m'_\sfA}}}  & \qw &  \push{\C^2}  \qw & \qw  \\
	& \Nmultigate{2}{\psi'} & \push{\C^2 \otimes \C^{m'_\sfA}} \qw & \ghost{\breve{\Delta}_{\{0,1\}} \otimes \id_{\C^{m'_\sfA}}} &  \push{\cE \otimes \C^{m'_\sfA}} \ww & \ww\\
	& \Nghost{\psi'} & \ww  & \push{\cE'_0}\ww &  \ww \\
	& \Nghost{\psi'} & \push{\C^2 \otimes \C^{m'_\sfB}} \qw &\multigate{1}{\breve{\Delta}_{\{0,1\}} \otimes \id_{\C^{m'_\sfB}}} & \push{\cE \otimes \C^{m'_\sfB}} \ww & \ww  \\
	&  \push{\triv} \qw  & \qw & \ghost{\breve{\Delta}_{\{0,1\}} \otimes \id_{\C^{m'_\sfB}}}  &  \qw & \push{\C^2} \qw & \qw  \\
}
\end{align}
which can trivially be rewritten as 
\begin{align}
\scalemyQ{1}{0.7}{0.5}{
	& \push{\triv}\qw  & \qw & \multigate{1}{ \breve{\Delta}_{\{0,1\}} } & \qw &  \push{\C^2}  \qw & \qw  \\
	& \Nmultigate{4}{\psi'} & \push{\C^2} \qw & \ghost{\breve{\Delta}_{\{0,1\}}} &  \push{\cE} \ww & \Nmultigate{1}{\id}{\ww} & \push{\cE \otimes \C^{m'_\sfA}} \ww & \ww \\
	& \Nghost{\psi'} & & \Nmultigate{2}{\id} & \push{\C^{m'_\sfA}} \ww & \Nghost{\id}{\ww} \\
	& \Nghost{\psi'} & \push{\cE'_0 \otimes \C^{m'_\sfA} \otimes \C^{m'_\sfB}}\ww &  \Nghost{\id}{\ww} & \ww & \push{\cE'_0} \ww & \ww \\
	& \Nghost{\psi'} & & \Nghost{\id}& \push{\C^{m'_\sfB}} \ww & \Nmultigate{1}{\id}{\ww} \\
	& \Nghost{\psi'} & \push{\C^2} \qw &\multigate{1}{\breve{\Delta}_{\{0,1\}} } & \push{\cE} \ww & \Nghost{\id}{\ww} & \push{\cE \otimes \C^{m'_\sfB}} \ww & \ww  \\
	&  \push{\triv} \qw  & \qw & \ghost{\breve{\Delta}_{\{0,1\}}} & \qw &  \push{\C^2} \qw & \qw  \\
},
\end{align}
thus showing the asserted causal simulation by a qubit strategy $S$. (Operationally, information which in $S'$ was not leaked before the measurements has now been moved to the register of information which exists in the environment before receiving the bit queries.) 

To show the desired, it therefore altogether suffices to show that $\tilde{S}$ causally simulates any qubit strategy $S= (\varrho, \Delta_{\{0,1\}}, \Delta_{\{0,1\}})$ with behaviour $P$ whose measurements are in the computational basis. Let us characterise the purifications $\psi$ of states $\varrho$ in such strategies. First, write $\varrho = \sum_{k=1}^n p_k \psi_k$ as a convex combination of pure states. Then, observe that for the strategy $S$ to have behaviour $P$ means that $\kappa=\frac{1}{2}\ketbra{00}+\frac{1}{2} \ketbra{11} =(\Delta_{\{0,1\}}\otimes \Delta_{\{0,1\}} )(\varrho) = \sum_{k=1}^n p_k (\Delta_{\{0,1\}} \otimes \Delta_{\{0,1\}})(\psi_k)$. This can only happen if for $k=1, \ldots, n$ we have
\begin{align}
    (\Delta_{\{0,1\}}\otimes\Delta_{\{0,1\}} )(\psi_k) = q_k \ketbra{00}+(1-q_k) \ketbra{11} 
\end{align} 

for some $q_1, \ldots, q_n \in [0,1]$ with $\sum_{k=1}^n p_k q_k = \frac{1}{2}$. From this we conclude that each $\psi_k$ must have a vector representative of the form $\sqrt{q_k} \ket{00} + \sqrt{1-q_k} e^{i \theta_k} \ket{11}$ with $\theta_k \in [0, 2\pi)$. As such, a purification of $\varrho$, with purifying space $\C^n$, is given by 
\begin{align}
  \ket{\psi} &:=  \sum_{k=1}^m \sqrt{p_k} \ket{\psi_k} \otimes \ket{k} = \sum_{k=1}^m \sqrt{p_k} \big(\sqrt{q_k} \ket{00} + \sqrt{1-q_k} e^{i \theta_k} \ket{11}\big) \otimes \ket{k}  \\ &= \frac{1}{\sqrt{2}} \ket{00} \otimes \ket{\chi_0} + \frac{1}{\sqrt{2}} \ket{11} \otimes \ket{\chi_1}, 
\end{align}
with $\ket{\chi_0}:= \sum_{k=1}^n \sqrt{2p_k q_k} \ket{k}$ and $\ket{\chi_1}:= \sum_{k=1}^n \sqrt{2 p_k (1-q_k)}e^{i \theta_k} \ket{k}$. Since $\sum_{k=1}^n p_k q_k = 1/2$, both $\ket{\chi_0}$ and $\ket{\chi_1}$ are unit vectors. In conclusion, for any such strategy with the correct behaviour, a purification of the state $\varrho$ is given by $\ket{\psi}=\frac{1}{\sqrt{2}} \ket{00}\otimes  \ket{\chi_0} + \frac{1}{\sqrt{2}} \ket{11} \otimes \ket{\chi_1}$ with $\ket{\chi_0}, \ket{\chi_1} \in \C^n$  unit vectors; conversely, whenever $\varrho$ has a purification $\psi$ of this form it actually has the correct behaviour. (Our canonical strategy $\tilde{S}$ incidentally belongs to this class, with $\ket{\chi_j}= \ket{j} \in \C^2$.) 

To finish the proof, we must therefore show that for any such pure state $\psi$ there exist channels $\Gamma_0, \Gamma_\sfA, \Gamma_\sfB$ such that 
\begin{align} \label{eq:csim}
    	\scalemyQ{1}{0.7}{0.5}{
		& \push{\triv} \qw & \multigate{1}{\breve{\Delta}_{\{0,1\}}}  &   \qw &  \push{\C^2}  \qw &  \qw  & \qw \\
		& \Nmultigate{4}{\breve{\kappa}}  & \ghost{\breve{\Delta}_{\{0,1\}}} &  \push{\C^2} \ww &  \Nmultigate{1}{\Gamma_\sfA}{\ww} & \push{\C^2} \ww & \ww \\
		& \Nghost{\breve{\kappa}}   &  & \Nmultigate{2}{\Gamma_0} & \Nghost{\Gamma_\sfA}{\ww} &  \\  
		& \Nghost{\breve{\kappa}}  & \push{\C^2} \ww &   \Nghost{\Gamma_0}{\ww}  & \ww & \push{\C^n} \ww & \ww \\
		& \Nghost{\breve{\kappa}}  &  & \Nghost{ \Gamma_0} &   \Nmultigate{1}{\Gamma_\sfB}{\ww}   \\
		& \Nghost{\breve{\kappa}}   & \multigate{1}{\breve{\Delta}_{\{0,1\}}} &  \push{\C^2} \ww & \Nghost{\Gamma_\sfB}{\ww} & \push{\C^2}\ww & \ww \\ 
		&  \push{\triv} \qw & \ghost{\breve{\Delta}_{\{0,1\}}}   & \qw &  \push{\C^2} \qw &   \qw & \qw} =\scalemyQ{1}{0.7}{0.5}{
		& \push{\triv} \qw & \multigate{1}{ \breve{\Delta}_{\{0,1\}}} &  \push{\C^2}  \qw & \qw  \\
		& \Nmultigate{2}{\psi}  & \ghost{\breve{\Delta}_{\{0,1\}}} &  \push{\C^2} \ww & \ww\\
		& \Nghost{\psi} & \ww & \push{\C^n} \ww & \ww\\
		& \Nghost{\psi} & \multigate{1}{\breve{\Delta}_{\{0,1\}}} & \push{\C^2} \ww & \ww  \\
		& \push{\triv} \qw & \ghost{\breve{\Delta}_{\{0,1\}}} &  \push{\C^2} \qw & \qw  \\
	} ,
\end{align}
where $\breve{\kappa}$ denotes the purification of $\kappa$ with vector representative $\ket{\breve{\kappa}} = \frac{1}{\sqrt{2}} \ket{00} \otimes \ket{0}+\frac{1}{\sqrt{2}} \ket{11} \otimes \ket{1} \in (\C^2 \otimes \C^2) \otimes \C^2$. Indeed, this condition precisely expressed that $\tilde{S}$ causally simulates $S$. It suffices for this to find \myuline{isometric} channels $\Gamma_0$, $\Xi_\sfA$, $\Xi_\sfB$ such that 
\begin{align} \label{eq:easy}
    	\scalemyQ{1}{0.7}{0.5}{
		&  \push{\triv} \qw & \multigate{1}{\breve{\Delta}_{\{0,1\}}}  &   \qw &  \push{\C^2}  \qw &  \qw  & \qw \\
		& \Nmultigate{4}{\breve{\kappa}}  & \ghost{\breve{\Delta}_{\{0,1\}}} &  \push{\C^2} \ww &  \ww & \ww \\
		& \Nghost{\breve{\kappa}}   &  & \Nmultigate{2}{\Gamma_0}  & \ww & \ww  \\  
		& \Nghost{\breve{\kappa}}  & \push{\C^2} \ww &   \Nghost{\Gamma_0}{\ww} & \push{\C^n} \ww & \ww \\
		& \Nghost{\breve{\kappa}}  &  & \Nghost{ \Gamma_0} &  \ww  & \ww   \\
		& \Nghost{\breve{\kappa}}   & \multigate{1}{\breve{\Delta}_{\{0,1\}}} &  \push{\C^2} \ww & \ww & \ww \\ 
		&  \push{\triv} \qw & \ghost{\breve{\Delta}_{\{0,1\}}}   & \qw &  \push{\C^2} \qw &   \qw & \qw} =\scalemyQ{1}{0.7}{0.5}{
		& \push{\triv} \qw & \multigate{1}{ \breve{\Delta}_{\{0,1\}}} &  \push{\C^2}  \qw & \qw  \\
		& \Nmultigate{4}{\psi}  & \ghost{\breve{\Delta}_{\{0,1\}}} &  \push{\C^2} \ww & \Nmultigate{1}{\Xi_\sfA}{\ww} & \push{\C^2} \ww & \ww \\
		& \Nghost{\psi}& & & \Nghost{\Xi_\sfA} & \ww & \ww \\ 
		& \Nghost{\psi} & \ww & \ww & \ww & \push{\C^n} \ww & \ww\\
		& \Nghost{\psi} & & & \Nmultigate{1}{\Xi_\sfB} & \ww & \ww \\
		& \Nghost{\psi} & \multigate{1}{\breve{\Delta}_{\{0,1\}}} & \push{\C^2} \ww & \Nghost{\Xi_\sfB}{\ww} & \push{\C^2} \ww & \ww  \\
		& \push{\triv} \qw& \ghost{\breve{\Delta}_{\{0,1\}}} &  \push{\C^2} \qw & \qw  \\
	} ,
\end{align}
since by reversibility of isometric channels (\cref{lem:IsoRev}), $\Xi_\sfA$ and $\Xi_\sfB$ may then be cancelled to yield \cref{eq:csim}. Satisfying \cref{eq:easy} is easy, however, as we may simply choose $\Gamma_0$ corresponding to the isometry $\ket{j} \mapsto \ket{j} \otimes  \ket{\chi_j}\otimes  \ket{j}$ and $\Xi_{\sfA}$ and $\Xi_\sfB$ both corresponding to the isometry $\ket{j} \mapsto \ket{j} \otimes \ket{j}$ (choosing the systems of the unlabelled wiggly wires as $\C^2$). This concludes the argument and proves that $P$ self-tests $\tilde{S}$ according to causal simulation. \end{Example}

It is worth observing that the strategy $\tilde{S}$ considered in \cref{ex:UnconSelf} is essentially classical and can thus be considered as a strategy in the operational theory of classical information, $\CIT$ (cf.~\cref{sec:Operational}). Since our formulation of self-testing is operational, it is meaningful to ask whether it is also the case that the behaviour $P$ self-tests $\tilde{S}$ a.t.c.s.~\emph{in the theory $\CIT$}. This is indeed case; for a general discussion, see Sec.~4.4.C in Ref.~\cite{Hou21}, in particular the statement is implied by Cor.~4.4.24 (where the term `rigidity' is synonymous with `self-testing a.t.c.s.').\\

In the following, we explain how causal simulation among quantum strategies may be considered a special case of a much more general phenomenon.

\subsubsection{Causal Channels}

\label{subsubsec:Causchan}

Let $\Theory$ be any compositional theory (e.g.~$\Theory= \QIT$). In short, causal channels provide a compact way of dealing with circuits of channels in $\Theory$. We have already used pictures of circuits extensively when describing implementations and the various modes of simulation between quantum strategies, e.g.~as in \cref{eq:CausSim}. In such situations, we always understood that the picture represented a single channel from the input systems on the left to the output systems on the right, rather than representing the specific circuit structure and the individual channels appearing in it. 

For example, a depiction such as 
\begin{align} \label{eq:GenChan}
\myQ{0.7}{0.5}{
		&  &  & & &  \Nmultigate{2}{ T_1} & \qw & \qw & \push{\cY_1} \qw & \qw & \\
	 	& \push{\cX_1} \qw  & \qw & \qw& \qw &  \ghost{ T_1} & \push{\cL} \qw &  \multigate{2}{T_2} &   \push{\cY_2} \qw & \qw  \\
	& \push{\cX_2}\qw & \qw  &  \multigate{1}{T_3} &  \push{\cK} \qw & \ghost{T_1} & &  \Nghost{T_2} & \push{\cY_3}\qw & \qw \\
		& \Nmultigate{2}{T_4} &  \push{\cH} \qw & \ghost{T_3} & \qw & \push{\cK'} \qw & \qw & \ghost{T_2} \\
		& \Nghost{T_4} & \qw & \qw & \qw & \qw& \qw  & \qw &\push{\cY_4} \qw & \qw \\
		& \Nghost{T_2} & \push{\cH'} \qw & \multigate{1}{T_5}  \\
	 	&    \push{\cX_3} \qw  & \qw &\ghost{T_5} & \qw & \qw & \qw& \qw &   \push{\cY_5}  \qw &  \qw   \\
} 
\end{align}
would represent a channel $T$ from the system $\cX_1 \otimes \cX_2 \otimes \cX_3$ to the system $\cY_1 \otimes \ldots \otimes \cY_5$, rather than representing the structure of boxes and wires and the individual channels $T_1, \ldots, T_5$ filling these boxes. (This is completely analogous to the way in which an expression such as `$2 \times 3$' is usually taken to represent the number $6$ rather than the sequence of symbols `$2$', `$\times$' and `$3$'.) 

If we wish to address instead the \myuline{circuit} of channels depicted in \cref{eq:GenChan}, we must refer to two pieces of information. One of these is purely graph-theoretic, namely the \emph{stencil} 
\begin{align} \label{eq:stencil}
\scalemyQ{1}{0.7}{0.5}{
	&	&  &  & \Nmultigate{2}{ \phantom{X}} & \qw & \qw & \qw & \bullet \\
	& \bullet  \quad 	&  \qw  & \qw & \ghost{ \phantom{X}} &  \multigate{2}{\phantom{X}} &  \qw & \qw & \bullet  \\
	&	  \bullet   \quad& \qw  &  \multigate{1}{\phantom{X}} & \ghost{\phantom{X}} &  \Nghost{\phantom{X}} & \qw & \qw &  \bullet\\
	&	& \Nmultigate{2}{\phantom{X}} &  \ghost{\phantom{X}} & \qw & \ghost{\phantom{X}} \\
	&	& \Nghost{\phantom{X}} & \qw & \qw & \qw & \qw & \qw &  \bullet \\
	&	& \Nghost{\phantom{X}} & \qw & \multigate{1}{\phantom{X}}  \\
	& 	  \bullet   \quad	&   \qw  & \qw &\ghost{\phantom{X}} & \qw &   \qw & \qw &  \bullet
}  \quad,
\end{align}
 which is formally a directed acyclic graph (the direction is from left to right) whose edges we think of as `wires' and whose nodes come in two types, which we think of respectively as `open ports' and `boxes'. The second piece of information is the \emph{filling}, which assigns to each wire in the stencil a system in $\Theory$ and to each box in the stencil a channel in $\Theory$ between the adjacent systems. In general, a \emph{circuit} is thus a pair $(\fF, G)$, where $G$ is a stencil described by a directed acyclic graph and where $\fF$ is a filling of the stencil $G$.
 
 To any circuit $(\fF, G)$ is an associated \emph{(input-output) behaviour}, denoted $\fF[G]$, namely the total channel from the system corresponding to the open input ports to the to the system corresponding to the open output ports. There is also an associated \emph{causal specification}, namely the map $\scrC_G$ which to every open output port in $G$ associates the set of input ports in $G$ which are ancestral to that output port. We think of these input ports as the \emph{causes} of the output port. (In the stencil \eqref{eq:stencil}, for example, the top three output ports each has the top two input ports as causes, the fourth output port has no causes, and the bottom output port has the bottom input port as cause.)
 
 The main idea behind causal channels is to consider the following equivalence relation among circuits:
 
 \begin{Definition}[Congruence of Circuits]
  Two circuits $(\fF, G)$ and $(\fF', G')$ are called \emph{congruent} if they have the same behaviour and the same causal specification, i.e.~if $\fF[G]=\fF'[G']$ and $\scrC_G = \scrC_{G'}$. % We write in this case $(\fF, G) \equiv (\fF', G')$.
 \end{Definition}

Clearly, there is a very natural way of representing the congruence class of a given circuit $(\fF, G)$, namely in terms of the pair $(T, \scrC) = (\fF[G], \scrC_G)$ consisting of the behaviour $T$ and the causal specification $\scrC$.

\begin{Definition}[Constructible Causal Channels]
A \emph{constructible causal channel} in $\Theory$ is a pair of the form $(T, \scrC)$ where $T=\fF[G]$ and $\scrC=\scrC_G$ for some circuit $(\fF,G)$ in $\Theory$.
\end{Definition}

\begin{Remark}[Constructibility] \label{rem:CausGen}The treatment in Ref.~\cite{Hou21} involves causal channels $(T, \scrC)$ of an abstract kind, which need not admit a representation in terms of circuits, but where $T$ merely complies to certain non-signalling conditions suggested by $\scrC$. The term `constructible' then precisely refers to those causal channels which arise from circuits. Here, we shall \myuline{only} consider constructible causal channels, but we often include the term for emphasis. % XX We might therefore systematically drop the term `constructible', but implicitly assume all causal channels to be constructible.  
\end{Remark}

%When we decide to take causal channels as the primary objects rather than circuits, we effectively employ in formalism the convention of rendering two circuits equivalent if they have identical behaviours and identical ancestry among the output and input wires. 

A constructible causal channel is a mathematically much simpler object than a circuit. The reason why it is viable to make this simplification is that congruence of circuits respects the various operations one might wish to perform on the level of circuits: 

It is easy to see that if we compose two circuits in series or parallel, then the congruence class of the resulting composition can be determined from the congruence classes of the initial circuits. In other words, serial and parallel composition can be considered as operations on the level of constructible causal channels themselves. 

More remarkably, it is also true in most\footnote{Specifically, this is true under the so-called \emph{universality} axiom of Ref.~\cite{Hou21}, see Theorem 4.2.6 therein. The universality axiom holds in the theories $\QIT$ and $\CIT$.} theories that the congruence class of a circuit which arises from \myuline{contraction} of wires in an initial circuit can be derived from the congruence class of this initial circuit. (An output wire can be contracted with an input wire whenever the resulting graph remains acyclic.) In other words, contraction of wires can also be performed a the level of constructible causal channels. 

Though we shall not directly need the latter result here, the above facts are ultimately what justifies in manipulations the replacement of circuits by causal channels.

 \begin{Example}[Bell-Channels] \label{ex:Bell}
    Let $S= (\varrho, \Lambda_\sfA, \Lambda_\sfB)$ be a strategy for a Bell-scenario $\fB$. The strategy $S$ is really the description of a circuit, and the diagram
    \begin{align} \label{eq:Bellchan}
	\myQ{0.7}{0.7}{
		& \push{\what{X}_\sfA}  \qw 	& \qw  & \multigate{1}{\Lambda_\sfA} & \push{\what{Y}_\sfA}  \qw & \qw   \\
		& \Nmultigate{1}{\varrho}  & \push{\cH_\sfA} \qw & \ghost{\Lambda_\sfA} &  \\
		& \Nghost{\varrho} & \push{\cH_\sfB} \qw  & \multigate{1}{\Lambda_\sfB}  \\
	&   \push{\what{X}_\sfB} \qw  		& \qw  & \ghost{\Lambda_\sfB} & \push{\what{Y}_\sfB}  \qw & \qw \\
	}
	\end{align}
	gives rise to a constructible causal channel $(P, \scrC)$ which in addition to the behaviour $P$ of the strategy also holds a causal specification $\scrC$ which to the output port $\sfy_\sfP$ associates the set of input ports $\scrC(\sfy_\sfP) = \{\sfx_\sfP\}$, for $\sfP \in \{\sfA, \sfB\}$. Any other strategy $S'$ with behaviour $P$ gives rise to the same causal channel, but the circuit-representation is different.

A constructible causal channel $(P, \scrC)$ will be called a \emph{Bell-channel} if it admits a circuit-representation of the form \eqref{eq:Bellchan}. It can be shown \cite{Hou21} that if $(P, \scrC)$ is any constructible causal channel for which $P$ is embedded classical and $\scrC$ is given by $\scrC(\sfy_\sfP) = \{\sfx_\sfP\}$ for $\sfP \in \{\sfA, \sfB\}$, then $(P, \scrC)$ is in fact a Bell-channel.\end{Example}

\begin{Example}[Implementations] \label{ex:Imp}
Let $(\xi, \Phi_\sfA, \Phi_\sfB)$ be a components-wise dilation of a strategy $S=(\varrho, \Lambda_\sfA, \Lambda_\sfB)$ for a Bell-scenario $\fB$. Then, the appropriate way of thinking about the implementation
	\begin{align}
	    \myQ{0.7}{0.7}{
	& \push{\what{X}_\sfA}  \qw  & \qw & \multigate{1}{ \Phi_\sfA} & \push{\what{Y}_\sfA}  \qw & \qw  \\
	& \Nmultigate{2}{\xi}  & \push{\cH_\sfA} \qw & \ghost{\Phi_\sfA} & \push{\cE_\sfA} \ww & \ww \\
	& \Nghost{\xi}  & \push{\cE_0} \ww & \ww & \ww & \ww \\
	& \Nghost{\xi} & \push{\cH_\sfB} \qw & \multigate{1}{\Phi_\sfB} & \push{\cE_\sfB} \ww & \ww  \\
	&   \push{\what{X}_\sfB} \qw  & \qw&  \ghost{\Phi_\sfB} & \push{\what{Y}_\sfB}  \qw & \qw  \\
} 
	\end{align}
is as a constructible causal channel $(L, \scrE)$, where $L$ is the input-output behaviour, and where the causal specification $\scrE$ is given by $\scrE(\sfe_0)=\emptyset$ and $\scrE(\sfe_\sfP) = \scrE(\sfy_\sfP)= \{\sfx_\sfP\}$ for $\sfP \in \{\sfA, \sfB\}$.\end{Example}

\subsubsection{Causal Dilations and  the Derivability Pre-Order}
\label{subsubsec:Causdil}

The causal channel in \cref{ex:Imp} which corresponds to an implementation of a strategy $S$ has the following property: if we trace out the three ports $\sfe_\sfA$, $\sfe_\sfB$ and $\sfe_0$, then we obtain precisely the causal channel $(P, \scrC)$ from \cref{ex:Bell} describing the behaviour of the strategy. Specifically, the channel $L$ is a dilation of the channel $P$, and the causal specification $\scrE$ assigns to any output port $\sfy$, which is \myuline{not} traced out, the same set of causes as does $\scrC$. 

This is an instance of the following concept:

\begin{Definition}[Constructible Causal Dilations]Let $(T, \scrC)$ be a causal channel. A \emph{construcible causal dilation of $(T, \scrC)$} is a constructible causal channel $(L, \scrE)$ whose set of input ports coincides with that of $(T, \scrC)$, whose set of output ports contains that of $(T, \scrC)$, and such that $L$ is a dilation of $T$ and $\scrE(\sfy)=\scrC(\sfy)$ for any output port $\sfy$ in $(T, \scrC)$.

The set of output ports in $(L, \scrE)$ which are not output ports of $(T, \scrC)$ are called the \emph{environment of $(L, \scrE)$}. \end{Definition}

On the level of circuits, a constructible causal dilation of $(T, \scrC)= (\fF[G] ,\scrC_G)$ corresponds to a circuit which after having the output ports in its environment traced out becomes congruent to $(\fF, G)$. As such, the constructible causal dilations of $(T, \scrC)$ represent all the ways in which the circuit $(\fF, G)$ may possibly be realised in a larger environment. Side-information leaks to this environment according to a schedule prescribed by the causal dependence of outputs on inputs. 

There is a natural notion of relative strength between causal dilations:

\begin{Definition}[Derivability]
Let $(L, \scrE)$ and $(L', \scrE')$ be constructible causal dilations of $(T, \scrC)$. We say that \emph{$(L', \scrE')$ is derivable from $(L, \scrE)$}, or that \emph{$(L, \scrE)$ derives $(L',\scrE')$}, if we can obtain $(L', \scrE')$ by composing $(L, \scrE)$ with some constructible causal channel $(H, \scrD)$ which acts only in the environment. We write in this case $(L, \scrE) \cder (L', \scrE')$.\end{Definition}

   Again, derivability can be described on the level of circuits: $(L, \scrE)$  derives $(L', \scrE')$ if a circuit corresponding to $(L, \scrE)$ can be composed with some circuit, acting only in the environment, to yield a circuit corresponding to $(L', \scrE')$. As such, derivability of $(L', \scrE')$ from $(L, \scrE)$ means that all side-information which escapes to the environment according to $(L', \scrE')$ can be extracted --- even conforming to the correct schedule --- if we possess $(L, \scrE)$.

\begin{Example}
If $(T, \scrC)$ is a constructible causal channel, then $(T,\scrC)$ is a constructible causal dilation of itself. We call it the \emph{trivial} dilation of $(T, \scrC)$, and it is derivable from any other constructible causal dilation, by simply tracing out each output in the environment.  \end{Example}

\begin{Example}
Let $S$ and $S'$ be strategies for a Bell-scenario with behaviour $P$, and let $(\xi, \Phi_\sfA, \Phi_\sfB)$ and $(\xi', \Phi'_\sfA, \Phi'_\sfB)$ be respective component-wise dilations. We have seen in \cref{ex:Imp} that the corresponding implementations are causal dilations of the causal channel $(P, \scrC)$. The causal dilation corresponding to $(\xi', \Phi'_\sfA, \Phi'_\sfB)$ is derivable from the causal dilation corresponding to $(\xi, \Phi_\sfA, \Phi_\sfB)$ if there exist channels $\Gamma_0$, $\Gamma_\sfA$ and $\Gamma_\sfB$ such that 	
\begin{align}  \label{eq:CausSimImp}
\scalemyQ{1}{0.7}{0.5}{
	&   \push{\what{X}_\sfA} \qw  & \multigate{1}{\Phi_\sfA}  &   \qw &  \push{\what{Y}_\sfA}  \qw &  \qw  & \qw \\
	& \Nmultigate{4}{\xi}  & \ghost{\Phi_\sfA} &  \push{\cE_\sfA} \ww &  \Nmultigate{1}{\Gamma_\sfA}{\ww} & \push{\cE'_\sfA} \ww & \ww \\
	& \Nghost{\xi}   &  & \Nmultigate{2}{\Gamma_0} & \Nghost{\Gamma_\sfA}{\ww} &  \\  
	& \Nghost{\xi}  & \push{\cE_0} \ww &   \Nghost{\Gamma_0}{\ww}  & \ww & \push{\cE'_0} \ww & \ww \\
	& \Nghost{\xi}  &  & \Nghost{ \Gamma_0} &   \Nmultigate{1}{\Gamma_\sfB}{\ww}   \\
	& \Nghost{\xi}   & \multigate{1}{\Phi_\sfB} &  \push{\cE_\sfB} \ww & \Nghost{\Gamma_\sfB}{\ww} & \push{\cE'_\sfB}\ww & \ww \\ 
	&   \push{\what{X}_\sfB} \qw  & \ghost{\Phi_\sfB}   & \qw &  \push{\what{Y}_\sfB} \qw &   \qw & \qw} =\scalemyQ{1}{0.7}{0.5}{
	& \push{\what{X}_\sfA}\qw  & \multigate{1}{ \Phi'_\sfA} &  \push{\what{Y}_\sfB}  \qw & \qw  \\
	& \Nmultigate{2}{\xi'}  & \ghost{\Phi'_\sfA} &  \push{\cE'_\sfA} \ww & \ww\\
	& \Nghost{\xi'} & \ww & \push{\cE'_0} \ww & \ww\\
	& \Nghost{\xi'} & \multigate{1}{\Phi'_\sfB} & \push{\cE'_\sfB} \ww & \ww  \\
	&  \push{\what{X}_\sfA} \qw  & \ghost{\Phi'_\sfB} &  \push{\what{Y}_\sfB} \qw & \qw  \\
}  \quad. 
\end{align}
Indeed, the channels $\Gamma_0$, $\Gamma_\sfA$ and $\Gamma_\sfB$ make up a circuit in the environment which derives one implementation-circuit from the other, up to congruence of circuits (in particular preserving the causal dependence of outputs on inputs).
\end{Example}

\begin{Remark}[Modularity]\label{rem:Modular}
	The notion of derivability among causal dilations enjoys several `composability', or `modularity', features. For instance, if causal dilations $(L_1, \scrE_1)$ of $(T_1, \scrC_1)$ and $(L_2, \scrE_2)$ of $(T_2, \scrC_2)$ derive the causal dilations $(L'_1, \scrE'_1)$ and $(L'_2, \scrE'_2)$, respectively, then the parallel [serial] composition of $(L_1, \scrE_1)$ and $(L_2, \scrE_2)$ derives the parallel [serial] composition of $(L'_1, \scrE'_1)$ and $(L'_2, \scrE'_2)$. For precise and general versions of these statements, see \cite{Hou21} Theorems~4.3.24 and 4.3.25.\end{Remark}

Now, for any constructible causal channel $(T, \scrC)$, the derivability relation is a pre-order on the class of constructible causal dilations of $(T, \scrC)$. It is of interest to know this class and pre-order, since this provides an understanding of the various ways in which side-information may leak to the environment and how these relate to each other. In fact, it is already interesting to identify a \emph{dense} subclass of causal dilations, i.e.~a subclass such that any causal dilation may be derived from a member of this subclass. A very special case of this is when every dilation is derivable from a single one: 

\begin{Definition}[Complete Causal Dilations]
    A constructible causal dilation $(K, \scrF)$ of $(T, \scrC)$ is called \emph{(causally) complete} if it is a $\cder$-largest element, i.e.~if for any constructible causal dilation $(L, \scrE)$ of $(T, \scrC)$, we have $(K, \scrF) \cder (L, \scrE)$.
\end{Definition}

A complete causal dilation of $(T, \scrC)$ formalises the notion of a strongest possible information-leak to the environment. What we now wish to illustrate is that in the theory $\QIT$, in the special causal case of Bell-scenarios, causal completeness is essentially the occurrence of quantum self-testing according to causal simulation.

	\subsubsection{The Special Case of Bell-Scenarios}
	\label{subsubsec:SpecialBell}

	Let $\fB = (X_\sfA, X_\sfB, Y_\sfA, Y_\sfB)$ be a Bell-scenario. Recall from \cref{ex:Bell} that a Bell-channel is a constructible causal channel $(P, \scrC)$ which admits a circuit-representation of the form \eqref{eq:Bellchan}, or, equivalently, one whose behaviour $P$ is classical and whose causal specification is given by $\scrC(\sfy_\sfA)=\{\sfx_\sfA\}$ and $\scrC(\sfy_\sfB)=\{\sfx_\sfB\}$. We have the following result, which holds in any operational theory:

	\begin{Thm}[Density Theorem for Bell-Channels] \label{thm:BellDense} Let $(P, \scrC)$ be a Bell-channel. Then, any constructible causal dilation of $(P, \scrC)$ is derivable from a constructible causal dilation corresponding to a circuit of the form
\begin{align} \label{eq:belldil}
\myQ{0.7}{0.5}{
	&  \push{\what{X}_\sfA}  \qw& \multigate{1}{ \Phi_\sfA} &  \push{\what{Y}_\sfA} \qw & \qw  \\
	& \Nmultigate{2}{\xi}  & \ghost{\Phi_\sfA} &  \push{\cE_\sfA} \ww & \ww\\
	& \Nghost{\xi} & \ww & \push{\cE_0} \ww & \ww\\
	& \Nghost{\xi} & \multigate{1}{\Phi_\sfB} &  \push{\cE_\sfB}\ww & \ww  \\
	&   \push{\what{X}_\sfB} \qw & \ghost{\Phi_\sfB} & \push{\what{Y}_\sfB} \qw & \qw  \\
} 
\end{align}
for some state $\xi$ and some channels $\Phi_\sfA$ and $\Phi_\sfB$. 
\end{Thm}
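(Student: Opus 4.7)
The plan is to produce, for each constructible causal dilation $(L, \scrE)$ of $(P, \scrC)$, a form-\eqref{eq:belldil} dilation that derives it. By the structure theorem cited in \cref{ex:Bell}, $(P, \scrC)$ admits a Bell-circuit representation with a state $\varrho$ on $\cH_\sfA \otimes \cH_\sfB$ and channels $\Lambda_\sfA, \Lambda_\sfB$. I would then pick $\xi, \Phi_\sfA, \Phi_\sfB$ to be \emph{complete} dilations of these components---in $\QIT$ these are Stinespring dilations, whose completeness is \cref{thm:StineComp}---and argue that the resulting form-\eqref{eq:belldil} dilation $(L^*, \scrE^*)$ is causally complete, i.e.~derives every other constructible causal dilation of $(P, \scrC)$. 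This would prove the density theorem in the strongest possible way (a single dilation of the required form dominates all others).

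To verify causal completeness, I would first partition the environment output ports of $(L, \scrE)$ according to the subset of $\{\sfx_\sfA, \sfx_\sfB\}$ that causes them, yielding four groups $G_\emptyset, G_\sfA, G_\sfB, G_{\sfA\sfB}$. Next, after exhibiting $(L, \scrE)$ by an explicit circuit and tracing out its environment to obtain a circuit realising $(P, \scrC)$, I would apply the bare Bell-structure theorem to reorganize the resulting circuit into Bell form. The acyclic structure then lets me route each environment output of $(L, \scrE)$ appropriately: outputs in $G_\emptyset$ emanate from a dilation of the pre-shared state $\varrho$; those in $G_\sfP$ ($\sfP \in \{\sfA, \sfB\}$) emanate from a dilation of $\Lambda_\sfP$; those in $G_{\sfA\sfB}$ arise from a downstream post-processing step combining $\sfA$- and $\sfB$-side wires. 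Componentwise Stinespring completeness shows that each local dilation is obtainable from $\xi, \Phi_\sfA, \Phi_\sfB$ through a channel acting in their respective environments, and the modularity property in \cref{rem:Modular} allows these to be glued into a single channel $(H, \scrD)$ in the environment of $(L^*, \scrE^*)$ that derives $(L, \scrE)$.

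The main obstacle is precisely the preliminary structural reorganization: the Bell-structure theorem of \cref{ex:Bell} applies only to Bell-\emph{channels} (without environment outputs), and extending it to dilations demands careful handling of joint-cause outputs, which may be produced by components buried inside either side's processing and therefore must be deferred to environment-only post-processing. One remedy is an inductive graph-theoretic argument that ``pushes'' joint-cause output wires downstream while preserving congruence of the underlying circuit; an alternative I would favour is to invoke the general existence and universal property of complete causal dilations developed in Ref.~\cite{Hou21}, which encodes exactly the required reorganization abstractly and sidesteps explicit circuit manipulation.
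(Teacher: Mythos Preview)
Your plan aims to prove something strictly stronger than the theorem, and that stronger statement is false. You want to show that the Stinespring implementation of a \emph{fixed} strategy $(\varrho,\Lambda_\sfA,\Lambda_\sfB)$ is causally complete, i.e.~derives every constructible causal dilation of $(P,\scrC)$. But by \cref{cor:CompBell} the existence of a complete causal dilation is equivalent to $P$ self-testing some quasi-strategy a.t.c.s., and this certainly fails for many Bell-channels (e.g.~any non-extremal $P$, cf.~\cref{cor:extremal}). The density theorem asserts only that the class of form-\eqref{eq:belldil} dilations is \emph{dense}: for each $(L,\scrE)$ there is \emph{some} $(\xi,\Phi_\sfA,\Phi_\sfB)$ of that form deriving it, with the triple depending on $(L,\scrE)$.

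The error surfaces concretely in your second paragraph. After you reorganise the circuit underlying $(L,\scrE)$ into Bell form, the resulting state and local channels need not be dilations of your originally chosen $\varrho,\Lambda_\sfA,\Lambda_\sfB$ --- they dilate \emph{some} triple with behaviour $P$, possibly a genuinely different one (this is exactly the quasi-strategy phenomenon the paper discusses after \cref{thm:BellDense}). Hence ``componentwise Stinespring completeness'' gives you nothing: \cref{thm:StineComp} compares dilations of the \emph{same} channel, not of two different channels that happen to share a marginal. Your proposed fallback, invoking a general existence/universal property of complete causal dilations from Ref.~\cite{Hou21}, does not help either, since no such general existence result holds.

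The paper's own treatment defers the proof to Theorem~4.4.8 of Ref.~\cite{Hou21} and notes that it is ``essentially graph-theoretic'': one works directly on a circuit presenting the given $(L,\scrE)$ and rewrites \emph{that} circuit (not its traced-out marginal) into the triple shape, so that the produced $(\xi,\Phi_\sfA,\Phi_\sfB)$ are read off from the given dilation. The resulting form-\eqref{eq:belldil} circuit is then congruent to the original and trivially derives it. Your instinct that the core difficulty is a graph-theoretic reorganisation is correct; what must change is that the reorganisation is performed per-dilation rather than once and for all.
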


\begin{Remark} \label{rem:DenseIso} In the theory $\QIT$, it follows from \cref{thm:StineComp} that we may strengthen the statement in \cref{thm:BellDense} by assuming the state $\xi$ to be pure and the channels $\Phi_\sfA$ and $\Phi_\sfB$ to be isometric. 
\end{Remark}

For a proof of \cref{thm:BellDense}, the reader is referred to Theorem 4.4.8 in Ref.~\cite{Hou21}. The proof is essentially graph-theoretic, and the significance of \cref{thm:BellDense} is that \myuline{any} circuit which implements $(P, \scrC)$ in a larger environment, however complicated, can be derived from a circuit congruent to one with the simple triple-structure displayed in \cref{eq:belldil}. \\

It would seem that \cref{thm:BellDense} \myuline{explains} the significance of triple-strategies relative to more general strategies, cf.~our earlier discussion in \cref{subsec:Aim}. This is not quite true, however, due to the following curious state of affairs: 

\emph{It may happen that the circuit \eqref{eq:belldil} is not the implementation of a strategy $S= (\varrho, \Lambda_\sfA, \Lambda_\sfB)$ with behaviour $P$}. The subtle reason for this circumstance is that that we defined a strategy $S$ to consist of a state $\varrho$ together with \myuline{measurement ensembles} $\Lambda_\sfA$ and $\Lambda_\sfB$. Whereas it is obviously true that $(\xi, \Phi_\sfA, \Phi_\sfB)$ in \cref{eq:belldil} is a component-wise dilations of some triple $(\varrho, \Lambda_\sfA,\Lambda_\sfB)$ for which
\begin{align} \label{eq:quasiBhv}
    \scalemyQ{1}{0.7}{0.7}{
		& \push{\what{X}_\sfA}  \qw 	& \qw  & \multigate{1}{\Lambda_\sfA} & \push{\what{Y}_\sfA}  \qw & \qw   \\
		& \Nmultigate{1}{\varrho}  & \push{\cH_\sfA} \qw & \ghost{\Lambda_\sfA} &  \\
		& \Nghost{\varrho} & \push{\cH_\sfB} \qw  & \multigate{1}{\Lambda_\sfB}  \\
	&   \push{\what{X}_\sfB} \qw  		& \qw  & \ghost{\Lambda_\sfB} & \push{\what{Y}_\sfB}  \qw & \qw \\
	} = \myQ{0.7}{0.7}{& \push{\what{X}_\sfA} \qw & \multigate{1}{P} & \push{\what{Y}_\sfA} \qw & \qw \\ & \push{\what{X}_\sfB} \qw & \ghost{P} & \push{\what{Y}_\sfB} \qw & \qw } \quad , 
	\end{align}
 the fact that $P$ is classical does not guarantee that the channels $\Lambda_\sfA$ and $\Lambda_\sfB$ are measurement ensembles, i.e.~have classical  outputs and classical inputs on $\what{X}_\sfA$ and $\what{X}_\sfB$. \\
 
 Let us call any triple $(\varrho, \Lambda_\sfA, \Lambda_\sfB)$ a \emph{quasi-strategy with behaviour $P$} if it satisfies  \cref{eq:quasiBhv}. \emph{Implementations} of a quasi-strategy $(\varrho, \Lambda_\sfA, \Lambda_\sfB)$ are defined in the obvious way, namely as causal dilations of the form \eqref{eq:belldil}, where $(\xi, \Phi_\sfA, \Phi_\sfB)$ is a component-wise dilation of $(\varrho, \Lambda_\sfA, \Lambda_\sfB)$. \emph{Causal simulation} can be also be defined in the obvious way between quasi-strategies.
 
 Now, \cref{thm:BellDense} expresses that every constructible causal dilation of a Bell-channel $(P, \scrC)$ is derivable from the implementation of some quasi-strategy with behaviour $P$. The appearance of quasi-strategies constitutes an annoyance in relating quantum self-testing to general causal dilations, and we shall return to this problem very shortly. First, however, we state the following two results, which again hold in any operational theory: 
 
  \begin{Thm}[Derivability within the Dense Class for Bell-Channels]\label{thm:BellDenseII} 

Let $(P, \scrC)$ be a Bell-channel, and let $(L, \scrE)$ and $(L', \scrE')$ be constructible causal dilations which are implementations of quasi-strategies, corresponding to the component-wise dilations $(\xi, \Phi_\sfA, \Phi_\sfB)$ and $(\xi', \Phi'_\sfA, \Phi'_\sfB)$, respectively. Then, $(L, \scrE) \cder (L', \scrE')$ if and only if there exist channels $\Gamma_0$, $\Gamma_\sfA$ and $\Gamma_\sfB$ such that 	
\begin{align}  \label{eq:ThinDense}
\scalemyQ{1}{0.7}{0.5}{
	&   \push{\what{X}_\sfA} \qw  & \multigate{1}{\Phi_\sfA}  &   \qw &  \push{\what{Y}_\sfA}  \qw &  \qw  & \qw \\
	& \Nmultigate{4}{\xi}  & \ghost{\Phi_\sfA} &  \push{\cE_\sfA} \ww &  \Nmultigate{1}{\Gamma_\sfA}{\ww} & \push{\cE'_\sfA} \ww & \ww \\
	& \Nghost{\xi}   &  & \Nmultigate{2}{\Gamma_0} & \Nghost{\Gamma_\sfA}{\ww} &  \\  
	& \Nghost{\xi}  & \push{\cE_0} \ww &   \Nghost{\Gamma_0}{\ww}  & \ww & \push{\cE'_0} \ww & \ww \\
	& \Nghost{\xi}  &  & \Nghost{ \Gamma_0} &   \Nmultigate{1}{\Gamma_\sfB}{\ww}   \\
	& \Nghost{\xi}   & \multigate{1}{\Phi_\sfB} &  \push{\cE_\sfB} \ww & \Nghost{\Gamma_\sfB}{\ww} & \push{\cE'_\sfB}\ww & \ww \\ 
	&   \push{\what{X}_\sfB} \qw  & \ghost{\Phi_\sfB}   & \qw &  \push{\what{Y}_\sfB} \qw &   \qw & \qw} =\scalemyQ{1}{0.7}{0.5}{
	& \push{\what{X}_\sfA}\qw  & \multigate{1}{ \Phi'_\sfA} &  \push{\what{Y}_\sfB}  \qw & \qw  \\
	& \Nmultigate{2}{\xi'}  & \ghost{\Phi'_\sfA} &  \push{\cE'_\sfA} \ww & \ww\\
	& \Nghost{\xi'} & \ww & \push{\cE'_0} \ww & \ww\\
	& \Nghost{\xi'} & \multigate{1}{\Phi'_\sfB} & \push{\cE'_\sfB} \ww & \ww  \\
	&  \push{\what{X}_\sfA} \qw  & \ghost{\Phi'_\sfB} &  \push{\what{Y}_\sfB} \qw & \qw  \\
}  \quad. 
\end{align}
\end{Thm}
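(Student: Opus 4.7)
The plan is to prove the two implications separately; the ``if'' direction is essentially bookkeeping, while the ``only if'' direction requires a normal-form theorem for causal channels.

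For the easy ``if'' direction, I would assume \eqref{eq:ThinDense} and define a constructible causal channel $(H, \scrD)$ on the environment by taking the circuit depicted by the triple $\Gamma_0, \Gamma_\sfA, \Gamma_\sfB$ on the left of \eqref{eq:ThinDense}. A direct inspection of the stencil gives $\scrD(\sfe'_0) = \{\sfe_0\}$, $\scrD(\sfe'_\sfA) \subseteq \{\sfe_\sfA, \sfe_0\}$ and $\scrD(\sfe'_\sfB) \subseteq \{\sfe_\sfB, \sfe_0\}$, so $(H, \scrD)$ acts causally on the environment of $(L, \scrE)$. Composing $(H, \scrD)$ with $(L, \scrE)$ then yields a constructible causal dilation whose behaviour, by \eqref{eq:ThinDense}, coincides with $L'$, and whose causal specification, obtained by composing $\scrE$ with $\scrD$, coincides with $\scrE'$: for instance $\sfe'_\sfA$ inherits causes $\{\sfx_\sfA\}$ via $\sfe_\sfA$ and no further causes via $\sfe_0$, matching $\scrE'(\sfe'_\sfA) = \{\sfx_\sfA\}$. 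This exhibits the asserted derivability.

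For the harder ``only if'' direction, assume $(L, \scrE) \cder (L', \scrE')$ via some constructible causal channel $(H, \scrD)$ acting on the environment ports $\sfe_\sfA, \sfe_0, \sfe_\sfB$. The first step is to pin down $\scrD$ from the requirement that the composition reproduces the causal specification $\scrE'$. Tracking ancestors in the composed stencil forces $\sfe'_0$ to depend only on $\sfe_0$ (else it would inherit a cause $\sfx_\sfA$ or $\sfx_\sfB$, contradicting $\scrE'(\sfe'_0) = \emptyset$), and likewise $\sfe'_\sfA$ may only depend on $\sfe_\sfA$ and $\sfe_0$, while $\sfe'_\sfB$ may only depend on $\sfe_\sfB$ and $\sfe_0$. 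Hence $(H, \scrD)$ has the ``common-cause plus local processing'' shape in which $\sfe_0$ may fan out to all three output ports while $\sfe_\sfA$ reaches only $\sfe'_\sfA$ and $\sfe_\sfB$ reaches only $\sfe'_\sfB$.

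The crux is then to realise any such $(H, \scrD)$ by a circuit of the specific tripartite form depicted on the left of \eqref{eq:ThinDense}, namely a box $\Gamma_0$ processing $\sfe_0$ and fanning out auxiliary wires into two further boxes $\Gamma_\sfA$ and $\Gamma_\sfB$ that additionally ingest $\sfe_\sfA$ and $\sfe_\sfB$, respectively. This is a normal-form/density result for constructible causal channels with the above dependency pattern, a close relative of \cref{thm:BellDense} (in which $\sfe_0$ now plays the role of a broadcast resource rather than a shared state), and it is supplied by the general causal-channel framework of Ref.~\cite{Hou21}. Substituting this factorisation into the derivability identity then yields exactly \eqref{eq:ThinDense}. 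This factorisation step is the main obstacle: in $\QIT$ one could alternatively argue it directly by Stinespring-dilating the $\sfe_0 \to \sfe'_0$ marginal of $H$, distributing its environment between the two parties, and absorbing the remaining processing into $\Gamma_\sfA$ and $\Gamma_\sfB$ via \cref{thm:StineComp}; but since the theorem is claimed in any operational theory, the cleanest route is to invoke the general normal form from Ref.~\cite{Hou21} rather than reprove it in situ.
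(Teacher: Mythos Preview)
Your proposal is correct and matches the paper's approach. The paper does not give a self-contained proof but refers to Theorem~4.8.10 in Ref.~\cite{Hou21}, summarising the argument as graph-theoretic and ``consist[ing] in demonstrating that the most general causal channel which may derive $(L',\scrE')$ from $(L,\scrE)$ is of the form represented in \eqref{eq:ThinDense} by the channels $\Gamma_0$, $\Gamma_\sfA$ and $\Gamma_\sfB$'' --- precisely your two-step outline of first constraining $\scrD$ from the composed causal specification and then invoking a circuit normal form for that dependency pattern.
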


\begin{Cor}[Characterisation of Completeness for Bell-Channels]  \label{cor:CompBell}
The Bell-channel $(P, \scrC)$ has a complete causal dilation if and only if there exists a quasi-strategy $\tilde{S}$ with behaviour $P$ which causally simulates every quasi-strategy $S$ with behaviour $P$.\end{Cor}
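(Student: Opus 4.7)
The plan is to deduce the corollary as an essentially immediate combination of the two preceding theorems, \cref{thm:BellDense} (density) and \cref{thm:BellDenseII} (derivability within the dense class). The dense class consists of those causal dilations of $(P, \scrC)$ that arise as implementations of quasi-strategies with behaviour $P$, and \cref{thm:BellDenseII} identifies $\cder$ between two such implementations with causal simulation between the underlying quasi-strategies.

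For the ``if'' direction, assume that some quasi-strategy $\tilde{S}$ with behaviour $P$ causally simulates every quasi-strategy $S$ with behaviour $P$. Fix any implementation $(L_{\tilde{S}}, \scrE_{\tilde{S}})$ of $\tilde{S}$. Given any quasi-strategy $S$ with behaviour $P$ and any implementation $(L_S, \scrE_S)$ of it, \cref{thm:BellDenseII} turns causal simulation of $S$ by $\tilde{S}$ into the relation $(L_{\tilde{S}}, \scrE_{\tilde{S}}) \cder (L_S, \scrE_S)$. Now let $(L, \scrE)$ be an \emph{arbitrary} constructible causal dilation of $(P, \scrC)$. By \cref{thm:BellDense}, $(L, \scrE)$ is derivable from some implementation $(L_S, \scrE_S)$ of some quasi-strategy $S$ with behaviour $P$; composing with the previous step and invoking transitivity of $\cder$, we conclude $(L_{\tilde{S}}, \scrE_{\tilde{S}}) \cder (L, \scrE)$. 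Hence $(L_{\tilde{S}}, \scrE_{\tilde{S}})$ is a complete causal dilation.

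For the ``only if'' direction, let $(K, \scrF)$ be a complete causal dilation of $(P, \scrC)$. Applying \cref{thm:BellDense} to $(K, \scrF)$ itself produces a quasi-strategy $\tilde{S}$ with behaviour $P$ and an implementation $(L_{\tilde{S}}, \scrE_{\tilde{S}})$ such that $(L_{\tilde{S}}, \scrE_{\tilde{S}}) \cder (K, \scrF)$. Since $(K, \scrF)$ is complete, $(K, \scrF) \cder (L, \scrE)$ for every dilation $(L, \scrE)$, so transitivity gives $(L_{\tilde{S}}, \scrE_{\tilde{S}}) \cder (L, \scrE)$ as well; in other words, $(L_{\tilde{S}}, \scrE_{\tilde{S}})$ is itself complete. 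In particular, for any quasi-strategy $S$ with behaviour $P$ and any implementation $(L_S, \scrE_S)$ thereof, $(L_{\tilde{S}}, \scrE_{\tilde{S}}) \cder (L_S, \scrE_S)$, and \cref{thm:BellDenseII} converts this into the statement that $\tilde{S}$ causally simulates $S$, as required.

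The only non-trivial point is keeping track of which component-wise dilation is used when invoking \cref{thm:BellDenseII}: that theorem is stated for the specific implementations $(L_{\tilde{S}}, \scrE_{\tilde{S}})$ and $(L_S, \scrE_S)$ entering the derivability relation, whereas causal simulation (as in \cref{def:CausSim}) is typically phrased using component-wise Stinespring dilations. This is benign in view of \cref{rem:CausDefOp}, which records that the notion of causal simulation is independent of the choice of component-wise dilations used to present the implementations. Granting this, both implications reduce to the transitivity of $\cder$ plus the two cited theorems, and no additional technical work is needed.
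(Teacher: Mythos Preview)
Your proposal is correct and follows exactly the approach the paper takes (the paper simply states that \cref{cor:CompBell} follows immediately from \cref{thm:BellDense} and \cref{thm:BellDenseII}). One small refinement: in the ``if'' direction you should fix a \emph{Stinespring} implementation of $\tilde{S}$ rather than ``any implementation'' (the trivial implementation, for instance, would not derive the Stinespring implementation of $S$); your final paragraph essentially recognises this, though note that \cref{rem:CausDefOp} literally asserts independence only of the choice among \emph{Stinespring} dilations, with the passage to general implementations supplied by \cref{thm:StineComp}.
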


For a proof of \cref{thm:BellDenseII}, see Theorem 4.8.10 in Ref.~\cite{Hou21}. It consists in demonstrating that the most general causal channel which may derive $(L',\scrE')$ from $(L, \scrE)$ is of the form represented in \cref{eq:ThinDense} by the channels $\Gamma_0$, $\Gamma_\sfA$ and $\Gamma_\sfB$, and the argument is again graph-theoretic. \cref{cor:CompBell} follows immediately from \cref{thm:BellDenseII} and \cref{thm:BellDense}. \\
 
 Now, were it not for the appearance of quasi-strategies rather than ordinary strategies, \cref{cor:CompBell} would provide the final link between quantum self-testing and the general theory of derivability among causal dilations: Quantum self-testing (according to causal simulation) simply signifies the existence of a complete causal dilation of the Bell-channel. 
 
 As the following example shows, however, some quasi-strategies cannot be treated as ordinary strategies:

\begin{Example} (A Genuine Quasi-Strategy.) \label{ex:StrangeRel}
Consider the Bell-scenario $\fB$ with $X_\sfA = X_\sfB = \{0,1\}$ and $Y_\sfA = Y_\sfB=\{+1,-1\}$. Let $\tilde{S} = (\tilde{\psi}, \tilde{\Lambda}_\sfB, \tilde{\Lambda}_\sfB)$ be an ordinary strategy for $\fB$ whose state is a pure state on $\C^2 \otimes \C^2$ and whose measurement ensembles are projective, with each projection $\Pi^{x_\sfP}_\sfP(y_\sfP)$ having rank one. (For example, $\tilde{S}$ could be the ideal strategy for the CHSH-game \cite{CHSH69,SB19}.) From the ordinary strategy $\tilde{S}$, we may construct a quasi-strategy $\tilde{S}'$ which is not an ordinary strategy:

		First, observe that every projective measurement with rank-one projections may be realised as a unitary conjugation followed by a measurement in the computational basis. As such, the measurement ensemble $\tilde{\Lambda}_\sfA$ may be rewritten as $	\scalemyQ{.8}{0.7}{0.5}{& \push{\what{X}_{\sfA}} \qw & \multigate{1}{\tilde{\Lambda}_{\sfA}} & \push{\what{Y}_{\sfA}} \qw & \qw \\ &\push{\C^2} \qw & \ghost{\tilde{\Lambda}_{\sfA}}} = \scalemyQ{.8}{0.7}{0.5}{& \push{\what{X}_{\sfA}} \qw & \multigate{1}{\tilde{\Omega}_{\sfA}}& \push{\C^2} \qw & \gate{\Delta} & \push{\what{Y}_{\sfA}} \qw & \qw \\ &\push{\C^2} \qw & \ghost{\tilde{\Omega}_{\sfA} }}$ where $\tilde{\Omega}_\sfA$ is an $X_\sfA$-indexed ensemble of unitary conjugations $\End{\C^2} \to \End{\C^2}$, and where $\Delta$ is the measurement in the computational basis of $\C^2$ (equivalently, the dephasing channel on $\what{Y}_\sfA$ under some natural identification of $\what{Y}_{\sfA}$ with $\C^2$). 
		
		Next, the channel $\Delta$ equals the uniform mixture $\frac{1}{2} \id_{\C^2} + \frac{1}{2} Z$, where $Z : \End{\C^2} \to \End{\C^2}$ denotes the unitary conjugation by the Pauli matrix $\sigma_z = \left( \begin{array}{cc} 1 & 0 \\ 0 & -1 \end{array} \right)$. As such, we may substitute $\scalemyQ{.8}{0.7}{0.5}{& \push{\C^2} \qw  & \gate{\Delta} & \push{\what{Y}_{\sfA}} \qw & \qw}$ by $\scalemyQ{.8}{0.7}{0.5}{& \push{\C^2} \qw & \qw& \multigate{1}{\Omega'_\sfA} & \push{\what{Y}_{\sfA}} \qw & \qw \\ &\Ngate{\tau_{\sfA}} & \push{\C^2} \qw & \ghost{\Omega'_\sfA}}$, where $\tau_\sfA$ is the state $\frac{1}{2} \ketbra{0} + \frac{1}{2} \ketbra{1}$ and where $\Omega'_\sfA$ is the ensemble of unitary conjugations which when reading off $\ketbra{0}$ or $\ketbra{1}$ applies $\id_{\C^2}$ or $Z$, respectively.

The measurement ensemble $\tilde{\Lambda}_\sfB$ may be rewritten in a similar way, with ensembles $\tilde{\Omega}_\sfB$ and $\Omega'_\sfB$ of unitary conjugations, and the state $\tau_\sfB$ again given by $\frac{1}{2} \ketbra{0} + \frac{1}{2} \ketbra{1}$. Altogether, the behaviour $P$ of $\tilde{S}$ can therefore be written as 
	\begin{align} \label{eq:strange}
	\myQ{0.7}{0.5}{
		& \qw &  \push{\what{X}_\sfA}  \qw& \qw & \multigate{1}{ \tilde{\Omega}_\sfA} &  \push{\C^2} \qw & \multigate{2}{\Omega'_\sfA} & \push{\what{Y}_\sfA} \qw & \qw  \\
		& \Nmultigate{3}{\tilde{\psi}} & \qw  &  \push{\C^2} \qw & \ghost{\tilde{\Omega}_\sfA} & & \Nghost{\Omega'_\sfA}  \\
		& \Nghost{\tilde{\psi}} & \Ngate{\tau_\sfA}& \push{\C^{2}}\qw& \qw& \qw & \ghost{\Omega'_\sfA}\\
		& \Nghost{\tilde{\psi}} & \Ngate{\tau_\sfB}& \push{\C^{2}}\qw& \qw& \qw & \multigate{2}{\Omega'_\sfB} \\
		& \Nghost{\tilde{\psi}} &\qw &  \push{\C^2} \qw & \multigate{1}{\tilde{\Omega}_\sfB} &  \qw & \ghost{\Omega'_\sfB} \\
		&  \qw &  \push{\what{X}_\sfB} \qw &\qw & \ghost{\tilde{\Omega}_\sfB} &  \qw & \ghost{\Omega'_\sfB} &  \push{\what{Y}_\sfB} \qw &  \qw  \\
	}  \quad . \end{align}
	
	Now, we may regroup components so as to form a quasi-strategy $S'= (\tilde{\varrho}' , \tilde{\Lambda}'_\sfA, \tilde{\Lambda}'_\sfB)$, where $\tilde{\varrho}'$ is the state $\tau_\sfA \otimes \tilde{\psi} \otimes \tau_\sfB$ on $(\C^2 \otimes \C^2) \otimes (\C^2 \otimes \C^2)$, and where $\tilde{\Lambda}'_{\sfP} : \End{\what{X}_\sfP \otimes \C^2 \otimes \C^2} \to \End{\what{Y}_\sfP}$ is the channel given by the composition of $\tilde{\Omega}_\sfP$ and $\Omega'_\sfP$. The quasi-strategy $\tilde{S}'$ is not an ordinary strategy, since the channels $\tilde{\Lambda}'_{\sfP}$ are not measurement ensembles. In fact, for $x_\sfP \in  \{0,1\}$ and $k \in \{0,1\}$, the channel $F \mapsto \tilde{\Lambda}'_\sfP( \ketbra{x_\sfP} \otimes F \otimes \ketbra{k})$ is one of four possible unitary conjugations from $\C^2$ to $\what{Y}_{\sfP}$, so a suitable choice of input results in a state on $\what{Y}_\sfP$ which is not diagonal in the computational basis. \end{Example}

While mathematically sensible, the quasi-strategy in \cref{ex:StrangeRel} is physically somewhat obscure. Rather than performing a measurement to obtain an output, each party $\sfP \in \{\sfA, \sfB\}$ performs a random unitary rotation on the $\sfP$-part of the state $\tilde{\psi}$ (the randomness residing in the state $\tau_\sfP$) and returns the resulting state as output --- the total input-output behaviour then happens to be classical `on average'. The quasi-strategy is obscure for the same reason that a measurement in the computational basis of $\C^2$ (i.e.~the channel $\Delta$) cannot seriously be employed by flipping a coin and applying one of the unitary conjugations $\id_{\C^2}$ and $Z$, even though it is mathematically true that $\Delta= \frac{1}{2} \id_{\C^2} + \frac{1}{2} Z$. 

It could seem that the problematic aspect of considering this procedure as a measurement ultimately stems from the fact that the outcome of the coin flip may stay in memory (this is addressed more formally in \cite{Hou21} Example 4.3.9). We do not wish to discuss philosophical aspects of quasi-strategies here, but we need to somehow exclude them from consideration in order to establish a formal link to ordinary quantum self-testing. This is not only because quasi-strategies do not occur as valid strategies in ordinary self-testing, but also because including them would in some cases actually constitute an obstruction to self-testing in terms of complete causal dilations: whereas for example the canonical optimal strategy $\tilde{S}$ for the CHSH-game \cite{CHSH69,SB19} is self-tested a.t.r., and therefore causally simulates any ordinary strategy with the correct behaviour, it does \myuline{not} causally simulate the implementation corresponding to the quasi-strategy $\tilde{S}'$ in \cref{ex:StrangeRel}. Indeed, $\tilde{S}'$ has an implementation in which the states $\tau_\sfA$ and $\tau_\sfB$ are dilated by means of copying (this precisely corresponding to `keeping the coin flip in memory', as mentioned above). Conditioned on the value of these copies, the output on $\what{Y}_\sfA \otimes \what{Y}_\sfB$ is one of four pure entangled states, and since their mixture is a classical state the copies must be correlated non-trivially with these states (i.e.~the states must be different). On the other hand, the state $\tilde{\psi}$ in the ordinary canonical strategy $\tilde{\psi}$ is pure, and therefore $\tilde{S}$ cannot causally simulate any strategy with this feature (see also \cref{subsec:SecExt}). 

In Ref.~\cite{Hou21}, the exclusion of non-ordinary quasi-strategies is achieved by introducing a notion of \emph{classically bound} causal dilations, in an attempt to formalise the occurrence of a measurement which cannot be causally dilated e.g.~by keeping a record of a unitary conjugation (see {Hou21} Definition 5.2.10 and the surrounding). We shall not explain this notion further here, but merely remark that the classically bound causal dilations are precisely those which can be derived from implementations of \myuline{ordinary} strategies (\cite{Hou21} Proposition 5.2.12). It thus achieves the desired restriction of \cref{thm:BellDense}, and allows for the following result which concludes our presentation of how to recover quantum self-testing from the larger perspective of causal channels and causal dilations:

\begin{Thm}[Self-Testing a.t.c.s. as an Instance of Causal Completeness] \label{thm:BridgeSC}
Let $(P, \scrC)$ be a Bell-channel. Then the following are equivalent:
\begin{enumerate}
\item There is a classically bound constructible causal dilation of $(P, \scrC)$ from which every classically bound constructible dilation can be derived. 
\item $P$ self-tests some strategy according to causal simulation.
\end{enumerate}
In fact, if $\tilde{S}$ is a strategy self-tested by $P$ a.t.c.s., then its associated Stinespring implementation defines a classically bound constructible causal dilation from which all others can be derived. 
\end{Thm}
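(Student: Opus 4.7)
The statement bundles the two directions of an equivalence together with the additional explicit claim that, in the forward direction, the Stinespring implementation witnesses completeness. My plan is to reduce everything to three already-established pieces: (i) the density of triple-implementations within the classically bound causal dilations (\cref{thm:BellDense} together with its classical-boundedness refinement, \cite{Hou21} Proposition 5.2.12, which says a constructible causal dilation is classically bound iff it is derivable from an implementation of some ordinary strategy); (ii) the diagrammatic characterisation of derivability between two triple-implementations (\cref{thm:BellDenseII}); and (iii) Stinespring completeness (\cref{thm:StineComp}), applied componentwise to $\xi$, $\Phi_\sfA$ and $\Phi_\sfB$. The key derived fact is that the Stinespring implementation of any ordinary strategy causally derives every other implementation of that strategy, via channels of the specific form appearing in \cref{def:CausSim}.

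\textbf{Direction $(2)\Rightarrow(1)$.} Suppose $P$ self-tests $\tilde S$ a.t.c.s., and let $(K,\scrF)$ be the Stinespring implementation of $\tilde S$. Being the implementation of an ordinary strategy, $(K,\scrF)$ is classically bound. Given any classically bound $(L,\scrE)$, Proposition 5.2.12 writes $(L,\scrE)$ as derivable from the implementation $(L',\scrE')$ of some ordinary strategy $S$ with behaviour $P$. By hypothesis $\tilde S \geq_{c.s.} S$, which by \cref{def:CausSim} is exactly the existence of channels $\Gamma_0,\Gamma_\sfA,\Gamma_\sfB$ making the relevant diagram between the Stinespring implementations of $\tilde S$ and $S$ commute; by \cref{thm:BellDenseII} this is precisely the derivability $(K,\scrF)\cder(K_S,\scrF_S)$, where $(K_S,\scrF_S)$ is the Stinespring implementation of $S$. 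Since $(K_S,\scrF_S)$ causally derives the arbitrary implementation $(L',\scrE')$ of $S$ (by componentwise Stinespring completeness, see the obstacle paragraph), transitivity of $\cder$ gives $(K,\scrF)\cder(L',\scrE')\cder(L,\scrE)$. This also establishes the final assertion of the theorem.

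\textbf{Direction $(1)\Rightarrow(2)$.} Conversely, let $(K,\scrF)$ be a classically bound causal dilation deriving every classically bound dilation. By Proposition 5.2.12, $(K,\scrF)$ is derivable from some implementation $(M,\scrG)$ of an ordinary strategy $\tilde S$ with behaviour $P$. To see that $P$ self-tests $\tilde S$ a.t.c.s., pick any ordinary strategy $S$ with behaviour $P$ and consider its Stinespring implementation $(K_S,\scrF_S)$, which is classically bound. By assumption $(K,\scrF)\cder(K_S,\scrF_S)$; pre-composing with $(M,\scrG)\cder(K,\scrF)$ yields $(M,\scrG)\cder(K_S,\scrF_S)$. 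Since the Stinespring implementation $(K_{\tilde S},\scrF_{\tilde S})$ of $\tilde S$ causally derives $(M,\scrG)$ (again by componentwise Stinespring completeness), one further composition gives $(K_{\tilde S},\scrF_{\tilde S})\cder(K_S,\scrF_S)$, and \cref{thm:BellDenseII} unpacks this into the commuting diagram of \cref{def:CausSim}, i.e.\ $\tilde S\geq_{c.s.} S$.

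\textbf{Main obstacle.} The non-routine point is verifying that the causal refinement of Stinespring completeness produces channels of the \emph{specific} causal shape in \cref{def:CausSim} and not merely some global environmental channel $\Gamma$. The resolution is to apply \cref{thm:StineComp} independently to the three component Stinespring dilations $\psi$, $\Sigma_\sfA$ and $\Sigma_\sfB$, obtaining channels acting only on the purifying system and only on the two measurement environments respectively; these package as a $\Gamma_0$ with trivial fan-out to the $\Gamma_\sfP$ and two channels $\Gamma_\sfP$ that ignore their inputs from $\Gamma_0$, which is exactly the form allowed by causal simulation. With this observation in hand, the rest of the argument is just transitivity of $\cder$ together with the two cited structural theorems (\cref{thm:BellDenseII} and Proposition 5.2.12).
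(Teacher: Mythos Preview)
Your proof is correct and follows exactly the approach the paper intends. The paper does not give an explicit proof of \cref{thm:BridgeSC}; it is presented as the natural culmination of \cref{thm:BellDense}, \cref{thm:BellDenseII}, \cref{cor:CompBell} and the classically-bound restriction via \cite{Hou21} Proposition~5.2.12, and you have spelled out precisely that argument. Your ``main obstacle'' paragraph correctly isolates the one step requiring care---that componentwise Stinespring completeness yields channels of the specific causal shape in \cref{def:CausSim} (in fact the even more restrictive product form of \cref{prop:Rechar})---and resolves it in the right way.
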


{\begin{Remark}
When $\tilde{S}$ is pure-state, causal simulation may in \cref{thm:BridgeSC} be replaced by local assisted simulation (see \cref{rem:CausPure}). In the language of causal dilations, the pure-state assumption on $\tilde{S}$ corresponds to the requirement that side-information in the environment which has no causes --- i.e.~which may be leaked in advance of the inputs --- must factor from the remaining information. (See \cref{subsec:SecExt} and Chapter 5 in Ref.~\cite{Hou21} for more formal discussions of this.)
\end{Remark}
}

\section{Utility and Applications of the Operational Formulation}
\label{sec:Utility}

In this final section of the paper, we explore how our operational notions of simulation from \cref{subsec:Sim}, \cref{subsec:AssSim} and \cref{subsubsec:Caussim} can be used to prove features related to quantum self-testing. Some of these features are already known, but their proofs have previously relied on operator-algebra. The proofs presented in this section are all operational. Moreover, they clarify that some of the results hold even when replacing local simulation (reducibility) by the weaker notions of local assisted simulation or causal simulation.

In \cref{subsec:StateExt} we show how \myuline{local} simulation of $\tilde{S}$ by $S$ implies that the state $\tilde{\varrho}$ is locally extractable from the state $\varrho$ (\cref{prop:StateExtract}). This is well-understood in the operator-algebraic framework \cite{SB19} (see \cref{rem:ConvStateExt} below), albeit under the assumption that $\tilde{\varrho}$ is pure. \cref{prop:StateExtract} is phrased for general states and implies rigorously that only pure-state strategies can be self-tested according to local simulation (\cref{cor:SelfPure}).  

In \cref{subsec:MeasExt}, we show how \myuline{local assisted} simulation of $S$ by $\tilde{S}$ implies that the measurement ensembles of $S$ can be extracted from those of $\tilde{S}$ (\cref{prop:MeasExtract}). This result is related to certain formulations of self-testing of measurements alone \cite{Kan17,SB19}, but we believe that it has not been stated in this form before.

In \cref{subsec:SecExt}, we demonstrate that \myuline{causal} simulation has implications for the structure of convex decompositions of behaviours (\cref{lem:visible} and \cref{prop:SelfSec}). This in particular recovers from a general perspective the result of Ref.~\cite{Goh18} that a behaviour which self-tests some strategy a.t.r.~must be extremal in the convex set of realisable behaviours. In fact, our version of this result (\cref{cor:extremal}) holds under the weaker assumption of self-testing a.t.l.a.s.

Finally, in \cref{subsec:Exhausted}, we report a feature of self-testing which to our best knowledge has not been observed before. We will focus our view on the \emph{environment} of strategies implementing a certain behaviour. In the case of pure projective rank one strategies, we find that the respective environments of the players will hold exactly a copy of their input - no more and no less. We do not currently know of any concrete applications of this feature, but rather consider it as illustrative in its own right.

\subsection{Local Simulation and State Extractability}
\label{subsec:StateExt}

It is well-known that the operator-algebraic reducibility relation $S \geq_{red.} \tilde{S}$ (\cref{def:Redattl}) implies that the state $\tilde{\psi}$ of $\tilde{S}$ is locally extractible from the state $\varrho$ of $S$. Indeed, if in the reducibility conditions $[	W \Pi^{x}(y)   \otimes \bone_\cP ] \ket{\psi} =  \tilde{\Pi}^{x}(y) \tilde{\ket{\psi}} \otimes \ket{\psi^\res}$ we sum over $y$, then we obtain the relation $[	W  \otimes \bone_\cP ] \ket{\psi} =   \tilde{\ket{\psi}} \otimes \ket{\psi^\res}$, or, unfolding $W$ as $W_\sfA \otimes W_\sfB$, 
\begin{align} \label{eq:exteq}
[	W _\sfA  \otimes W_\sfB   \otimes \bone_\cP ] \ket{\psi} = \tilde{\ket{\psi}} \otimes \ket{\psi^\res} .
\end{align}

Below, we prove that a similar state extractibility result follows from the relation $S \geq_{l.s.} \tilde{S}$. Our proof is entirely compositional (it is in terms of diagrams), and it works even when the involved strategies are not pure-state or projective:

\begin{Prop}[State Extractibility] \label{prop:StateExtract} Let $S=(\varrho, \Lambda_\sfA, \Lambda_\sfB)$ and $\tilde{S} = (\tilde{\varrho}, \tilde{\Lambda}_\sfA, \tilde{\Lambda}_\sfB)$ be strategies. Let $\psi$ and $\tilde{\psi}$ be purifications of $\varrho$ and $\tilde{\varrho}$, respectively. If $S \geq_{l.s.} \tilde{S}$, then there exist channels $\Xi_0$, $\Xi_\sfA$ and $\Xi_\sfB$ such that 
\begin{align} \label{eq:StateExtract}
\myQ{0.7}{0.7}{& \Nmultigate{2}{\psi} & \push{\cH_\sfA} \qw & \gate{\Xi_\sfA} & \push{\tilde{\cH}_\sfA} \qw & \qw \\
& \Nghost{\psi} & \push{\cP} \ww & \Ngate{\Xi_0}{\ww} & \push{\tilde{\cP}} \ww & \ww\\& \Nghost{\psi} & \push{\cH_\sfB} \qw & \gate{\Xi_\sfB} & \push{\tilde{\cH}_\sfB} \qw & \qw} 
\quad  = \quad
\myQ{0.7}{0.7}{& \Nmultigate{2}{\tilde{\psi}} & \push{\tilde{\cH}_\sfA} \qw & \qw \\ &\Nghost{\tilde{\psi}} & \push{\tilde{\cP}} \ww & \ww \\ 
& \Nghost{\tilde{\psi}} & \push{\tilde{\cH}_\sfB} \qw & \qw}.
\end{align}
\end{Prop}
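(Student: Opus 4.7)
The plan is to invoke \cref{prop:Rechar} to translate $S\geq_{l.s.}\tilde{S}$ into an identity between Stinespring implementations, then use the reversibility of the isometric dilations $\tilde{\Sigma}_\sfA, \tilde{\Sigma}_\sfB$ of the measurement ensembles (\cref{lem:IsoRev}) to cancel them on the right-hand side and expose $\tilde{\psi}$. Finally, feeding classical inputs into the $\what{X}_\sfP$ wires will eliminate the nuisance of those open inputs.

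Concretely, by \cref{thm:Stine} we may choose Stinespring dilations $\Sigma_\sfA, \Sigma_\sfB, \tilde{\Sigma}_\sfA, \tilde{\Sigma}_\sfB$ of the four measurement ensembles. Together with the given purifications $\psi$ and $\tilde{\psi}$ they form component-wise Stinespring dilations of $S$ and $\tilde{S}$, so \cref{prop:Rechar} supplies channels $\Gamma_\sfA, \Gamma_0, \Gamma_\sfB$ witnessing \cref{eq:SimStine}. Post-composing both sides of \cref{eq:SimStine} with $\tilde{\Sigma}_\sfA^{-} \otimes \id_{\tilde{\cP}} \otimes \tilde{\Sigma}_\sfB^{-}$, where $\tilde{\Sigma}_\sfP^{-}$ are left-inverses provided by \cref{lem:IsoRev}, collapses the right-hand side to the channel $\id_{\what{X}_\sfA} \otimes \tilde{\psi} \otimes \id_{\what{X}_\sfB}$ from $\what{X}_\sfA \otimes \what{X}_\sfB$ to $\what{X}_\sfA \otimes \tilde{\cH}_\sfA \otimes \tilde{\cP} \otimes \tilde{\cH}_\sfB \otimes \what{X}_\sfB$. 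The left-hand side becomes $[\Xi'_\sfA \otimes \Gamma_0 \otimes \Xi'_\sfB] \circ [\id_{\what{X}_\sfA} \otimes \psi \otimes \id_{\what{X}_\sfB}]$, where each composite $\Xi'_\sfP := \tilde{\Sigma}_\sfP^{-} \circ (\id_{\what{Y}_\sfP} \otimes \Gamma_\sfP) \circ \Sigma_\sfP$ is a channel from $\what{X}_\sfP \otimes \cH_\sfP$ to $\what{X}_\sfP \otimes \tilde{\cH}_\sfP$.

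To remove the remaining $\what{X}_\sfP$ wires, fix any classical input $\ketbra{x_0^\sfP}$ and apply it to both sides, then discard the $\what{X}_\sfP$ outputs. Setting $\Xi_\sfP(A) := (\tr_{\what{X}_\sfP} \otimes \id_{\tilde{\cH}_\sfP})\bigl(\Xi'_\sfP(\ketbra{x_0^\sfP} \otimes A)\bigr)$ for $\sfP \in \{\sfA, \sfB\}$ and $\Xi_0 := \Gamma_0$ then produces channels of precisely the required types ($\cH_\sfP \to \tilde{\cH}_\sfP$ and $\cP \to \tilde{\cP}$), and the identity reduces to $[\Xi_\sfA \otimes \Xi_0 \otimes \Xi_\sfB](\psi) = \tilde{\psi}$, which is \cref{eq:StateExtract}.

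The argument contains no deep obstacles: the real content is the twin appeals to \cref{prop:Rechar} and \cref{lem:IsoRev}, and the main source of friction will be purely notational, namely tracking which wires each map acts on when substituting left-inverses. One could alternatively present the whole argument diagrammatically by gluing $\tilde{\Sigma}_\sfP^{-}$ into the right-hand side of \cref{eq:SimStine} and then plugging in and tracing the classical $\what{X}_\sfP$ wires; this keeps the plan compositional and emphasises that the conclusion is in fact valid in any operational theory equipped with Stinespring-type complete dilations.
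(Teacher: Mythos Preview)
Your proposal is correct and follows essentially the same argument as the paper: invoke \cref{prop:Rechar}, cancel the isometric dilations $\tilde{\Sigma}_\sfA,\tilde{\Sigma}_\sfB$ via \cref{lem:IsoRev}, then insert arbitrary states into the $\what{X}_\sfP$ inputs and trace out the corresponding outputs to define $\Xi_\sfP$, with $\Xi_0:=\Gamma_0$. The only cosmetic difference is that you write out $\Xi'_\sfP$ explicitly as $\tilde{\Sigma}_\sfP^{-}\circ(\id_{\what{Y}_\sfP}\otimes\Gamma_\sfP)\circ\Sigma_\sfP$, whereas the paper leaves this implicit in a diagram.
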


We make a few remarks before presenting the proof.

\begin{Remark}[Relation to Conventional State Extractibility] \label{rem:ConvStateExt}If the state $\tilde{\varrho}$ is already pure we may choose $\tilde{\psi}= \tilde{\varrho}$, and the channel $\Xi_0$ is then the trace. Employing our usual trick of replacing $\Xi_{\sfA}$ and $\Xi_\sfB$ by Stinespring dilations $\breve{\Xi}_{\sfA}$ and $\breve{\Xi}_{\sfA}$ and replacing the trace by an identity, we thus conclude from  \cref{eq:StateExtract} that
\begin{align} \label{eq:StateExtractPure}
\myQ{0.7}{0.7}{& \Nmultigate{4}{\psi} & \push{\cH_\sfA} \qw & \multigate{1}{\breve{\Xi}_\sfA} & \push{\tilde{\cH}_\sfA} \qw & \qw \\ & \Nghost{\psi} & & \Nghost{\breve{\Xi}_\sfA} & \push{\cH^\res_\sfA} \ww & \ww \\ & \Nghost{\psi} & \ww & \push{\cP} \ww & \ww & \ww \\& \Nghost{\psi} & & \Nmultigate{1}{\breve{\Xi}_\sfB} & \push{\cH^\res_\sfB} \ww & \ww\\ & \Nghost{\psi} & \push{\cH_\sfB} \qw & \ghost{\breve{\Xi}_\sfB} & \push{\tilde{\cH}_\sfB} \qw & \qw} 
\quad = \quad  \myQ{0.7}{0.7}{& \Nmultigate{4}{\tilde{\psi}} & \push{\tilde{\cH}_\sfA} \qw & \qw \\& \Nghost{\tilde{\psi}} & \Nmultigate{2}{\psi^\res} & \push{\cH^\res_\sfA} \ww & \ww  \\ & \Nghost{\tilde{\psi}}& \Nghost{\psi^\res} & \push{\cP} \ww & \ww \\ & \Nghost{\tilde{\psi}} & \Nghost{\psi^\res} & \push{\cH^\res_\sfB} \ww & \ww \\ & \Nghost{\tilde{\psi}} & \push{\tilde{\cH}_\sfB} \qw & \qw}  
\end{align}
for some pure state $\psi^\res$, and this is precisely \cref{eq:exteq}, with $W_\sfA$ and $W_\sfB$ representing $\breve{\Xi}_\sfA$ and $\breve{\Xi}_\sfB$.\end{Remark}

\begin{Remark}[Generalisation to the Approximate Case] It will be clear from the proof of \cref{prop:StateExtract} that the result translates without loss to the case of approximate simulation (cf.~\cref{rem:Approx}) provided the metric is monotone, i.e.~distance does not increase under post-processing.
\end{Remark}

\begin{proof}[Proof (of \cref{prop:StateExtract})]
Choose Stinespring dilations $\Sigma_{\sfP}$ of $\Lambda_{\sfP}$ and $\tilde{\Sigma}_{\sfP}$ of $\tilde{\Lambda}_{\sfP}$, for $\sfP \in \{\sfA, \sfB\}$. By \cref{prop:Rechar}, there exist channels $\Gamma_0$, $\Gamma_\sfA$ and $\Gamma_\sfB$ such that 
     \begin{align}   \label{eq:locstart}
\myQ{0.7}{0.7}{
	& \push{\what{X}_\sfA}  \qw  & \multigate{1}{ \Sigma_\sfA} & \qw &  \push{\what{Y}_\sfA}  \qw & \qw  \\
	& \Nmultigate{2}{\psi}  & \ghost{\Sigma_\sfA} & \push{\cE_\sfA} \ww &  \Ngate{\Gamma_\sfA}{\ww} &\push{\tilde{\cE}_A} \ww& \ww\\
	& \Nghost{\psi} & \ww & \push{\cP} \ww &\Ngate{\Gamma_0}{\ww}& \push{\tilde{\cP}} \ww& \ww \\
	& \Nghost{\psi} & \multigate{1}{\Sigma_\sfB} & \push{\cE_\sfB} \ww &  \Ngate{\Gamma_\sfB}{\ww} &\push{\tilde{\cE}_B} \ww& \ww \\
	&   \push{\what{X}_\sfB} \qw  & \ghost{\Sigma_\sfB} & \qw &  \push{\what{Y}_\sfB}  \qw & \qw  \\
} 
\quad = \quad 
\myQ{0.7}{0.7}{
	& \push{\what{X}_\sfA}  \qw  & \multigate{1}{ \tilde{\Sigma}_\sfA} & \qw &  \push{\what{Y}_\sfA}  \qw & \qw  \\
	& \Nmultigate{2}{\tilde{\psi}}  & \ghost{\tilde{\Sigma}_\sfA} & \push{\tilde{\cE}_A} \ww& \ww\\
	& \Nghost{\tilde{\psi}} & \ww & \push{\tilde{\cP}} \ww&\ww  \\
	& \Nghost{\tilde{\psi}} & \multigate{1}{\tilde{\Sigma}_\sfB} & \push{\tilde{\cE}_\sfB} \ww&\ww  \\
	&   \push{\what{X}_\sfB} \qw  & \ghost{\tilde{\Sigma}_\sfB} & \qw &  \push{\what{Y}_\sfB}  \qw & \qw  \\
} .
\end{align}
The channels $\tilde{\Sigma}_\sfA$ and $\tilde{\Sigma}_\sfB$ are isometric and thus reversible by \cref{lem:IsoRev}. Applying left-inverses on both sides of \cref{eq:locstart} yields
	
	\begin{align}
	\myQ{0.7}{0.7}{
		& \push{\what{X}_\sfA}  \qw  & \multigate{1}{ \Xi'_\sfA} & \push{\what{X}_\sfA}  \qw & \qw & \qw  \\
		& \Nmultigate{2}{\psi}  & \ghost{\Xi'_\sfA} &  \push{\tilde{\cH}_\sfA} \qw & \qw & \qw \\	& \Nghost{\psi} & \push{\cP} \ww &\Ngate{\Gamma_0}{\ww} & \push{\tilde{\cP}}\ww & \ww \\
		& \Nghost{\psi} & \multigate{1}{\Xi'_\sfB} & \push{\tilde{\cH}_\sfB} \qw & \qw & \qw \\
		&   \push{\what{X}_\sfB} \qw  & \ghost{\Xi'_\sfB} & \push{\what{X}_\sfB}  \qw & \qw  & \qw 
	} 
\quad = \quad 
	\myQ{0.7}{0.7}{& \qw & \push{\what{X}_\sfA} \qw & \qw & \qw \\& \Nmultigate{2}{\tilde{\psi}} & \push{\tilde{\cH}_\sfA} \qw & \qw \\& \Nghost{\tilde{\psi}}& \push{\tilde{\cP}} \ww & \ww \\& \Nghost{\tilde{\psi}} & \push{\tilde{\cH}_\sfB} \qw & \qw\\& \qw & \push{\what{X}_\sfB} \qw & \qw & \qw }  ,
	\end{align}
	for some channels $\Xi'_\sfA$ and $\Xi'_\sfB$. If we now, for $\sfP \in \{\sfA, \sfB\}$, trace out the output systems $\what{X}_{\sfP}$ and insert arbitrary states $\sigma_{\sfP}$ into the input systems $\what{X}_{\sfP}$, then the desired identity \eqref{eq:StateExtract} follows by defining $\Xi_{\sfP} := (\tr_{\what{X}_{\sfP}} \otimes \id_{\tilde{\cH}_{\sfP}}) \circ \Xi'_{\sfP} \circ (\sigma_{\sfP} \otimes \id_{\cH_{\sfP}})$ for ${\sfP} \in \{\sfA, \sfB\}$ and $\Xi_0 := \Gamma_0$.  
\end{proof}

We immediately conclude the following:

\begin{Cor} \label{cor:SelfStateExtract}
	Suppose that the strategy $\tilde{S} = (\tilde{\varrho}, \tilde{\Lambda}_\sfA, \tilde{\Lambda}_\sfB)$ is self-tested according to local simulation by its behaviour $P$. Let $\tilde{\psi}$ be a purification of $\tilde{\varrho}$. For any strategy $S= (\varrho, \Lambda_\sfA, \Lambda_\sfB)$ with behaviour $P$, and for any purification $\psi$ of $\varrho$, there exist channels $\Xi_0$, $\Xi_\sfA$ and $\Xi_\sfB$ such that \cref{eq:StateExtract} holds.
\end{Cor}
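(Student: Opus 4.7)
The plan is to prove this as an immediate consequence of the preceding \cref{prop:StateExtract} together with the definition of self-testing according to local simulation (\cref{def:OpSelf}). The entire work has already been done in establishing \cref{prop:StateExtract}, so the corollary really only requires us to check that its hypothesis grants the pre-order relation $S \geq_{l.s.} \tilde{S}$ needed to invoke that proposition.

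First I would unfold the assumption. By \cref{def:OpSelf}, the statement that the behaviour $P$ self-tests $\tilde{S}$ a.t.l.s.~means precisely that for every strategy $S=(\varrho, \Lambda_\sfA, \Lambda_\sfB)$ with behaviour $P$, we have $S \geq_{l.s.} \tilde{S}$. So if we fix any such $S$, then this relation automatically holds.

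Second, I would fix any purification $\psi$ of $\varrho$ and any purification $\tilde{\psi}$ of $\tilde{\varrho}$, and directly apply \cref{prop:StateExtract} to the pair $(S, \tilde{S})$ with these purifications. This delivers channels $\Xi_0$, $\Xi_\sfA$ and $\Xi_\sfB$ satisfying \cref{eq:StateExtract}, which is exactly the desired conclusion.

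There is no real obstacle here; the corollary is essentially a rephrasing of \cref{prop:StateExtract} in the self-testing context. The only thing to be slightly careful about is the order of quantifiers: \cref{prop:StateExtract} produces channels $\Xi_0$, $\Xi_\sfA$, $\Xi_\sfB$ depending on the pair $(S, \tilde{S})$ and on the chosen purifications $\psi$ and $\tilde{\psi}$, and this is exactly the dependence asserted in the statement of the corollary, so no additional argument is needed.
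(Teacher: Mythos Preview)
Your proposal is correct and matches the paper's approach exactly: the paper states the corollary with the phrase ``We immediately conclude the following'' and gives no separate proof, treating it as a direct consequence of \cref{prop:StateExtract} together with \cref{def:OpSelf}. Your unfolding of the quantifiers is precisely the one-line reasoning the paper leaves implicit.
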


Since \cref{cor:SelfStateExtract} is a statement about canonical strategies with arbitrary state, it can be used to rigorously show that certain states simply cannot appear in self-testable strategies:

\begin{Cor}[All a.t.l.s.~Self-Testable Strategies are Pure-State] \label{cor:SelfPure}
Suppose that $\tilde{S} = (\tilde{\varrho}, \tilde{\Lambda}_\sfA, \tilde{\Lambda}_\sfB)$ is a strategy which is self-tested a.t.l.s.~by its behaviour. Then, the state $\tilde{\varrho}$ must be pure.
\end{Cor}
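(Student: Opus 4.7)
The plan is to apply Proposition~\ref{prop:StateExtract} to a carefully chosen strategy $S$ with the same behaviour as $\tilde S$ but whose state is a purification of $\tilde\varrho$ sitting entirely inside party $\sfA$'s local system. Concretely, pick any purification $\phi$ of $\tilde\varrho$ with purifying system $\cQ$, regard $\phi$ as a pure state on the bipartite system $(\tilde\cH_\sfA \otimes \cQ) \otimes \tilde\cH_\sfB$ (grouping $\cQ$ with $\sfA$'s local system), and define $\sfA$'s measurement ensemble as $\tilde\Lambda_\sfA \circ (\id_{\tilde\cH_\sfA} \otimes \tr_\cQ)$ while keeping $\tilde\Lambda_\sfB$ unchanged. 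The resulting strategy $S$ reproduces the behaviour $P$ of $\tilde S$, because tracing out $\cQ$ from $\phi$ before $\sfA$'s measurement returns $\tilde\varrho$.

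Since $P$ self-tests $\tilde S$ according to local simulation, we have $S \geq_{l.s.} \tilde S$. I would then invoke Proposition~\ref{prop:StateExtract}. The state of $S$ is already the pure $\phi$, so I take purification $\psi := \phi$ with trivial purifying system $\cP = \triv$. For $\tilde\varrho$ I take the purification $\tilde\psi := \phi$ with purifying system $\tilde\cP = \cQ$. The proposition then produces channels $\Xi_\sfA : \tilde\cH_\sfA \otimes \cQ \to \tilde\cH_\sfA$, $\Xi_\sfB : \tilde\cH_\sfB \to \tilde\cH_\sfB$, and $\Xi_0 : \triv \to \cQ$ whose parallel composition applied to $\phi$ equals $\phi$. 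Because $\Xi_0$ has the trivial system as input it is just a state $\sigma$ on $\cQ$.

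Unfolding the diagram, the left-hand side has $\sigma$ appearing as an independent tensor factor on the middle wire, so the identity reads $(\Xi_\sfA \otimes \Xi_\sfB)(\phi) \otimes \sigma = \phi$, with the $\sigma$-factor placed on $\cQ$. In particular $\phi$ factors as a product state across the bipartition $(\tilde\cH_\sfA \otimes \tilde\cH_\sfB) \mid \cQ$. Marginalising over $\cQ$ on both sides forces the $(\tilde\cH_\sfA \otimes \tilde\cH_\sfB)$-factor to be $\tilde\varrho$, so $\phi = \tilde\varrho \otimes \sigma$ (up to reordering). A pure state that factors as a tensor product must have pure factors; therefore $\tilde\varrho$ is pure, as required.

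I do not expect any serious obstacle here: the proof is a clean application of Proposition~\ref{prop:StateExtract}. The only point that needs care is the bookkeeping of system identifications, namely that the purifying system $\cP$ of $S$ is trivial while the purifying system $\tilde\cP$ of $\tilde S$ is the nontrivial $\cQ$. This asymmetry is exactly what forces $\Xi_0$ to be a mere state, and thereby exposes the chosen purification $\phi$ as a product state over the cut separating $\cQ$.
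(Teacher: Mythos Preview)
Your proof is correct and follows essentially the same approach as the paper: construct a pure-state strategy $S$ by purifying $\tilde{\varrho}$ and handing the purifying system to party $\sfA$, then apply state extractability (Proposition~\ref{prop:StateExtract}) to conclude that the purification factors across its purifying system, forcing $\tilde{\varrho}$ to be pure. The only cosmetic difference is that the paper phrases the construction for an arbitrary initial strategy $S_0$ and invokes Corollary~\ref{cor:SelfStateExtract}, whereas you instantiate directly with $S_0 = \tilde{S}$ and take $\tilde{\psi} = \phi$; the underlying argument is identical.
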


\begin{proof}
There is \myuline{some} strategy $S$ with behaviour $P$ whose state, say $\psi$, is pure. Indeed, such a strategy can be obtained if in any given strategy $S_0$ we purify the state and hand the purification to one of the parties, say $\sfA$. By \cref{cor:SelfStateExtract}, a purification $\tilde{\psi}$ of $\tilde{\varrho}$ now satisfies 
\begin{align}
\myQ{0.7}{0.7}{& \Nmultigate{2}{\tilde{\psi}} & \push{\tilde{\cH}_\sfA} \qw & \qw \\ &\Nghost{\tilde{\psi}} & \push{\tilde{\cP}} \ww & \ww \\ 
& \Nghost{\tilde{\psi}} & \push{\tilde{\cH}_\sfB} \qw & \qw} \quad  =  \quad \myQ{0.7}{0.7}{& \Nmultigate{2}{\psi} & \push{\cH_\sfA} \qw & \gate{\Xi_\sfA} & \push{\tilde{\cH}_\sfA} \qw & \qw \\
& \Nghost{\psi} &  & \Ngate{\xi_0} & \push{\tilde{\cP}} \ww & \ww\\& \Nghost{\psi} & \push{\cH_\sfB} \qw & \gate{\Xi_\sfB} & \push{\tilde{\cH}_\sfB} \qw & \qw} 
\end{align}
for some channels $\Xi_\sfA$, $\Xi_\sfB$ and some state $\xi_0$ (since the purifying system for $\psi$ may be taken trivial). In other words, $\tilde{\psi}$ factors across its purifying system and it follows that $\tilde{\varrho}$ must already be pure. 
\end{proof}

\subsection{Local Assisted Simulation and Measurement Extractibility}

\label{subsec:MeasExt}

Next, we prove that local assisted simulation implies an extractibility result for the measurement ensembles under a suitable rank-assumption. As mentioned, this result connects to alternative formulations of self-testing \cite{Kan17,SB19}, but as far as we know the general statement below (\cref{prop:MeasExtract}) has not been presented before. Like for state extractibility, our proof is purely compositional in terms of diagrams. 

\begin{Prop}[Measurement Extractibility] \label{prop:MeasExtract}
Let $S=(\varrho, \Lambda_\sfA, \Lambda_\sfB)$ and $\tilde{S} = (\tilde{\varrho}, \tilde{\Lambda}_\sfA, \tilde{\Lambda}_\sfB)$ be strategies such that $S \leq_{l.a.s.} \tilde{S}$. Let $\sfP \in \{\sfA, \sfB\}$, and let $\Sigma_{\sfP}$ and $\tilde{\Sigma}_{\sfP}$ be Stinespring dilations of $\Lambda_{\sfP}$ and $\tilde{\Lambda}_{\sfP}$, respectively. If the $\sfP$-marginal $\tilde{\varrho}_{\sfP}$ of $\tilde{\varrho}$ has full rank, then there exist channels $\tilde{\Xi}_{\sfP}$ and $\tilde{\Psi}_{\sfP}$ such that
	\begin{align} \label{eq:MeasExt}
	\myQ{0.7}{0.7}{
		& \qw	& \push{\what{X}_{\sfP}} \qw    &  \qw &  \multigate{1}{\Sigma_{\sfP}}  &  \qw &  \push{\what{Y}_{\sfP}} \qw &  \qw & \qw & \qw \\
		& \push{\tilde{\cH}_{\sfP}}\qw &   \gate{\tilde{\Xi}_{\sfP}} & \push{\cH_{\sfP}} \qw &  \ghost{\Sigma_{\sfP}}  & \push{\cE_\sfP} \ww &  \Ngate{\tilde{\Psi}_{\sfP}}{\ww} & \push{\tilde{\cE}_\sfP} \ww & \ww 
	} 
	\quad =\quad 
	\myQ{0.7}{0.7}{
		&  \push{\what{X}_{\sfP}}   \qw  & \multigate{1}{\tilde{\Sigma}_{\sfP}}  &    \push{\what{Y}_{\sfP}} \qw &  \qw   \\
		& \push{\tilde{\cH}_{\sfP}}\qw & \ghost{\tilde{\Sigma}_{\sfP}} &\push{\tilde{\cE}_\sfP} \ww  & \ww 
	}.
	\end{align}
\end{Prop}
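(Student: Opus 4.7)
The plan is to reduce the proposition to constructing an isometric pre-processing channel $\tilde{\Xi}_\sfP\colon \tilde{\cH}_\sfP \to \cH_\sfP$ satisfying the marginal identity $\Lambda_\sfP \circ (\id_{\what{X}_\sfP} \otimes \tilde{\Xi}_\sfP) = \tilde{\Lambda}_\sfP$. Given such $\tilde{\Xi}_\sfP$, the composition $\Sigma_\sfP \circ (\id \otimes \tilde{\Xi}_\sfP)$ is itself isometric (as a composition of isometric channels) and hence a Stinespring dilation of $\tilde{\Lambda}_\sfP$. Applying Stinespring uniqueness (\cref{thm:Stine}) to the two Stinespring dilations $\tilde{\Sigma}_\sfP$ and $\Sigma_\sfP \circ (\id \otimes \tilde{\Xi}_\sfP)$ of $\tilde{\Lambda}_\sfP$, together with reversibility of isometric channels (\cref{lem:IsoRev}), will yield a channel $\tilde{\Psi}_\sfP\colon \cE_\sfP \to \tilde{\cE}_\sfP$ such that $\tilde{\Sigma}_\sfP = \tilde{\Psi}_\sfP \circ \Sigma_\sfP \circ (\id \otimes \tilde{\Xi}_\sfP)$, which is exactly the equation \eqref{eq:MeasExt}.

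To construct $\tilde{\Xi}_\sfP$, I would start from the assisted simulation equation \eqref{eq:asssim} witnessing $\tilde{S} \geq_{l.a.s.} S$. First, I would marginalize this equation to the $\sfP$-side by inserting an arbitrary state on the classical input system $\what{X}_\sfQ$ of the other party $\sfQ$ and tracing out $\what{Y}_\sfQ$, $\cE_\sfQ$ and $\cE_0$ on both sides. Trace-preservation of $\tilde{\Sigma}_\sfQ$, $\Sigma_\sfQ$, $\Gamma_\sfQ$ and $\Gamma_0$ causes these outputs to collapse cleanly, reducing the assisted simulation equation to a state-level identity between $\tilde{\Sigma}_\sfP$ acting on $\tilde{\varrho}_\sfP$ (followed by the post-processing $\Gamma_\sfP$ with an auxiliary state $\alpha_\sfP$ obtained as a marginal of $\alpha$) and $\Sigma_\sfP$ acting on $\varrho_\sfP$. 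Next, in the spirit of the proof of \cref{prop:OpSim}, I would purify $\alpha_\sfP$ and replace $\Gamma_\sfP$ by an isometric Stinespring dilation $\breve{\Gamma}_\sfP$; this converts the marginal equation into an identity between isometric channels, from which residual pure ancillae factor out by dilational purity (\cref{Cor:StinePure}), and $\breve{\Gamma}_\sfP$ can subsequently be cancelled by its left-inverse (\cref{lem:IsoRev}).

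The resulting identity then relates $\tilde{\Sigma}_\sfP$ applied to $\tilde{\varrho}_\sfP$ directly to $\Sigma_\sfP$ applied to $\varrho_\sfP$, modulo isometric pre- and post-processing. The full-rank hypothesis on $\tilde{\varrho}_\sfP$ is essential to promote this state-level identity to a channel-level one: because $\tilde{\varrho}_\sfP$ has full rank, the purification $\tilde{\psi}$ has full Schmidt rank on $\tilde{\cH}_\sfP$, so that any channel action on $\tilde{\cH}_\sfP$ admits a corresponding action on the complement, and $\tilde{\varrho}_\sfP$ becomes a faithful reference. Concretely, I would apply a left-inverse of the isometric $\tilde{\Sigma}_\sfP$ to ``undo'' the measurement and expose $\tilde{\cH}_\sfP$ as an open input wire, from which the sought isometric channel $\tilde{\Xi}_\sfP\colon \tilde{\cH}_\sfP \to \cH_\sfP$ can be read off after discarding auxiliary systems.

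The hardest part of this plan will be the last step: verifying diagrammatically that the pull-back through the full-rank purification produces a channel of the correct type $\tilde{\cH}_\sfP \to \cH_\sfP$ (without any implicit dependence on auxiliary wires inherited from the purification or from $\alpha$), and confirming the marginal identity $\Lambda_\sfP \circ (\id \otimes \tilde{\Xi}_\sfP) = \tilde{\Lambda}_\sfP$ by a final trace of the $\cE_\sfP$ output on both sides. The diagrammatic bookkeeping needed is analogous to that carried out in the proof of \cref{prop:StateExtract}, but with $\tilde{\Sigma}_\sfP$ left in place rather than fully inverted, and it is precisely here that the full-rank hypothesis is indispensable --- without it, one could at best deduce a partial identity valid only when composed with $\tilde{\varrho}_\sfP$ itself.
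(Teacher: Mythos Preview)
Your plan has a genuine gap at the purification step. After marginalising the assisted-simulation identity to the $\sfP$-side as you describe, the two sides read (schematically)
\begin{align}
\Gamma_\sfP\bigl(\tilde{\Sigma}_\sfP(\,\cdot\,\otimes\tilde{\varrho}_\sfP)\otimes\alpha_\sfP\bigr) \;=\; \Sigma_\sfP(\,\cdot\,\otimes\varrho_\sfP),
\end{align}
and these are \emph{not} isometric channels: the states $\tilde{\varrho}_\sfP$ and $\varrho_\sfP$ are in general mixed. Purifying $\alpha_\sfP$ and replacing $\Gamma_\sfP$ by $\breve{\Gamma}_\sfP$ does not fix this --- you still have $\tilde{\varrho}_\sfP$ sitting inside the left-hand side, and $\varrho_\sfP$ on the right. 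So the dilational-purity argument of \cref{Cor:StinePure} does not apply (it needs the side being dilated to be isometric), and the ``residual pure ancillae factor out'' step fails. The analogy with the proof of \cref{prop:OpSim} breaks down precisely because that proof retains the full bipartite pure states $\psi$ and $\tilde{\psi}$, whereas your early marginalisation has thrown them away. A related secondary issue: even if you could reach $\Lambda_\sfP\circ(\id\otimes\tilde{\Xi}_\sfP)=\tilde{\Lambda}_\sfP$, nothing in your argument forces $\tilde{\Xi}_\sfP$ to be isometric, and without that your Stinespring-uniqueness step producing $\tilde{\Psi}_\sfP$ does not go through (completeness only gives the map in the wrong direction).

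The paper's proof avoids this by \emph{not} marginalising early. It keeps the bipartite structure, inverts $\Sigma_\sfP$ on the $S$-side to extract a pre-processing piece, re-substitutes this back into the assisted-simulation identity to obtain a self-referential equation on the $\tilde{S}$-side alone, and only then peels off the $\sfQ$-side by cancelling the (reversible) $\Gamma_\sfQ$ and $\tilde{\Sigma}_\sfQ$. This leaves an identity of the form
\begin{align}
(\tilde{\Psi}_\sfP\otimes\id)\circ\Sigma_\sfP\circ(\id\otimes\tilde{\Xi}_\sfP)\ \text{applied to }\tilde{\psi} \;=\; \tilde{\Sigma}_\sfP\ \text{applied to }\tilde{\psi},
\end{align}
with the \emph{same} full-rank purification $\tilde{\psi}$ on both sides, so that a Choi--Jamio\l{}kowski argument upgrades it to the desired channel identity \eqref{eq:MeasExt} directly --- producing both $\tilde{\Xi}_\sfP$ and $\tilde{\Psi}_\sfP$ at once, with no isometricity hypothesis on either.
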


Before presenting the proof, we again make a few remarks.

\begin{Remark}
Tracing out the environment on each side of \cref{eq:MeasExt} leaves the identity  
\begin{align}
    \scalemyQ{1}{0.7}{0.7}{
		& \qw	& \push{\what{X}_{\sfP}} \qw    &  \qw &  \multigate{1}{\Lambda_{\sfP}}  &   \push{\what{Y}_{\sfP}} \qw &  \qw & \qw \\
		& \push{\tilde{\cH}_{\sfP}}\qw &   \gate{\tilde{\Xi}_{\sfP}} & \push{\cH_{\sfP}} \qw &  \ghost{\Lambda_{\sfP}}  
	} \quad = \quad 
	\scalemyQ{1}{0.5}{0.5}{
		&  \push{\what{X}_{\sfP}}   \qw  & \multigate{1}{\tilde{\Lambda}_{\sfP}}  &    \push{\what{Y}_{\sfP}} \qw &  \qw   \\
		& \push{\tilde{\cH}_{\sfP}}\qw & \ghost{\tilde{\Lambda}_{\sfP}} 
	}, \end{align}
	which expresses that the ensemble $\tilde{\Lambda}_{\sfP}$ can be extracted from the ensemble $\Lambda_{\sfP}$.  \cref{eq:MeasExt} is thus a strong version of extractability, according to which even \myuline{dilations} of $\tilde{\Lambda}_\sfP$ can be extracted from dilations of $\Lambda_\sfP$. (The same, incidentally, can be said about \cref{prop:StateExtract} regarding dilations of states.)
\end{Remark}

\begin{Remark}
The rank-assumption in \cref{prop:MeasExtract} is necessary since there is no way of determining the action of $\tilde{\Lambda}_{\sfP}$ outside the support of $\tilde{\varrho}_{\sfP}$ given only a Stinespring implementation of $\tilde{S}$.
\end{Remark}

\begin{Remark}[Generalisation to the Approximate Case] As follows from the very last argument in the proof of \cref{prop:MeasExtract}, approximate versions of this result might lose factors, with the loss depending on the spectrum of the density operator $\tilde{\varrho}_{\sfP}$. \end{Remark}

\begin{proof}[Proof (of \cref{prop:MeasExtract})]
Let us assume without loss of generality that $\sfP= \sfA$. Let us also assume for simplicity that the state $\tilde{\varrho} =: \tilde{\psi}$ is pure (the idea of the proof is not conceptually different in the general case, but less clearly exposed). Since $S \leq_{l.a.s.} \tilde{S}$, we know that
\begin{align} \label{eq:asim}
\myQ{0.7}{0.5}{
	& \push{\what{X}_\sfA} \qw  & \multigate{1}{\Sigma_\sfA}  &   \qw & \push{\what{Y}_\sfA} \qw &  \qw  \\
	& \Nmultigate{2}{\psi}   & \ghost{\Sigma_\sfA} &  \push{\cE_\sfA} \ww &  \ww   \\
	& \Nghost{\psi}  &   \ww & \push{\cE_0}\ww &  \ww \\
	& \Nghost{\psi}  & \multigate{1}{\Sigma_\sfB} & \push{\cE_\sfB} \ww  & \ww   \\ 
	&   \push{\what{X}_\sfB} \qw  & \ghost{\Sigma_\sfB}  &   \qw & \push{\what{Y}_\sfB} \qw &   \qw  \\ 
}
\quad =
\quad
\myQ{0.7}{0.5}{
	&  \push{\what{X}_\sfA}   \qw  & \multigate{1}{\tilde{\Sigma}_\sfA}  &   \qw &  \push{\what{Y}_\sfA} \qw &  \qw  & \qw \\
	& \Nmultigate{4}{\tilde{\psi}}  & \ghost{\tilde{\Sigma}_\sfA} & \push{\tilde{\cE}_\sfA} \ww &  \Nmultigate{1}{\Gamma_\sfA}{\ww} & \push{\cE_\sfA} \ww  \\
	& \Nghost{\tilde{\psi}}   &  & \Nmultigate{2}{\alpha} & \Nghost{\Gamma_\sfA}{\ww} &  \\  
	& \Nghost{\tilde{\psi}}  &   & \Nghost{\alpha}  & \ww  & \push{\cE_0} \ww & \ww \\
	& \Nghost{\tilde{\psi}}  &  & \Nghost{\alpha} &   \Nmultigate{1}{\Gamma_\sfB}{\ww}   \\
	& \Nghost{\tilde{\psi}}   & \multigate{1}{\tilde{\Sigma}_\sfB} & \push{\tilde{\cE}_\sfB} \ww & \Nghost{\Gamma_\sfB}{\ww} & \push{\cE_\sfB}\ww & \ww  \\ 
	&  \push{\what{X}_\sfB}  \qw  & \ghost{\tilde{\Sigma}_\sfB}   & \qw &  \push{\what{Y}_\sfB} \qw &   \qw & \qw \\ 
}
\quad 
\end{align}
for some state $\alpha$ and some channels $\Gamma_\sfA$ and $\Gamma_\sfB$ (the channel $\Gamma_0$ appearing in the definition of local assisted simulation can be absorbed into the assistant state $\alpha$ because it has no input from a purifying system for $\tilde{\psi}$). We may assume without loss of generality that $\Gamma_\sfB$ is reversible. (If not, we replace $\Gamma_\sfB$ by a Stinespring dilation and absorb the pure state arising on the left hand side into $\psi$, thus effectively replacing the strategy $S$ by a strategy $S'$ for which the measurement ensemble corresponding to $\sfA$ is still $\Lambda_\sfA$.)\footnote{In the case of a general state $\tilde{\varrho}$, there would be as well a channel $\Gamma_0$ in \cref{eq:asim} which may by a similar argument also be assumed reversible.}

Now, by \cref{lem:IsoRev} the isometric channel $\Sigma_\sfA$ is reversible. If we apply a left-inverse on both sides of \cref{eq:asim}, the resulting top output system is $\what{X}_\sfA$. When we trash this output and insert an arbitrary state on input $\what{X}_\sfA$, we obtain
\begin{align} 
\myQ{0.7}{0.5}{
	& \Nmultigate{2}{\psi}   &  \qw &  \push{\cH_\sfA}\qw &  \qw   \\
	& \Nghost{\psi}  &   \ww & \push{\cE_0} \ww &  \ww \\
	& \Nghost{\psi}  & \multigate{1}{\Sigma_\sfB} & \push{\cE_\sfB} \ww  & \ww   \\ 
	&   \push{\what{X}_\sfB} \qw  & \ghost{\Sigma_\sfB}  &   \qw & \push{\what{Y}_\sfB} \qw &   \qw  \\ 
}
\quad =
\quad
\myQ{0.7}{0.5}{
	& \Nmultigate{4}{\tilde{\psi}}  & \qw & \push{\tilde{\cH}_\sfA} \qw &   \multigate{1}{\tilde{\Xi}'_\sfA} & \push{\cH_\sfA}\qw & \qw  \\
	& \Nghost{\tilde{\psi}}   &  & \Nmultigate{2}{\alpha} & \Nghost{\tilde{\Xi}'_\sfA}{\ww} &  \\  
	& \Nghost{\tilde{\psi}}  &   & \Nghost{\alpha}    & \ww & \push{\cE_0}\ww & \ww \\
	& \Nghost{\tilde{\psi}}  &  & \Nghost{\alpha} &   \Nmultigate{1}{\Gamma_\sfB}{\ww}   \\
	& \Nghost{\tilde{\psi}}   & \multigate{1}{\tilde{\Sigma}_\sfB} & \push{\tilde{\cE}_\sfB} \ww & \Nghost{\Gamma_\sfB}{\ww} & \push{\cE_\sfB}\ww & \ww  \\ 
	&  \push{\what{X}_\sfB}  \qw  & \ghost{\tilde{\Sigma}_\sfB}   & \qw &  \push{\what{Y}_\sfB} \qw &   \qw & \qw \\ 
}
\end{align}
for some channel $\tilde{\Xi}'_\sfA$. By inserting this fragment back into the left hand side of \cref{eq:asim}, we find 
\begin{align}
\myQ{0.7}{0.5}{
	& \qw	& \push{\what{X}_\sfA} \qw    & \qw & \qw &  \multigate{1}{\Sigma_\sfA}  & \push{\what{Y}_\sfA} \qw &  \qw  \\
	& \Nmultigate{4}{\tilde{\psi}}  & \qw & \push{\tilde{\cH}_\sfA} \qw &   \multigate{1}{\tilde{\Xi}'_\sfA} & \ghost{\Sigma_\sfA}  & \push{\cE_\sfA}\ww & \ww\\
	& \Nghost{\tilde{\psi}}   &  & \Nmultigate{2}{\alpha} & \Nghost{\tilde{\Xi}'_\sfA}{\ww} &  \\  
	& \Nghost{\tilde{\psi}}  &   & \Nghost{\alpha}   & \ww & \push{\cE_0} \ww & \ww \\
	& \Nghost{\tilde{\psi}}  &  & \Nghost{\alpha} &   \Nmultigate{1}{\Gamma_\sfB}{\ww}   \\
	& \Nghost{\tilde{\psi}}   & \multigate{1}{\tilde{\Sigma}_\sfB} & \push{\tilde{\cE}_\sfB} \ww & \Nghost{\Gamma_\sfB}{\ww} & \push{\cE_\sfB} \ww  & \ww \\ 
	&  \push{\what{X}_\sfB}  \qw  & \ghost{\tilde{\Sigma}_\sfB}   & \qw & \qw  &  \push{\what{Y}_\sfB} \qw &   \qw & \qw \\ 
}
\quad  = \quad 
\myQ{0.7}{0.5}{
	&  \push{\what{X}_\sfA}   \qw  & \multigate{1}{\tilde{\Sigma}_\sfA}  &   \qw &  \push{\what{Y}_\sfA} \qw &  \qw  & \qw \\
	& \Nmultigate{4}{\tilde{\psi}}  & \ghost{\tilde{\Sigma}_\sfA} & \push{\tilde{\cE}_\sfA} \ww &  \Nmultigate{1}{\Gamma_\sfA}{\ww} & \push{\cE_\sfA}\ww & \ww  \\
	& \Nghost{\tilde{\psi}}   &  & \Nmultigate{2}{\alpha} & \Nghost{\Gamma_\sfA}{\ww} &  \\  
	& \Nghost{\tilde{\psi}}  &   & \Nghost{\alpha}    & \ww &  \push{\cE_0}\ww & \ww \\
	& \Nghost{\tilde{\psi}}  &  & \Nghost{\alpha} &   \Nmultigate{1}{\Gamma_\sfB}{\ww}   \\
	& \Nghost{\tilde{\psi}}   & \multigate{1}{\tilde{\Sigma}_\sfB} & \push{\tilde{\cE}_\sfB} \ww & \Nghost{\Gamma_\sfB}{\ww} & \push{\cE_\sfB}\ww & \ww  \\ 
	&  \push{\what{X}_\sfB}  \qw  & \ghost{\tilde{\Sigma}_\sfB}   & \qw &  \push{\what{Y}_\sfB} \qw &   \qw & \qw \\ 
}.
\end{align}

Now, $\Gamma_\sfB$ was assumed reversible, so we may cancel it by a left-inverse.\footnote{In the case of general $\tilde{\varrho}$, we would cancel $\Gamma_0$ as well.} Afterwards we may then cancel the reversible channel $\tilde{\Sigma}_\sfB$, which yields the identity
\begin{align}
\myQ{0.7}{0.5}{
	& \qw	& \push{\what{X}_\sfA} \qw    & \qw & \qw &  \multigate{1}{\Sigma_\sfA}   & \push{\what{Y}_\sfA} \qw &  \qw  \\
	& \Nmultigate{4}{\tilde{\psi}}  & \qw & \push{\tilde{\cH}_\sfA} \qw &   \multigate{1}{\tilde{\Xi}'_\sfA} & \ghost{\Sigma_\sfA}  & \push{\cE_\sfA} \ww & \ww\\
	& \Nghost{\tilde{\psi}}   &  & \Nmultigate{2}{\alpha} & \Nghost{\tilde{\Xi}'_\sfA}{\ww} &  \\  
	& \Nghost{\tilde{\psi}}  &   & \Nghost{\alpha}    & \push{\cE_0}\ww & \ww  \\
	& \Nghost{\tilde{\psi}}  &  & \Nghost{\alpha}  &\push{\cK_\sfB} \ww & \ww \\
	& \Nghost{\tilde{\psi}}   & \qw & \push{\tilde{\cH}_\sfB}\qw & \qw  \\ &\qw &  \push{\what{X}_\sfB} \qw & \qw & \qw
}
 \quad = \quad 
\myQ{0.7}{0.5}{
	&  \push{\what{X}_\sfA}   \qw  & \multigate{1}{\tilde{\Sigma}_\sfA}  &   \qw &  \push{\what{Y}_\sfA} \qw &  \qw  & \qw \\
	& \Nmultigate{4}{\tilde{\psi}}  & \ghost{\tilde{\Sigma}_\sfA} &  \push{\tilde{\cE}_\sfA}\ww &  \Nmultigate{1}{\Gamma_\sfA}{\ww} & \push{\cE_\sfA}\ww & \ww  \\
	& \Nghost{\tilde{\psi}}   &  & \Nmultigate{2}{\alpha} & \Nghost{\Gamma_\sfA}{\ww} &  \\  
	& \Nghost{\tilde{\psi}}  &   & \Nghost{\alpha}    & \push{\cE_0}\ww & \ww\\
	& \Nghost{\tilde{\psi}}  &  & \Nghost{\alpha}  & \push{\cK_\sfB}\ww & \ww \\
	& \Nghost{\tilde{\psi}}   & \qw & \push{\tilde{\cH}_\sfB} \qw & \qw  \\ & \qw &  \push{\what{X}_\sfB} \qw & \qw & \qw
}.
\quad 
\end{align}
If we now insert an arbitrary state into $\what{X}_\sfB$ and trace out the output $\what{X}_\sfB$ along with the systems $\cE_0$ and $\cK_\sfB$, we finally obtain 
\begin{align} \label{eq:choi}
\myQ{0.7}{0.5}{
		& \push{\what{X}_\sfA} \qw   & \qw & \qw &  \multigate{1}{\Sigma_\sfA}   & \push{\what{Y}_\sfA} \qw &  \qw  \\
	& \Nmultigate{2}{\tilde{\psi}}  & \push{\tilde{\cH}_\sfA} \qw &   \gate{\tilde{\Xi}_\sfA} & \ghost{\Sigma_\sfA}  & \push{\cE_\sfA} \ww & \ww\\
	& \Nghost{\tilde{\psi}}    \\  
	& \Nghost{\tilde{\psi}}   & \push{\tilde{\cH}_\sfB}\qw & \qw
}
\quad =
\quad
\myQ{0.7}{0.5}{
	&  \push{\what{X}_\sfA}   \qw  & \qw &  \multigate{1}{\tilde{\Sigma}_\sfA}  &   \qw & \qw & \push{\what{Y}_\sfA} \qw &  \qw   \\
	& \Nmultigate{2}{\tilde{\psi}}  & \push{\tilde{\cH_\sfA}} \qw & \ghost{\tilde{\Sigma}_\sfA} &  \ww &  \Ngate{\tilde{\Psi}_\sfA}{\ww} & \push{\cE_\sfA} \ww & \ww  \\
	& \Nghost{\tilde{\psi}}     \\ 
	& \Nghost{\tilde{\psi}}   & \push{\tilde{\cH}_\sfB}\qw & \qw \\ 
} \quad 
\end{align}
for some channels $\tilde{\Xi}_\sfA$ and $\tilde{\Psi}_\sfA$.\footnote{For general $\tilde{\varrho}$, the identity would at this point be $\scalemyQ{.8}{0.7}{0.5}{
	& \push{\what{X}_\sfA} \qw  	& \qw  & \qw &  \multigate{1}{\Sigma_\sfA}   & \push{\what{Y}_\sfA} \qw &  \qw  \\
	& \Nmultigate{2}{\tilde{\psi}}  & \push{\tilde{\cH}_\sfA}\qw &   \gate{\tilde{\Xi}_\sfA} & \ghost{\Sigma_\sfA}  & \push{\cE_\sfA} \ww & \ww\\
	& \Nghost{\tilde{\psi}} & \push{\tilde{\cP}}\ww & \ww    \\  
	& \Nghost{\tilde{\psi}}   & \push{\tilde{\cH}_\sfB} \qw & \qw
}
\quad =
\quad
\scalemyQ{.8}{0.7}{0.5}{
	&  \push{\what{X}_\sfA}   \qw  & \qw &  \multigate{1}{\tilde{\Sigma}_\sfA}  &   \qw & \push{\what{Y}_\sfA} \qw &  \qw \\
	& \Nmultigate{2}{\tilde{\psi}}  & \push{\tilde{\cH}_\sfA} \qw & \ghost{\tilde{\Sigma}_\sfA} &   \Ngate{\tilde{\Psi}_\sfA}{\ww} & \push{\cE_\sfA}\ww & \ww  \\
	& \Nghost{\tilde{\psi}}    &\push{\tilde{\cP}} \ww & \ww \\ 
	& \Nghost{\tilde{\psi}}   & \push{\tilde{\cH}_\sfB}\qw & \qw \\ 
}$ for some purification $\tilde{\psi}$ of $\tilde{\varrho}$.} Lastly, since the $\sfA$-marginal of $\tilde{\psi}$ has full rank, the state $\tilde{\psi}$ gives rise to a Choi-Jamio\l{}kowski-like isomorphism, and for any input to $\what{X}_\sfA$ \cref{eq:choi} expresses the equality of two such Choi-Jamio\l{}kowski states. We thus conclude by injectivity that the two channels in \cref{eq:MeasExt} acts identically on any input to $\what{X}_\sfA$, and therefore must be identical, as desired. \end{proof}
%\footnote{Any pure bipartite state with full local rank gives rise to such an isomorphism. In the language of Ref.~\cite{Hou21}, $\tilde{\psi}$ is a \emph{universal} dilation of its $\sfB$-marginal.}

We immediately conclude the following:
	
\begin{Cor} \label{cor:SelfMeasExtract}
	Suppose that $\tilde{S} = (\tilde{\varrho}, \tilde{\Lambda}_\sfA, \tilde{\Lambda}_\sfB)$ is a full-rank strategy which is self-tested according to local assisted simulation by its behaviour $P$. Let $\tilde{\Sigma}_\sfA$, $\tilde{\Sigma}_\sfB$ be Stinespring dilations of $\tilde{\Lambda}_\sfA$, respectively $\tilde{\Lambda}_\sfB$. For any strategy $S= (\varrho, \Lambda_\sfA, \Lambda_\sfB)$ with behaviour $P$, and for any Stinespring dilation $\Sigma_{\sfP}$ of $\Lambda_{\sfP}$, $\sfP \in \{\sfA, \sfB\}$, there exist channels $\tilde{\Xi}_{\sfP}$ and $\tilde{\Psi}_{\sfP}$ such that \cref{eq:MeasExt} holds.
\end{Cor}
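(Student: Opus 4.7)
The plan is to deduce this corollary as an immediate consequence of \cref{prop:MeasExtract} applied to the self-testing hypothesis. First, I would unfold the definition of self-testing a.t.l.a.s.\ (\cref{def:SelfTestAtlas}): since $\tilde{S}$ is self-tested a.t.l.a.s.\ by its behaviour $P$, any strategy $S$ with behaviour $P$ satisfies $S \leq_{l.a.s.} \tilde{S}$. Thus the hypothesis of \cref{prop:MeasExtract} — namely local assisted simulation of $S$ by $\tilde{S}$ — is met for every strategy $S$ with behaviour $P$.

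Next, I would verify the rank condition. The assumption that $\tilde{S}$ is full-rank means, by definition, that both marginal states $\tilde{\varrho}_\sfA$ and $\tilde{\varrho}_\sfB$ have full rank on $\tilde{\cH}_\sfA$ and $\tilde{\cH}_\sfB$, respectively. Hence for each party $\sfP \in \{\sfA, \sfB\}$ the hypothesis on $\tilde{\varrho}_\sfP$ required by \cref{prop:MeasExtract} is satisfied.

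Applying \cref{prop:MeasExtract} once with $\sfP = \sfA$ and once with $\sfP = \sfB$, each with the prescribed Stinespring dilations $\tilde{\Sigma}_\sfP$ and $\Sigma_\sfP$, directly yields channels $\tilde{\Xi}_\sfP$ and $\tilde{\Psi}_\sfP$ satisfying \cref{eq:MeasExt}. There is no substantive additional work: the corollary is essentially a packaging of \cref{prop:MeasExtract} for the self-testing scenario, analogous to how \cref{cor:SelfStateExtract} packages \cref{prop:StateExtract}. I do not anticipate a main obstacle, since all the genuinely technical content — in particular the Choi–Jamio\l{}kowski-type injectivity step that exploits the rank assumption — has already been handled in the proof of \cref{prop:MeasExtract}.
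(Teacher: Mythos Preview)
Your proposal is correct and matches the paper's approach: the paper simply states ``We immediately conclude the following'' before the corollary, treating it as a direct instantiation of \cref{prop:MeasExtract} once the self-testing hypothesis supplies $S \leq_{l.a.s.} \tilde{S}$ and the full-rank assumption supplies the rank condition on each $\tilde{\varrho}_\sfP$.
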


\begin{Remark}[Equivalence of Measurement Ensembles]
Since assisted simulation is in quantum theory an equivalence relation (\cref{thm:AssSym}), the extractibility converse to \cref{prop:MeasExtract} can be deduced as well if the state $\varrho$ has full local rank. In other words, if $\tilde{S}$ is full-rank and self-tested a.t.l.a.s.~by the behaviour $P$, then for any full-rank strategy $S$ with behaviour $P$ the measurement ensembles of $S$ are \emph{equivalent} to the canonical measurement ensembles, in the sense that either can be extracted from the other. 
\end{Remark}

\subsection{Causal Simulation and Convex Decompositions --- Extremality}
\label{subsec:SecExt}

It is known that a necessary condition for a behaviour $P$ to self-test a strategy a.t.r.~is extremality of $P$ in the convex set of all behaviours realisable by quantum strategies. A formal proof of this has been given in Ref.~\cite{Goh18}, based on operator-algebraic considerations. 

Below, we present a new proof of the statement in operational language. Our proof branches into pieces which collectively reveal that a more general result holds for arbitrary operational theories, and in fact under the weaker notion of self-testing a.t.c.s.~(\cref{prop:SelfSec}). The extremality result for quantum self-testing is obtained as an immediate consequence of this (\cref{cor:extremal}). \\

Let $P$ be a realisable behaviour. By a \emph{convex decomposition of $P$} we mean formally a sequence $\big((p_k, P_k)\big)_{k=1, \ldots, n}$ such that $p_1, \ldots, p_n$ are strictly positive real numbers and such that $P_1,\ldots, P_n$ are realisable behaviours with $\sum_{k=1}^n p_k P_k = P$ (it follows that necessarily $\sum_{k=1}^n p_k =1$). 

Given a strategy $S=(\varrho, \Lambda_\sfA, \Lambda_\sfB)$ with behaviour $P$, every convex decomposition of the state $\varrho$ gives rise to a convex decomposition of $P$. Specifically, if $\varrho = \sum_{k=1}^n p_k \varrho_k$ and if we define $P_k$ as the behaviour of the strategy $(\varrho_k, \Lambda_\sfA, \Lambda_\sfB)$, then clearly $P= \sum_{k=1}^n p_k P_k$. Let us say that a convex decomposition $\big((p_k, P_k)\big)_{k=1, \ldots, n}$ is \emph{visible from $S$} if it arises in this way. (If for example $S$ is pure-state, then all convex decompositions visible from $S$ are trivial, meaning with $P_1 = \ldots = P_n$; indeed, any convex decomposition of the pure state must have $\varrho_1 = \ldots = \varrho_n$.)

Visibility of convex decompositions is linked to causal simulation as follows: 

\begin{Lem} \label{lem:visible}
Let $S$ and $\tilde{S}$ be strategies with common behaviour $P$. If $S \leq_{c.s.} \tilde{S}$, then any convex decomposition of $P$ visible from $S$ is also visible from $\tilde{S}$.    
\end{Lem}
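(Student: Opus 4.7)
The plan is to produce, from the decomposition $\varrho = \sum_{k=1}^{n} p_k \varrho_k$ witnessing visibility from $S$, a matching decomposition $\tilde{\varrho} = \sum_{k=1}^{n} p_k \tilde{\varrho}_k$ of $\tilde{\varrho}$ whose associated behaviours are the given $P_k$. The idea is to encode the decomposition of $\varrho$ in a classical side-register $\what{K}$ emitted \emph{before} the inputs are received, and then transport this register to $\tilde{S}$ by exploiting the causal structure of the simulation channels.

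First, I would form the classical-quantum state $\omega := \sum_{k=1}^{n} p_k \varrho_k \otimes \ketbra{k}$ on $\cH \otimes \what{K}$ with $K := \{1, \ldots, n\}$. Since $\omega$ is a dilation of $\varrho$, \cref{thm:StineComp} yields a channel $\Theta : \cE_0 \to \what{K}$ (where $\cE_0$ is the purifying environment of a chosen Stinespring dilation $\psi$ of $\varrho$) with $(\id_\cH \otimes \Theta)(\psi) = \omega$; as $\omega$ is diagonal in $\what{K}$, I may replace $\Theta$ by $\Delta_K \circ \Theta$ and thus assume $\Theta$ is a measurement. Post-composing the Stinespring implementation $(\psi, \Sigma_\sfA, \Sigma_\sfB)$ of $S$ with $\tr_{\cE_\sfA} \otimes \Theta \otimes \tr_{\cE_\sfB}$ then yields a channel from $\what{X}_\sfA \otimes \what{X}_\sfB$ to $\what{Y}_\sfA \otimes \what{K} \otimes \what{Y}_\sfB$ whose value at input $x$ is $\sum_k p_k P_k^x \otimes \ketbra{k}$.

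Next, I invoke $S \leq_{c.s.} \tilde{S}$: by \cref{def:CausSim}, the Stinespring implementation of $S$ equals the Stinespring implementation of $\tilde{S}$ post-processed by causally arranged channels $\Gamma_0, \Gamma_\sfA, \Gamma_\sfB$. Chaining this with the further post-composition by $\tr_{\cE_\sfA} \otimes \Theta \otimes \tr_{\cE_\sfB}$, the traces applied to the outputs of $\Gamma_\sfA$ and $\Gamma_\sfB$ collapse into the unique transformations tracing out their entire inputs, so that $\Gamma_\sfA$, $\Gamma_\sfB$, and the branches of $\Gamma_0$ feeding them can all be discarded. What remains is the Stinespring implementation of $\tilde{S}$ post-composed by $\tr_{\tilde{\cE}_\sfA} \otimes \tilde{\Theta} \otimes \tr_{\tilde{\cE}_\sfB}$, where $\tilde{\Theta} := \Theta \circ \Gamma_0' : \tilde{\cE}_0 \to \what{K}$ and $\Gamma_0'$ is the $\cE_0$-marginal of $\Gamma_0$. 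Note that $\tilde{\Theta}$ is again a measurement.

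Setting $\tilde{\omega} := (\id_{\tilde{\cH}} \otimes \tilde{\Theta})(\tilde{\psi})$ for a purification $\tilde{\psi}$ of $\tilde{\varrho}$, the state $\tilde{\omega}$ is classical-quantum of the form $\sum_{k=1}^{n} q_k \tilde{\varrho}_k \otimes \ketbra{k}$ with $q_k \geq 0$, $\sum_k q_k = 1$, and $\tilde{\varrho}_k$ states on $\tilde{\cH}$. The implementation of $\tilde{S}$ with $\tilde{\omega}$ as state then outputs $\sum_k q_k \tilde{P}_k^x \otimes \ketbra{k}$ on input $x$, where $\tilde{P}_k$ is the behaviour of $(\tilde{\varrho}_k, \tilde{\Lambda}_\sfA, \tilde{\Lambda}_\sfB)$; matching this with the channel from the first step, comparing coefficients of $\ketbra{k}$ and summing each probability distribution over $y \in Y$ forces $q_k = p_k$ and $\tilde{P}_k = P_k$, so $\tilde{\varrho} = \sum_k p_k \tilde{\varrho}_k$ witnesses visibility of the original decomposition from $\tilde{S}$. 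The crux is the second step: that $\what{K}$ (which has no causal dependence on the inputs) can be extracted from $\tilde{\cE}_0$ \emph{alone} depends essentially on the causal shape of \cref{def:CausSim} -- namely that $\Gamma_0$ takes only $\tilde{\cE}_0$ as input, i.e.\ the part of the environment released prior to any input -- and is precisely the reason an analogous statement fails under mere local assisted simulation once the canonical state is not pure.
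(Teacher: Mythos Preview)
Your proof is correct and follows essentially the same approach as the paper's. The paper first abstracts the key step into a general observation---that for \emph{any} dilation $\xi$ of $\varrho$ there exists a matching dilation $\tilde{\xi}$ of $\tilde{\varrho}$ satisfying the analogue of \cref{eq:statedil}---and then specialises to the classical-quantum dilation $\xi=\sum_k p_k\varrho_k\otimes\ketbra{k}$; you instead work directly with this specific $\omega$, invoking Stinespring completeness to produce $\Theta$ and then tracing through the causal-simulation structure explicitly, but the underlying mechanism (collapsing $\Gamma_\sfA,\Gamma_\sfB$ via the unique trash map and recovering a channel $\tilde\Theta$ acting on $\tilde{\cE}_0$ alone) is identical.
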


\begin{proof}
Let $S=(\varrho, \Lambda_\sfA, \Lambda_\sfB)$ and $\tilde{S}=(\tilde{\varrho}, \tilde{\Lambda}_\sfA, \tilde{\Lambda}_\sfB)$. The main observation to make is that for any dilation $\xi$ of $\varrho$ with environment $\cE_0$, there exists a dilation $\tilde{\xi}$ of $\tilde{\varrho}$, with environment $\cE_0$, such that 
\begin{align} \label{eq:statedil}  
\myQ{0.7}{0.7}{
	& \push{\what{X}_\sfA}  \qw  & \multigate{1}{ \tilde{\Lambda}_\sfA} & \push{\what{Y}_\sfA}  \qw & \qw  \\
	& \Nmultigate{2}{\tilde{\xi}}  & \ghost{\tilde{\Lambda}_\sfA} \\
	& \Nghost{\tilde{\xi}} & \push{\cE_0} \ww  & \ww \\
	& \Nghost{\tilde{\xi}} & \multigate{1}{\tilde{\Lambda}_\sfB}  \\
	&   \push{\what{X}_\sfB} \qw  & \ghost{\tilde{\Lambda}_\sfB} & \push{\what{Y}_\sfB}  \qw & \qw  \\
}\quad = \quad 
\myQ{0.7}{0.7}{
	& \push{\what{X}_\sfA}  \qw  & \multigate{1}{ \Lambda_\sfA} & \push{\what{Y}_\sfA}  \qw & \qw  \\
	& \Nmultigate{2}{\xi}  & \ghost{\Lambda_\sfA} \\
	& \Nghost{\xi} & \push{\cE_0} \ww  & \ww \\
	& \Nghost{\xi} & \multigate{1}{\Lambda_\sfB} \\
	&   \push{\what{X}_\sfB} \qw  & \ghost{\Lambda_\sfB} & \push{\what{Y}_\sfB}  \qw & \qw  \\
} .
\end{align}
By completeness of purifications (\cref{thm:StineComp}), it suffices to show this statement when $\xi = \psi$ is a purification of $\varrho$. This case however follows from the condition which defines the relation $S \leq_{c.s.} \tilde{S}$ in terms of Stinespring implementations (\cref{eq:CausSim}, with $\tilde{S}$ in place of $S$ and $S$ in place of $S'$), by tracing out the environments corresponding to the Stinespring dilations $\Sigma_{\sfP}$ and $\tilde{\Sigma}_{\sfP}$ of the measurements ensembles. 

Having established this fact, let $\big((p_k, P_k)\big)_{k=1, \ldots, n}$ be a convex decomposition visible from $S$. Pick states $\varrho_1, \ldots, \varrho_n$ such that $\varrho = \sum_{k=1}^n p_k \varrho_k$ and $(\varrho_k, \Lambda_\sfA, \Lambda_\sfB)$ has behaviour $P_k$. Now, consider the dilation $\xi := \sum_{k=1}^n p_k \varrho_k \otimes \ketbra{k}$ of $\varrho$, with environment $\cE_0 = \C^n$. By the introductory observation, we find a dilation $\tilde{\xi}$ and $\tilde{\varrho}$ such that \cref{eq:statedil} holds. The channel on the right hand side of \cref{eq:statedil} is given by $\sum_{k=1}^n p_k P_k \otimes \ketbra{k}$. Since the left hand side must be identical, the $\cE_0$-marginal of $\tilde{\xi}$ is $\sum_{k=1}^n p_k \ketbra{k}$, so $\tilde{\xi}$ itself must be of the form $\sum_{k=1}^n p_k \tilde{\varrho}_k \otimes \ketbra{k}$ for some states $\tilde{\varrho}_1, \ldots, \tilde{\varrho}_n$ with $\sum_{k=1}^n p_k \tilde{\varrho}_k = \tilde{\varrho}$. Letting $\tilde{P}_k$ denote the behaviour of $(\tilde{\varrho}_k, \tilde{\Lambda}_\sfA, \tilde{\Lambda}_\sfB)$, the decomposition $\big((\tilde{P}_k, p_k) \big)_{k=1, \ldots, n}$ is obviously visible from $\tilde{S}$. But since the left hand side of \cref{eq:statedil} is given by $\sum_{k=1}^n p_k \tilde{P}_k \otimes \ketbra{k}$ and must coincide with the right hand side, we have $\tilde{P}_k = P_k$ for all $k=1, \ldots, n$ since $p_1, \ldots, p_n \neq 0$. Altogether we conclude that the decomposition $\big((p_k, P_k)\big)_{k=1, \ldots, n}$ is visible from $\tilde{S}$, as desired. 
\end{proof}

\begin{Remark}[Ekert-Implementations]
 The implementations appearing in \cref{eq:statedil} have a natural operational interpretation, especially in light of \cref{subsec:Causal}. Indeed, they correspond to causal dilations in which the side-information in the environment requires no causes thus thus exists in advance of seeing the inputs $x_\sfA$ and $x_\sfB$ to the Bell-scenario. Let us call an implementation of $S= (\varrho, \Lambda_\sfA, \Lambda_\sfB)$ an \emph{Ekert-implementation} if it corresponds to a component-wise dilation of the form $(\xi, \Lambda_\sfA, \Lambda_\sfB)$, where the environment of $\xi$ is a classical embedded system on which the state $\xi$ is classical and has full rank. (Essentially, Ekert's insight \cite{Ek91} concerns the leak of information to the environment in Ekert-implementations.) It is an easy exercise to show that Ekert-implementations of $S$ correspond to convex decompositions of $P$ visible from $S$, in the sense that they as channels are precisely given by $\sum_{k=1}^n p_k \ketbra{k} \otimes P_k$ with $\big((p_k, P_k)\big)_{k=1, \ldots, n}$ a convex decomposition visible from $S$. The essence of \cref{lem:visible} consists in observing that if $S \leq_{c.s.} \tilde{S}$ then every Ekert-implementation of $S$ is also an Ekert-implementation of $\tilde{S}$.\end{Remark}

\cref{lem:visible} can be used to show the following: 

\begin{Prop} \label{prop:SelfSec}
Suppose that $P$ self-tests $\tilde{S}$ a.t.c.s. Then, every convex decomposition of $P$ is visible from $\tilde{S}$.
\end{Prop}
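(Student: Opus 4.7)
The plan is to reduce the statement to Lemma~\ref{lem:visible} by constructing, from any given convex decomposition of $P$, a single strategy $S$ with behaviour $P$ from which the decomposition is visible. Once this is done, the hypothesis of self-testing a.t.c.s.\ yields $S \leq_{c.s.} \tilde{S}$, and Lemma~\ref{lem:visible} immediately transports visibility from $S$ to $\tilde{S}$.

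Fix a convex decomposition $\big((p_k, P_k)\big)_{k=1,\ldots,n}$ of $P$. For each $k$, choose a strategy $S_k = (\varrho_k, \Lambda_{\sfA,k}, \Lambda_{\sfB,k})$ on systems $\cH_{\sfA,k} \otimes \cH_{\sfB,k}$ with behaviour $P_k$; these exist by assumption that each $P_k$ is realisable. I would then glue these strategies together with the help of a shared classical flag. Concretely, let $\cH_\sfP := \bigoplus_{k=1}^n \cH_{\sfP,k}$ for $\sfP \in \{\sfA, \sfB\}$ (so each $\varrho_k$ sits naturally inside $\cH_\sfA \otimes \cH_\sfB$ in its own block), adjoin to each party a classical register $\what{K} := \widehat{\{1,\ldots,n\}}$, and define
\begin{align*}
    \varrho \; := \; \sum_{k=1}^n p_k \, \varrho_k \otimes \ketbra{k}{k} \otimes \ketbra{k}{k}
\end{align*}
on $(\cH_\sfA \otimes \what{K}) \otimes (\cH_\sfB \otimes \what{K})$. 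Let $\Lambda_\sfP$ be the measurement ensemble that reads off both its classical input $x_\sfP$ and the flag $k \in \what{K}$, then applies $\Lambda_{\sfP,k}$ to $\cH_\sfP$ with input $x_\sfP$, outputting the resulting $y_\sfP$. Call the resulting triple $S=(\varrho, \Lambda_\sfA, \Lambda_\sfB)$.

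Two straightforward verifications remain. First, by linearity the behaviour of $S$ is $\sum_k p_k P_k = P$, so $S$ is eligible under the hypothesis. Second, writing $\varrho = \sum_k p_k \varrho'_k$ with $\varrho'_k := \varrho_k \otimes \ketbra{k}{k} \otimes \ketbra{k}{k}$, the strategy $(\varrho'_k, \Lambda_\sfA, \Lambda_\sfB)$ has behaviour exactly $P_k$ because the flag deterministically steers $\Lambda_\sfP$ into $\Lambda_{\sfP,k}$. Hence the given decomposition is visible from $S$.

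The conclusion then follows by one invocation of the hypotheses: since $P$ self-tests $\tilde{S}$ a.t.c.s.\ and $S$ has behaviour $P$, we have $S \leq_{c.s.} \tilde{S}$, and Lemma~\ref{lem:visible} yields visibility of $\big((p_k, P_k)\big)_{k=1,\ldots,n}$ from $\tilde{S}$. There is no real obstacle here; the only mild care is in choosing the ambient systems $\cH_\sfP$ so that all the $\varrho_k$'s and all the $\Lambda_{\sfP,k}$'s can be written over a common Hilbert space—taking a direct sum and adjoining the classical flag $\what{K}$ is the cleanest way to accomplish this without altering any of the individual behaviours $P_k$.
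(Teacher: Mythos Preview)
Your proposal is correct and follows essentially the same approach as the paper's proof: construct a strategy $S$ from which the given decomposition is visible (by adjoining a shared classical flag that selects which $S_k$ to run), then apply the self-testing hypothesis and Lemma~\ref{lem:visible}. The only cosmetic difference is that you take the direct sum $\bigoplus_k \cH_{\sfP,k}$ whereas the paper embeds all the $\cH_{\sfP,k}$ into a common space; both achieve the same end.
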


\begin{proof}
    By \cref{lem:visible}, it suffices to argue that every possible convex decomposition of $P$ is visible from \myuline{some} strategy $S$ with behaviour $P$. But this is easy to see: Given a convex decomposition $\big((p_k, P_k) \big)_{k=1, \ldots, n}$, choose strategies $S_1 = (\varrho_1, \Lambda_{\sfA, 1}, \Lambda_{\sfB, 1}), \ldots, S_n= (\varrho_n, \Lambda_{\sfA, n}, \Lambda_{\sfB, n})$ such that $S_k$ has behaviour $P_k$. We may assume without loss of generality that the systems $\cH_{\sfA,k}$ are all equal to a fixed system, say $\cH_\sfA$,  by embedding them into a common system if necessary. Likewise, we may assume all the systems $\cH_{\sfB, k}$ equal to a fixed system $\cH_\sfB$. Now, let $S$ be the strategy whose state is $\sum_{k=1}^n p_k \ketbra{k}_\sfA \otimes \ketbra{k}_\sfB \otimes \varrho_k$ on the bipartite system $(\C^n \otimes \cH_\sfA) \otimes (\C^n \otimes \cH_\sfB)$, and whose measurement ensemble $\Lambda_{\sfP}$ is defined by reading off the classical value $k$ on the subsystem $\C^n$ and applying the according ensemble $\Lambda_{\sfP, k}$. The behaviour of the strategy $S$ is evidently $\sum_{k=1}^n p_k P_k = P$, and the convex decomposition $\big((p_k, P_k) \big)_{k=1, \ldots, n}$ is visible from $S$ by construction.  \end{proof}

\cref{prop:SelfSec} constitutes our key result about self-testing and convex decompositions: it expresses that a strategy which is self-tested according to causal simulation fully determines all possible decompositions of its behaviour. 

In quantum theory, we may use the result to infer \myuline{extremality} of behaviours which self-test strategies (even according to \myuline{local assisted} simulation):

\begin{Cor} \label{cor:extremal}
    If $P$ self-tests some quantum strategy a.t.l.a.s., then $P$ is extremal in the convex set of quantum realisable behaviours. 
\end{Cor}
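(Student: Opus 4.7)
The plan is to combine the prior results on visible convex decompositions (\cref{lem:visible} and \cref{prop:SelfSec}) with the quantum-theoretic symmetry of $\sim_{l.a.s.}$ (\cref{thm:AssSym}) to show that, for \emph{any} strategy with behaviour $P$, every convex decomposition of $P$ is visible from it. Choosing this strategy to be pure-state will then automatically force all decompositions of $P$ to be trivial, yielding extremality.

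Concretely, the argument proceeds as follows. First, since local assisted simulation is finer than causal simulation, self-testing a.t.l.a.s.\ implies self-testing a.t.c.s., and so by \cref{prop:SelfSec} every convex decomposition $P = \sum_k p_k P_k$ is visible from $\tilde{S}$. Second, for any strategy $S$ with behaviour $P$, self-testing gives $S \leq_{l.a.s.} \tilde{S}$; by symmetry (\cref{thm:AssSym}) we also obtain $\tilde{S} \leq_{l.a.s.} S$, and hence $\tilde{S} \leq_{c.s.} S$. Applying \cref{lem:visible} with the roles of the two strategies exchanged then transports visibility: every convex decomposition of $P$ visible from $\tilde{S}$ is also visible from $S$. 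Combining the two steps, every convex decomposition of $P$ is visible from every strategy realising it.

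To conclude, one need only exhibit a \emph{pure-state} strategy with behaviour $P$; this is always possible, e.g.~by starting from an arbitrary strategy $(\varrho, \Lambda_\sfA, \Lambda_\sfB)$ realising $P$ and locally purifying $\varrho$, enlarging $\cH_\sfA$ by the purifying system and replacing $\Lambda_\sfA$ with its composition with a partial trace on that system. Call the resulting pure-state strategy $S = (\psi, \Lambda_\sfA, \Lambda_\sfB)$. For any decomposition $P = \lambda P_1 + (1-\lambda) P_2$ with $\lambda \in (0,1)$, visibility from $S$ provides states $\varrho_1, \varrho_2$ with $\psi = \lambda \varrho_1 + (1-\lambda)\varrho_2$ such that $(\varrho_k, \Lambda_\sfA, \Lambda_\sfB)$ has behaviour $P_k$. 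As pure states are extremal in the convex set of density operators, one concludes $\varrho_1 = \varrho_2 = \psi$, whence $P_1 = P_2 = P$.

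I do not anticipate a serious technical obstacle, since everything reduces to citing prior results and exploiting the symmetry of $\sim_{l.a.s.}$. The conceptual subtlety is to notice that one need \emph{not} prove the canonical state $\tilde{\varrho}$ itself to be pure (which would not follow from a.t.l.a.s.\ self-testing by the same route as \cref{cor:SelfPure}); rather, purity is imported ``from the outside'' via the freedom to choose $S$ pure-state, and the symmetry of $\sim_{l.a.s.}$ is precisely what allows one to push visibility back from $\tilde{S}$ to this $S$.
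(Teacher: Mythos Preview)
Your proposal is correct and follows essentially the same approach as the paper: use the symmetry of $\sim_{l.a.s.}$ (\cref{thm:AssSym}) together with the existence of a pure-state strategy realising $P$, so that visibility of every convex decomposition from that pure-state strategy forces triviality. The only cosmetic difference is ordering: the paper observes directly that, by symmetry, $P$ self-tests \emph{every} strategy with behaviour $P$ a.t.l.a.s., and hence applies \cref{prop:SelfSec} immediately to the pure-state one, whereas you first apply \cref{prop:SelfSec} to $\tilde{S}$ and then transport visibility via \cref{lem:visible}; both routes use the same ingredients and arrive at the same conclusion.
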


\begin{proof}
If $P$ self-tests some quantum strategy a.t.l.a.s., then all quantum strategies with behaviour $P$ are equivalent under local assisted simulation by \cref{thm:AssSym}. Therefore, $P$ self-tests \myuline{any} strategy with behaviour $P$ a.t.l.a.s. By the purification argument in the proof of \cref{cor:SelfStateExtract}, there is some pure-state strategy $\tilde{S}$ with behaviour $P$, and it holds in particular that $P$ self-tests $\tilde{S}$ a.t.l.a.s. By \cref{prop:SelfSec}, any convex decomposition of $P$ is visible from $\tilde{S}$, and as $\tilde{S}$ is pure-state every convex decomposition visible from $\tilde{S}$ is trivial. Thus, $P$ is extremal. \end{proof}

We note that there are known examples of extremal behaviours which do not self-test any strategy according to reducibility \cite{Jed21}. However, some or all of these might self-test a strategy according to local assisted simulation (a.t.l.a.s.); indeed the converse to \cref{cor:extremal} might be true. We leave this question open for future investigations.

\subsection{Exhausted Environments} 
\label{subsec:Exhausted}

In this final subsection, we point out a feature of self-testing which we believe has so far gone unnoticed. We do not currently know any concrete applications of this feature, but it shows up naturally from our operational reformulation and we suspect that it might be useful.\\

Suppose that parties $\sfA$ and $\sfB$ manage to produce a behaviour $P$ which we know to self-test some strategy $\tilde{S}$ according to causal simulation~(in particular, this holds if $P$ self-tests $\tilde{S}$ in the conventional sense, i.e.~according to reducibility). From the perspective of a referee who probes the behaviour by sending inputs $x_\sfA, x_\sfB$ and receiving outputs $y_\sfA, y_\sfB$, the information-processing is described by the channel $P$. From a global perspective including the environment, however, the information-processing is governed by the various implementations 
\begin{align} \label{eq:Imple}
	    \myQ{0.7}{0.7}{
	& \push{\what{X}_\sfA}  \qw  & \multigate{1}{ \Phi_\sfA} & \push{\what{Y}_\sfA}  \qw & \qw  \\
	& \Nmultigate{2}{\xi}  & \ghost{\Phi_\sfA} & \push{\cE_\sfA} \ww & \ww \\
	& \Nghost{\xi}  & \push{\cE_0} \ww & \ww & \ww  \\
	& \Nghost{\xi}  & \multigate{1}{\Phi_\sfB} & \push{\cE_\sfB} \ww & \ww  \\
	&   \push{\what{X}_\sfB} \qw  &  \ghost{\Phi_\sfB} & \push{\what{Y}_\sfB}  \qw & \qw  \\
} 
	\end{align}
	of a strategy $(\varrho, \Lambda_\sfA, \Lambda_\sfB)$. (By the discussion in \cref{subsubsec:SpecialBell}, this is even true when strategies corresponding to general circuits are employed; any possible information-processing is derivable from an implementation of the form \eqref{eq:Imple}.) 
	
	Now, let us consider what the information-processing looks like \emph{from the perspective of the environment}. This amounts to tracing out the system $\what{Y}_\sfA$ and $\what{Y}_\sfB$ and considering the channel 
	\begin{align} \label{eq:Comple}
	    \myQ{0.7}{0.7}{
	& \push{\what{X}_\sfA}  \qw   & \multigate{1}{ \Phi_\sfA} & \push{\what{Y}_\sfA}  \qw & \gate{\tr}  \\
	& \Nmultigate{2}{\xi}  &  \ghost{\Phi_\sfA} & \push{\cE_\sfA} \ww & \ww \\
	& \Nghost{\xi}  & \push{\cE_0} \ww & \ww & \ww  \\
	& \Nghost{\xi} &  \multigate{1}{\Phi_\sfB} & \push{\cE_\sfB} \ww & \ww  \\
	&   \push{\what{X}_\sfB} \qw  &  \ghost{\Phi_\sfB} & \push{\what{Y}_\sfB}  \qw & \gate{\tr}  \\
} .
	\end{align}
When the implementation is a Stinespring implementation, the channel \eqref{eq:Comple} is a \emph{complementary channel} \cite{Devetak05} to the behaviour channel $P$, however as a \myuline{causal} channel (cf.~\cref{subsec:Causal}) it has additionally a specific structure according to which the information in $\cE_0$ is available from the beginning and the information in $\cE_\sfP$ becomes available upon the input to $\what{X}_\sfP$. Since $P$ self-tests $\tilde{S}$ a.t.c.s., the channel \eqref{eq:Comple} must be of the form 	
\begin{align}
\scalemyQ{1}{0.7}{0.5}{
	&   \push{\what{X}_\sfA} \qw  & \multigate{1}{\tilde{\Sigma}_\sfA}  &   \qw &  \push{\what{Y}_\sfA}  \qw &  \gate{\tr} \\
	& \Nmultigate{4}{\tilde{\psi}}  & \ghost{\tilde{\Sigma}_\sfA} &  \push{\tilde{\cE}_\sfA} \ww &  \Nmultigate{1}{\Gamma_\sfA}{\ww} & \push{\cE_\sfA} \ww & \ww \\
	& \Nghost{\tilde{\psi}}   &  & \Nmultigate{2}{\Gamma_0} & \Nghost{\Gamma_\sfA}{\ww} &  \\  
	& \Nghost{\tilde{\psi}}  & \push{\tilde{\cE}_0} \ww &   \Nghost{\Gamma_0}{\ww}  & \ww & \push{\cE_0} \ww & \ww \\
	& \Nghost{\tilde{\psi}}  &  & \Nghost{ \Gamma_0} &   \Nmultigate{1}{\Gamma_\sfB}{\ww}   \\
	& \Nghost{\tilde{\psi}}   & \multigate{1}{\tilde{\Sigma}_\sfB} &  \push{\tilde{\cE}_\sfB} \ww & \Nghost{\Gamma_\sfB}{\ww} & \push{\cE_\sfB}\ww & \ww \\ 
	&   \push{\what{X}_\sfB} \qw  & \ghost{\tilde{\Sigma}_\sfB}   & \qw &  \push{\what{Y}_\sfB} \qw &   \gate{\tr}}
\end{align}
for some channels $\Gamma_0$, $\Gamma_\sfA$ and $\Gamma_\sfB$, with $(\tilde{\psi}, \tilde{\Sigma}_\sfA, \tilde{\Sigma}_\sfB)$ a component-wise Stinespring dilation of $\tilde{S}$. As such, when the referee proves the behaviour by supplying only classical inputs to $\what{X}_\sfP$, the strongest possible leak of information to the environment is represented by the channel
	\begin{align} \label{eq:Complement}
	    \myQ{0.7}{0.7}{
& \push{\what{X}_\sfA}  \qw & \gate{\Delta_{X_\sfA}} 	& \push{\what{X}_\sfA}  \qw   & \multigate{1}{ \tilde{\Sigma}_\sfA} & \push{\what{Y}_\sfA}  \qw & \gate{\tr}  \\
&&	& \Nmultigate{2}{\tilde{\psi}}  &  \ghost{\Phi_\sfA} & \push{\tilde{\cE}_\sfA} \ww & \ww \\
&&	& \Nghost{\tilde{\psi}}  & \push{\tilde{\cE}_0} \ww & \ww & \ww  \\
&&	& \Nghost{\tilde{\psi}} &  \multigate{1}{\tilde{\Sigma}_\sfB} & \push{\tilde{\cE}_\sfB} \ww & \ww  \\
& \push{\what{X}_\sfB}  \qw & \gate{\Delta_{X_\sfB}} 	&   \push{\what{X}_\sfB} \qw  &  \ghost{\tilde{\Sigma}_\sfB} & \push{\what{Y}_\sfB}  \qw & \gate{\tr}  \\
} ,
	\end{align}
and this channel can be \myuline{computed} by the referee, because the strategy $\tilde{S}$ is known. Specifically, the information in $\tilde{\cE}_0$ is the maximal information the environment may hold before the probing begins, and the information in $\tilde{\cE}_\sfP$ is the maximal information which may leak to the environment after receiving the input to $\what{X}_\sfP$. 

It is no surprise that, when computing the channel \eqref{eq:Complement}, the system $\tilde{\cE}_\sfP$ will at least hold a copy of the input $x_\sfP$ and the output $y_\sfP$, or information equivalent to these copies; indeed these two classical values may always be copied by the environment and kept in memory. It is however perfectly conceivable that the environment will after the interaction with the referee hold information additional to these memories, information which we may think of as undisclosed to the referee. (For example, one could imagine that $\cE_\sfA$ would hold a copy of the output $y_\sfB$, or that $\cE_0$ would hold non-trivial information about either output.) 

We will show now that this does \myuline{not} happen when $\tilde{S}$ is a pure-state projective strategy whose measurements are described by rank-one projections. This is the case e.g.~in the optimal canonical strategy for the CHSH-game \cite{CHSH69,SB19}. In other words, two parties $\sfA$ and $\sfB$ who play this game optimally essentially cannot know anything in advance of the game, and will be forced to reveal as output to the referee everything they learn upon receiving their inputs. Hence, they are effectively \emph{exhausted} of information. \\
%who play this game optimally will be forced to reveal essentially everything they know as output to the referee, and thus effectively be \emph{exhausted} of information.

The proof of this is fairly simple. Suppose that the canonical strategy $\tilde{S}=(\tilde{\psi}, \tilde{\Pi}_\sfA, \tilde{\Pi}_\sfB)$ is pure-state and projective, and that all of the projections $\tilde{\Pi}^{x_{\sfP}}_{\sfP}(y_{\sfP})$ have rank one (this implies, incidentally, than, $\dim \tilde{\cH}_{\sfP} = \abs{Y_{\sfP}}$). If we choose the Stinespring dilations $\tilde{\Sigma}_\sfA$ and $\tilde{\Sigma}_\sfB$ conveniently as in the proof of \cref{prop:OpSim}, then the channel 
\begin{align} \label{eq:Comp3}
	    \myQ{0.7}{0.7}{
& \push{\what{X}_\sfA}  \qw & \gate{\Delta_{X_\sfA}} 	& \push{\what{X}_\sfA}  \qw   & \multigate{1}{ \tilde{\Sigma}_\sfA} & \push{\what{Y}_\sfA}  \qw & \gate{\Delta_{Y_\sfA}} & \push{\what{Y}_\sfA} \qw  & \qw \\
& & 	& \Nmultigate{2}{\tilde{\psi}}  &  \ghost{\Phi_\sfA} & \push{\tilde{\cE}_\sfA} \ww & \ww \\
& & 	& \Nghost{\tilde{\psi}}   \\
& & 	& \Nghost{\tilde{\psi}} &  \multigate{1}{\tilde{\Sigma}_\sfB} & \push{\tilde{\cE}_\sfB} \ww & \ww  \\
& \push{\what{X}_\sfB}  \qw & \gate{\Delta_{X_\sfB}} 	&   \push{\what{X}_\sfB} \qw  &  \ghost{\tilde{\Sigma}_\sfB} & \push{\what{Y}_\sfB}  \qw & \gate{\Delta_{Y_\sfB}} & \push{\what{Y}_\sfB} \qw & \qw \\
} ,
	\end{align}
where $\tilde{\cE}_\sfA = \what{X}_\sfA \otimes \tilde{\cH}_\sfA$ and $\tilde{\cE}_\sfB = \what{X}_\sfB \otimes \tilde{\cH}_\sfB$ (and where $\tilde{\cE}_0$ is trivial by purity of the state), is given by  
\begin{align} \label{eq:XiCh}
    F & \mapsto  \sum_{x \in X, y \in Y} \Big(  \bra{x}F\ket{x}  \ketbra{x} \otimes  \tilde{\Pi}^x(y) \ketbra*{\tilde{\psi}} \tilde{\Pi}^{x}(y) \Big)\otimes \ketbra{y}{y}
\end{align}
for $F \in \End{\what{X}_\sfA \otimes \what{X}_\sfB}$, parenthesising the parts which belong to the environment. Now, if we let $\ket{\phi^{x_\sfP}_\sfP(y_\sfP)}$ denote a unit vector corresponding to the rank-one projection $\tilde{\Pi}^{x_{\sfP}}_\sfP(y_{\sfP})$, and if $\ket{\phi^x(y)} = \ket{\phi^{x_\sfA}_\sfA(y_\sfA)} \otimes \ket{\phi^{x_\sfB}_\sfB(y_\sfB)}$ for $x=(x_\sfA, x_\sfB)$ and $y=(y_\sfA, y_\sfB)$, then we have the identity $\tilde{\Pi}^x(y) \ket*{\tilde{\psi}} = \braket*{\phi^x(y)}{\tilde{\psi}} \ket{\phi^x(y)}$, with $\abs*{\braket*{\phi^x(y)}{\tilde{\psi}}}^2$ equal to the probability of output $y$ on input $x$, namely $P^x(y)$. Hence, the channel \eqref{eq:XiCh} can be re-expressed as 
\begin{align}
    F& \mapsto  \sum_{x \in X, y \in Y} \bra{x}F\ket{x}  P^x(y) \Big(  \ketbra{x} \otimes  \ketbra{\phi^x(y)} \Big) \otimes \ketbra{y}{y} .
\end{align}
%\quad \text{for $F \in \End{\what{X}_\sfA \otimes \what{X}_\sfB}$.}

For any $\sfP \in  \{\sfA, \sfB\}$ and any $x_{\sfP} \in X_{\sfP}$, the pure states $(\phi^{x_{\sfP}}_{\sfP}(y_{\sfP}))_{y_{\sfP} \in Y_{\sfP}}$ are perfectly distinguishable, indeed their vector representatives form an orthonormal basis of $\tilde{\cH}_{\sfP}$. They can therefore be viewed as an encoding of $y_{\sfP} \in Y_{\sfP}$ in terms of pure states on $\tilde{\cH}_{\sfP} \cong \what{Y}_\sfP$. More precisely, there exist unitaries $U_\sfP : \what{X}_\sfP \otimes \tilde{\cH}_\sfP \to \what{X}_\sfP \otimes \what{Y}_\sfP$ which map $\ket{x_\sfP} \otimes \ket{\phi^{x_{\sfP}}_{\sfP}(y_{\sfP})}$ to $\ket{x_\sfP} \otimes \ket{y_\sfP}$. As such, the channel \eqref{eq:Comp3} is equivalent (under application of unitaries $U_\sfA$ and $U_\sfB$ to the systems $\tilde{\cE}_\sfA$ and $\tilde{\cE}_\sfB$) to the channel given by
\begin{align}
   F \mapsto  \sum_{x \in X, y \in Y} \bra{x}F \ket{x}  P^x(y) \Big(  \ketbra{x} \otimes  \ketbra{y} \Big) \otimes \ketbra{y}{y} .
\end{align}
In other words, when executing the behaviour $P$ the only possible information which can possibly reside in the environment is, upon input $x_\sfP$, a copy of $x_\sfP$ and the corresponding output $y_\sfP$.

\section{Conclusions}
\label{sec:Conclusions}

Quantum self-testing is an important phenomenon in quantum cryptography which arose from Bell's considerations about experimental scenarios with separated parties. Today, it has many theoretical and practical ramifications in quantum information theory. However, the conventional mathematical formalisation of self-testing is phrased in the operator-algebraic formalism for quantum theory, and therefore resists an operational interpretation and immediate generalisation e.g.~to alternative theories, formalisms or causal scenarios. \\

After discussing operational structure in physical theories (\cref{sec:Operational}) and reviewing the conventional operator-algebraic narrative of quantum self-testing (\cref{sec:Standard}), we have turned in \cref{sec:Rework} to rework the perception of self-testing and cast it anew as an operational phenomenon. 

In \cref{subsec:Sim} we have proposed a new definition of self-testing (\cref{def:Sim}) according to the concept of local simulation (\cref{def:OpSelf}). It specialises in the case of quantum theory to the standard operator-algebraic definition (\cref{thm:SelfRvsS}), it answers an unresolved issue regarding a restriction to projective strategies (\cref{prop:Proj}), and it naturally lends itself to definitions of approximate self-testing (\cref{rem:Approx}).

Our reformulation shows how to think of self-testing operationally, in terms of side-information which gradually leaks to an environment. By meditating on this idea, we have proposed two relaxations of local simulation, namely local assisted simulation (\cref{def:LocAss}) and causal simulation (\cref{def:CausSim}). In \cref{subsec:AssSim}, we have studied local assisted simulation and seen that it is in quantum theory the equivalence relation generated by local simulation (\cref{thm:AssSym} and \cref{rem:SelfEquiv}). This sheds new light on quantum self-testing by uncovering a two-fold character of the phenomenon. In particular, it suggests a notion of self-testing with emphasis on upper bounding the abilities of adversaries (\cref{def:SelfTestAtlas}). Causal simulation yields a further relaxation of self-testing, and we have explained in \cref{subsec:Causal} how this ultimately realises quantum self-testing within a larger framework of causally structured channels and dilations (most important in this regard are \cref{thm:BellDense}, \cref{thm:BellDenseII} and \cref{thm:BridgeSC}). It is our hope that this realisation may eventually be beneficial in cryptographic contexts and point to modular, composable features of self-testing (see also Section 4.3 in Ref.~\cite{Hou21}).

In \cref{sec:Utility}, we have shown that our operational reformulation of self-testing may be used to give conceptually simple proofs of several features of self-testing. Specifically, we have considered extractability of canonical states (\cref{prop:StateExtract}) and canonical measurements (\cref{prop:MeasExtract}), limitations on convex decompositions of behaviours (\cref{prop:SelfSec} and \cref{cor:extremal}), and a novel feature of information exhaustion (\cref{subsec:Exhausted}). The fact that we are able to recover well-known features from our reformulation indicates that it is at least as potent as the conventional formulation --- we hope, of course, that our reformulation will eventually prove itself even more potent by easing more arguments which involve self-testing. \\

The current work opens up several avenues for future investigation. 
A rather concrete problem is the further study of \emph{approximate} simulation and \emph{robust} self-testing. As mentioned in \cref{rem:Approx}, the Bures distance $\beta$ \cite{KSW08math,KSW08phys} and purified diamond distance $P_\diamond$ \cite{Hou21} have properties which render them suitable for quantifying approximate simulation, but we have not determined the distances in concrete cases. In particular, if $\tilde{S}$ is the canonical optimal strategy for the CHSH-game \cite{CHSH69,SB19}, we do not know, given $\delta > 0$, what is the optimal value of $\varepsilon > 0$ such that any strategy $S$ whose behaviour is $\delta$-close to the behaviour of $\tilde{S}$ locally $\varepsilon$-simulates $\tilde{S}$ (with closeness measured either by $\beta$ or $P_\diamond$).\footnote{Likewise, we do not know the optimal value of $\varepsilon$ when instead of assuming the behaviours to be $\delta$-close it is merely assumed that the winning probability of $S$ is $\delta$-close to the optimal winning probability.} An important insight from our work is that the local isometries which relate one strategy to another according to local simulation can be chosen dependently on the local inputs $x_\sfA$ and $x_\sfB$ (cf.~\cref{prop:OpSim}); this turns out to be invisible in the case of exact local simulation (\cref{lem:technical}), but we suspect that it will result in differences in the approximate case. 

Another open problem is the following. Usually, a proof that a concrete behaviour $P$ self-tests a concrete strategy $\tilde{S}$ (e.g.~the optimal strategy for the CHSH-game) is obtained by deriving certain operator-algebraic identities which the constituents of a general strategy $S$ must satisfy if its behaviour is $P$, and then from these identities concluding the existence of local isometries and a residual state witnessing reducibility to the canonical strategy $\tilde{S}$. Given that we have been able to reformulate self-testing operationally, the following question is natural: \emph{Is it possible to find an operational \myuline{proof} that $\tilde{S}$ is self-tested by its behaviour?} Such a proof would not directly rely on operator-algebra, but would rather use operational features of the theory $\QIT$. Even for well-studied cases such as the canonical optimal strategy in the CHSH-game, an operational proof of self-testing (according to any mode of simulation) would yield new insights. It would even be interesting to find an operational proof of the fact that the CHSH-game has the property of exhausting quantum players of information, as described in \cref{subsec:Exhausted}.\footnote{It is worth observing that this proof would necessarily involve aspects of quantum theory and cannot be a feature of the CHSH-game by itself, since the game is not exhaustive for \myuline{classical} adversaries who play optimally --- for example, in the optimal classical strategy which is the equal probabilistic mixture of optimal deterministic strategies, both parties $\sfA$ and $\sfB$ know the function which the other party uses to determine an output, while this function need not be disclosed to the referee.}

Finally, it would be natural to investigate self-testing statements in other causal scenarios and other operational theories.\footnote{A few ideas in this direction are presented in Ref.~\cite{Hou21}, in particular as regards the theory $\CIT$.} A more systematic study of this might clarify the special case of quantum self-testing and eventually bring new insights to the field.

\section*{Acknowledgements}  
    
We thank J\k{e}drzej Kaniewski for comments on an earlier version of this manuscript. We acknowledge financial support from the European Research Council (ERC Grant Agreement No.~81876), VILLUM FONDEN via the QMATH Centre of Excellence (Grant No.~10059), the QuantERA ERA-NET Cofund in Quantum Technologies implemented within the European Union’s Horizon 2020 Programme (QuantAlgo project) via the Innovation Fund Denmark and the Novo Nordisk Foundation (Grant NNF20OC0059939 ‘Quantum for Life’). LM is also supported by VILLUM FONDEN (Young Investigator Grant No. 37532).

   \section*{Authors' Contributions}
   
This paper is based on the PhD thesis of NGHL \cite{Hou21}. NGHL is the main author of the paper, including conceptualisation and methodology of the work and the formal analysis and writing. MC and LM contributed to the conceptualisation and methodology of the work and reviewed and edited the draft. All authors gave final approval for publication and agree to be held accountable for the work performed therein.

\newpage

 %\section*{References}

\bibliographystyle{amsalpha} 
\addcontentsline{toc}{section}{References}
\bibliography{OperationalEnvironment}

\end{document}